\def\doi{8 (2:14) 2012}
\renewcommand{\isacharunderscore}{\mbox{$\_\!\_$}}
\renewcommand{\isasymbullet}{{\raisebox{-0.4mm}{\Large$\boldsymbol{\hspace{-0.5mm}\cdot\hspace{-0.5mm}}$}}}
\def\dn{\,\stackrel{\mbox{\scriptsize def}}{=}\,}
\renewcommand{\isasymequiv}{$\dn$}
\renewcommand{\isasymemptyset}{$\varnothing$}
\newcommand{\isasymnotapprox}{$\not\approx$}
\newcommand{\isasymLET}{$\mathtt{let}$}
\newcommand{\isasymAND}{$\mathtt{and}$}
\newcommand{\isasymIN}{$\mathtt{in}$}
\newcommand{\isasymEND}{$\mathtt{end}$}
\newcommand{\isasymANIL}{$\mathtt{anil}$}
\newcommand{\isasymACONS}{$\mathtt{acons}$}
\newcommand{\isasymCASE}{$\mathtt{case}$}
\newcommand{\isasymOF}{$\mathtt{of}$}
\newcommand{\isasymAL}{\makebox[0mm][l]{$^\alpha$}}
\newcommand{\isasymPRIME}{\makebox[0mm][l]{$'$}}
\newenvironment{proof-of}[1]{{\em Proof of #1:}}{}
\begin{document}

\title[Genral Bindings]{General Bindings and Alpha-Equivalence in Nominal
Isabelle\rsuper*}
\author[C.~Urban]{Christian Urban\rsuper a} 
\address{{\lsuper a}King's College London, United Kingdom}	
\email{christian.urban@kcl.ac.uk}

\author[C.~Kaliszyk]{Cezary Kaliszyk\rsuper b}
\address{{\lsuper b}University of Innsbruck, Austria}
\email{cezary.kaliszyk@uibk.ac.at}
\thanks{{\lsuper*}This is a revised and expanded version of~\cite{UrbanKaliszyk11}}

\keywords{Nominal Isabelle, variable convention, alpha-equivalence, theorem provers, formal reasoning, lambda-calculus}
\subjclass{F.3.1}

\begin{abstract} 
Nominal Isabelle is a definitional extension of the Isabelle/HOL theorem
prover. It provides a proving infrastructure for reasoning about
programming language calculi involving named bound variables (as
opposed to de-Bruijn indices). In this paper we present an extension of
Nominal Isabelle for dealing with general bindings, that means
term constructors where multiple variables are bound at once. Such general
bindings are ubiquitous in programming language research and only very
poorly supported with single binders, such as lambda-abstractions. Our
extension includes new definitions of alpha-equivalence and establishes
automatically the reasoning infrastructure for alpha-equated terms. We
also prove strong induction principles that have the usual variable
convention already built in.
\end{abstract}

\maketitle
\begin{isabellebody}%
\def\isabellecontext{Paper}%
\isadelimtheory
\endisadelimtheory
\isatagtheory
\endisatagtheory
{\isafoldtheory}%
\isadelimtheory
\endisadelimtheory
\isamarkupsection{Introduction%
}
\isamarkuptrue%
\begin{isamarkuptext}%
So far, Nominal Isabelle provided a mechanism for constructing alpha-equated
  terms, for example lambda-terms

  \[
  \isa{t\ {\isaliteral{3A}{\isacharcolon}}{\isaliteral{3A}{\isacharcolon}}{\isaliteral{3D}{\isacharequal}}\ x\ {\isaliteral{7C}{\isacharbar}}\ t\ t\ {\isaliteral{7C}{\isacharbar}}\ {\isaliteral{5C3C6C616D6264613E}{\isasymlambda}}x{\isaliteral{2E}{\isachardot}}\ t}
  \]\smallskip

  \noindent
  where free and bound variables have names.  For such alpha-equated terms,
  Nominal Isabelle derives automatically a reasoning infrastructure that has
  been used successfully in formalisations of an equivalence checking
  algorithm for LF \cite{UrbanCheneyBerghofer08}, Typed
  Scheme~\cite{TobinHochstadtFelleisen08}, several calculi for concurrency
  \cite{BengtsonParow09} and a strong normalisation result for cut-elimination
  in classical logic \cite{UrbanZhu08}. It has also been used by Pollack for
  formalisations in the locally-nameless approach to binding
  \cite{SatoPollack10}.

  However, Nominal Isabelle has fared less well in a formalisation of the
  algorithm W \cite{UrbanNipkow09}, where types and type-schemes are,
  respectively, of the form

  \begin{equation}\label{tysch}
  \begin{array}{l}
  \isa{T\ {\isaliteral{3A}{\isacharcolon}}{\isaliteral{3A}{\isacharcolon}}{\isaliteral{3D}{\isacharequal}}\ x\ {\isaliteral{7C}{\isacharbar}}\ T\ {\isaliteral{5C3C72696768746172726F773E}{\isasymrightarrow}}\ T}\hspace{15mm}
  \isa{S\ {\isaliteral{3A}{\isacharcolon}}{\isaliteral{3A}{\isacharcolon}}{\isaliteral{3D}{\isacharequal}}\ {\isaliteral{5C3C666F72616C6C3E}{\isasymforall}}{\isaliteral{7B}{\isacharbraceleft}}x\isaliteral{5C3C5E697375623E}{}\isactrlisub {\isadigit{1}}{\isaliteral{2C}{\isacharcomma}}{\isaliteral{5C3C646F74733E}{\isasymdots}}{\isaliteral{2C}{\isacharcomma}}\ x\isaliteral{5C3C5E697375623E}{}\isactrlisub n{\isaliteral{7D}{\isacharbraceright}}{\isaliteral{2E}{\isachardot}}\ T}
  \end{array}
  \end{equation}\smallskip

  \noindent
  and the \isa{{\isaliteral{5C3C666F72616C6C3E}{\isasymforall}}}-quantification binds a finite (possibly empty) set of
  type-variables.  While it is possible to implement this kind of more general
  binders by iterating single binders, like \isa{{\isaliteral{5C3C666F72616C6C3E}{\isasymforall}}x\isaliteral{5C3C5E697375623E}{}\isactrlisub {\isadigit{1}}{\isaliteral{2E}{\isachardot}}{\isaliteral{5C3C666F72616C6C3E}{\isasymforall}}x\isaliteral{5C3C5E697375623E}{}\isactrlisub {\isadigit{2}}{\isaliteral{2E}{\isachardot}}{\isaliteral{2E}{\isachardot}}{\isaliteral{2E}{\isachardot}}{\isaliteral{5C3C666F72616C6C3E}{\isasymforall}}x\isaliteral{5C3C5E697375623E}{}\isactrlisub n{\isaliteral{2E}{\isachardot}}T}, this leads to a rather clumsy
  formalisation of W. For example, the usual definition for a
  type being an instance of a type-scheme requires in the iterated version 
  the following auxiliary \emph{unbinding relation}:

  \[
  \infer{\isa{T} \hookrightarrow ([], \isa{T})}{}\qquad
  \infer{\forall \isa{x{\isaliteral{2E}{\isachardot}}S} \hookrightarrow (\isa{x}\!::\!\isa{xs}, \isa{T})}
   {\isa{S} \hookrightarrow (\isa{xs}, \isa{T})}
  \]\smallskip

  \noindent
  Its purpose is to relate a type-scheme with a list of type-variables and a type. It is used to
  address the following problem:
  Given a type-scheme, say \isa{S}, how does one get access to the bound type-variables 
  and the type-part of \isa{S}? The unbinding relation gives an answer to this problem, though 
  in general it will only provide \emph{a} list of type-variables together with \emph{a} type that are  
  ``alpha-equivalent'' to \isa{S}. This is because unbinding is a relation; it cannot be a function
  for alpha-equated type-schemes. With the unbinding relation  
  in place, we can define when a type \isa{T} is an instance of a type-scheme \isa{S} as follows:

  \[
  \isa{T\ {\isaliteral{5C3C707265633E}{\isasymprec}}\ S\ {\isaliteral{5C3C65717569763E}{\isasymequiv}}\ {\isaliteral{5C3C6578697374733E}{\isasymexists}}xs\ T{\isaliteral{27}{\isacharprime}}\ {\isaliteral{5C3C7369676D613E}{\isasymsigma}}{\isaliteral{2E}{\isachardot}}\ S\ {\isaliteral{5C3C686F6F6B72696768746172726F773E}{\isasymhookrightarrow}}\ {\isaliteral{28}{\isacharparenleft}}xs{\isaliteral{2C}{\isacharcomma}}\ T{\isaliteral{27}{\isacharprime}}{\isaliteral{29}{\isacharparenright}}\ {\isaliteral{5C3C616E643E}{\isasymand}}\ dom\ {\isaliteral{5C3C7369676D613E}{\isasymsigma}}\ {\isaliteral{3D}{\isacharequal}}\ set\ xs\ {\isaliteral{5C3C616E643E}{\isasymand}}\ {\isaliteral{5C3C7369676D613E}{\isasymsigma}}{\isaliteral{28}{\isacharparenleft}}T{\isaliteral{27}{\isacharprime}}{\isaliteral{29}{\isacharparenright}}\ {\isaliteral{3D}{\isacharequal}}\ T}
  \]\smallskip
  
  \noindent
  This means there exists a list of type-variables \isa{xs} and a type \isa{T{\isaliteral{27}{\isacharprime}}} to which
  the type-scheme \isa{S} unbinds, and there exists a substitution \isa{{\isaliteral{5C3C7369676D613E}{\isasymsigma}}} whose domain is
  \isa{xs} (seen as set) such that \isa{{\isaliteral{5C3C7369676D613E}{\isasymsigma}}{\isaliteral{28}{\isacharparenleft}}T{\isaliteral{27}{\isacharprime}}{\isaliteral{29}{\isacharparenright}}\ {\isaliteral{3D}{\isacharequal}}\ T}.
  The problem with this definition is that we cannot follow the usual proofs 
  that are by induction on the type-part of the type-scheme (since it is under
  an existential quantifier and only an alpha-variant). The implementation of 
  type-schemes using iterations of single binders 
  prevents us from directly ``unbinding'' the bound type-variables and the type-part. 
  Clearly, a more dignified approach for formalising algorithm W is desirable. 
  The purpose of this paper is to introduce general binders, which 
  allow us to represent type-schemes so that they can bind multiple variables at once
  and as a result solve this problem more straightforwardly.
  The need of iterating single binders is also one reason
  why the existing Nominal Isabelle and similar theorem provers that only provide
  mechanisms for binding single variables have so far not fared very well with
  the more advanced tasks in the POPLmark challenge \cite{challenge05},
  because also there one would like to bind multiple variables at once.

  Binding multiple variables has interesting properties that cannot be captured
  easily by iterating single binders. For example in the case of type-schemes we do not
  want to make a distinction about the order of the bound variables. Therefore
  we would like to regard in \eqref{ex1} below  the first pair of type-schemes as alpha-equivalent,
  but assuming that \isa{x}, \isa{y} and \isa{z} are distinct variables,
  the second pair should \emph{not} be alpha-equivalent:

  \begin{equation}\label{ex1}
  \isa{{\isaliteral{5C3C666F72616C6C3E}{\isasymforall}}{\isaliteral{7B}{\isacharbraceleft}}x{\isaliteral{2C}{\isacharcomma}}\ y{\isaliteral{7D}{\isacharbraceright}}{\isaliteral{2E}{\isachardot}}\ x\ {\isaliteral{5C3C72696768746172726F773E}{\isasymrightarrow}}\ y\ \ {\isaliteral{5C3C617070726F783E}{\isasymapprox}}\isaliteral{5C3C5E697375623E}{}\isactrlisub {\isaliteral{5C3C616C7068613E}{\isasymalpha}}\ \ {\isaliteral{5C3C666F72616C6C3E}{\isasymforall}}{\isaliteral{7B}{\isacharbraceleft}}x{\isaliteral{2C}{\isacharcomma}}\ y{\isaliteral{7D}{\isacharbraceright}}{\isaliteral{2E}{\isachardot}}\ y\ {\isaliteral{5C3C72696768746172726F773E}{\isasymrightarrow}}\ x}\hspace{10mm}
  \isa{{\isaliteral{5C3C666F72616C6C3E}{\isasymforall}}{\isaliteral{7B}{\isacharbraceleft}}x{\isaliteral{2C}{\isacharcomma}}\ y{\isaliteral{7D}{\isacharbraceright}}{\isaliteral{2E}{\isachardot}}\ x\ {\isaliteral{5C3C72696768746172726F773E}{\isasymrightarrow}}\ y\ \ {\isaliteral{5C3C6E6F74617070726F783E}{\isasymnotapprox}}\isaliteral{5C3C5E697375623E}{}\isactrlisub {\isaliteral{5C3C616C7068613E}{\isasymalpha}}\ \ {\isaliteral{5C3C666F72616C6C3E}{\isasymforall}}{\isaliteral{7B}{\isacharbraceleft}}z{\isaliteral{7D}{\isacharbraceright}}{\isaliteral{2E}{\isachardot}}\ z\ {\isaliteral{5C3C72696768746172726F773E}{\isasymrightarrow}}\ z}
  \end{equation}\smallskip

  \noindent
  Moreover, we like to regard type-schemes as alpha-equivalent, if they differ
  only on \emph{vacuous} binders, such as

  \begin{equation}\label{ex3}
  \isa{{\isaliteral{5C3C666F72616C6C3E}{\isasymforall}}{\isaliteral{7B}{\isacharbraceleft}}x{\isaliteral{7D}{\isacharbraceright}}{\isaliteral{2E}{\isachardot}}\ x\ {\isaliteral{5C3C72696768746172726F773E}{\isasymrightarrow}}\ y\ \ {\isaliteral{5C3C617070726F783E}{\isasymapprox}}\isaliteral{5C3C5E697375623E}{}\isactrlisub {\isaliteral{5C3C616C7068613E}{\isasymalpha}}\ \ {\isaliteral{5C3C666F72616C6C3E}{\isasymforall}}{\isaliteral{7B}{\isacharbraceleft}}x{\isaliteral{2C}{\isacharcomma}}\ z{\isaliteral{7D}{\isacharbraceright}}{\isaliteral{2E}{\isachardot}}\ x\ {\isaliteral{5C3C72696768746172726F773E}{\isasymrightarrow}}\ y}
  \end{equation}\smallskip

  \noindent
  where \isa{z} does not occur freely in the type.  In this paper we will
  give a general binding mechanism and associated notion of alpha-equivalence
  that can be used to faithfully represent this kind of binding in Nominal
  Isabelle.  The difficulty of finding the right notion for alpha-equivalence
  can be appreciated in this case by considering that the definition given for
  type-schemes by Leroy in \cite[Page 18--19]{Leroy92} is incorrect (it omits a side-condition).

  However, the notion of alpha-equivalence that is preserved by vacuous
  binders is not always wanted. For example in terms like

  \begin{equation}\label{one}
  \isa{{\isaliteral{5C3C4C45543E}{\isasymLET}}\ x\ {\isaliteral{3D}{\isacharequal}}\ {\isadigit{3}}\ {\isaliteral{5C3C414E443E}{\isasymAND}}\ y\ {\isaliteral{3D}{\isacharequal}}\ {\isadigit{2}}\ {\isaliteral{5C3C494E3E}{\isasymIN}}\ x\ {\isaliteral{2D}{\isacharminus}}\ y\ {\isaliteral{5C3C454E443E}{\isasymEND}}}
  \end{equation}\smallskip

  \noindent
  we might not care in which order the assignments \isa{x\ {\isaliteral{3D}{\isacharequal}}\ {\isadigit{3}}} and
  \mbox{\isa{y\ {\isaliteral{3D}{\isacharequal}}\ {\isadigit{2}}}} are given, but it would be often unusual (particularly
  in strict languages) to regard \eqref{one} as alpha-equivalent with

  \[
  \isa{{\isaliteral{5C3C4C45543E}{\isasymLET}}\ x\ {\isaliteral{3D}{\isacharequal}}\ {\isadigit{3}}\ {\isaliteral{5C3C414E443E}{\isasymAND}}\ y\ {\isaliteral{3D}{\isacharequal}}\ {\isadigit{2}}\ {\isaliteral{5C3C414E443E}{\isasymAND}}\ z\ {\isaliteral{3D}{\isacharequal}}\ foo\ {\isaliteral{5C3C494E3E}{\isasymIN}}\ x\ {\isaliteral{2D}{\isacharminus}}\ y\ {\isaliteral{5C3C454E443E}{\isasymEND}}}
  \]\smallskip

  \noindent
  Therefore we will also provide a separate binding mechanism for cases in
  which the order of binders does not matter, but the `cardinality' of the
  binders has to agree.

  However, we found that this is still not sufficient for dealing with
  language constructs frequently occurring in programming language
  research. For example in \isa{{\isaliteral{5C3C4C45543E}{\isasymLET}}}s containing patterns like

  \begin{equation}\label{two}
  \isa{{\isaliteral{5C3C4C45543E}{\isasymLET}}\ {\isaliteral{28}{\isacharparenleft}}x{\isaliteral{2C}{\isacharcomma}}\ y{\isaliteral{29}{\isacharparenright}}\ {\isaliteral{3D}{\isacharequal}}\ {\isaliteral{28}{\isacharparenleft}}{\isadigit{3}}{\isaliteral{2C}{\isacharcomma}}\ {\isadigit{2}}{\isaliteral{29}{\isacharparenright}}\ {\isaliteral{5C3C494E3E}{\isasymIN}}\ x\ {\isaliteral{2D}{\isacharminus}}\ y\ {\isaliteral{5C3C454E443E}{\isasymEND}}}
  \end{equation}\smallskip

  \noindent
  we want to bind all variables from the pattern inside the body of the
  $\mathtt{let}$, but we also care about the order of these variables, since
  we do not want to regard \eqref{two} as alpha-equivalent with

  \[
  \isa{{\isaliteral{5C3C4C45543E}{\isasymLET}}\ {\isaliteral{28}{\isacharparenleft}}y{\isaliteral{2C}{\isacharcomma}}\ x{\isaliteral{29}{\isacharparenright}}\ {\isaliteral{3D}{\isacharequal}}\ {\isaliteral{28}{\isacharparenleft}}{\isadigit{3}}{\isaliteral{2C}{\isacharcomma}}\ {\isadigit{2}}{\isaliteral{29}{\isacharparenright}}\ {\isaliteral{5C3C494E3E}{\isasymIN}}\ x\ {\isaliteral{2D}{\isacharminus}}\ y\ {\isaliteral{5C3C454E443E}{\isasymEND}}}
  \]\smallskip

  \noindent
  As a result, we provide three general binding mechanisms each of which binds
  multiple variables at once, and let the user choose which one is intended
  when formalising a term-calculus.

  By providing these general binding mechanisms, however, we have to work
  around a problem that has been pointed out by Pottier \cite{Pottier06} and
  Cheney \cite{Cheney05}: in \isa{{\isaliteral{5C3C4C45543E}{\isasymLET}}}-constructs of the form

  \[
  \isa{{\isaliteral{5C3C4C45543E}{\isasymLET}}\ x\isaliteral{5C3C5E697375623E}{}\isactrlisub {\isadigit{1}}\ {\isaliteral{3D}{\isacharequal}}\ t\isaliteral{5C3C5E697375623E}{}\isactrlisub {\isadigit{1}}\ {\isaliteral{5C3C414E443E}{\isasymAND}}\ {\isaliteral{5C3C646F74733E}{\isasymdots}}\ {\isaliteral{5C3C414E443E}{\isasymAND}}\ x\isaliteral{5C3C5E697375623E}{}\isactrlisub n\ {\isaliteral{3D}{\isacharequal}}\ t\isaliteral{5C3C5E697375623E}{}\isactrlisub n\ {\isaliteral{5C3C494E3E}{\isasymIN}}\ s\ {\isaliteral{5C3C454E443E}{\isasymEND}}}
  \]\smallskip

  \noindent
  we care about the information that there are as many bound variables \isa{x\isaliteral{5C3C5E697375623E}{}\isactrlisub i} as there are \isa{t\isaliteral{5C3C5E697375623E}{}\isactrlisub i}. We lose this information if
  we represent the \isa{{\isaliteral{5C3C4C45543E}{\isasymLET}}}-constructor by something like

  \[
  \isa{{\isaliteral{5C3C4C45543E}{\isasymLET}}\ {\isaliteral{28}{\isacharparenleft}}{\isaliteral{5C3C6C616D6264613E}{\isasymlambda}}x\isaliteral{5C3C5E697375623E}{}\isactrlisub {\isadigit{1}}{\isaliteral{5C3C646F74733E}{\isasymdots}}x\isaliteral{5C3C5E697375623E}{}\isactrlisub n\ {\isaliteral{2E}{\isachardot}}\ s{\isaliteral{29}{\isacharparenright}}\ \ {\isaliteral{5B}{\isacharbrackleft}}t\isaliteral{5C3C5E697375623E}{}\isactrlisub {\isadigit{1}}{\isaliteral{2C}{\isacharcomma}}{\isaliteral{5C3C646F74733E}{\isasymdots}}{\isaliteral{2C}{\isacharcomma}}t\isaliteral{5C3C5E697375623E}{}\isactrlisub n{\isaliteral{5D}{\isacharbrackright}}}
  \]\smallskip

  \noindent
  where the notation \isa{{\isaliteral{5C3C6C616D6264613E}{\isasymlambda}}{\isaliteral{5F}{\isacharunderscore}}\ {\isaliteral{2E}{\isachardot}}\ {\isaliteral{5F}{\isacharunderscore}}} indicates that the list of \isa{x\isaliteral{5C3C5E697375623E}{}\isactrlisub i} becomes bound in \isa{s}. In this representation the term
  \mbox{\isa{{\isaliteral{5C3C4C45543E}{\isasymLET}}\ {\isaliteral{28}{\isacharparenleft}}{\isaliteral{5C3C6C616D6264613E}{\isasymlambda}}x\ {\isaliteral{2E}{\isachardot}}\ s{\isaliteral{29}{\isacharparenright}}\ {\isaliteral{5B}{\isacharbrackleft}}t\isaliteral{5C3C5E697375623E}{}\isactrlisub {\isadigit{1}}{\isaliteral{2C}{\isacharcomma}}\ t\isaliteral{5C3C5E697375623E}{}\isactrlisub {\isadigit{2}}{\isaliteral{5D}{\isacharbrackright}}}} is a perfectly
  legal instance, but the lengths of the two lists do not agree. To exclude
  such terms, additional predicates about well-formed terms are needed in
  order to ensure that the two lists are of equal length. This can result in
  very messy reasoning (see for example~\cite{BengtsonParow09}). To avoid
  this, we will allow type specifications for \isa{{\isaliteral{5C3C4C45543E}{\isasymLET}}}s as follows

  \[
  \mbox{\begin{tabular}{r@ {\hspace{2mm}}r@ {\hspace{2mm}}ll}
  \isa{trm} & \isa{{\isaliteral{3A}{\isacharcolon}}{\isaliteral{3A}{\isacharcolon}}{\isaliteral{3D}{\isacharequal}}}  & \isa{{\isaliteral{5C3C646F74733E}{\isasymdots}}} \\
              & \isa{{\isaliteral{7C}{\isacharbar}}}    & \isa{{\isaliteral{5C3C4C45543E}{\isasymLET}}\ \ as{\isaliteral{3A}{\isacharcolon}}{\isaliteral{3A}{\isacharcolon}}assn\ \ s{\isaliteral{3A}{\isacharcolon}}{\isaliteral{3A}{\isacharcolon}}trm}\hspace{2mm} 
                                 \isacommand{binds} \isa{bn{\isaliteral{28}{\isacharparenleft}}as{\isaliteral{29}{\isacharparenright}}} \isacommand{in} \isa{s}\\[1mm]
  \isa{assn} & \isa{{\isaliteral{3A}{\isacharcolon}}{\isaliteral{3A}{\isacharcolon}}{\isaliteral{3D}{\isacharequal}}} & \isa{{\isaliteral{5C3C414E494C3E}{\isasymANIL}}}\\
               &  \isa{{\isaliteral{7C}{\isacharbar}}}  & \isa{{\isaliteral{5C3C41434F4E533E}{\isasymACONS}}\ \ name\ \ trm\ \ assn}
  \end{tabular}}
  \]\smallskip

  \noindent
  where \isa{assn} is an auxiliary type representing a list of assignments
  and \isa{bn} an auxiliary function identifying the variables to be bound
  by the \isa{{\isaliteral{5C3C4C45543E}{\isasymLET}}}. This function can be defined by recursion over \isa{assn} as follows

  \[
  \isa{bn{\isaliteral{28}{\isacharparenleft}}{\isaliteral{5C3C414E494C3E}{\isasymANIL}}{\isaliteral{29}{\isacharparenright}}\ {\isaliteral{3D}{\isacharequal}}}~\isa{{\isaliteral{5C3C656D7074797365743E}{\isasymemptyset}}} \hspace{10mm} 
  \isa{bn{\isaliteral{28}{\isacharparenleft}}{\isaliteral{5C3C41434F4E533E}{\isasymACONS}}\ x\ t\ as{\isaliteral{29}{\isacharparenright}}\ {\isaliteral{3D}{\isacharequal}}\ {\isaliteral{7B}{\isacharbraceleft}}x{\isaliteral{7D}{\isacharbraceright}}\ {\isaliteral{5C3C756E696F6E3E}{\isasymunion}}\ bn{\isaliteral{28}{\isacharparenleft}}as{\isaliteral{29}{\isacharparenright}}} 
  \]\smallskip

  \noindent
  The scope of the binding is indicated by labels given to the types, for
  example \isa{s{\isaliteral{3A}{\isacharcolon}}{\isaliteral{3A}{\isacharcolon}}trm}, and a binding clause, in this case
  \isacommand{binds} \isa{bn{\isaliteral{28}{\isacharparenleft}}as{\isaliteral{29}{\isacharparenright}}} \isacommand{in} \isa{s}. This binding
  clause states that all the names the function \isa{bn{\isaliteral{28}{\isacharparenleft}}as{\isaliteral{29}{\isacharparenright}}} returns
  should be bound in \isa{s}.  This style of specifying terms and bindings
  is heavily inspired by the syntax of the Ott-tool \cite{ott-jfp}. Our work
  extends Ott in several aspects: one is that we support three binding
  modes---Ott has only one, namely the one where the order of binders matters.
  Another is that our reasoning infrastructure, like strong induction principles
  and the notion of free variables, is derived from first principles within 
  the Isabelle/HOL theorem prover.

  However, we will not be able to cope with all specifications that are
  allowed by Ott. One reason is that Ott lets the user specify `empty' types
  like \mbox{\isa{t\ {\isaliteral{3A}{\isacharcolon}}{\isaliteral{3A}{\isacharcolon}}{\isaliteral{3D}{\isacharequal}}\ t\ t\ {\isaliteral{7C}{\isacharbar}}\ {\isaliteral{5C3C6C616D6264613E}{\isasymlambda}}x{\isaliteral{2E}{\isachardot}}\ t}} where no clause for variables is
  given. Arguably, such specifications make some sense in the context of Coq's
  type theory (which Ott supports), but not at all in a HOL-based environment
  where every datatype must have a non-empty set-theoretic model
  \cite{Berghofer99}.  Another reason is that we establish the reasoning
  infrastructure for alpha-\emph{equated} terms. In contrast, Ott produces a
  reasoning infrastructure in Isabelle/HOL for \emph{non}-alpha-equated, or
  `raw', terms. While our alpha-equated terms and the `raw' terms produced by
  Ott use names for bound variables, there is a key difference: working with
  alpha-equated terms means, for example, that the two type-schemes

  \[
  \isa{{\isaliteral{5C3C666F72616C6C3E}{\isasymforall}}{\isaliteral{7B}{\isacharbraceleft}}x{\isaliteral{7D}{\isacharbraceright}}{\isaliteral{2E}{\isachardot}}\ x\ {\isaliteral{5C3C72696768746172726F773E}{\isasymrightarrow}}\ y\ \ {\isaliteral{3D}{\isacharequal}}\ {\isaliteral{5C3C666F72616C6C3E}{\isasymforall}}{\isaliteral{7B}{\isacharbraceleft}}x{\isaliteral{2C}{\isacharcomma}}\ z{\isaliteral{7D}{\isacharbraceright}}{\isaliteral{2E}{\isachardot}}\ x\ {\isaliteral{5C3C72696768746172726F773E}{\isasymrightarrow}}\ y} 
  \]\smallskip
  
  \noindent
  are not just alpha-equal, but actually \emph{equal}! As a result, we can
  only support specifications that make sense on the level of alpha-equated
  terms (offending specifications, which for example bind a variable according
  to a variable bound somewhere else, are not excluded by Ott, but we have
  to).  

  Our insistence on reasoning with alpha-equated terms comes from the
  wealth of experience we gained with the older version of Nominal Isabelle:
  for non-trivial properties, reasoning with alpha-equated terms is much
  easier than reasoning with `raw' terms. The fundamental reason for this is
  that the HOL-logic underlying Nominal Isabelle allows us to replace
  `equals-by-equals'. In contrast, replacing
  `alpha-equals-by-alpha-equals' in a representation based on `raw' terms
  requires a lot of extra reasoning work.

  Although in informal settings a reasoning infrastructure for alpha-equated
  terms is nearly always taken for granted, establishing it automatically in
  Isabelle/HOL is a rather non-trivial task. For every
  specification we will need to construct type(s) containing as elements the
  alpha-equated terms. To do so, we use the standard HOL-technique of defining
  a new type by identifying a non-empty subset of an existing type.  The
  construction we perform in Isabelle/HOL can be illustrated by the following picture:

  \begin{equation}\label{picture}
  \mbox{\begin{tikzpicture}[scale=1.1]
  
  \draw[very thick] (0.7,0.4) circle (4.25mm);
  \draw[rounded corners=1mm, very thick] ( 0.0,-0.8) rectangle ( 1.8, 0.9);
  \draw[rounded corners=1mm, very thick] (-1.95,0.85) rectangle (-2.85,-0.05);
  
  \draw (-2.0, 0.845) --  (0.7,0.845);
  \draw (-2.0,-0.045)  -- (0.7,-0.045);

  \draw ( 0.7, 0.5) node {\footnotesize\begin{tabular}{@ {}c@ {}}$\alpha$-\\[-1mm]classes\end{tabular}};
  \draw (-2.4, 0.5) node {\footnotesize\begin{tabular}{@ {}c@ {}}$\alpha$-eq.\\[-1mm]terms\end{tabular}};
  \draw (1.8, 0.48) node[right=-0.1mm]
    {\small\begin{tabular}{@ {}l@ {}}existing\\[-1mm] type\\ (sets of raw terms)\end{tabular}};
  \draw (0.9, -0.35) node {\footnotesize\begin{tabular}{@ {}l@ {}}non-empty\\[-1mm]subset\end{tabular}};
  \draw (-3.25, 0.55) node {\small\begin{tabular}{@ {}l@ {}}new\\[-1mm]type\end{tabular}};
  
  \draw[<->, very thick] (-1.8, 0.3) -- (-0.1,0.3);
  \draw (-0.95, 0.3) node[above=0mm] {\footnotesize{}isomorphism};

  \end{tikzpicture}}
  \end{equation}\smallskip

  \noindent
  We take as the starting point a definition of raw terms (defined as a
  datatype in Isabelle/HOL); then identify the alpha-equivalence classes in
  the type of sets of raw terms according to our alpha-equivalence relation,
  and finally define the new type as these alpha-equivalence classes (the
  non-emptiness requirement is always satisfied whenever the raw terms are
  definable as datatype in Isabelle/HOL and our relation for alpha-equivalence
  is an equivalence relation).

  The fact that we obtain an isomorphism between the new type and the
  non-empty subset shows that the new type is a faithful representation of
  alpha-equated terms. That is not the case for example for terms using the
  locally nameless representation of binders \cite{McKinnaPollack99}: in this
  representation there are `junk' terms that need to be excluded by
  reasoning about a well-formedness predicate.

  The problem with introducing a new type in Isabelle/HOL is that in order to
  be useful, a reasoning infrastructure needs to be `lifted' from the
  underlying subset to the new type. This is usually a tricky and arduous
  task. To ease it, we re-implemented in Isabelle/HOL \cite{KaliszykUrban11}
  the quotient package described by Homeier \cite{Homeier05} for the HOL4
  system. This package allows us to lift definitions and theorems involving
  raw terms to definitions and theorems involving alpha-equated terms. For
  example if we define the free-variable function over raw lambda-terms
  as follows

  \[
  \mbox{\begin{tabular}{l@ {\hspace{1mm}}r@ {\hspace{1mm}}l}
  \isa{fv{\isaliteral{28}{\isacharparenleft}}x{\isaliteral{29}{\isacharparenright}}}     & \isa{{\isaliteral{5C3C65717569763E}{\isasymequiv}}} & \isa{{\isaliteral{7B}{\isacharbraceleft}}x{\isaliteral{7D}{\isacharbraceright}}}\\
  \isa{fv{\isaliteral{28}{\isacharparenleft}}t\isaliteral{5C3C5E697375623E}{}\isactrlisub {\isadigit{1}}\ t\isaliteral{5C3C5E697375623E}{}\isactrlisub {\isadigit{2}}{\isaliteral{29}{\isacharparenright}}} & \isa{{\isaliteral{5C3C65717569763E}{\isasymequiv}}} & \isa{fv{\isaliteral{28}{\isacharparenleft}}t\isaliteral{5C3C5E697375623E}{}\isactrlisub {\isadigit{1}}{\isaliteral{29}{\isacharparenright}}\ {\isaliteral{5C3C756E696F6E3E}{\isasymunion}}\ fv{\isaliteral{28}{\isacharparenleft}}t\isaliteral{5C3C5E697375623E}{}\isactrlisub {\isadigit{2}}{\isaliteral{29}{\isacharparenright}}}\\
  \isa{fv{\isaliteral{28}{\isacharparenleft}}{\isaliteral{5C3C6C616D6264613E}{\isasymlambda}}x{\isaliteral{2E}{\isachardot}}t{\isaliteral{29}{\isacharparenright}}}  & \isa{{\isaliteral{5C3C65717569763E}{\isasymequiv}}} & \isa{fv{\isaliteral{28}{\isacharparenleft}}t{\isaliteral{29}{\isacharparenright}}\ {\isaliteral{2D}{\isacharminus}}\ {\isaliteral{7B}{\isacharbraceleft}}x{\isaliteral{7D}{\isacharbraceright}}}
  \end{tabular}}
  \]\smallskip
  
  \noindent
  then with the help of the quotient package we can obtain a function \isa{fv\isaliteral{5C3C5E7375703E}{}\isactrlsup {\isaliteral{5C3C616C7068613E}{\isasymalpha}}}
  operating on quotients, that is alpha-equivalence classes of lambda-terms. This
  lifted function is characterised by the equations

  \[
  \mbox{\begin{tabular}{l@ {\hspace{1mm}}r@ {\hspace{1mm}}l}
  \isa{fv\isaliteral{5C3C5E7375703E}{}\isactrlsup {\isaliteral{5C3C616C7068613E}{\isasymalpha}}{\isaliteral{28}{\isacharparenleft}}x{\isaliteral{29}{\isacharparenright}}}     & \isa{{\isaliteral{3D}{\isacharequal}}} & \isa{{\isaliteral{7B}{\isacharbraceleft}}x{\isaliteral{7D}{\isacharbraceright}}}\\
  \isa{fv\isaliteral{5C3C5E7375703E}{}\isactrlsup {\isaliteral{5C3C616C7068613E}{\isasymalpha}}{\isaliteral{28}{\isacharparenleft}}t\isaliteral{5C3C5E697375623E}{}\isactrlisub {\isadigit{1}}\ t\isaliteral{5C3C5E697375623E}{}\isactrlisub {\isadigit{2}}{\isaliteral{29}{\isacharparenright}}} & \isa{{\isaliteral{3D}{\isacharequal}}} & \isa{fv\isaliteral{5C3C5E7375703E}{}\isactrlsup {\isaliteral{5C3C616C7068613E}{\isasymalpha}}{\isaliteral{28}{\isacharparenleft}}t\isaliteral{5C3C5E697375623E}{}\isactrlisub {\isadigit{1}}{\isaliteral{29}{\isacharparenright}}\ {\isaliteral{5C3C756E696F6E3E}{\isasymunion}}\ fv\isaliteral{5C3C5E7375703E}{}\isactrlsup {\isaliteral{5C3C616C7068613E}{\isasymalpha}}{\isaliteral{28}{\isacharparenleft}}t\isaliteral{5C3C5E697375623E}{}\isactrlisub {\isadigit{2}}{\isaliteral{29}{\isacharparenright}}}\\
  \isa{fv\isaliteral{5C3C5E7375703E}{}\isactrlsup {\isaliteral{5C3C616C7068613E}{\isasymalpha}}{\isaliteral{28}{\isacharparenleft}}{\isaliteral{5C3C6C616D6264613E}{\isasymlambda}}x{\isaliteral{2E}{\isachardot}}t{\isaliteral{29}{\isacharparenright}}}  & \isa{{\isaliteral{3D}{\isacharequal}}} & \isa{fv\isaliteral{5C3C5E7375703E}{}\isactrlsup {\isaliteral{5C3C616C7068613E}{\isasymalpha}}{\isaliteral{28}{\isacharparenleft}}t{\isaliteral{29}{\isacharparenright}}\ {\isaliteral{2D}{\isacharminus}}\ {\isaliteral{7B}{\isacharbraceleft}}x{\isaliteral{7D}{\isacharbraceright}}}
  \end{tabular}}
  \]\smallskip

  \noindent
  (Note that this means also the term-constructors for variables, applications
  and lambda are lifted to the quotient level.)  This construction, of course,
  only works if alpha-equivalence is indeed an equivalence relation, and the
  `raw' definitions and theorems are respectful w.r.t.~alpha-equivalence.
  For example, we will not be able to lift a bound-variable function. Although
  this function can be defined for raw terms, it does not respect
  alpha-equivalence and therefore cannot be lifted. 
  To sum up, every lifting
  of theorems to the quotient level needs proofs of some respectfulness
  properties (see \cite{Homeier05}). In the paper we show that we are able to
  automate these proofs and as a result can automatically establish a reasoning 
  infrastructure for alpha-equated terms.\smallskip

  The examples we have in mind where our reasoning infrastructure will be
  helpful include the term language of Core-Haskell (see
  Figure~\ref{corehas}). This term language involves patterns that have lists
  of type-, coercion- and term-variables, all of which are bound in \isa{{\isaliteral{5C3C434153453E}{\isasymCASE}}}-expressions. In these patterns we do not know in advance how many
  variables need to be bound. Another example is the algorithm W,
  which includes multiple binders in type-schemes.\medskip

  \noindent
  {\bf Contributions:} We provide three new definitions for when terms
  involving general binders are alpha-equivalent. These definitions are
  inspired by earlier work of Pitts \cite{Pitts04}. By means of automati\-cally-generated
  proofs, we establish a reasoning infrastructure for alpha-equated terms,
  including properties about support, freshness and equality conditions for
  alpha-equated terms. We are also able to automatically derive strong
  induction principles that have the variable convention already built in.
  For this we simplify the earlier automated proofs by using the proving tools
  from the function package~\cite{Krauss09} of Isabelle/HOL.  The method
  behind our specification of general binders is taken from the Ott-tool, but
  we introduce crucial restrictions, and also extensions, so that our
  specifications make sense for reasoning about alpha-equated terms.  The main
  improvement over Ott is that we introduce three binding modes (only one is
  present in Ott), provide formalised definitions for alpha-equivalence and
  for free variables of our terms, and also derive a reasoning infrastructure
  for our specifications from `first principles' inside a theorem prover.

  \begin{figure}[t]
  \begin{boxedminipage}{\linewidth}
  \begin{center}
  \begin{tabular}{@ {\hspace{8mm}}r@ {\hspace{2mm}}r@ {\hspace{2mm}}l}
  \multicolumn{3}{@ {}l}{Type Kinds}\\
  \isa{{\isaliteral{5C3C6B617070613E}{\isasymkappa}}} & \isa{{\isaliteral{3A}{\isacharcolon}}{\isaliteral{3A}{\isacharcolon}}{\isaliteral{3D}{\isacharequal}}} & \isa{{\isaliteral{5C3C737461723E}{\isasymstar}}\ {\isaliteral{7C}{\isacharbar}}\ {\isaliteral{5C3C6B617070613E}{\isasymkappa}}\isaliteral{5C3C5E697375623E}{}\isactrlisub {\isadigit{1}}\ {\isaliteral{5C3C72696768746172726F773E}{\isasymrightarrow}}\ {\isaliteral{5C3C6B617070613E}{\isasymkappa}}\isaliteral{5C3C5E697375623E}{}\isactrlisub {\isadigit{2}}}\smallskip\\
  \multicolumn{3}{@ {}l}{Coercion Kinds}\\
  \isa{{\isaliteral{5C3C696F74613E}{\isasymiota}}} & \isa{{\isaliteral{3A}{\isacharcolon}}{\isaliteral{3A}{\isacharcolon}}{\isaliteral{3D}{\isacharequal}}} & \isa{{\isaliteral{5C3C7369676D613E}{\isasymsigma}}\isaliteral{5C3C5E697375623E}{}\isactrlisub {\isadigit{1}}\ {\isaliteral{5C3C73696D3E}{\isasymsim}}\ {\isaliteral{5C3C7369676D613E}{\isasymsigma}}\isaliteral{5C3C5E697375623E}{}\isactrlisub {\isadigit{2}}}\smallskip\\
  \multicolumn{3}{@ {}l}{Types}\\
  \isa{{\isaliteral{5C3C7369676D613E}{\isasymsigma}}} & \isa{{\isaliteral{3A}{\isacharcolon}}{\isaliteral{3A}{\isacharcolon}}{\isaliteral{3D}{\isacharequal}}} & \isa{a\ {\isaliteral{7C}{\isacharbar}}\ T\ {\isaliteral{7C}{\isacharbar}}\ {\isaliteral{5C3C7369676D613E}{\isasymsigma}}\isaliteral{5C3C5E697375623E}{}\isactrlisub {\isadigit{1}}\ {\isaliteral{5C3C7369676D613E}{\isasymsigma}}\isaliteral{5C3C5E697375623E}{}\isactrlisub {\isadigit{2}}\ {\isaliteral{7C}{\isacharbar}}\ S\isaliteral{5C3C5E697375623E}{}\isactrlisub n}$\;\overline{\isa{{\isaliteral{5C3C7369676D613E}{\isasymsigma}}}}$\isa{\isaliteral{5C3C5E7375703E}{}\isactrlsup n} 
  \isa{{\isaliteral{7C}{\isacharbar}}\ {\isaliteral{5C3C666F72616C6C3E}{\isasymforall}}a{\isaliteral{3A}{\isacharcolon}}{\isaliteral{5C3C6B617070613E}{\isasymkappa}}{\isaliteral{2E}{\isachardot}}\ {\isaliteral{5C3C7369676D613E}{\isasymsigma}}\ {\isaliteral{7C}{\isacharbar}}\ {\isaliteral{5C3C696F74613E}{\isasymiota}}\ {\isaliteral{5C3C52696768746172726F773E}{\isasymRightarrow}}\ {\isaliteral{5C3C7369676D613E}{\isasymsigma}}}\smallskip\\
  \multicolumn{3}{@ {}l}{Coercion Types}\\
  \isa{{\isaliteral{5C3C67616D6D613E}{\isasymgamma}}} & \isa{{\isaliteral{3A}{\isacharcolon}}{\isaliteral{3A}{\isacharcolon}}{\isaliteral{3D}{\isacharequal}}} & \isa{c\ {\isaliteral{7C}{\isacharbar}}\ C\ {\isaliteral{7C}{\isacharbar}}\ {\isaliteral{5C3C67616D6D613E}{\isasymgamma}}\isaliteral{5C3C5E697375623E}{}\isactrlisub {\isadigit{1}}\ {\isaliteral{5C3C67616D6D613E}{\isasymgamma}}\isaliteral{5C3C5E697375623E}{}\isactrlisub {\isadigit{2}}\ {\isaliteral{7C}{\isacharbar}}\ S\isaliteral{5C3C5E697375623E}{}\isactrlisub n}$\;\overline{\isa{{\isaliteral{5C3C67616D6D613E}{\isasymgamma}}}}$\isa{\isaliteral{5C3C5E7375703E}{}\isactrlsup n}
  \isa{{\isaliteral{7C}{\isacharbar}}\ {\isaliteral{5C3C666F72616C6C3E}{\isasymforall}}c{\isaliteral{3A}{\isacharcolon}}{\isaliteral{5C3C696F74613E}{\isasymiota}}{\isaliteral{2E}{\isachardot}}\ {\isaliteral{5C3C67616D6D613E}{\isasymgamma}}\ {\isaliteral{7C}{\isacharbar}}\ {\isaliteral{5C3C696F74613E}{\isasymiota}}\ {\isaliteral{5C3C52696768746172726F773E}{\isasymRightarrow}}\ {\isaliteral{5C3C67616D6D613E}{\isasymgamma}}\ {\isaliteral{7C}{\isacharbar}}\ refl\ {\isaliteral{5C3C7369676D613E}{\isasymsigma}}\ {\isaliteral{7C}{\isacharbar}}\ sym\ {\isaliteral{5C3C67616D6D613E}{\isasymgamma}}\ {\isaliteral{7C}{\isacharbar}}\ {\isaliteral{5C3C67616D6D613E}{\isasymgamma}}\isaliteral{5C3C5E697375623E}{}\isactrlisub {\isadigit{1}}\ {\isaliteral{5C3C636972633E}{\isasymcirc}}\ {\isaliteral{5C3C67616D6D613E}{\isasymgamma}}\isaliteral{5C3C5E697375623E}{}\isactrlisub {\isadigit{2}}}\\
  & \isa{{\isaliteral{7C}{\isacharbar}}} & \isa{{\isaliteral{5C3C67616D6D613E}{\isasymgamma}}\ {\isaliteral{40}{\isacharat}}\ {\isaliteral{5C3C7369676D613E}{\isasymsigma}}\ {\isaliteral{7C}{\isacharbar}}\ left\ {\isaliteral{5C3C67616D6D613E}{\isasymgamma}}\ {\isaliteral{7C}{\isacharbar}}\ right\ {\isaliteral{5C3C67616D6D613E}{\isasymgamma}}\ {\isaliteral{7C}{\isacharbar}}\ {\isaliteral{5C3C67616D6D613E}{\isasymgamma}}\isaliteral{5C3C5E697375623E}{}\isactrlisub {\isadigit{1}}\ {\isaliteral{5C3C73696D3E}{\isasymsim}}\ {\isaliteral{5C3C67616D6D613E}{\isasymgamma}}\isaliteral{5C3C5E697375623E}{}\isactrlisub {\isadigit{2}}\ {\isaliteral{7C}{\isacharbar}}\ rightc\ {\isaliteral{5C3C67616D6D613E}{\isasymgamma}}\ {\isaliteral{7C}{\isacharbar}}\ leftc\ {\isaliteral{5C3C67616D6D613E}{\isasymgamma}}\ {\isaliteral{7C}{\isacharbar}}\ {\isaliteral{5C3C67616D6D613E}{\isasymgamma}}\isaliteral{5C3C5E697375623E}{}\isactrlisub {\isadigit{1}}\ {\isaliteral{5C3C747269616E676C6572696768743E}{\isasymtriangleright}}\ {\isaliteral{5C3C67616D6D613E}{\isasymgamma}}\isaliteral{5C3C5E697375623E}{}\isactrlisub {\isadigit{2}}}\smallskip\\
  \multicolumn{3}{@ {}l}{Terms}\\
  \isa{e} & \isa{{\isaliteral{3A}{\isacharcolon}}{\isaliteral{3A}{\isacharcolon}}{\isaliteral{3D}{\isacharequal}}} & \isa{x\ {\isaliteral{7C}{\isacharbar}}\ K\ {\isaliteral{7C}{\isacharbar}}\ {\isaliteral{5C3C4C616D6264613E}{\isasymLambda}}a{\isaliteral{3A}{\isacharcolon}}{\isaliteral{5C3C6B617070613E}{\isasymkappa}}{\isaliteral{2E}{\isachardot}}\ e\ {\isaliteral{7C}{\isacharbar}}\ {\isaliteral{5C3C4C616D6264613E}{\isasymLambda}}c{\isaliteral{3A}{\isacharcolon}}{\isaliteral{5C3C696F74613E}{\isasymiota}}{\isaliteral{2E}{\isachardot}}\ e\ {\isaliteral{7C}{\isacharbar}}\ e\ {\isaliteral{5C3C7369676D613E}{\isasymsigma}}\ {\isaliteral{7C}{\isacharbar}}\ e\ {\isaliteral{5C3C67616D6D613E}{\isasymgamma}}\ {\isaliteral{7C}{\isacharbar}}\ {\isaliteral{5C3C6C616D6264613E}{\isasymlambda}}x{\isaliteral{3A}{\isacharcolon}}{\isaliteral{5C3C7369676D613E}{\isasymsigma}}{\isaliteral{2E}{\isachardot}}\ e\ {\isaliteral{7C}{\isacharbar}}\ e\isaliteral{5C3C5E697375623E}{}\isactrlisub {\isadigit{1}}\ e\isaliteral{5C3C5E697375623E}{}\isactrlisub {\isadigit{2}}}\\
  & \isa{{\isaliteral{7C}{\isacharbar}}} & \isa{{\isaliteral{5C3C4C45543E}{\isasymLET}}\ x{\isaliteral{3A}{\isacharcolon}}{\isaliteral{5C3C7369676D613E}{\isasymsigma}}\ {\isaliteral{3D}{\isacharequal}}\ e\isaliteral{5C3C5E697375623E}{}\isactrlisub {\isadigit{1}}\ {\isaliteral{5C3C494E3E}{\isasymIN}}\ e\isaliteral{5C3C5E697375623E}{}\isactrlisub {\isadigit{2}}\ {\isaliteral{7C}{\isacharbar}}\ {\isaliteral{5C3C434153453E}{\isasymCASE}}\ e\isaliteral{5C3C5E697375623E}{}\isactrlisub {\isadigit{1}}\ {\isaliteral{5C3C4F463E}{\isasymOF}}}$\;\overline{\isa{p\ {\isaliteral{5C3C72696768746172726F773E}{\isasymrightarrow}}\ e\isaliteral{5C3C5E697375623E}{}\isactrlisub {\isadigit{2}}}}$ \isa{{\isaliteral{7C}{\isacharbar}}\ e\ {\isaliteral{5C3C747269616E676C6572696768743E}{\isasymtriangleright}}\ {\isaliteral{5C3C67616D6D613E}{\isasymgamma}}}\smallskip\\
  \multicolumn{3}{@ {}l}{Patterns}\\
  \isa{p} & \isa{{\isaliteral{3A}{\isacharcolon}}{\isaliteral{3A}{\isacharcolon}}{\isaliteral{3D}{\isacharequal}}} & \isa{K}$\;\overline{\isa{a{\isaliteral{3A}{\isacharcolon}}{\isaliteral{5C3C6B617070613E}{\isasymkappa}}}}\;\overline{\isa{c{\isaliteral{3A}{\isacharcolon}}{\isaliteral{5C3C696F74613E}{\isasymiota}}}}\;\overline{\isa{x{\isaliteral{3A}{\isacharcolon}}{\isaliteral{5C3C7369676D613E}{\isasymsigma}}}}$\smallskip\\
  \multicolumn{3}{@ {}l}{Constants}\\
  & \isa{C} & coercion constants\\
  & \isa{T} & value type constructors\\
  & \isa{S\isaliteral{5C3C5E697375623E}{}\isactrlisub n} & n-ary type functions (which need to be fully applied)\\
  & \isa{K} & data constructors\smallskip\\
  \multicolumn{3}{@ {}l}{Variables}\\
  & \isa{a} & type variables\\
  & \isa{c} & coercion variables\\
  & \isa{x} & term variables\\
  \end{tabular}
  \end{center}
  \end{boxedminipage}
  \caption{The System \isa{F\isaliteral{5C3C5E697375623E}{}\isactrlisub C}
  \cite{CoreHaskell}, also often referred to as \emph{Core-Haskell}. In this
  version of \isa{F\isaliteral{5C3C5E697375623E}{}\isactrlisub C} we made a modification by separating the
  grammars for type kinds and coercion kinds, as well as for types and coercion
  types. For this paper the interesting term-constructor is \isa{{\isaliteral{5C3C434153453E}{\isasymCASE}}},
  which binds multiple type-, coercion- and term-variables (the overlines stand for lists).\label{corehas}}
  \end{figure}%
\end{isamarkuptext}%
\isamarkuptrue%
\isamarkupsection{A Short Review of the Nominal Logic Work%
}
\isamarkuptrue%
\begin{isamarkuptext}%
At its core, Nominal Isabelle is an adaptation of the nominal logic work by
  Pitts \cite{Pitts03}. This adaptation for Isabelle/HOL is described in
  \cite{HuffmanUrban10} (including proofs). We shall briefly review this work
  to aid the description of what follows. 

  Two central notions in the nominal logic work are sorted atoms and
  sort-respecting permutations of atoms. We will use the letters \isa{a{\isaliteral{2C}{\isacharcomma}}\ b{\isaliteral{2C}{\isacharcomma}}\ c{\isaliteral{2C}{\isacharcomma}}\ {\isaliteral{5C3C646F74733E}{\isasymdots}}} to stand for atoms and \isa{{\isaliteral{5C3C70693E}{\isasympi}}{\isaliteral{2C}{\isacharcomma}}\ {\isaliteral{5C3C70693E}{\isasympi}}\isaliteral{5C3C5E697375623E}{}\isactrlisub {\isadigit{1}}{\isaliteral{2C}{\isacharcomma}}\ {\isaliteral{5C3C646F74733E}{\isasymdots}}} to stand for permutations,
  which in Nominal Isabelle have type \isa{perm}. The purpose of atoms is to
  represent variables, be they bound or free. The sorts of atoms can be used
  to represent different kinds of variables, such as the term-, coercion- and
  type-variables in Core-Haskell.  It is assumed that there is an infinite
  supply of atoms for each sort. In the interest of brevity, we shall restrict
  ourselves in what follows to only one sort of atoms.

  Permutations are bijective functions from atoms to atoms that are 
  the identity everywhere except on a finite number of atoms. There is a 
  two-place permutation operation written
  \isa{{\isaliteral{5F}{\isacharunderscore}}\ {\isaliteral{5C3C62756C6C65743E}{\isasymbullet}}\ {\isaliteral{5F}{\isacharunderscore}}} and having the type \isa{perm\ {\isaliteral{5C3C52696768746172726F773E}{\isasymRightarrow}}\ {\isaliteral{5C3C626574613E}{\isasymbeta}}\ {\isaliteral{5C3C52696768746172726F773E}{\isasymRightarrow}}\ {\isaliteral{5C3C626574613E}{\isasymbeta}}}
  where the generic type \isa{{\isaliteral{5C3C626574613E}{\isasymbeta}}} is the type of the object 
  over which the permutation 
  acts. In Nominal Isabelle, the identity permutation is written as \isa{{\isadigit{0}}},
  the composition of two permutations \isa{{\isaliteral{5C3C70693E}{\isasympi}}\isaliteral{5C3C5E697375623E}{}\isactrlisub {\isadigit{1}}} and \isa{{\isaliteral{5C3C70693E}{\isasympi}}\isaliteral{5C3C5E697375623E}{}\isactrlisub {\isadigit{2}}} as \mbox{\isa{{\isaliteral{5C3C70693E}{\isasympi}}\isaliteral{5C3C5E697375623E}{}\isactrlisub {\isadigit{1}}\ {\isaliteral{2B}{\isacharplus}}\ {\isaliteral{5C3C70693E}{\isasympi}}\isaliteral{5C3C5E697375623E}{}\isactrlisub {\isadigit{2}}}} 
  (even if this operation is non-commutative), 
  and the inverse permutation of \isa{{\isaliteral{5C3C70693E}{\isasympi}}} as \isa{{\isaliteral{2D}{\isacharminus}}\ {\isaliteral{5C3C70693E}{\isasympi}}}. The permutation
  operation is defined over Isabelle/HOL's type-hierarchy \cite{HuffmanUrban10};
  for example permutations acting on atoms, products, lists, permutations, sets, 
  functions and booleans are given by:
  
  \begin{equation}\label{permute}
  \mbox{\begin{tabular}{@ {}c@ {\hspace{10mm}}c@ {}}
  \begin{tabular}{@ {}l@ {}}
  \isa{{\isaliteral{5C3C70693E}{\isasympi}}\ {\isaliteral{5C3C62756C6C65743E}{\isasymbullet}}\ a\ {\isaliteral{5C3C65717569763E}{\isasymequiv}}\ {\isaliteral{5C3C70693E}{\isasympi}}\ a}\\
  \isa{{\isaliteral{5C3C70693E}{\isasympi}}\ {\isaliteral{5C3C62756C6C65743E}{\isasymbullet}}\ {\isaliteral{28}{\isacharparenleft}}x{\isaliteral{2C}{\isacharcomma}}\ y{\isaliteral{29}{\isacharparenright}}\ {\isaliteral{5C3C65717569763E}{\isasymequiv}}\ {\isaliteral{28}{\isacharparenleft}}{\isaliteral{5C3C70693E}{\isasympi}}\ {\isaliteral{5C3C62756C6C65743E}{\isasymbullet}}\ x{\isaliteral{2C}{\isacharcomma}}\ {\isaliteral{5C3C70693E}{\isasympi}}\ {\isaliteral{5C3C62756C6C65743E}{\isasymbullet}}\ y{\isaliteral{29}{\isacharparenright}}}\\[2mm]
  \isa{{\isaliteral{5C3C70693E}{\isasympi}}\ {\isaliteral{5C3C62756C6C65743E}{\isasymbullet}}\ {\isaliteral{5B}{\isacharbrackleft}}{\isaliteral{5D}{\isacharbrackright}}\ {\isaliteral{5C3C65717569763E}{\isasymequiv}}\ {\isaliteral{5B}{\isacharbrackleft}}{\isaliteral{5D}{\isacharbrackright}}}\\
  \isa{{\isaliteral{5C3C70693E}{\isasympi}}\ {\isaliteral{5C3C62756C6C65743E}{\isasymbullet}}\ {\isaliteral{28}{\isacharparenleft}}x{\isaliteral{3A}{\isacharcolon}}{\isaliteral{3A}{\isacharcolon}}xs{\isaliteral{29}{\isacharparenright}}\ {\isaliteral{5C3C65717569763E}{\isasymequiv}}\ {\isaliteral{28}{\isacharparenleft}}{\isaliteral{5C3C70693E}{\isasympi}}\ {\isaliteral{5C3C62756C6C65743E}{\isasymbullet}}\ x{\isaliteral{29}{\isacharparenright}}{\isaliteral{3A}{\isacharcolon}}{\isaliteral{3A}{\isacharcolon}}{\isaliteral{28}{\isacharparenleft}}{\isaliteral{5C3C70693E}{\isasympi}}\ {\isaliteral{5C3C62756C6C65743E}{\isasymbullet}}\ xs{\isaliteral{29}{\isacharparenright}}}\\
  \end{tabular} &
  \begin{tabular}{@ {}l@ {}}
  \isa{{\isaliteral{5C3C70693E}{\isasympi}}\ {\isaliteral{5C3C62756C6C65743E}{\isasymbullet}}\ {\isaliteral{5C3C70693E}{\isasympi}}{\isaliteral{27}{\isacharprime}}\ {\isaliteral{5C3C65717569763E}{\isasymequiv}}\ {\isaliteral{5C3C70693E}{\isasympi}}\ {\isaliteral{2B}{\isacharplus}}\ {\isaliteral{5C3C70693E}{\isasympi}}{\isaliteral{27}{\isacharprime}}\ {\isaliteral{2D}{\isacharminus}}\ {\isaliteral{5C3C70693E}{\isasympi}}}\\
  \isa{{\isaliteral{5C3C70693E}{\isasympi}}\ {\isaliteral{5C3C62756C6C65743E}{\isasymbullet}}\ X\ {\isaliteral{5C3C65717569763E}{\isasymequiv}}\ {\isaliteral{7B}{\isacharbraceleft}}{\isaliteral{5C3C70693E}{\isasympi}}\ {\isaliteral{5C3C62756C6C65743E}{\isasymbullet}}\ x\ {\isaliteral{7C}{\isacharbar}}\ x\ {\isaliteral{5C3C696E3E}{\isasymin}}\ X{\isaliteral{7D}{\isacharbraceright}}}\\
  \isa{{\isaliteral{5C3C70693E}{\isasympi}}\ {\isaliteral{5C3C62756C6C65743E}{\isasymbullet}}\ f\ {\isaliteral{5C3C65717569763E}{\isasymequiv}}\ {\isaliteral{5C3C6C616D6264613E}{\isasymlambda}}x{\isaliteral{2E}{\isachardot}}\ {\isaliteral{5C3C70693E}{\isasympi}}\ {\isaliteral{5C3C62756C6C65743E}{\isasymbullet}}\ {\isaliteral{28}{\isacharparenleft}}f\ {\isaliteral{28}{\isacharparenleft}}{\isaliteral{2D}{\isacharminus}}\ {\isaliteral{5C3C70693E}{\isasympi}}\ {\isaliteral{5C3C62756C6C65743E}{\isasymbullet}}\ x{\isaliteral{29}{\isacharparenright}}{\isaliteral{29}{\isacharparenright}}}\\
  \isa{{\isaliteral{5C3C70693E}{\isasympi}}\ {\isaliteral{5C3C62756C6C65743E}{\isasymbullet}}\ b\ {\isaliteral{5C3C65717569763E}{\isasymequiv}}\ b}
  \end{tabular}
  \end{tabular}}
  \end{equation}\smallskip
  
  \noindent
  Concrete permutations in Nominal Isabelle are built up from swappings, 
  written as \mbox{\isa{{\isaliteral{28}{\isacharparenleft}}a\ b{\isaliteral{29}{\isacharparenright}}}}, which are permutations that behave 
  as follows:
  
  \[
  \isa{{\isaliteral{28}{\isacharparenleft}}a\ b{\isaliteral{29}{\isacharparenright}}\ {\isaliteral{3D}{\isacharequal}}\ {\isaliteral{5C3C6C616D6264613E}{\isasymlambda}}c{\isaliteral{2E}{\isachardot}}\ if\ a\ {\isaliteral{3D}{\isacharequal}}\ c\ then\ b\ else\ if\ b\ {\isaliteral{3D}{\isacharequal}}\ c\ then\ a\ else\ c}
  \]\smallskip

  The most original aspect of the nominal logic work of Pitts is a general
  definition for the notion of the `set of free variables of an object \isa{x}'.  This notion, written \isa{supp\ x}, is general in the sense that
  it applies not only to lambda-terms (alpha-equated or not), but also to lists,
  products, sets and even functions. Its definition depends only on the
  permutation operation and on the notion of equality defined for the type of
  \isa{x}, namely:
  
  \begin{equation}\label{suppdef}
  \isa{supp\ x\ {\isaliteral{5C3C65717569763E}{\isasymequiv}}\ {\isaliteral{7B}{\isacharbraceleft}}a\ {\isaliteral{7C}{\isacharbar}}\ infinite\ {\isaliteral{7B}{\isacharbraceleft}}b\ {\isaliteral{7C}{\isacharbar}}\ {\isaliteral{28}{\isacharparenleft}}a\ b{\isaliteral{29}{\isacharparenright}}\ {\isaliteral{5C3C62756C6C65743E}{\isasymbullet}}\ x\ {\isaliteral{5C3C6E6F7465713E}{\isasymnoteq}}\ x{\isaliteral{7D}{\isacharbraceright}}{\isaliteral{7D}{\isacharbraceright}}}
  \end{equation}\smallskip

  \noindent
  There is also the derived notion for when an atom \isa{a} is \emph{fresh}
  for an \isa{x}, defined as 

  \[
  \isa{a\ {\isaliteral{23}{\isacharhash}}\ x\ {\isaliteral{5C3C65717569763E}{\isasymequiv}}\ a\ {\isaliteral{5C3C6E6F74696E3E}{\isasymnotin}}\ supp\ x}
  \]\smallskip

  \noindent
  We use for sets of atoms the abbreviation 
  \isa{as\ {\isaliteral{23}{\isacharhash}}\isaliteral{5C3C5E7375703E}{}\isactrlsup {\isaliteral{2A}{\isacharasterisk}}\ x}, defined as 
  \isa{{\isaliteral{5C3C666F72616C6C3E}{\isasymforall}}a{\isaliteral{5C3C696E3E}{\isasymin}}as{\isaliteral{2E}{\isachardot}}\ a\ {\isaliteral{23}{\isacharhash}}\ x}.
  A striking consequence of these definitions is that we can prove
  without knowing anything about the structure of \isa{x} that
  swapping two fresh atoms, say \isa{a} and \isa{b}, leaves 
  \isa{x} unchanged, namely 
  
  \begin{prop}\label{swapfreshfresh}
  If \isa{a\ {\isaliteral{23}{\isacharhash}}\ x} and \isa{b\ {\isaliteral{23}{\isacharhash}}\ x}
  then \isa{{\isaliteral{28}{\isacharparenleft}}a\ b{\isaliteral{29}{\isacharparenright}}\ {\isaliteral{5C3C62756C6C65743E}{\isasymbullet}}\ x\ {\isaliteral{3D}{\isacharequal}}\ x}.
  \end{prop}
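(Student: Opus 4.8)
The plan is to prove Proposition~\ref{swapfreshfresh} directly from the definition of \isa{supp} in~\eqref{suppdef}, by ``interpolating'' the swapping $(a\ b)$ through a sufficiently fresh third atom. First I would unfold the definitions: the hypothesis $a \mathrel{\#} x$ says precisely that the set $\{c \mid (a\ c)\bullet x \neq x\}$ is \emph{finite} (it is the negation of the condition ``\isa{infinite}'' occurring in~\eqref{suppdef}), and likewise $b \mathrel{\#} x$ says that $\{c \mid (b\ c)\bullet x \neq x\}$ is finite.

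Next I would case-split on whether $a = b$. If $a = b$, then $(a\ b)$ is the identity permutation \isa{0}, and $(a\ b)\bullet x = x$ is immediate. Otherwise, assume $a \neq b$. The set $\{c \mid (a\ c)\bullet x \neq x\} \cup \{c \mid (b\ c)\bullet x \neq x\} \cup \{a, b\}$ is a finite set of atoms, being a union of finitely many finite sets; since the supply of atoms is infinite, I may choose an atom $c$ lying outside it. By construction $a$, $b$, $c$ are pairwise distinct, and moreover $(a\ c)\bullet x = x$ as well as $(b\ c)\bullet x = x$.

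The key step is the permutation identity $(a\ b) = (a\ c) + (b\ c) + (a\ c)$, which holds whenever $a$, $b$, $c$ are pairwise distinct; it is verified by evaluating both sides on an arbitrary atom. Combining this with the fact that $\bullet$ is a group action, so that composed permutations act by successive application, I compute
\[
(a\ b)\bullet x \;=\; (a\ c)\bullet\bigl((b\ c)\bullet\bigl((a\ c)\bullet x\bigr)\bigr) \;=\; (a\ c)\bullet\bigl((b\ c)\bullet x\bigr) \;=\; (a\ c)\bullet x \;=\; x,
\]
where the two freshness equalities for $c$ are used in turn (the arrangement of the swappings is a palindrome, so this works regardless of the order convention for $+$). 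This closes the proof.

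The argument is essentially routine; the only points that need a little care are the degenerate case $a = b$ — where the interpolation identity cannot be applied, since it requires three distinct atoms — and the appeal to the infinitude of the atom supply to extract the witness $c$. Notably, no structural information about $x$ is used anywhere, which is exactly what makes the statement so useful later on.
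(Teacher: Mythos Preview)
The paper does not actually supply a proof of this proposition; it is stated as a background fact, with the details deferred to \cite{HuffmanUrban10}. Your argument is correct and is the standard one: unfold freshness via \eqref{suppdef} as finiteness of the set of ``bad'' swap partners, pick a third atom $c$ outside the (finite) union of the two bad sets and $\{a,b\}$, then use the conjugation identity $(a\ b) = (a\ c) + (b\ c) + (a\ c)$ to reduce to the two known fixed-point equalities. The handling of the degenerate case $a = b$ and the remark that the palindromic form makes the composition convention irrelevant are both apt.
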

  
  While often the support of an object can be relatively easily 
  described, for example for atoms, products, lists, function applications, 
  booleans and permutations as follows
  
  \begin{equation}\label{supps}\mbox{
  \begin{tabular}{c@ {\hspace{10mm}}c}
  \begin{tabular}{rcl}
  \isa{supp\ a} & $=$ & \isa{{\isaliteral{7B}{\isacharbraceleft}}a{\isaliteral{7D}{\isacharbraceright}}}\\
  \isa{supp\ {\isaliteral{28}{\isacharparenleft}}x{\isaliteral{2C}{\isacharcomma}}\ y{\isaliteral{29}{\isacharparenright}}} & $=$ & \isa{supp\ x\ {\isaliteral{5C3C756E696F6E3E}{\isasymunion}}\ supp\ y}\\
  \isa{supp\ {\isaliteral{5B}{\isacharbrackleft}}{\isaliteral{5D}{\isacharbrackright}}} & $=$ & \isa{{\isaliteral{5C3C656D7074797365743E}{\isasymemptyset}}}\\
  \isa{supp\ {\isaliteral{28}{\isacharparenleft}}x{\isaliteral{3A}{\isacharcolon}}{\isaliteral{3A}{\isacharcolon}}xs{\isaliteral{29}{\isacharparenright}}} & $=$ & \isa{supp\ x\ {\isaliteral{5C3C756E696F6E3E}{\isasymunion}}\ supp\ xs}\\
  \end{tabular}
  &
  \begin{tabular}{rcl}
  \isa{supp\ {\isaliteral{28}{\isacharparenleft}}f\ x{\isaliteral{29}{\isacharparenright}}} & \isa{{\isaliteral{5C3C73756273657465713E}{\isasymsubseteq}}} & \isa{supp\ f\ {\isaliteral{5C3C756E696F6E3E}{\isasymunion}}\ supp\ x}\\
  \isa{supp\ b} & $=$ & \isa{{\isaliteral{5C3C656D7074797365743E}{\isasymemptyset}}}\\
  \isa{supp\ {\isaliteral{5C3C70693E}{\isasympi}}} & $=$ & \isa{{\isaliteral{7B}{\isacharbraceleft}}a\ {\isaliteral{7C}{\isacharbar}}\ {\isaliteral{5C3C70693E}{\isasympi}}\ {\isaliteral{5C3C62756C6C65743E}{\isasymbullet}}\ a\ {\isaliteral{5C3C6E6F7465713E}{\isasymnoteq}}\ a{\isaliteral{7D}{\isacharbraceright}}}
  \end{tabular}
  \end{tabular}}
  \end{equation}\smallskip
  
  \noindent 
  in some cases it can be difficult to characterise the support precisely, and
  only an approximation can be established (as for function applications
  above). Reasoning about such approximations can be simplified with the
  notion \emph{supports}, defined as follows:
  
  \begin{defi}
  A set \isa{S} \emph{supports} \isa{x}, if for all atoms \isa{a} and \isa{b}
  not in \isa{S} we have \isa{{\isaliteral{28}{\isacharparenleft}}a\ b{\isaliteral{29}{\isacharparenright}}\ {\isaliteral{5C3C62756C6C65743E}{\isasymbullet}}\ x\ {\isaliteral{3D}{\isacharequal}}\ x}.
  \end{defi}
  
  \noindent
  The main point of \isa{supports} is that we can establish the following 
  two properties.
  
  \begin{prop}\label{supportsprop}
  Given a set \isa{bs} of atoms.\\
  {\it (i)} If \isa{bs\ supports\ x}
  and \isa{finite\ bs} then 
  \isa{supp\ x\ {\isaliteral{5C3C73756273657465713E}{\isasymsubseteq}}\ bs}.\\
  {\it (ii)} \isa{{\isaliteral{28}{\isacharparenleft}}supp\ x{\isaliteral{29}{\isacharparenright}}\ supports\ x}.
  \end{prop}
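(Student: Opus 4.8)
The plan is to handle the two parts separately. Part \emph{(ii)} is essentially a repackaging of Proposition~\ref{swapfreshfresh}, and part \emph{(i)} needs only one small finiteness argument on top of the relevant definitions.

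For part \emph{(ii)}: unfolding the definition of \isa{supports}, I must show that for all atoms $a$ and $b$ with $a \notin \isa{supp\ x}$ and $b \notin \isa{supp\ x}$ one has $(a\ b) \bullet x = x$. By the definition of freshness, $a \notin \isa{supp\ x}$ is precisely the statement that $a$ is fresh for $x$, and similarly for $b$; hence Proposition~\ref{swapfreshfresh} applies and gives $(a\ b) \bullet x = x$ directly. So part \emph{(ii)} is immediate.

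For part \emph{(i)}: assume $bs$ supports $x$ and that $bs$ is finite. Let $a$ be an arbitrary atom with $a \in \isa{supp\ x}$; I will show $a \in bs$. Suppose, towards a contradiction, that $a \notin bs$. By the definition of support in~\eqref{suppdef}, the set $B = \{\, b \mid (a\ b) \bullet x \neq x \,\}$ is infinite. Since $bs$ is finite, $B \setminus bs$ is non-empty, so I may choose an atom $b$ with $(a\ b) \bullet x \neq x$ and $b \notin bs$ (note that $b \neq a$ is automatic, as $(a\ a)\bullet x = x$). Now $a \notin bs$ and $b \notin bs$, so the hypothesis that $bs$ supports $x$ yields $(a\ b) \bullet x = x$ --- a contradiction. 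Therefore $a \in bs$, and since $a$ was an arbitrary element of $\isa{supp\ x}$ we conclude $\isa{supp\ x} \subseteq bs$.

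The only step that uses anything beyond unfolding the three notions involved (\isa{supp}, freshness, \isa{supports}) is the choice of the witness $b$ in part \emph{(i)}, which rests on the fact that an infinite set of atoms cannot be covered by a finite one; in Isabelle this follows from the standard finiteness lemmas. I therefore do not expect any real obstacle: once Proposition~\ref{swapfreshfresh} is available, the whole proposition is a short argument, and the mild subtlety is merely in being careful with the cardinality bookkeeping of part \emph{(i)}.
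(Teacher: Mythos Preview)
Your proof is correct. The paper does not actually give its own proof of this proposition; it merely states the result and defers to \cite{HuffmanUrban10} for details, so there is nothing to compare against beyond noting that your argument is the standard one and would go through in Isabelle/HOL without difficulty.
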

  
  Another important notion in the nominal logic work is \emph{equivariance}.
  For a function \isa{f} to be equivariant 
  it is required that every permutation leaves \isa{f} unchanged, that is
  
  \begin{equation}\label{equivariancedef}
  \isa{{\isaliteral{5C3C666F72616C6C3E}{\isasymforall}}{\isaliteral{5C3C70693E}{\isasympi}}{\isaliteral{2E}{\isachardot}}\ {\isaliteral{5C3C70693E}{\isasympi}}\ {\isaliteral{5C3C62756C6C65743E}{\isasymbullet}}\ f\ {\isaliteral{3D}{\isacharequal}}\ f}\;.
  \end{equation}\smallskip
  
  \noindent
  If a function is of type \isa{{\isaliteral{5C3C616C7068613E}{\isasymalpha}}\ {\isaliteral{5C3C52696768746172726F773E}{\isasymRightarrow}}\ {\isaliteral{5C3C626574613E}{\isasymbeta}}}, say, this definition is equivalent to 
  the fact that a permutation applied to the application
  \isa{f\ x} can be moved to the argument \isa{x}. That means for 
  such functions, we have for all permutations \isa{{\isaliteral{5C3C70693E}{\isasympi}}}:
  
  \begin{equation}\label{equivariance}
  \isa{{\isaliteral{5C3C70693E}{\isasympi}}\ {\isaliteral{5C3C62756C6C65743E}{\isasymbullet}}\ f\ {\isaliteral{3D}{\isacharequal}}\ f} \;\;\;\;\textit{if and only if}\;\;\;\;
  \isa{{\isaliteral{5C3C666F72616C6C3E}{\isasymforall}}x{\isaliteral{2E}{\isachardot}}\ {\isaliteral{5C3C70693E}{\isasympi}}\ {\isaliteral{5C3C62756C6C65743E}{\isasymbullet}}\ {\isaliteral{28}{\isacharparenleft}}f\ x{\isaliteral{29}{\isacharparenright}}\ {\isaliteral{3D}{\isacharequal}}\ f\ {\isaliteral{28}{\isacharparenleft}}{\isaliteral{5C3C70693E}{\isasympi}}\ {\isaliteral{5C3C62756C6C65743E}{\isasymbullet}}\ x{\isaliteral{29}{\isacharparenright}}}\;.
  \end{equation}\smallskip
   
  \noindent
  There is
  also a similar property for relations, which are in HOL functions of type \isa{{\isaliteral{5C3C616C7068613E}{\isasymalpha}}\ {\isaliteral{5C3C52696768746172726F773E}{\isasymRightarrow}}\ {\isaliteral{5C3C626574613E}{\isasymbeta}}\ {\isaliteral{5C3C52696768746172726F773E}{\isasymRightarrow}}\ bool}.
  Suppose a relation \isa{R}, then for all permutations \isa{{\isaliteral{5C3C70693E}{\isasympi}}}:
  
  \[
  \isa{{\isaliteral{5C3C70693E}{\isasympi}}\ {\isaliteral{5C3C62756C6C65743E}{\isasymbullet}}\ R\ {\isaliteral{3D}{\isacharequal}}\ R} \;\;\;\;\textit{if and only if}\;\;\;\;
  \isa{{\isaliteral{5C3C666F72616C6C3E}{\isasymforall}}x\ y{\isaliteral{2E}{\isachardot}}}~~\isa{x\ R\ y} \;\textit{implies}\; \isa{{\isaliteral{28}{\isacharparenleft}}{\isaliteral{5C3C70693E}{\isasympi}}\ {\isaliteral{5C3C62756C6C65743E}{\isasymbullet}}\ x{\isaliteral{29}{\isacharparenright}}\ R\ {\isaliteral{28}{\isacharparenleft}}{\isaliteral{5C3C70693E}{\isasympi}}\ {\isaliteral{5C3C62756C6C65743E}{\isasymbullet}}\ y{\isaliteral{29}{\isacharparenright}}}\;.
  \]\smallskip

  \noindent
  Note that from property \eqref{equivariancedef} and the definition of \isa{supp}, we 
  can easily deduce that for a function being equivariant is equivalent to having empty support.

  Using freshness, the nominal logic work provides us with general means for renaming 
  binders. 
  
  \noindent
  While in the older version of Nominal Isabelle, we used extensively 
  Proposition~\ref{swapfreshfresh} to rename single binders, this property 
  proved too unwieldy for dealing with multiple binders. For such binders the 
  following generalisations turned out to be easier to use.

  \begin{prop}\label{supppermeq}
  \isa{{\normalsize{}If\,}\ supp\ x\ {\isaliteral{23}{\isacharhash}}\isaliteral{5C3C5E7375703E}{}\isactrlsup {\isaliteral{2A}{\isacharasterisk}}\ {\isaliteral{5C3C70693E}{\isasympi}}\ {\normalsize \,then\,}\ {\isaliteral{5C3C70693E}{\isasympi}}\ {\isaliteral{5C3C62756C6C65743E}{\isasymbullet}}\ x\ {\isaliteral{3D}{\isacharequal}}\ x{\isaliteral{2E}{\isachardot}}}
  \end{prop}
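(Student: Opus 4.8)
The plan is to reduce the statement to Proposition~\ref{swapfreshfresh} by decomposing $\pi$ into a composition of swappings, each of which involves only atoms lying outside $\mathit{supp}\,x$. First I would unfold the abbreviation: the hypothesis $\mathit{supp}\,x \;\#^{*}\;\pi$ says that $a \# \pi$ holds for every $a \in \mathit{supp}\,x$, which by the characterisation of $\mathit{supp}\,\pi$ in~\eqref{supps} is the same as saying that $\pi$ fixes every atom of $\mathit{supp}\,x$; equivalently, $\mathit{supp}\,x \cap \mathit{supp}\,\pi = \varnothing$.

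Next I would argue as follows. Since $\pi$ is a permutation, $\mathit{supp}\,\pi$ is a finite set, say $\{a_{1},\dots,a_{n}\}$, and $\pi$ maps this set bijectively onto itself while being the identity elsewhere. A permutation of a finite set is a composition of transpositions of its elements, so $\pi$ equals a finite composition of swappings $(a_{i}\,a_{j})$ with all the $a_{i},a_{j}$ in $\mathit{supp}\,\pi$. By the disjointness just established, each such $a_{i},a_{j}$ satisfies $a_{i}\#x$ and $a_{j}\#x$, so Proposition~\ref{swapfreshfresh} gives $(a_{i}\,a_{j})\bullet x = x$. A straightforward induction on the number of factors, using that the permutation action is compatible with composition (so $(\pi_{1}+\pi_{2})\bullet x = \pi_{1}\bullet(\pi_{2}\bullet x)$), then shows that the whole composition acts trivially on $x$, i.e.\ $\pi\bullet x = x$. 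Equivalently, one can run this as the structural induction for permutations (which are built up from $0$ and swappings): the base case $0\bullet x=x$ is trivial, and in the step case $(a\,b)+\pi'$ one keeps the invariant that the atoms $a,b$ come from $\mathit{supp}\,\pi$, so they are fresh for $x$ and Proposition~\ref{swapfreshfresh} applies to $(a\,b)$ while the induction hypothesis handles $\pi'$. One may also phrase the argument through Proposition~\ref{supportsprop}(ii): $(\mathit{supp}\,x)$ supports $x$, and a permutation that is the identity on $\mathit{supp}\,x$ therefore leaves $x$ unchanged.

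The only delicate point — and the step I would be most careful about — is ensuring that every atom occurring in the swappings actually lies in $\mathit{supp}\,\pi$, so that the hypothesis $\mathit{supp}\,x \;\#^{*}\;\pi$ can be applied to it; a careless decomposition of $\pi$ could introduce auxiliary `scratch' atoms that are not known to be fresh for $x$, and then Proposition~\ref{swapfreshfresh} would not be available. Choosing the decomposition (or the set carried through the induction) to stay within $\mathit{supp}\,\pi$ is exactly what makes the proof go through; everything else is routine bookkeeping with the permutation action.
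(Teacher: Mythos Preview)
The paper does not actually give a proof of this proposition: it is stated as a fact and, together with the surrounding properties, is deferred to \cite{HuffmanUrban10} (see the closing sentence of Section~2). So there is no proof in the paper to compare against.

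Your argument is correct and is essentially the standard one. The reduction of the hypothesis \isa{supp\ x\ {\isaliteral{23}{\isacharhash}}\isaliteral{5C3C5E7375703E}{}\isactrlsup {\isaliteral{2A}{\isacharasterisk}}\ {\isaliteral{5C3C70693E}{\isasympi}}} to $\mathit{supp}\,x \cap \mathit{supp}\,\pi = \varnothing$ via the characterisation of \isa{supp\ {\isaliteral{5C3C70693E}{\isasympi}}} in~\eqref{supps} is the right first step, and decomposing $\pi$ into transpositions whose atoms all lie in $\mathit{supp}\,\pi$ is exactly what is needed to invoke Proposition~\ref{swapfreshfresh} repeatedly. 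Your caution about not introducing scratch atoms outside $\mathit{supp}\,\pi$ is well placed; a cycle decomposition of $\pi$ restricted to its (finite) support achieves this cleanly. One small remark: your ``alternative'' via Proposition~\ref{supportsprop}{\it (ii)} is not really a shortcut, since the definition of \emph{supports} only speaks about swappings, so turning ``$\mathit{supp}\,x$ supports $x$'' into ``any permutation fixing $\mathit{supp}\,x$ pointwise fixes $x$'' still requires the same decomposition-into-swappings step you already gave.
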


  \begin{prop}\label{avoiding}
  For a finite set \isa{as} and a finitely supported \isa{x} with
  \isa{as\ {\isaliteral{23}{\isacharhash}}\isaliteral{5C3C5E7375703E}{}\isactrlsup {\isaliteral{2A}{\isacharasterisk}}\ x} and also a finitely supported \isa{c}, there
  exists a permutation \isa{{\isaliteral{5C3C70693E}{\isasympi}}} such that \isa{{\isaliteral{5C3C70693E}{\isasympi}}\ {\isaliteral{5C3C62756C6C65743E}{\isasymbullet}}\ as\ {\isaliteral{23}{\isacharhash}}\isaliteral{5C3C5E7375703E}{}\isactrlsup {\isaliteral{2A}{\isacharasterisk}}\ c} and
  \isa{supp\ x\ {\isaliteral{23}{\isacharhash}}\isaliteral{5C3C5E7375703E}{}\isactrlsup {\isaliteral{2A}{\isacharasterisk}}\ {\isaliteral{5C3C70693E}{\isasympi}}}.
  \end{prop}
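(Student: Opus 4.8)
The plan is to build the required permutation $\pi$ explicitly as a product of swappings. First note that $as \mathrel{\#^{*}} x$ means no atom of $as$ lies in $\mathit{supp}\,x$, i.e.\ $as \cap \mathit{supp}\,x = \varnothing$. Since $x$ and $c$ are finitely supported, the set $as \cup \mathit{supp}\,x \cup \mathit{supp}\,c$ is finite, so by the standing assumption that there is an infinite supply of atoms (of each sort) we may choose a finite set $bs$ of atoms disjoint from $as \cup \mathit{supp}\,x \cup \mathit{supp}\,c$ together with a bijection $as \to bs$ that respects sorts. Enumerating $as = \{a_1, \dots, a_n\}$ and letting $b_i$ be the image of $a_i$ under this bijection, I would take $\pi \dn (a_1\ b_1) + \cdots + (a_n\ b_n)$.

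It then remains to check the two conclusions. Because the $a_i$ are pairwise distinct, the $b_i$ are pairwise distinct, and $as \cap bs = \varnothing$, an easy induction on $n$ shows both $\pi \bullet as = bs$ and $\mathit{supp}\,\pi \subseteq as \cup bs$ (the latter using the description of $\mathit{supp}\,\pi$ from~\eqref{supps}). From $\mathit{supp}\,\pi \subseteq as \cup bs$ and $(as \cup bs) \cap \mathit{supp}\,x = \varnothing$ we conclude that every atom in $\mathit{supp}\,x$ is outside $\mathit{supp}\,\pi$, which, after unfolding the definitions of $\#$ and $\mathrel{\#^{*}}$, is precisely $\mathit{supp}\,x \mathrel{\#^{*}} \pi$. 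Similarly, from $\pi \bullet as = bs$ and $bs \cap \mathit{supp}\,c = \varnothing$ we get $(\pi \bullet as) \mathrel{\#^{*}} c$. (Note that, via Proposition~\ref{supppermeq}, the conclusion $\mathit{supp}\,x \mathrel{\#^{*}} \pi$ also gives $\pi \bullet x = x$, which is the form in which this proposition is usually applied.)

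The one slightly delicate point is the induction establishing that $(a_1\ b_1) + \cdots + (a_n\ b_n)$ genuinely acts as the chosen bijection on $as$ and has support inside $as \cup bs$: here the pairwise distinctness of all the atoms involved and the disjointness $as \cap bs = \varnothing$ are essential, and one should make sure the new atoms are picked with matching sorts so that the swappings are bona fide (sort-respecting) permutations. Beyond that, the argument is just routine unfolding of the definitions of $\mathit{supp}$, $\#$ and $\mathrel{\#^{*}}$ together with the elementary permutation facts collected in~\eqref{permute} and~\eqref{supps}.
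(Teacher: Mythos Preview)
The paper does not actually prove this proposition: it is stated as a background fact from the nominal logic work, with the remark that ``most properties given in this section are described in detail in \cite{HuffmanUrban10} and all are formalised in Isabelle/HOL.'' So there is no in-paper proof to compare against.

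Your argument is correct and is essentially the standard one. The construction of $\pi$ as a product of disjoint swappings $(a_i\ b_i)$ with the $b_i$ chosen fresh for $as \cup \mathit{supp}\,x \cup \mathit{supp}\,c$ is exactly how such avoidance permutations are built in the nominal setting, and your verification of the two conclusions via $\pi \bullet as = bs$ and $\mathit{supp}\,\pi \subseteq as \cup bs$ is sound. The care you take about sort-respecting swappings and pairwise distinctness is appropriate; these are precisely the side conditions that make the induction go through.
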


  \noindent
  The idea behind the second property is that given a finite set \isa{as}
  of binders (being bound, or fresh, in \isa{x} is ensured by the
  assumption \isa{as\ {\isaliteral{23}{\isacharhash}}\isaliteral{5C3C5E7375703E}{}\isactrlsup {\isaliteral{2A}{\isacharasterisk}}\ x}), then there exists a permutation \isa{{\isaliteral{5C3C70693E}{\isasympi}}} such that
  the renamed binders \isa{{\isaliteral{5C3C70693E}{\isasympi}}\ {\isaliteral{5C3C62756C6C65743E}{\isasymbullet}}\ as} avoid \isa{c} (which can be arbitrarily chosen
  as long as it is finitely supported) and also \isa{{\isaliteral{5C3C70693E}{\isasympi}}} does not affect anything
  in the support of \isa{x} (that is \isa{supp\ x\ {\isaliteral{23}{\isacharhash}}\isaliteral{5C3C5E7375703E}{}\isactrlsup {\isaliteral{2A}{\isacharasterisk}}\ {\isaliteral{5C3C70693E}{\isasympi}}}). The last 
  fact and Property~\ref{supppermeq} allow us to `rename' just the binders 
  \isa{as} in \isa{x}, because \isa{{\isaliteral{5C3C70693E}{\isasympi}}\ {\isaliteral{5C3C62756C6C65743E}{\isasymbullet}}\ x\ {\isaliteral{3D}{\isacharequal}}\ x}. 

  Note that \isa{supp\ x\ {\isaliteral{23}{\isacharhash}}\isaliteral{5C3C5E7375703E}{}\isactrlsup {\isaliteral{2A}{\isacharasterisk}}\ {\isaliteral{5C3C70693E}{\isasympi}}}
  is equivalent with \isa{supp\ {\isaliteral{5C3C70693E}{\isasympi}}\ {\isaliteral{23}{\isacharhash}}\isaliteral{5C3C5E7375703E}{}\isactrlsup {\isaliteral{2A}{\isacharasterisk}}\ x}, which means we could also formulate 
  Propositions \ref{supppermeq} and \ref{avoiding} in the other `direction'; however the 
  reasoning infrastructure of Nominal Isabelle is set up so that it provides more
  automation for the formulation given above.

  Most properties given in this section are described in detail in \cite{HuffmanUrban10}
  and all are formalised in Isabelle/HOL. In the next sections we will make 
  use of these properties in order to define alpha-equivalence in 
  the presence of multiple binders.%
\end{isamarkuptext}%
\isamarkuptrue%
\isamarkupsection{Abstractions\label{sec:binders}%
}
\isamarkuptrue%
\begin{isamarkuptext}%
In Nominal Isabelle, the user is expected to write down a specification of a
  term-calculus and then a reasoning infrastructure is automatically derived
  from this specification (remember that Nominal Isabelle is a definitional
  extension of Isabelle/HOL, which does not introduce any new axioms).

  In order to keep our work with deriving the reasoning infrastructure
  manageable, we will wherever possible state definitions and perform proofs
  on the `user-level' of Isabelle/HOL, as opposed to writing custom ML-code that
  generates them anew for each specification. 
  To that end, we will consider
  first pairs \isa{{\isaliteral{28}{\isacharparenleft}}as{\isaliteral{2C}{\isacharcomma}}\ x{\isaliteral{29}{\isacharparenright}}} of type \isa{{\isaliteral{28}{\isacharparenleft}}atom\ set{\isaliteral{29}{\isacharparenright}}\ {\isaliteral{5C3C74696D65733E}{\isasymtimes}}\ {\isaliteral{5C3C626574613E}{\isasymbeta}}}.  These pairs
  are intended to represent the abstraction, or binding, of the set of atoms \isa{as} in the body \isa{x}.

  The first question we have to answer is when two pairs \isa{{\isaliteral{28}{\isacharparenleft}}as{\isaliteral{2C}{\isacharcomma}}\ x{\isaliteral{29}{\isacharparenright}}} and
  \isa{{\isaliteral{28}{\isacharparenleft}}bs{\isaliteral{2C}{\isacharcomma}}\ y{\isaliteral{29}{\isacharparenright}}} are alpha-equivalent? (For the moment we are interested in
  the notion of alpha-equivalence that is \emph{not} preserved by adding
  vacuous binders.) To answer this question, we identify four conditions: {\it (i)}
  given a free-atom function \isa{fa} of type \mbox{\isa{{\isaliteral{5C3C626574613E}{\isasymbeta}}\ {\isaliteral{5C3C52696768746172726F773E}{\isasymRightarrow}}\ atom\ set}}, then \isa{{\isaliteral{28}{\isacharparenleft}}as{\isaliteral{2C}{\isacharcomma}}\ x{\isaliteral{29}{\isacharparenright}}} and \isa{{\isaliteral{28}{\isacharparenleft}}bs{\isaliteral{2C}{\isacharcomma}}\ y{\isaliteral{29}{\isacharparenright}}} need to have the same set of free
  atoms; moreover there must be a permutation \isa{{\isaliteral{5C3C70693E}{\isasympi}}} such that {\it
  (ii)} \isa{{\isaliteral{5C3C70693E}{\isasympi}}} leaves the free atoms of \isa{{\isaliteral{28}{\isacharparenleft}}as{\isaliteral{2C}{\isacharcomma}}\ x{\isaliteral{29}{\isacharparenright}}} and \isa{{\isaliteral{28}{\isacharparenleft}}bs{\isaliteral{2C}{\isacharcomma}}\ y{\isaliteral{29}{\isacharparenright}}} unchanged, but
  {\it (iii)} `moves' their bound names so that we obtain modulo a relation,
  say \mbox{\isa{{\isaliteral{5F}{\isacharunderscore}}\ R\ {\isaliteral{5F}{\isacharunderscore}}}}, two equivalent terms. We also require that {\it (iv)}
  \isa{{\isaliteral{5C3C70693E}{\isasympi}}} makes the sets of abstracted atoms \isa{as} and \isa{bs} equal. The
  requirements {\it (i)} to {\it (iv)} can be stated formally as:

  \begin{defi}[Alpha-Equivalence for Set-Bindings]\label{alphaset}\mbox{}\\
  \begin{tabular}{@ {\hspace{10mm}}l@ {\hspace{5mm}}rl}  
  \isa{{\isaliteral{28}{\isacharparenleft}}as{\isaliteral{2C}{\isacharcomma}}\ x{\isaliteral{29}{\isacharparenright}}\ {\isaliteral{5C3C617070726F783E}{\isasymapprox}}\,\raisebox{-1pt}{\makebox[0mm][l]{$_{\textit{set}}$}}\isaliteral{5C3C5E627375703E}{}\isactrlbsup R{\isaliteral{2C}{\isacharcomma}}\ fa\isaliteral{5C3C5E657375703E}{}\isactrlesup \ {\isaliteral{28}{\isacharparenleft}}bs{\isaliteral{2C}{\isacharcomma}}\ y{\isaliteral{29}{\isacharparenright}}}\hspace{2mm}\isa{{\isaliteral{5C3C65717569763E}{\isasymequiv}}} & 
    \multicolumn{2}{@ {}l}{if there exists a \isa{{\isaliteral{5C3C70693E}{\isasympi}}} such that:}\\ 
       & \mbox{\it (i)}   & \isa{fa\ x\ {\isaliteral{2D}{\isacharminus}}\ as\ {\isaliteral{3D}{\isacharequal}}\ fa\ y\ {\isaliteral{2D}{\isacharminus}}\ bs}\\
       & \mbox{\it (ii)}  & \isa{fa\ x\ {\isaliteral{2D}{\isacharminus}}\ as\ {\isaliteral{23}{\isacharhash}}\isaliteral{5C3C5E7375703E}{}\isactrlsup {\isaliteral{2A}{\isacharasterisk}}\ {\isaliteral{5C3C70693E}{\isasympi}}}\\
       & \mbox{\it (iii)} &  \isa{{\isaliteral{28}{\isacharparenleft}}{\isaliteral{5C3C70693E}{\isasympi}}\ {\isaliteral{5C3C62756C6C65743E}{\isasymbullet}}\ x{\isaliteral{29}{\isacharparenright}}\ R\ y} \\
       & \mbox{\it (iv)}  & \isa{{\isaliteral{5C3C70693E}{\isasympi}}\ {\isaliteral{5C3C62756C6C65743E}{\isasymbullet}}\ as\ {\isaliteral{3D}{\isacharequal}}\ bs} \\ 
  \end{tabular}
  \end{defi}
 
  \noindent
  Note that the relation is
  dependent on a free-atom function \isa{fa} and a relation \isa{R}. The reason for this extra generality is that we will use
  $\approx_{\,\textit{set}}^{\textit{R}, \textit{fa}}$ for both raw terms and 
  alpha-equated terms. In
  the latter case, \isa{R} will be replaced by equality \isa{{\isaliteral{3D}{\isacharequal}}} and we
  will prove that \isa{fa} is equal to \isa{supp}.

  Definition \ref{alphaset} does not make any distinction between the
  order of abstracted atoms. If we want this, then we can define alpha-equivalence 
  for pairs of the form \mbox{\isa{{\isaliteral{28}{\isacharparenleft}}as{\isaliteral{2C}{\isacharcomma}}\ x{\isaliteral{29}{\isacharparenright}}}} with type \isa{{\isaliteral{28}{\isacharparenleft}}atom\ list{\isaliteral{29}{\isacharparenright}}\ {\isaliteral{5C3C74696D65733E}{\isasymtimes}}\ {\isaliteral{5C3C626574613E}{\isasymbeta}}} 
  as follows
  
  \begin{defi}[Alpha-Equivalence for List-Bindings]\label{alphalist}\mbox{}\\
  \begin{tabular}{@ {\hspace{10mm}}l@ {\hspace{5mm}}rl}  
  \isa{{\isaliteral{28}{\isacharparenleft}}as{\isaliteral{2C}{\isacharcomma}}\ x{\isaliteral{29}{\isacharparenright}}\ {\isaliteral{5C3C617070726F783E}{\isasymapprox}}\,\raisebox{-1pt}{\makebox[0mm][l]{$_{\textit{list}}$}}\isaliteral{5C3C5E627375703E}{}\isactrlbsup R{\isaliteral{2C}{\isacharcomma}}\ fa\isaliteral{5C3C5E657375703E}{}\isactrlesup \ {\isaliteral{28}{\isacharparenleft}}bs{\isaliteral{2C}{\isacharcomma}}\ y{\isaliteral{29}{\isacharparenright}}}\hspace{2mm}\isa{{\isaliteral{5C3C65717569763E}{\isasymequiv}}} &
  \multicolumn{2}{@ {}l}{if there exists a \isa{{\isaliteral{5C3C70693E}{\isasympi}}} such that:}\\ 
         & \mbox{\it (i)}   & \isa{fa\ x\ {\isaliteral{2D}{\isacharminus}}\ set\ as\ {\isaliteral{3D}{\isacharequal}}\ fa\ y\ {\isaliteral{2D}{\isacharminus}}\ set\ bs}\\ 
         & \mbox{\it (ii)}  & \isa{fa\ x\ {\isaliteral{2D}{\isacharminus}}\ set\ as\ {\isaliteral{23}{\isacharhash}}\isaliteral{5C3C5E7375703E}{}\isactrlsup {\isaliteral{2A}{\isacharasterisk}}\ {\isaliteral{5C3C70693E}{\isasympi}}}\\
         & \mbox{\it (iii)} & \isa{{\isaliteral{28}{\isacharparenleft}}{\isaliteral{5C3C70693E}{\isasympi}}\ {\isaliteral{5C3C62756C6C65743E}{\isasymbullet}}\ x{\isaliteral{29}{\isacharparenright}}\ R\ y}\\
         & \mbox{\it (iv)}  & \isa{{\isaliteral{5C3C70693E}{\isasympi}}\ {\isaliteral{5C3C62756C6C65743E}{\isasymbullet}}\ as\ {\isaliteral{3D}{\isacharequal}}\ bs}\\
  \end{tabular}
  \end{defi}
  
  \noindent
  where \isa{set} is the function that coerces a list of atoms into a set of atoms.
  Now the last clause ensures that the order of the binders matters (since \isa{as}
  and \isa{bs} are lists of atoms).

  If we do not want to make any difference between the order of binders \emph{and}
  also allow vacuous binders, that means according to Pitts~\cite{Pitts04} 
  \emph{restrict} atoms, then we keep sets of binders, but drop 
  condition {\it (iv)} in Definition~\ref{alphaset}:

  \begin{defi}[Alpha-Equivalence for Set+-Bindings]\label{alphares}\mbox{}\\
  \begin{tabular}{@ {\hspace{10mm}}l@ {\hspace{5mm}}rl}  
  \isa{{\isaliteral{28}{\isacharparenleft}}as{\isaliteral{2C}{\isacharcomma}}\ x{\isaliteral{29}{\isacharparenright}}\ {\isaliteral{5C3C617070726F783E}{\isasymapprox}}\,\raisebox{-1pt}{\makebox[0mm][l]{$_{\textit{set+}}$}}\isaliteral{5C3C5E627375703E}{}\isactrlbsup R{\isaliteral{2C}{\isacharcomma}}\ fa\isaliteral{5C3C5E657375703E}{}\isactrlesup \ {\isaliteral{28}{\isacharparenleft}}bs{\isaliteral{2C}{\isacharcomma}}\ y{\isaliteral{29}{\isacharparenright}}}\hspace{2mm}\isa{{\isaliteral{5C3C65717569763E}{\isasymequiv}}} &
  \multicolumn{2}{@ {}l}{if there exists a \isa{{\isaliteral{5C3C70693E}{\isasympi}}} such that:}\\ 
             & \mbox{\it (i)}   & \isa{fa\ x\ {\isaliteral{2D}{\isacharminus}}\ as\ {\isaliteral{3D}{\isacharequal}}\ fa\ y\ {\isaliteral{2D}{\isacharminus}}\ bs}\\
             & \mbox{\it (ii)}  & \isa{fa\ x\ {\isaliteral{2D}{\isacharminus}}\ as\ {\isaliteral{23}{\isacharhash}}\isaliteral{5C3C5E7375703E}{}\isactrlsup {\isaliteral{2A}{\isacharasterisk}}\ {\isaliteral{5C3C70693E}{\isasympi}}}\\
             & \mbox{\it (iii)} & \isa{{\isaliteral{28}{\isacharparenleft}}{\isaliteral{5C3C70693E}{\isasympi}}\ {\isaliteral{5C3C62756C6C65743E}{\isasymbullet}}\ x{\isaliteral{29}{\isacharparenright}}\ R\ y}\\
  \end{tabular}
  \end{defi}

  It might be useful to consider first some examples how these definitions
  of alpha-equivalence pan out in practice.  For this consider the case of
  abstracting a set of atoms over types (as in type-schemes). We set
  \isa{R} to be the usual equality \isa{{\isaliteral{3D}{\isacharequal}}} and for \isa{fa{\isaliteral{28}{\isacharparenleft}}T{\isaliteral{29}{\isacharparenright}}} we
  define
  
  \[
  \isa{fa{\isaliteral{28}{\isacharparenleft}}x{\isaliteral{29}{\isacharparenright}}\ {\isaliteral{5C3C65717569763E}{\isasymequiv}}\ {\isaliteral{7B}{\isacharbraceleft}}x{\isaliteral{7D}{\isacharbraceright}}}  \hspace{10mm} \isa{fa{\isaliteral{28}{\isacharparenleft}}T\isaliteral{5C3C5E697375623E}{}\isactrlisub {\isadigit{1}}\ {\isaliteral{5C3C72696768746172726F773E}{\isasymrightarrow}}\ T\isaliteral{5C3C5E697375623E}{}\isactrlisub {\isadigit{2}}{\isaliteral{29}{\isacharparenright}}\ {\isaliteral{5C3C65717569763E}{\isasymequiv}}\ fa{\isaliteral{28}{\isacharparenleft}}T\isaliteral{5C3C5E697375623E}{}\isactrlisub {\isadigit{1}}{\isaliteral{29}{\isacharparenright}}\ {\isaliteral{5C3C756E696F6E3E}{\isasymunion}}\ fa{\isaliteral{28}{\isacharparenleft}}T\isaliteral{5C3C5E697375623E}{}\isactrlisub {\isadigit{2}}{\isaliteral{29}{\isacharparenright}}}
  \]\smallskip

  \noindent
  Now recall the examples shown in \eqref{ex1} and
  \eqref{ex3}. It can be easily checked that \isa{{\isaliteral{28}{\isacharparenleft}}{\isaliteral{7B}{\isacharbraceleft}}x{\isaliteral{2C}{\isacharcomma}}\ y{\isaliteral{7D}{\isacharbraceright}}{\isaliteral{2C}{\isacharcomma}}\ x\ {\isaliteral{5C3C72696768746172726F773E}{\isasymrightarrow}}\ y{\isaliteral{29}{\isacharparenright}}} and
  \isa{{\isaliteral{28}{\isacharparenleft}}{\isaliteral{7B}{\isacharbraceleft}}x{\isaliteral{2C}{\isacharcomma}}\ y{\isaliteral{7D}{\isacharbraceright}}{\isaliteral{2C}{\isacharcomma}}\ y\ {\isaliteral{5C3C72696768746172726F773E}{\isasymrightarrow}}\ x{\isaliteral{29}{\isacharparenright}}} are alpha-equivalent according to
  $\approx_{\,\textit{set}}$ and $\approx_{\,\textit{set+}}$ by taking \isa{{\isaliteral{5C3C70693E}{\isasympi}}} to
  be the swapping \isa{{\isaliteral{28}{\isacharparenleft}}x\ y{\isaliteral{29}{\isacharparenright}}}. In case of \isa{x\ {\isaliteral{5C3C6E6F7465713E}{\isasymnoteq}}\ y}, then \isa{{\isaliteral{28}{\isacharparenleft}}{\isaliteral{5B}{\isacharbrackleft}}x{\isaliteral{2C}{\isacharcomma}}\ y{\isaliteral{5D}{\isacharbrackright}}{\isaliteral{2C}{\isacharcomma}}\ x\ {\isaliteral{5C3C72696768746172726F773E}{\isasymrightarrow}}\ y{\isaliteral{29}{\isacharparenright}}} $\not\approx_{\,\textit{list}}$ \isa{{\isaliteral{28}{\isacharparenleft}}{\isaliteral{5B}{\isacharbrackleft}}y{\isaliteral{2C}{\isacharcomma}}\ x{\isaliteral{5D}{\isacharbrackright}}{\isaliteral{2C}{\isacharcomma}}\ x\ {\isaliteral{5C3C72696768746172726F773E}{\isasymrightarrow}}\ y{\isaliteral{29}{\isacharparenright}}}
  since there is no permutation that makes the lists \isa{{\isaliteral{5B}{\isacharbrackleft}}x{\isaliteral{2C}{\isacharcomma}}\ y{\isaliteral{5D}{\isacharbrackright}}} and
  \isa{{\isaliteral{5B}{\isacharbrackleft}}y{\isaliteral{2C}{\isacharcomma}}\ x{\isaliteral{5D}{\isacharbrackright}}} equal, and also leaves the type \mbox{\isa{x\ {\isaliteral{5C3C72696768746172726F773E}{\isasymrightarrow}}\ y}}
  unchanged. Another example is \isa{{\isaliteral{28}{\isacharparenleft}}{\isaliteral{7B}{\isacharbraceleft}}x{\isaliteral{7D}{\isacharbraceright}}{\isaliteral{2C}{\isacharcomma}}\ x{\isaliteral{29}{\isacharparenright}}} $\approx_{\,\textit{set+}}$
  \isa{{\isaliteral{28}{\isacharparenleft}}{\isaliteral{7B}{\isacharbraceleft}}x{\isaliteral{2C}{\isacharcomma}}\ y{\isaliteral{7D}{\isacharbraceright}}{\isaliteral{2C}{\isacharcomma}}\ x{\isaliteral{29}{\isacharparenright}}} which holds by taking \isa{{\isaliteral{5C3C70693E}{\isasympi}}} to be the identity
  permutation.  However, if \isa{x\ {\isaliteral{5C3C6E6F7465713E}{\isasymnoteq}}\ y}, then \isa{{\isaliteral{28}{\isacharparenleft}}{\isaliteral{7B}{\isacharbraceleft}}x{\isaliteral{7D}{\isacharbraceright}}{\isaliteral{2C}{\isacharcomma}}\ x{\isaliteral{29}{\isacharparenright}}}
  $\not\approx_{\,\textit{set}}$ \isa{{\isaliteral{28}{\isacharparenleft}}{\isaliteral{7B}{\isacharbraceleft}}x{\isaliteral{2C}{\isacharcomma}}\ y{\isaliteral{7D}{\isacharbraceright}}{\isaliteral{2C}{\isacharcomma}}\ x{\isaliteral{29}{\isacharparenright}}} since there is no
  permutation that makes the sets \isa{{\isaliteral{7B}{\isacharbraceleft}}x{\isaliteral{7D}{\isacharbraceright}}} and \isa{{\isaliteral{7B}{\isacharbraceleft}}x{\isaliteral{2C}{\isacharcomma}}\ y{\isaliteral{7D}{\isacharbraceright}}} equal
  (similarly for $\approx_{\,\textit{list}}$).  It can also relatively easily be
  shown that all three notions of alpha-equivalence coincide, if we only
  abstract a single atom. In this case they also agree with the alpha-equivalence
  used in older versions of Nominal Isabelle \cite{Urban08}.\footnote{We omit a
  proof of this fact since the details are hairy and not really important for the
  purpose of this paper.}

  In the rest of this section we are going to show that the alpha-equivalences
  really lead to abstractions where some atoms are bound (or more precisely
  removed from the support).  For this we will consider three abstraction
  types that are quotients of the relations

  \begin{equation}
  \begin{array}{r}
  \isa{{\isaliteral{28}{\isacharparenleft}}as{\isaliteral{2C}{\isacharcomma}}\ x{\isaliteral{29}{\isacharparenright}}\ {\isaliteral{5C3C617070726F783E}{\isasymapprox}}\,\raisebox{-1pt}{\makebox[0mm][l]{$_{\textit{set}}$}}\isaliteral{5C3C5E627375703E}{}\isactrlbsup {\isaliteral{3D}{\isacharequal}}{\isaliteral{2C}{\isacharcomma}}\ supp\isaliteral{5C3C5E657375703E}{}\isactrlesup \ {\isaliteral{28}{\isacharparenleft}}bs{\isaliteral{2C}{\isacharcomma}}\ y{\isaliteral{29}{\isacharparenright}}}\smallskip\\
  \isa{{\isaliteral{28}{\isacharparenleft}}as{\isaliteral{2C}{\isacharcomma}}\ x{\isaliteral{29}{\isacharparenright}}\ {\isaliteral{5C3C617070726F783E}{\isasymapprox}}\,\raisebox{-1pt}{\makebox[0mm][l]{$_{\textit{set+}}$}}\isaliteral{5C3C5E627375703E}{}\isactrlbsup {\isaliteral{3D}{\isacharequal}}{\isaliteral{2C}{\isacharcomma}}\ supp\isaliteral{5C3C5E657375703E}{}\isactrlesup \ {\isaliteral{28}{\isacharparenleft}}bs{\isaliteral{2C}{\isacharcomma}}\ y{\isaliteral{29}{\isacharparenright}}}\smallskip\\
  \isa{{\isaliteral{28}{\isacharparenleft}}as{\isaliteral{2C}{\isacharcomma}}\ x{\isaliteral{29}{\isacharparenright}}\ {\isaliteral{5C3C617070726F783E}{\isasymapprox}}\,\raisebox{-1pt}{\makebox[0mm][l]{$_{\textit{list}}$}}\isaliteral{5C3C5E627375703E}{}\isactrlbsup {\isaliteral{3D}{\isacharequal}}{\isaliteral{2C}{\isacharcomma}}\ supp\isaliteral{5C3C5E657375703E}{}\isactrlesup \ {\isaliteral{28}{\isacharparenleft}}bs{\isaliteral{2C}{\isacharcomma}}\ y{\isaliteral{29}{\isacharparenright}}}\\
  \end{array}
  \end{equation}\smallskip
  
  \noindent
  Note that in these relations we replaced the free-atom function \isa{fa}
  with \isa{supp} and the relation \isa{R} with equality. We can show
  the following two properties:

  \begin{lem}\label{alphaeq} 
  The relations $\approx_{\,\textit{set}}^{=, \textit{supp}}$, 
  $\approx_{\,\textit{set+}}^{=, \textit{supp}}$
  and $\approx_{\,\textit{list}}^{=, \textit{supp}}$ are 
  equivalence relations and equivariant. 
  \end{lem}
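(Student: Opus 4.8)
The plan is to verify reflexivity, symmetry, transitivity and equivariance in turn, treating the three relations uniformly: the arguments for $\approx_{\textit{set+}}^{=,\textit{supp}}$ are obtained from those for $\approx_{\textit{set}}^{=,\textit{supp}}$ by simply deleting condition~{\it (iv)} of Definition~\ref{alphaset}, and those for $\approx_{\textit{list}}^{=,\textit{supp}}$ by inserting $\mathit{set}$ in front of the binder components in conditions~{\it (i)} and~{\it (ii)}, so it suffices to carry out the set-version in detail. For \textbf{reflexivity} of $\approx_{\textit{set}}^{=,\textit{supp}}$ I take the witnessing permutation to be $0$: condition~{\it (i)} is trivial, {\it (iii)} and~{\it (iv)} hold since $0\bullet z = z$ for all $z$, and~{\it (ii)} holds since $\mathit{supp}\,0 = \emptyset$, so every atom is fresh for $0$.

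For \textbf{symmetry}, given a witness $\pi$ for $(as,x)\approx(bs,y)$, I claim $-\pi$ witnesses $(bs,y)\approx(as,x)$. Condition~{\it (i)} is symmetric in the two pairs; applying $-\pi$ to $\pi\bullet x = y$ and to $\pi\bullet as = bs$ gives~{\it (iii)} and~{\it (iv)}; and~{\it (ii)} follows because an atom is fresh for $\pi$ iff it is fresh for $-\pi$ (as $\mathit{supp}\,\pi = \mathit{supp}\,(-\pi)$), while condition~{\it (i)} lets us replace $\mathit{supp}\,x - as$ by $\mathit{supp}\,y - bs$. For \textbf{transitivity}, given witnesses $\pi_1$ for $(as,x)\approx(bs,y)$ and $\pi_2$ for $(bs,y)\approx(cs,z)$, I use the composite $\pi_2 + \pi_1$ (acting as ``$\pi_1$ first, then $\pi_2$''): conditions~{\it (i)}, {\it (iii)}, {\it (iv)} chain together directly, and~{\it (ii)} follows by combining $\mathit{supp}\,x - as \mathrel{\#^{*}} \pi_1$ with $\mathit{supp}\,y - bs \mathrel{\#^{*}} \pi_2$ — the latter transported along condition~{\it (i)} — using $\mathit{supp}\,(\pi_2 + \pi_1) \subseteq \mathit{supp}\,\pi_1 \cup \mathit{supp}\,\pi_2$.

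For \textbf{equivariance} it suffices, by the characterisation of equivariant relations recalled above, to show that $(as,x)\approx(bs,y)$ implies $(\pi\bullet as,\pi\bullet x)\approx(\pi\bullet bs,\pi\bullet y)$ for every $\pi$. From a witness $\rho$ for the premise, the conjugate $\pi\bullet\rho$ (that is, $\pi + \rho - \pi$) witnesses the conclusion: condition~{\it (i)} follows from equivariance of $\mathit{supp}$ and of set-difference, {\it (iii)} and~{\it (iv)} from the identity $(\pi\bullet\rho)\bullet(\pi\bullet w) = \pi\bullet(\rho\bullet w)$, and~{\it (ii)} from equivariance of freshness together with $\pi\bullet(\mathit{supp}\,x - as) = \mathit{supp}\,(\pi\bullet x) - (\pi\bullet as)$. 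I do not expect a genuine obstacle here: everything reduces to the permutation group laws and the standard support facts ($\mathit{supp}\,0 = \emptyset$, $\mathit{supp}\,\pi = \mathit{supp}\,(-\pi)$, $\mathit{supp}\,(\pi_1+\pi_2)\subseteq\mathit{supp}\,\pi_1\cup\mathit{supp}\,\pi_2$) and the equivariance of $\mathit{supp}$, $\mathit{set}$, set-difference and freshness available from the nominal library reviewed earlier. The only point needing care is the use of condition~{\it (i)} in the symmetry and transitivity steps to rewrite the ``outer'' free-atom set so that the freshness condition~{\it (ii)} can be moved between the two terms; once this is noticed the remaining calculations are routine.
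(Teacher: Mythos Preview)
Your proposal is correct and follows essentially the same approach as the paper: witness $0$ for reflexivity, $-\pi$ for symmetry, the composite of the two witnesses for transitivity, and the conjugate $\pi\bullet\pi'$ for equivariance, with all verifications reduced to equivariance of the ambient operators ($\mathit{supp}$, set-difference, $\#^{*}$, $\mathit{set}$). Your write-up is in fact more detailed than the paper's---in particular you make explicit the role of condition~{\it (i)} in transporting the freshness condition~{\it (ii)} in the symmetry and transitivity cases, and you spell out the support facts about $0$, $-\pi$ and $\pi_1+\pi_2$ that are needed---but the underlying argument is the same.
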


  \begin{proof}
  Reflexivity is by taking \isa{{\isaliteral{5C3C70693E}{\isasympi}}} to be \isa{{\isadigit{0}}}. For symmetry we have
  a permutation \isa{{\isaliteral{5C3C70693E}{\isasympi}}} and for the proof obligation take \isa{{\isaliteral{2D}{\isacharminus}}{\isaliteral{5C3C70693E}{\isasympi}}}. In case of transitivity, we have two permutations \isa{{\isaliteral{5C3C70693E}{\isasympi}}\isaliteral{5C3C5E697375623E}{}\isactrlisub {\isadigit{1}}}
  and \isa{{\isaliteral{5C3C70693E}{\isasympi}}\isaliteral{5C3C5E697375623E}{}\isactrlisub {\isadigit{2}}}, and for the proof obligation use \isa{{\isaliteral{5C3C70693E}{\isasympi}}\isaliteral{5C3C5E697375623E}{}\isactrlisub {\isadigit{1}}\ {\isaliteral{2B}{\isacharplus}}\ {\isaliteral{5C3C70693E}{\isasympi}}\isaliteral{5C3C5E697375623E}{}\isactrlisub {\isadigit{2}}}. Equivariance means \isa{{\isaliteral{28}{\isacharparenleft}}{\isaliteral{5C3C70693E}{\isasympi}}\ {\isaliteral{5C3C62756C6C65743E}{\isasymbullet}}\ as{\isaliteral{2C}{\isacharcomma}}\ {\isaliteral{5C3C70693E}{\isasympi}}\ {\isaliteral{5C3C62756C6C65743E}{\isasymbullet}}\ x{\isaliteral{29}{\isacharparenright}}\ {\isaliteral{5C3C617070726F783E}{\isasymapprox}}\,\raisebox{-1pt}{\makebox[0mm][l]{$_{\textit{set}}$}}\isaliteral{5C3C5E627375703E}{}\isactrlbsup {\isaliteral{3D}{\isacharequal}}{\isaliteral{2C}{\isacharcomma}}\ supp\isaliteral{5C3C5E657375703E}{}\isactrlesup \ {\isaliteral{28}{\isacharparenleft}}{\isaliteral{5C3C70693E}{\isasympi}}\ {\isaliteral{5C3C62756C6C65743E}{\isasymbullet}}\ bs{\isaliteral{2C}{\isacharcomma}}\ {\isaliteral{5C3C70693E}{\isasympi}}\ {\isaliteral{5C3C62756C6C65743E}{\isasymbullet}}\ y{\isaliteral{29}{\isacharparenright}}} holds provided \mbox{\isa{{\isaliteral{28}{\isacharparenleft}}as{\isaliteral{2C}{\isacharcomma}}\ x{\isaliteral{29}{\isacharparenright}}\ {\isaliteral{5C3C617070726F783E}{\isasymapprox}}\,\raisebox{-1pt}{\makebox[0mm][l]{$_{\textit{set}}$}}\isaliteral{5C3C5E627375703E}{}\isactrlbsup {\isaliteral{3D}{\isacharequal}}{\isaliteral{2C}{\isacharcomma}}\ supp\isaliteral{5C3C5E657375703E}{}\isactrlesup \ {\isaliteral{28}{\isacharparenleft}}bs{\isaliteral{2C}{\isacharcomma}}\ y{\isaliteral{29}{\isacharparenright}}}} holds. From the assumption we
  have a permutation \isa{{\isaliteral{5C3C70693E}{\isasympi}}{\isaliteral{27}{\isacharprime}}} and for the proof obligation use \isa{{\isaliteral{5C3C70693E}{\isasympi}}\ {\isaliteral{5C3C62756C6C65743E}{\isasymbullet}}\ {\isaliteral{5C3C70693E}{\isasympi}}{\isaliteral{27}{\isacharprime}}}. To show equivariance, we need to `pull out' the permutations,
  which is possible since all operators, namely as \isa{{\isaliteral{23}{\isacharhash}}\isaliteral{5C3C5E7375703E}{}\isactrlsup {\isaliteral{2A}{\isacharasterisk}}{\isaliteral{2C}{\isacharcomma}}\ {\isaliteral{2D}{\isacharminus}}{\isaliteral{2C}{\isacharcomma}}\ {\isaliteral{3D}{\isacharequal}}{\isaliteral{2C}{\isacharcomma}}\ {\isaliteral{5C3C62756C6C65743E}{\isasymbullet}}{\isaliteral{2C}{\isacharcomma}}\ set} and \isa{supp}, are equivariant (see
  \cite{HuffmanUrban10}). Finally, we apply the permutation operation on
  booleans.
  \end{proof}

  \noindent
  Recall the picture shown in \eqref{picture} about new types in HOL.
  The lemma above allows us to use our quotient package for introducing 
  new types \isa{{\isaliteral{5C3C626574613E}{\isasymbeta}}\ abs\isaliteral{5C3C5E627375623E}{}\isactrlbsub set\isaliteral{5C3C5E657375623E}{}\isactrlesub }, \isa{{\isaliteral{5C3C626574613E}{\isasymbeta}}\ abs\isaliteral{5C3C5E627375623E}{}\isactrlbsub set{\isaliteral{2B}{\isacharplus}}\isaliteral{5C3C5E657375623E}{}\isactrlesub } and \isa{{\isaliteral{5C3C626574613E}{\isasymbeta}}\ abs\isaliteral{5C3C5E627375623E}{}\isactrlbsub list\isaliteral{5C3C5E657375623E}{}\isactrlesub }
  representing alpha-equivalence classes of pairs of type 
  \isa{{\isaliteral{28}{\isacharparenleft}}atom\ set{\isaliteral{29}{\isacharparenright}}\ {\isaliteral{5C3C74696D65733E}{\isasymtimes}}\ {\isaliteral{5C3C626574613E}{\isasymbeta}}} (in the first two cases) and of type \isa{{\isaliteral{28}{\isacharparenleft}}atom\ list{\isaliteral{29}{\isacharparenright}}\ {\isaliteral{5C3C74696D65733E}{\isasymtimes}}\ {\isaliteral{5C3C626574613E}{\isasymbeta}}}
  (in the third case). 
  The elements in these types will be, respectively, written as
  
  \[
  \isa{{\isaliteral{5B}{\isacharbrackleft}}as{\isaliteral{5D}{\isacharbrackright}}\isaliteral{5C3C5E627375623E}{}\isactrlbsub set\isaliteral{5C3C5E657375623E}{}\isactrlesub {\isaliteral{2E}{\isachardot}}x} \hspace{10mm} 
  \isa{{\isaliteral{5B}{\isacharbrackleft}}as{\isaliteral{5D}{\isacharbrackright}}\isaliteral{5C3C5E627375623E}{}\isactrlbsub set{\isaliteral{2B}{\isacharplus}}\isaliteral{5C3C5E657375623E}{}\isactrlesub {\isaliteral{2E}{\isachardot}}x} \hspace{10mm}
  \isa{{\isaliteral{5B}{\isacharbrackleft}}as{\isaliteral{5D}{\isacharbrackright}}\isaliteral{5C3C5E627375623E}{}\isactrlbsub list\isaliteral{5C3C5E657375623E}{}\isactrlesub {\isaliteral{2E}{\isachardot}}x} 
  \]\smallskip
  
  \noindent
  indicating that a set (or list) of atoms \isa{as} is abstracted in \isa{x}. We will
  call the types \emph{abstraction types} and their elements
  \emph{abstractions}. The important property we need to derive is the support of 
  abstractions, namely:

  \begin{thm}[Support of Abstractions]\label{suppabs} 
  Assuming \isa{x} has finite support, then

  \[
  \begin{array}{l@ {\;=\;}l}
  \isa{supp\ {\isaliteral{5B}{\isacharbrackleft}}as{\isaliteral{5D}{\isacharbrackright}}\isaliteral{5C3C5E627375623E}{}\isactrlbsub set\isaliteral{5C3C5E657375623E}{}\isactrlesub {\isaliteral{2E}{\isachardot}}x} & \isa{supp\ x\ {\isaliteral{2D}{\isacharminus}}\ as}\\
  \isa{supp\ {\isaliteral{5B}{\isacharbrackleft}}as{\isaliteral{5D}{\isacharbrackright}}\isaliteral{5C3C5E627375623E}{}\isactrlbsub set{\isaliteral{2B}{\isacharplus}}\isaliteral{5C3C5E657375623E}{}\isactrlesub {\isaliteral{2E}{\isachardot}}x} & \isa{supp\ x\ {\isaliteral{2D}{\isacharminus}}\ as}\\
  \isa{supp\ {\isaliteral{5B}{\isacharbrackleft}}as{\isaliteral{5D}{\isacharbrackright}}\isaliteral{5C3C5E627375623E}{}\isactrlbsub list\isaliteral{5C3C5E657375623E}{}\isactrlesub {\isaliteral{2E}{\isachardot}}x} &
  \isa{supp\ x\ {\isaliteral{2D}{\isacharminus}}\ set\ as}\\
  \end{array}
  \]\smallskip
  \end{thm}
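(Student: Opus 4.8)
The plan is to prove each of the three equations by two inclusions, with essentially one argument that covers all three abstraction types: the $\mathrm{set}$ and $\mathrm{set{+}}$ cases are identical, and the $\mathrm{list}$ case is obtained by replacing $as$ everywhere by $\mathrm{set}\,as$ and additionally invoking equivariance of the coercion $\mathrm{set}$. Throughout I will use three facts that are available by the time of the theorem: by construction of the quotient types, $[as].x = [bs].y$ holds exactly when the underlying pairs are $\approx^{=,\mathrm{supp}}$-related; equivariance of $\approx^{=,\mathrm{supp}}$ (Lemma~\ref{alphaeq}) lets us lift the permutation action, so that $\pi\bullet [as].x = [\pi\bullet as].(\pi\bullet x)$; and $\mathrm{supp}$ is equivariant. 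I will also use the elementary observation that a swap $(a\,b)$ fixes a set $C$ of atoms iff $a\in C \leftrightarrow b\in C$.

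For the inclusion $\mathrm{supp}\,[as].x \subseteq \mathrm{supp}\,x - as$ I would apply Proposition~\ref{supportsprop}(i): it suffices to show that $S \coloneqq \mathrm{supp}\,x - as$ \emph{supports} $[as].x$ and that $S$ is finite. Finiteness is immediate since $S\subseteq \mathrm{supp}\,x$ and $x$ is finitely supported. For the support property, pick atoms $a,b\notin S$; I must establish $(a\,b)\bullet[as].x = [as].x$, i.e. that the pair $((a\,b)\bullet as,\,(a\,b)\bullet x)$ is $\approx^{=,\mathrm{supp}}$-related to $(as,x)$. I would offer the witness permutation $\pi \coloneqq (a\,b)$ itself. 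Then clause (iii) of Definition~\ref{alphaset} becomes $(a\,b)\bullet(a\,b)\bullet x = x$ and clause (iv) becomes $(a\,b)\bullet(a\,b)\bullet as = as$, both trivial; clause (i), after rewriting $\mathrm{supp}\,((a\,b)\bullet x)$ to $(a\,b)\bullet\mathrm{supp}\,x$, reduces to $(a\,b)\bullet(\mathrm{supp}\,x - as) = \mathrm{supp}\,x - as$, which holds because $a,b\notin \mathrm{supp}\,x - as$; and clause (ii) reduces to $\mathrm{supp}\,x - as$ being fresh (componentwise) for $(a\,b)$, which again holds because $\mathrm{supp}\,(a\,b)\subseteq\{a,b\}$ and $a,b\notin\mathrm{supp}\,x - as$. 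For $\approx_{\mathrm{set{+}}}$ clause (iv) is simply absent so the same witness works; for $\approx_{\mathrm{list}}$ one replaces $as$ by $\mathrm{set}\,as$ and uses equivariance of $\mathrm{set}$.

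For the reverse inclusion $\mathrm{supp}\,x - as \subseteq \mathrm{supp}\,[as].x$, fix $a\in\mathrm{supp}\,x$ with $a\notin as$. By the definition \eqref{suppdef} of support it is enough to show that $\{\,b \mid (a\,b)\bullet[as].x \neq [as].x\,\}$ is infinite, and for this I would show that its complement is contained in the finite set $\mathrm{supp}\,x$. If $b=a$ the swap is the identity and $a\in\mathrm{supp}\,x$ by assumption. If $b\neq a$ and $(a\,b)\bullet[as].x = [as].x$, then unfolding this equality as above yields a permutation $\pi$ satisfying in particular clause (i): $\mathrm{supp}\,((a\,b)\bullet x) - (a\,b)\bullet as = \mathrm{supp}\,x - as$, i.e., by equivariance of $\mathrm{supp}$, $(a\,b)\bullet(\mathrm{supp}\,x - as) = \mathrm{supp}\,x - as$. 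Since $a\in\mathrm{supp}\,x - as$ and $a\neq b$, the swap can fix this set only if $b\in\mathrm{supp}\,x - as$ as well, hence $b\in\mathrm{supp}\,x$. This argument uses only clause (i), so it applies verbatim to $\approx_{\mathrm{set{+}}}$ and, after the substitution $as \mapsto \mathrm{set}\,as$, to $\approx_{\mathrm{list}}$.

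The argument is conceptually straightforward; the main obstacle I anticipate is purely bookkeeping: discharging all four clauses of Definition~\ref{alphaset} (and the three of Definition~\ref{alphares}, four of Definition~\ref{alphalist}) for the chosen witnesses while keeping the equivariance rewrites of $\mathrm{supp}$, $\mathrm{set}$ and the abstraction constructor straight, plus the small case split on $a=b$. The one slightly delicate ingredient is phrasing the auxiliary fact ``a swap $(a\,b)$ fixes a set of atoms iff it contains both or neither of $a,b$'' in the form the automation can use on the set-difference expressions $\mathrm{supp}\,x - as$; everything else is routine once the two inclusions are organised as above.
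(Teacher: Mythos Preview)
Your argument is correct. The first inclusion, \isa{supp\ {\isacharbrackleft}as{\isacharbrackright}\isactrlbsub set\isactrlesub {\isachardot}x\ {\isasymsubseteq}\ supp\ x\ {\isacharminus}\ as}, is established exactly as in the paper: one shows that \isa{supp\ x\ {\isacharminus}\ as} supports the abstraction by exhibiting the swap \isa{{\isacharparenleft}a\ b{\isacharparenright}} itself as the witness permutation, and then invokes Proposition~\ref{supportsprop}{\it (i)}.

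For the reverse inclusion you take a genuinely different route. The paper uses Pitts' trick: it defines an auxiliary function \isa{aux\ {\isacharparenleft}{\isacharbrackleft}as{\isacharbrackright}\isactrlbsub set\isactrlesub {\isachardot}x{\isacharparenright}\ {\isasymequiv}\ supp\ x\ {\isacharminus}\ as} on the quotient, observes that \isa{aux} is equivariant and hence has empty support, and then appeals to \isa{supp\ {\isacharparenleft}f\ x{\isacharparenright}\ {\isasymsubseteq}\ supp\ f\ {\isasymunion}\ supp\ x} together with \isa{supp\ bs\ {\isacharequal}\ bs} for finite sets of atoms. Your argument instead works directly from the definition of support: for \isa{a\ {\isasymin}\ supp\ x\ {\isacharminus}\ as} you bound the set \isa{{\isacharbraceleft}b\ {\isacharbar}\ {\isacharparenleft}a\ b{\isacharparenright}\ {\isasymbullet}\ {\isacharbrackleft}as{\isacharbrackright}\isactrlbsub set\isactrlesub {\isachardot}x\ {\isacharequal}\ {\isacharbrackleft}as{\isacharbrackright}\isactrlbsub set\isactrlesub {\isachardot}x{\isacharbraceright}} inside \isa{supp\ x}, using only clause~{\it (i)} of the alpha-equivalence. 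Your approach is more elementary and makes transparent that clause~{\it (i)} alone carries the reverse inclusion; it also sidesteps the (easy but necessary) check that \isa{aux} is well-defined on equivalence classes. The paper's approach, by contrast, packages the same content as an instance of a reusable pattern (equivariant functions have empty support), which scales better when one has many such facts to derive.
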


  \noindent
  In effect, this theorem states that the atoms \isa{as} are bound in the
  abstraction. As stated earlier, this can be seen as a litmus test that our
  Definitions \ref{alphaset}, \ref{alphalist} and \ref{alphares} capture the
  idea of alpha-equivalence relations. Below we will give the proof for the
  first equation of Theorem \ref{suppabs}. The others follow by similar
  arguments. By definition of the abstraction type \isa{abs\isaliteral{5C3C5E627375623E}{}\isactrlbsub set\isaliteral{5C3C5E657375623E}{}\isactrlesub } we have

  \begin{equation}\label{abseqiff}
  \isa{{\isaliteral{5B}{\isacharbrackleft}}as{\isaliteral{5D}{\isacharbrackright}}\isaliteral{5C3C5E627375623E}{}\isactrlbsub set\isaliteral{5C3C5E657375623E}{}\isactrlesub {\isaliteral{2E}{\isachardot}}x\ {\isaliteral{3D}{\isacharequal}}\ {\isaliteral{5B}{\isacharbrackleft}}bs{\isaliteral{5D}{\isacharbrackright}}\isaliteral{5C3C5E627375623E}{}\isactrlbsub set\isaliteral{5C3C5E657375623E}{}\isactrlesub {\isaliteral{2E}{\isachardot}}y} \;\;\;\text{if and only if}\;\;\; 
  \isa{{\isaliteral{28}{\isacharparenleft}}as{\isaliteral{2C}{\isacharcomma}}\ x{\isaliteral{29}{\isacharparenright}}\ {\isaliteral{5C3C617070726F783E}{\isasymapprox}}\,\raisebox{-1pt}{\makebox[0mm][l]{$_{\textit{set}}$}}\isaliteral{5C3C5E627375703E}{}\isactrlbsup {\isaliteral{3D}{\isacharequal}}{\isaliteral{2C}{\isacharcomma}}\ supp\isaliteral{5C3C5E657375703E}{}\isactrlesup \ {\isaliteral{28}{\isacharparenleft}}bs{\isaliteral{2C}{\isacharcomma}}\ y{\isaliteral{29}{\isacharparenright}}}
  \end{equation}\smallskip
  
  \noindent
  and also set
  
  \begin{equation}\label{absperm}
  \isa{{\isaliteral{5C3C70693E}{\isasympi}}\ {\isaliteral{5C3C62756C6C65743E}{\isasymbullet}}\ {\isaliteral{5B}{\isacharbrackleft}}as{\isaliteral{5D}{\isacharbrackright}}\isaliteral{5C3C5E627375623E}{}\isactrlbsub set\isaliteral{5C3C5E657375623E}{}\isactrlesub {\isaliteral{2E}{\isachardot}}x\ {\isaliteral{5C3C65717569763E}{\isasymequiv}}\ {\isaliteral{5B}{\isacharbrackleft}}{\isaliteral{5C3C70693E}{\isasympi}}\ {\isaliteral{5C3C62756C6C65743E}{\isasymbullet}}\ as{\isaliteral{5D}{\isacharbrackright}}\isaliteral{5C3C5E627375623E}{}\isactrlbsub set\isaliteral{5C3C5E657375623E}{}\isactrlesub {\isaliteral{2E}{\isachardot}}{\isaliteral{28}{\isacharparenleft}}{\isaliteral{5C3C70693E}{\isasympi}}\ {\isaliteral{5C3C62756C6C65743E}{\isasymbullet}}\ x{\isaliteral{29}{\isacharparenright}}}
  \end{equation}\smallskip

  \noindent
  With this at our disposal, we can show 
  the following lemma about swapping two atoms in an abstraction.
  
  \begin{lem}
  If \isa{a\ {\isaliteral{5C3C6E6F74696E3E}{\isasymnotin}}\ supp\ x\ {\isaliteral{2D}{\isacharminus}}\ as} and
  \isa{b\ {\isaliteral{5C3C6E6F74696E3E}{\isasymnotin}}\ supp\ x\ {\isaliteral{2D}{\isacharminus}}\ as} then 
  \isa{{\isaliteral{5B}{\isacharbrackleft}}as{\isaliteral{5D}{\isacharbrackright}}\isaliteral{5C3C5E627375623E}{}\isactrlbsub set\isaliteral{5C3C5E657375623E}{}\isactrlesub {\isaliteral{2E}{\isachardot}}x\ {\isaliteral{3D}{\isacharequal}}\ {\isaliteral{5B}{\isacharbrackleft}}{\isaliteral{28}{\isacharparenleft}}a\ b{\isaliteral{29}{\isacharparenright}}\ {\isaliteral{5C3C62756C6C65743E}{\isasymbullet}}\ as{\isaliteral{5D}{\isacharbrackright}}\isaliteral{5C3C5E627375623E}{}\isactrlbsub set\isaliteral{5C3C5E657375623E}{}\isactrlesub {\isaliteral{2E}{\isachardot}}{\isaliteral{28}{\isacharparenleft}}{\isaliteral{28}{\isacharparenleft}}a\ b{\isaliteral{29}{\isacharparenright}}\ {\isaliteral{5C3C62756C6C65743E}{\isasymbullet}}\ x{\isaliteral{29}{\isacharparenright}}}
  \end{lem}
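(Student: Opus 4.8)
The plan is to reduce the claimed equality of abstractions to an instance of $\approx_{\,\textit{set}}^{=,\,\textit{supp}}$ and then to exhibit a concrete witnessing permutation. First I would rewrite the goal using \eqref{abseqiff}, so that it suffices to establish
$(as, x)\ \approx_{\,\textit{set}}^{=,\,\textit{supp}}\ ((a\;b)\bullet as,\ (a\;b)\bullet x)$.
I would then propose the swapping $\pi = (a\;b)$ itself as the required permutation in Definition~\ref{alphaset}. With this choice, conditions {\it (iii)} and {\it (iv)} of Definition~\ref{alphaset} — which unfold to $(a\;b)\bullet x = (a\;b)\bullet x$ and $(a\;b)\bullet as = (a\;b)\bullet as$ — hold by reflexivity of $=$, so the real content is in conditions {\it (i)} and {\it (ii)}.

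For condition {\it (ii)}, $\mathit{supp}\,x - as \mathrel{\#^{*}} (a\;b)$, I would use that the support of a swapping is contained in $\{a,b\}$ (it equals $\{a,b\}$ if $a\neq b$ and $\varnothing$ otherwise, cf.\ the characterisations in \eqref{supps}). Since by hypothesis neither $a$ nor $b$ lies in $\mathit{supp}\,x - as$, that set is disjoint from $\{a,b\}$, and hence every atom in it is fresh for $(a\;b)$, which is exactly what $\mathrel{\#^{*}}$ asserts. For condition {\it (i)}, $\mathit{supp}\,x - as = \mathit{supp}\,((a\;b)\bullet x) - ((a\;b)\bullet as)$, I would first pull the swapping outwards using the equivariance of $\mathit{supp}$ and of set difference (the same equivariance facts already invoked in the proof of Lemma~\ref{alphaeq}), rewriting the right-hand side as $(a\;b)\bullet(\mathit{supp}\,x - as)$. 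It then remains to observe that this swapping acts as the identity on this particular set: for every $c \in \mathit{supp}\,x - as$ the hypotheses give $c\neq a$ and $c\neq b$, so $(a\;b)\bullet c = c$, whence $(a\;b)\bullet(\mathit{supp}\,x - as) = \mathit{supp}\,x - as$. Assembling the three verified clauses yields the $\approx_{\,\textit{set}}^{=,\,\textit{supp}}$-equivalence and therefore, via \eqref{abseqiff}, the stated identity of abstractions.

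I do not expect a serious obstacle in this lemma; the only mildly delicate point is condition {\it (i)}, where one should argue the pointwise fact that $(a\;b)$ fixes $\mathit{supp}\,x - as$ rather than reaching for Proposition~\ref{swapfreshfresh} applied to that set — the latter would drag in a finiteness side-condition for the set, which the pointwise argument sidesteps. Everything else is a direct unfolding of Definition~\ref{alphaset} together with the bookkeeping equivariance properties of the permutation operation.
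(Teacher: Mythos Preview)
Your proposal is correct and follows essentially the same route as the paper: reduce via \eqref{abseqiff} and take the swapping $(a\;b)$ as the witnessing permutation, with the key observation being that the assumptions yield $(a\;b)\bullet(\mathit{supp}\,x - as) = \mathit{supp}\,x - as$. The paper additionally splits off the trivial case $a = b$ (where $(a\;b)$ is the identity), but your uniform treatment already subsumes it since then $\mathit{supp}\,(a\;b) = \varnothing$ and all four clauses hold immediately.
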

  
  \begin{proof}
  If \isa{a\ {\isaliteral{3D}{\isacharequal}}\ b} the lemma is immediate, since \isa{{\isaliteral{28}{\isacharparenleft}}a\ b{\isaliteral{29}{\isacharparenright}}} is then
  the identity permutation.
  Also in the other case the lemma is straightforward using \eqref{abseqiff}
  and observing that the assumptions give us \isa{{\isaliteral{28}{\isacharparenleft}}a\ b{\isaliteral{29}{\isacharparenright}}\ {\isaliteral{5C3C62756C6C65743E}{\isasymbullet}}\ {\isaliteral{28}{\isacharparenleft}}supp\ x\ {\isaliteral{2D}{\isacharminus}}\ as{\isaliteral{29}{\isacharparenright}}\ {\isaliteral{3D}{\isacharequal}}\ supp\ x\ {\isaliteral{2D}{\isacharminus}}\ as}.  We therefore can use the swapping \isa{{\isaliteral{28}{\isacharparenleft}}a\ b{\isaliteral{29}{\isacharparenright}}} as
  the permutation for the proof obligation.
  \end{proof}
  
  \noindent
  This lemma together 
  with \eqref{absperm} allows us to show
  
  \begin{equation}\label{halfone}
  \isa{{\isaliteral{28}{\isacharparenleft}}supp\ x\ {\isaliteral{2D}{\isacharminus}}\ as{\isaliteral{29}{\isacharparenright}}\ supports\ {\isaliteral{5B}{\isacharbrackleft}}as{\isaliteral{5D}{\isacharbrackright}}\isaliteral{5C3C5E627375623E}{}\isactrlbsub set\isaliteral{5C3C5E657375623E}{}\isactrlesub {\isaliteral{2E}{\isachardot}}x}
  \end{equation}\smallskip
  
  \noindent
  which by Property~\ref{supportsprop} gives us `one half' of
  Theorem~\ref{suppabs}. To establish the `other half', we 
  use a trick from \cite{Pitts04} and first define an auxiliary 
  function \isa{aux}, taking an abstraction as argument

  \[
  \isa{aux\ {\isaliteral{28}{\isacharparenleft}}{\isaliteral{5B}{\isacharbrackleft}}as{\isaliteral{5D}{\isacharbrackright}}\isaliteral{5C3C5E627375623E}{}\isactrlbsub set\isaliteral{5C3C5E657375623E}{}\isactrlesub {\isaliteral{2E}{\isachardot}}x{\isaliteral{29}{\isacharparenright}}\ {\isaliteral{5C3C65717569763E}{\isasymequiv}}\ supp\ x\ {\isaliteral{2D}{\isacharminus}}\ as}
  \]\smallskip 

  \noindent
  Using the second equation in \eqref{equivariance}, we can show that 
  \isa{aux} is equivariant (since \isa{{\isaliteral{5C3C70693E}{\isasympi}}\ {\isaliteral{5C3C62756C6C65743E}{\isasymbullet}}\ {\isaliteral{28}{\isacharparenleft}}supp\ x\ {\isaliteral{2D}{\isacharminus}}\ as{\isaliteral{29}{\isacharparenright}}\ {\isaliteral{3D}{\isacharequal}}\ supp\ {\isaliteral{28}{\isacharparenleft}}{\isaliteral{5C3C70693E}{\isasympi}}\ {\isaliteral{5C3C62756C6C65743E}{\isasymbullet}}\ x{\isaliteral{29}{\isacharparenright}}\ {\isaliteral{2D}{\isacharminus}}\ {\isaliteral{5C3C70693E}{\isasympi}}\ {\isaliteral{5C3C62756C6C65743E}{\isasymbullet}}\ as}) 
  and therefore has empty support. 
  This in turn means
  
  \[
  \isa{supp\ {\isaliteral{28}{\isacharparenleft}}aux\ {\isaliteral{28}{\isacharparenleft}}{\isaliteral{5B}{\isacharbrackleft}}as{\isaliteral{5D}{\isacharbrackright}}\isaliteral{5C3C5E627375623E}{}\isactrlbsub set\isaliteral{5C3C5E657375623E}{}\isactrlesub {\isaliteral{2E}{\isachardot}}x{\isaliteral{29}{\isacharparenright}}{\isaliteral{29}{\isacharparenright}}\ {\isaliteral{5C3C73756273657465713E}{\isasymsubseteq}}\ supp\ {\isaliteral{5B}{\isacharbrackleft}}as{\isaliteral{5D}{\isacharbrackright}}\isaliteral{5C3C5E627375623E}{}\isactrlbsub set\isaliteral{5C3C5E657375623E}{}\isactrlesub {\isaliteral{2E}{\isachardot}}x}
  \]\smallskip
  
  \noindent
  using the fact about the support of function applications in \eqref{supps}. Assuming 
  \isa{supp\ x\ {\isaliteral{2D}{\isacharminus}}\ as} is a finite set, we further obtain
  
  \begin{equation}\label{halftwo}
  \isa{supp\ x\ {\isaliteral{2D}{\isacharminus}}\ as\ {\isaliteral{5C3C73756273657465713E}{\isasymsubseteq}}\ supp\ {\isaliteral{5B}{\isacharbrackleft}}as{\isaliteral{5D}{\isacharbrackright}}\isaliteral{5C3C5E627375623E}{}\isactrlbsub set\isaliteral{5C3C5E657375623E}{}\isactrlesub {\isaliteral{2E}{\isachardot}}x}
  \end{equation}\smallskip
  
  \noindent
  This is because for every finite set of atoms, say \isa{bs}, we have 
  \isa{supp\ bs\ {\isaliteral{3D}{\isacharequal}}\ bs}.\footnote{Note that this is not 
  the case for infinite sets.}
  Finally, taking \eqref{halfone} and \eqref{halftwo} together establishes 
  the first equation of Theorem~\ref{suppabs}. The others are similar.

  Recall the definition of support given in \eqref{suppdef}, and note the difference between 
  the support of a raw pair and an abstraction

  \[
  \isa{supp\ {\isaliteral{28}{\isacharparenleft}}as{\isaliteral{2C}{\isacharcomma}}\ x{\isaliteral{29}{\isacharparenright}}\ {\isaliteral{3D}{\isacharequal}}\ supp\ as\ {\isaliteral{5C3C756E696F6E3E}{\isasymunion}}\ supp\ x}\hspace{15mm}
  \isa{supp\ {\isaliteral{5B}{\isacharbrackleft}}as{\isaliteral{5D}{\isacharbrackright}}\isaliteral{5C3C5E627375623E}{}\isactrlbsub set\isaliteral{5C3C5E657375623E}{}\isactrlesub {\isaliteral{2E}{\isachardot}}x\ {\isaliteral{3D}{\isacharequal}}\ supp\ x\ {\isaliteral{2D}{\isacharminus}}\ as}
  \]\smallskip

  \noindent
  While the permutation operations behave in both cases the same (a permutation
  is just moved to the arguments), the notion of equality is different for pairs and
  abstractions. Therefore we have different supports. In case of abstractions,
  we have established in Theorem~\ref{suppabs} that bound atoms are removed from 
  the support of the abstractions' bodies.

  The method of first considering abstractions of the form \isa{{\isaliteral{5B}{\isacharbrackleft}}as{\isaliteral{5D}{\isacharbrackright}}\isaliteral{5C3C5E627375623E}{}\isactrlbsub set\isaliteral{5C3C5E657375623E}{}\isactrlesub {\isaliteral{2E}{\isachardot}}x} etc is motivated by the fact that we can conveniently establish at the
  Isabelle/HOL level properties about them.  It would be extremely laborious
  to write custom ML-code that derives automatically such properties for every
  term-constructor that binds some atoms. Also the generality of the
  definitions for alpha-equivalence will help us in the next sections.%
\end{isamarkuptext}%
\isamarkuptrue%
\isamarkupsection{Specifying General Bindings\label{sec:spec}%
}
\isamarkuptrue%
\begin{isamarkuptext}%
Our choice of syntax for specifications is influenced by the existing
  datatype package of Isabelle/HOL \cite{Berghofer99} 
  and by the syntax of the
  Ott-tool \cite{ott-jfp}. For us a specification of a term-calculus is a
  collection of (possibly mutually recursive) type declarations, say \isa{ty{\isaliteral{5C3C414C3E}{\isasymAL}}\isaliteral{5C3C5E697375623E}{}\isactrlisub {\isadigit{1}}{\isaliteral{2C}{\isacharcomma}}\ {\isaliteral{5C3C646F74733E}{\isasymdots}}{\isaliteral{2C}{\isacharcomma}}\ ty{\isaliteral{5C3C414C3E}{\isasymAL}}\isaliteral{5C3C5E697375623E}{}\isactrlisub n}, and an associated collection of
  binding functions, say \isa{bn{\isaliteral{5C3C414C3E}{\isasymAL}}\isaliteral{5C3C5E697375623E}{}\isactrlisub {\isadigit{1}}{\isaliteral{2C}{\isacharcomma}}\ {\isaliteral{5C3C646F74733E}{\isasymdots}}{\isaliteral{2C}{\isacharcomma}}\ bn{\isaliteral{5C3C414C3E}{\isasymAL}}\isaliteral{5C3C5E697375623E}{}\isactrlisub m}. The
  syntax in Nominal Isabelle for such specifications is schematically as follows:
  
  \begin{equation}\label{scheme}
  \mbox{\begin{tabular}{@ {}p{2.5cm}l}
  type \mbox{declaration part} &
  $\begin{cases}
  \mbox{\begin{tabular}{l}
  \isacommand{nominal\_datatype} \isa{ty{\isaliteral{5C3C414C3E}{\isasymAL}}\isaliteral{5C3C5E697375623E}{}\isactrlisub {\isadigit{1}}\ {\isaliteral{3D}{\isacharequal}}\ {\isaliteral{5C3C646F74733E}{\isasymdots}}}\\
  \isacommand{and} \isa{ty{\isaliteral{5C3C414C3E}{\isasymAL}}\isaliteral{5C3C5E697375623E}{}\isactrlisub {\isadigit{2}}\ {\isaliteral{3D}{\isacharequal}}\ {\isaliteral{5C3C646F74733E}{\isasymdots}}}\\
  \raisebox{2mm}{$\ldots$}\\[-2mm] 
  \isacommand{and} \isa{ty{\isaliteral{5C3C414C3E}{\isasymAL}}\isaliteral{5C3C5E697375623E}{}\isactrlisub n\ {\isaliteral{3D}{\isacharequal}}\ {\isaliteral{5C3C646F74733E}{\isasymdots}}}\\ 
  \end{tabular}}
  \end{cases}$\\[2mm]
  binding \mbox{function part} &
  $\begin{cases}
  \mbox{\begin{tabular}{l}
  \isacommand{binder} \isa{bn{\isaliteral{5C3C414C3E}{\isasymAL}}\isaliteral{5C3C5E697375623E}{}\isactrlisub {\isadigit{1}}} \isacommand{and} \ldots \isacommand{and} \isa{bn{\isaliteral{5C3C414C3E}{\isasymAL}}\isaliteral{5C3C5E697375623E}{}\isactrlisub m}\\
  \isacommand{where}\\
  \raisebox{2mm}{$\ldots$}\\[-2mm]
  \end{tabular}}
  \end{cases}$\\
  \end{tabular}}
  \end{equation}\smallskip

  \noindent
  Every type declaration \isa{ty}$^\alpha_{1..n}$ consists of a collection
  of term-constructors, each of which comes with a list of labelled types that
  stand for the types of the arguments of the term-constructor.  For example a
  term-constructor \isa{C\isaliteral{5C3C5E7375703E}{}\isactrlsup {\isaliteral{5C3C616C7068613E}{\isasymalpha}}} might be specified with

  \[
  \isa{C\isaliteral{5C3C5E7375703E}{}\isactrlsup {\isaliteral{5C3C616C7068613E}{\isasymalpha}}\ label\isaliteral{5C3C5E697375623E}{}\isactrlisub {\isadigit{1}}{\isaliteral{3A}{\isacharcolon}}{\isaliteral{3A}{\isacharcolon}}ty}\mbox{$'_1$} \isa{{\isaliteral{5C3C646F74733E}{\isasymdots}}\ label\isaliteral{5C3C5E697375623E}{}\isactrlisub l{\isaliteral{3A}{\isacharcolon}}{\isaliteral{3A}{\isacharcolon}}ty}\mbox{$'_l\;\;\;\;\;$}  
  \isa{binding{\isaliteral{5F}{\isacharunderscore}}clauses} 
  \]\smallskip
  
  \noindent
  whereby some of the \isa{ty}$'_{1..l}$ (or their components) can be
  contained in the collection of \isa{ty}$^\alpha_{1..n}$ declared in
  \eqref{scheme}. In this case we will call the corresponding argument a
  \emph{recursive argument} of \isa{C\isaliteral{5C3C5E7375703E}{}\isactrlsup {\isaliteral{5C3C616C7068613E}{\isasymalpha}}}. The types of such
  recursive arguments need to satisfy a `positivity' restriction, which
  ensures that the type has a set-theoretic semantics (see
  \cite{Berghofer99}). If the types are polymorphic, we require the
  type variables to stand for types that are finitely supported and over which 
  a permutation operation is defined.
  The labels \isa{label}$_{1..l}$ annotated on the types are optional. Their
  purpose is to be used in the (possibly empty) list of \emph{binding
  clauses}, which indicate the binders and their scope in a term-constructor.
  They come in three \emph{modes}:

  \[\mbox{
  \begin{tabular}{@ {}l@ {}}
  \isacommand{binds} {\it binders} \isacommand{in} {\it bodies}\\
  \isacommand{binds (set)} {\it binders} \isacommand{in} {\it bodies}\\
  \isacommand{binds (set+)} {\it binders} \isacommand{in} {\it bodies}
  \end{tabular}}
  \]\smallskip
  
  \noindent
  The first mode is for binding lists of atoms (the order of bound atoms
  matters); the second is for sets of binders (the order does not matter, but
  the cardinality does) and the last is for sets of binders (with vacuous
  binders preserving alpha-equivalence). As indicated, the labels in the
  `\isacommand{in}-part' of a binding clause will be called \emph{bodies};
  the `\isacommand{binds}-part' will be called \emph{binders}. In contrast to
  Ott, we allow multiple labels in binders and bodies.  For example we allow
  binding clauses of the form:
 
  \[\mbox{
  \begin{tabular}{@ {}ll@ {}}
  \isa{Foo\isaliteral{5C3C5E697375623E}{}\isactrlisub {\isadigit{1}}\ x{\isaliteral{3A}{\isacharcolon}}{\isaliteral{3A}{\isacharcolon}}name\ y{\isaliteral{3A}{\isacharcolon}}{\isaliteral{3A}{\isacharcolon}}name\ t{\isaliteral{3A}{\isacharcolon}}{\isaliteral{3A}{\isacharcolon}}trm\ s{\isaliteral{3A}{\isacharcolon}}{\isaliteral{3A}{\isacharcolon}}trm} &  
      \isacommand{binds} \isa{x\ y} \isacommand{in} \isa{t\ s}\\
  \isa{Foo\isaliteral{5C3C5E697375623E}{}\isactrlisub {\isadigit{2}}\ x{\isaliteral{3A}{\isacharcolon}}{\isaliteral{3A}{\isacharcolon}}name\ y{\isaliteral{3A}{\isacharcolon}}{\isaliteral{3A}{\isacharcolon}}name\ t{\isaliteral{3A}{\isacharcolon}}{\isaliteral{3A}{\isacharcolon}}trm\ s{\isaliteral{3A}{\isacharcolon}}{\isaliteral{3A}{\isacharcolon}}trm} &  
      \isacommand{binds} \isa{x\ y} \isacommand{in} \isa{t}, 
      \isacommand{binds} \isa{x\ y} \isacommand{in} \isa{s}\\
  \end{tabular}}
  \]\smallskip

  \noindent
  Similarly for the other binding modes. Interestingly, in case of
  \isacommand{binds (set)} and \isacommand{binds (set+)} the binding clauses
  above will make a difference to the semantics of the specifications (the
  corresponding alpha-equivalence will differ). We will show this later with
  an example.

  There are also some restrictions we need to impose on our binding clauses in
  comparison to Ott. The main idea behind these restrictions is
  that we obtain a notion of alpha-equivalence where it is ensured
  that within a given scope an atom occurrence cannot be both bound and free
  at the same time.  The first restriction is that a body can only occur in
  \emph{one} binding clause of a term constructor. So for example

  \[\mbox{
  \isa{Foo\ x{\isaliteral{3A}{\isacharcolon}}{\isaliteral{3A}{\isacharcolon}}name\ y{\isaliteral{3A}{\isacharcolon}}{\isaliteral{3A}{\isacharcolon}}name\ t{\isaliteral{3A}{\isacharcolon}}{\isaliteral{3A}{\isacharcolon}}trm}\hspace{3mm}  
  \isacommand{binds} \isa{x} \isacommand{in} \isa{t},
  \isacommand{binds} \isa{y} \isacommand{in} \isa{t}}
  \]\smallskip

  \noindent
  is not allowed. This ensures that the bound atoms of a body cannot be free
  at the same time by specifying an alternative binder for the same body.

  For binders we distinguish between \emph{shallow} and \emph{deep} binders.
  Shallow binders are just labels. The restriction we need to impose on them
  is that in case of \isacommand{binds (set)} and \isacommand{binds (set+)} the
  labels must either refer to atom types or to sets of atom types; in case of
  \isacommand{binds} the labels must refer to atom types or to lists of atom
  types. Two examples for the use of shallow binders are the specification of
  lambda-terms, where a single name is bound, and type-schemes, where a finite
  set of names is bound:

  \[\mbox{
  \begin{tabular}{@ {}c@ {\hspace{8mm}}c@ {}}
  \begin{tabular}{@ {}l}
  \isacommand{nominal\_datatype} \isa{lam} $=$\\
  \hspace{2mm}\phantom{$\mid$}~\isa{Var\ name}\\
  \hspace{2mm}$\mid$~\isa{App\ lam\ lam}\\
  \hspace{2mm}$\mid$~\isa{Lam\ x{\isaliteral{3A}{\isacharcolon}}{\isaliteral{3A}{\isacharcolon}}name\ t{\isaliteral{3A}{\isacharcolon}}{\isaliteral{3A}{\isacharcolon}}lam}\hspace{3mm}%
  \isacommand{binds} \isa{x} \isacommand{in} \isa{t}\\
  \\
  \end{tabular} &
  \begin{tabular}{@ {}l@ {}}
  \isacommand{nominal\_datatype}~\isa{ty} $=$\\
  \hspace{2mm}\phantom{$\mid$}~\isa{TVar\ name}\\
  \hspace{2mm}$\mid$~\isa{TFun\ ty\ ty}\\
  \isacommand{and}~\isa{tsc\ {\isaliteral{3D}{\isacharequal}}}\\
  \hspace{2mm}\phantom{$\mid$}~\isa{TAll\ xs{\isaliteral{3A}{\isacharcolon}}{\isaliteral{3A}{\isacharcolon}}{\isaliteral{28}{\isacharparenleft}}name\ fset{\isaliteral{29}{\isacharparenright}}\ T{\isaliteral{3A}{\isacharcolon}}{\isaliteral{3A}{\isacharcolon}}ty}\hspace{3mm}%
  \isacommand{binds (set+)} \isa{xs} \isacommand{in} \isa{T}\\
  \end{tabular}
  \end{tabular}}
  \]\smallskip

  \noindent
  In these specifications \isa{name} refers to a (concrete) atom type, and \isa{fset} to the type of finite sets.  Note that for \isa{Lam} it does not
  matter which binding mode we use. The reason is that we bind only a single
  \isa{name}, in which case all three binding modes coincide. However, having 
  \isacommand{binds (set)} or just \isacommand{binds}
  in the second case makes a difference to the semantics of the specification
  (which we will define in the next section).

  A \emph{deep} binder uses an auxiliary binding function that `picks' out
  the atoms in one argument of the term-constructor, which can be bound in
  other arguments and also in the same argument (we will call such binders
  \emph{recursive}, see below). The binding functions are
  expected to return either a set of atoms (for \isacommand{binds (set)} and
  \isacommand{binds (set+)}) or a list of atoms (for \isacommand{binds}). They need
  to be defined by recursion over the corresponding type; the equations
  must be given in the binding function part of the scheme shown in
  \eqref{scheme}. For example a term-calculus containing \isa{Let}s with
  tuple patterns may be specified as:

  \begin{equation}\label{letpat}
  \mbox{%
  \begin{tabular}{l}
  \isacommand{nominal\_datatype} \isa{trm} $=$\\
  \hspace{5mm}\phantom{$\mid$}~\isa{Var\ name}\\
  \hspace{5mm}$\mid$~\isa{App\ trm\ trm}\\
  \hspace{5mm}$\mid$~\isa{Lam\ x{\isaliteral{3A}{\isacharcolon}}{\isaliteral{3A}{\isacharcolon}}name\ t{\isaliteral{3A}{\isacharcolon}}{\isaliteral{3A}{\isacharcolon}}trm} 
     \;\;\isacommand{binds} \isa{x} \isacommand{in} \isa{t}\\
  \hspace{5mm}$\mid$~\isa{Let{\isaliteral{5F}{\isacharunderscore}}pat\ p{\isaliteral{3A}{\isacharcolon}}{\isaliteral{3A}{\isacharcolon}}pat\ trm\ t{\isaliteral{3A}{\isacharcolon}}{\isaliteral{3A}{\isacharcolon}}trm} 
     \;\;\isacommand{binds} \isa{bn{\isaliteral{28}{\isacharparenleft}}p{\isaliteral{29}{\isacharparenright}}} \isacommand{in} \isa{t}\\
  \isacommand{and} \isa{pat} $=$\\
  \hspace{5mm}\phantom{$\mid$}~\isa{PVar\ name}\\
  \hspace{5mm}$\mid$~\isa{PTup\ pat\ pat}\\ 
  \isacommand{binder}~\isa{bn{\isaliteral{3A}{\isacharcolon}}{\isaliteral{3A}{\isacharcolon}}pat\ {\isaliteral{5C3C52696768746172726F773E}{\isasymRightarrow}}\ atom\ list}\\
  \isacommand{where}~\isa{bn{\isaliteral{28}{\isacharparenleft}}PVar\ x{\isaliteral{29}{\isacharparenright}}\ {\isaliteral{3D}{\isacharequal}}\ {\isaliteral{5B}{\isacharbrackleft}}atom\ x{\isaliteral{5D}{\isacharbrackright}}}\\
  \hspace{5mm}$\mid$~\isa{bn{\isaliteral{28}{\isacharparenleft}}PTup\ p\isaliteral{5C3C5E697375623E}{}\isactrlisub {\isadigit{1}}\ p\isaliteral{5C3C5E697375623E}{}\isactrlisub {\isadigit{2}}{\isaliteral{29}{\isacharparenright}}\ {\isaliteral{3D}{\isacharequal}}\ bn{\isaliteral{28}{\isacharparenleft}}p\isaliteral{5C3C5E697375623E}{}\isactrlisub {\isadigit{1}}{\isaliteral{29}{\isacharparenright}}\ {\isaliteral{40}{\isacharat}}\ bn{\isaliteral{28}{\isacharparenleft}}p\isaliteral{5C3C5E697375623E}{}\isactrlisub {\isadigit{2}}{\isaliteral{29}{\isacharparenright}}}\smallskip\\ 
  \end{tabular}}
  \end{equation}\smallskip

  \noindent
  In this specification the function \isa{bn} determines which atoms of
  the pattern \isa{p} (fifth line) are bound in the argument \isa{t}. Note that in the
  second-last \isa{bn}-clause the function \isa{atom} coerces a name
  into the generic atom type of Nominal Isabelle \cite{HuffmanUrban10}. This
  allows us to treat binders of different atom type uniformly.

  For deep binders we allow binding clauses such as
  
  \[\mbox{
  \begin{tabular}{ll}
  \isa{Bar\ p{\isaliteral{3A}{\isacharcolon}}{\isaliteral{3A}{\isacharcolon}}pat\ t{\isaliteral{3A}{\isacharcolon}}{\isaliteral{3A}{\isacharcolon}}trm} &  
     \isacommand{binds} \isa{bn{\isaliteral{28}{\isacharparenleft}}p{\isaliteral{29}{\isacharparenright}}} \isacommand{in} \isa{p\ t} \\
  \end{tabular}}
  \]\smallskip

  \noindent
  where the argument of the deep binder also occurs in the body. We call such
  binders \emph{recursive}.  To see the purpose of such recursive binders,
  compare `plain' \isa{Let}s and \isa{Let{\isaliteral{5F}{\isacharunderscore}}rec}s in the following
  specification:
 
  \begin{equation}\label{letrecs}
  \mbox{%
  \begin{tabular}{@ {}l@ {}l}
  \isacommand{nominal\_datatype}~\isa{trm\ {\isaliteral{3D}{\isacharequal}}}\\
  \hspace{5mm}\phantom{$\mid$}~\ldots\\
  \hspace{5mm}$\mid$~\isa{Let\ as{\isaliteral{3A}{\isacharcolon}}{\isaliteral{3A}{\isacharcolon}}assn\ t{\isaliteral{3A}{\isacharcolon}}{\isaliteral{3A}{\isacharcolon}}trm} 
     & \hspace{-19mm}\isacommand{binds} \isa{bn{\isaliteral{28}{\isacharparenleft}}as{\isaliteral{29}{\isacharparenright}}} \isacommand{in} \isa{t}\\
  \hspace{5mm}$\mid$~\isa{Let{\isaliteral{5F}{\isacharunderscore}}rec\ as{\isaliteral{3A}{\isacharcolon}}{\isaliteral{3A}{\isacharcolon}}assn\ t{\isaliteral{3A}{\isacharcolon}}{\isaliteral{3A}{\isacharcolon}}trm}
     & \hspace{-19mm}\isacommand{binds} \isa{bn{\isaliteral{28}{\isacharparenleft}}as{\isaliteral{29}{\isacharparenright}}} \isacommand{in} \isa{as\ t}\\
  \isacommand{and} \isa{assn} $=$\\
  \hspace{5mm}\phantom{$\mid$}~\isa{ANil}\\
  \hspace{5mm}$\mid$~\isa{ACons\ name\ trm\ assn}\\
  \isacommand{binder} \isa{bn{\isaliteral{3A}{\isacharcolon}}{\isaliteral{3A}{\isacharcolon}}assn\ {\isaliteral{5C3C52696768746172726F773E}{\isasymRightarrow}}\ atom\ list}\\
  \isacommand{where}~\isa{bn{\isaliteral{28}{\isacharparenleft}}ANil{\isaliteral{29}{\isacharparenright}}\ {\isaliteral{3D}{\isacharequal}}\ {\isaliteral{5B}{\isacharbrackleft}}{\isaliteral{5D}{\isacharbrackright}}}\\
  \hspace{5mm}$\mid$~\isa{bn{\isaliteral{28}{\isacharparenleft}}ACons\ a\ t\ as{\isaliteral{29}{\isacharparenright}}\ {\isaliteral{3D}{\isacharequal}}\ {\isaliteral{5B}{\isacharbrackleft}}atom\ a{\isaliteral{5D}{\isacharbrackright}}\ {\isaliteral{40}{\isacharat}}\ bn{\isaliteral{28}{\isacharparenleft}}as{\isaliteral{29}{\isacharparenright}}}\\
  \end{tabular}}
  \end{equation}\smallskip
  
  \noindent
  The difference is that with \isa{Let} we only want to bind the atoms \isa{bn{\isaliteral{28}{\isacharparenleft}}as{\isaliteral{29}{\isacharparenright}}} in the term \isa{t}, but with \isa{Let{\isaliteral{5F}{\isacharunderscore}}rec} we also want to bind the atoms
  inside the assignment. This difference has consequences for the associated
  notions of free-atoms and alpha-equivalence.
  
  To make sure that atoms bound by deep binders cannot be free at the
  same time, we cannot have more than one binding function for a deep binder. 
  Consequently we exclude specifications such as

  \[\mbox{
  \begin{tabular}{@ {}l@ {\hspace{2mm}}l@ {}}
  \isa{Baz\isaliteral{5C3C5E697375623E}{}\isactrlisub {\isadigit{1}}\ p{\isaliteral{3A}{\isacharcolon}}{\isaliteral{3A}{\isacharcolon}}pat\ t{\isaliteral{3A}{\isacharcolon}}{\isaliteral{3A}{\isacharcolon}}trm} & 
     \isacommand{binds} \isa{bn\isaliteral{5C3C5E697375623E}{}\isactrlisub {\isadigit{1}}{\isaliteral{28}{\isacharparenleft}}p{\isaliteral{29}{\isacharparenright}}\ bn\isaliteral{5C3C5E697375623E}{}\isactrlisub {\isadigit{2}}{\isaliteral{28}{\isacharparenleft}}p{\isaliteral{29}{\isacharparenright}}} \isacommand{in} \isa{p\ t}\\
  \isa{Baz\isaliteral{5C3C5E697375623E}{}\isactrlisub {\isadigit{2}}\ p{\isaliteral{3A}{\isacharcolon}}{\isaliteral{3A}{\isacharcolon}}pat\ t\isaliteral{5C3C5E697375623E}{}\isactrlisub {\isadigit{1}}{\isaliteral{3A}{\isacharcolon}}{\isaliteral{3A}{\isacharcolon}}trm\ t\isaliteral{5C3C5E697375623E}{}\isactrlisub {\isadigit{2}}{\isaliteral{3A}{\isacharcolon}}{\isaliteral{3A}{\isacharcolon}}trm} & 
     \isacommand{binds} \isa{bn\isaliteral{5C3C5E697375623E}{}\isactrlisub {\isadigit{1}}{\isaliteral{28}{\isacharparenleft}}p{\isaliteral{29}{\isacharparenright}}} \isacommand{in} \isa{p\ t\isaliteral{5C3C5E697375623E}{}\isactrlisub {\isadigit{1}}},
     \isacommand{binds} \isa{bn\isaliteral{5C3C5E697375623E}{}\isactrlisub {\isadigit{2}}{\isaliteral{28}{\isacharparenleft}}p{\isaliteral{29}{\isacharparenright}}} \isacommand{in} \isa{p\ t\isaliteral{5C3C5E697375623E}{}\isactrlisub {\isadigit{2}}}\\
  \end{tabular}}
  \]\smallskip

  \noindent
  Otherwise it is possible that \isa{bn\isaliteral{5C3C5E697375623E}{}\isactrlisub {\isadigit{1}}} and \isa{bn\isaliteral{5C3C5E697375623E}{}\isactrlisub {\isadigit{2}}}  pick 
  out different atoms to become bound, respectively be free, 
  in \isa{p}.\footnote{Since the Ott-tool does not derive a reasoning 
  infrastructure for 
  alpha-equated terms with deep binders, it can permit such specifications.}

  We also need to restrict the form of the binding functions in order to
  ensure the \isa{bn}-functions can be defined for alpha-equated
  terms. The main restriction is that we cannot return an atom in a binding
  function that is also bound in the corresponding term-constructor.
  Consider again the specification for \isa{trm} and a contrived
  version for assignments \isa{assn}:

  \begin{equation}\label{bnexp}
  \mbox{%
  \begin{tabular}{@ {}l@ {}}
  \isacommand{nominal\_datatype}~\isa{trm\ {\isaliteral{3D}{\isacharequal}}}~\ldots\\
  \isacommand{and} \isa{assn} $=$\\
  \hspace{5mm}\phantom{$\mid$}~\isa{ANil{\isaliteral{27}{\isacharprime}}}\\
  \hspace{5mm}$\mid$~\isa{ACons{\isaliteral{27}{\isacharprime}}\ x{\isaliteral{3A}{\isacharcolon}}{\isaliteral{3A}{\isacharcolon}}name\ y{\isaliteral{3A}{\isacharcolon}}{\isaliteral{3A}{\isacharcolon}}name\ t{\isaliteral{3A}{\isacharcolon}}{\isaliteral{3A}{\isacharcolon}}trm\ assn}
     \;\;\isacommand{binds} \isa{y} \isacommand{in} \isa{t}\\
  \isacommand{binder} \isa{bn{\isaliteral{3A}{\isacharcolon}}{\isaliteral{3A}{\isacharcolon}}assn\ {\isaliteral{5C3C52696768746172726F773E}{\isasymRightarrow}}\ atom\ list}\\
  \isacommand{where}~\isa{bn{\isaliteral{28}{\isacharparenleft}}ANil{\isaliteral{27}{\isacharprime}}{\isaliteral{29}{\isacharparenright}}\ {\isaliteral{3D}{\isacharequal}}\ {\isaliteral{5B}{\isacharbrackleft}}{\isaliteral{5D}{\isacharbrackright}}}\\
  \hspace{5mm}$\mid$~\isa{bn{\isaliteral{28}{\isacharparenleft}}ACons{\isaliteral{27}{\isacharprime}}\ x\ y\ t\ as{\isaliteral{29}{\isacharparenright}}\ {\isaliteral{3D}{\isacharequal}}\ {\isaliteral{5B}{\isacharbrackleft}}atom\ x{\isaliteral{5D}{\isacharbrackright}}\ {\isaliteral{40}{\isacharat}}\ bn{\isaliteral{28}{\isacharparenleft}}as{\isaliteral{29}{\isacharparenright}}}\\
  \end{tabular}}
  \end{equation}\smallskip

  \noindent
  In this example the term constructor \isa{ACons{\isaliteral{27}{\isacharprime}}} has four arguments with
  a binding clause involving two of them. This constructor is also used in the definition
  of the binding function. The restriction we have to impose is that the
  binding function can only return free atoms, that is the ones that are \emph{not}
  mentioned in a binding clause.  Therefore \isa{y} cannot be used in the
  binding function \isa{bn} (since it is bound in \isa{ACons{\isaliteral{27}{\isacharprime}}} by the
  binding clause), but \isa{x} can (since it is a free atom). This
  restriction is sufficient for lifting the binding function to alpha-equated
  terms. If we would permit \isa{bn} to return \isa{y},
  then it would not be respectful and therefore cannot be lifted to
  alpha-equated lambda-terms.

  In the version of Nominal Isabelle described here, we also adopted the
  restriction from the Ott-tool that binding functions can only return: the
  empty set or empty list (as in case \isa{ANil{\isaliteral{27}{\isacharprime}}}), a singleton set or
  singleton list containing an atom (case \isa{PVar} in \eqref{letpat}), or
  unions of atom sets or appended atom lists (case \isa{ACons{\isaliteral{27}{\isacharprime}}}). This
  restriction will simplify some automatic definitions and proofs later on.
  
  To sum up this section, we introduced nominal datatype
  specifications, which are like standard datatype specifications in
  Isabelle/HOL but extended with binding clauses and specifications for binding
  functions. Each constructor argument in our specification can also
  have an optional label. These labels are used in the binding clauses
  of a constructor; there can be several binding clauses for each
  constructor, but bodies of binding clauses can only occur in a
  single one. Binding clauses come in three modes: \isacommand{binds},
  \isacommand{binds (set)} and \isacommand{binds (set+)}.  Binders
  fall into two categories: shallow binders and deep binders. Shallow
  binders can occur in more than one binding clause and only have to
  respect the binding mode (i.e.~be of the right type). Deep binders
  can also occur in more than one binding clause, unless they are
  recursive in which case they can only occur once. Each of the deep
  binders can only have a single binding function.  Binding functions
  are defined by recursion over a nominal datatype.  They can
  return the empty set, singleton atoms and unions of sets of atoms
  (for binding modes \isacommand{binds (set)} and \isacommand{binds
  (set+)}), and the empty list, singleton atoms and appended lists of
  atoms (for mode \isacommand{bind}). However, they can only return
  atoms that are not mentioned in any binding clause.  

  In order to
  simplify our definitions of free atoms and alpha-equivalence we define next, we
  shall assume specifications of term-calculi are implicitly
  \emph{completed}. By this we mean that for every argument of a
  term-constructor that is \emph{not} already part of a binding clause
  given by the user, we add implicitly a special \emph{empty} binding
  clause, written \isacommand{binds}~\isa{{\isaliteral{5C3C656D7074797365743E}{\isasymemptyset}}}~\isacommand{in}~\isa{labels}. In case of the lambda-terms,
  the completion produces

  \[\mbox{
  \begin{tabular}{@ {}l@ {\hspace{-1mm}}}
  \isacommand{nominal\_datatype} \isa{lam} =\\
  \hspace{5mm}\phantom{$\mid$}~\isa{Var\ x{\isaliteral{3A}{\isacharcolon}}{\isaliteral{3A}{\isacharcolon}}name}
    \;\;\isacommand{binds}~\isa{{\isaliteral{5C3C656D7074797365743E}{\isasymemptyset}}}~\isacommand{in}~\isa{x}\\
  \hspace{5mm}$\mid$~\isa{App\ t\isaliteral{5C3C5E697375623E}{}\isactrlisub {\isadigit{1}}{\isaliteral{3A}{\isacharcolon}}{\isaliteral{3A}{\isacharcolon}}lam\ t\isaliteral{5C3C5E697375623E}{}\isactrlisub {\isadigit{2}}{\isaliteral{3A}{\isacharcolon}}{\isaliteral{3A}{\isacharcolon}}lam}
    \;\;\isacommand{binds}~\isa{{\isaliteral{5C3C656D7074797365743E}{\isasymemptyset}}}~\isacommand{in}~\isa{t\isaliteral{5C3C5E697375623E}{}\isactrlisub {\isadigit{1}}\ t\isaliteral{5C3C5E697375623E}{}\isactrlisub {\isadigit{2}}}\\
  \hspace{5mm}$\mid$~\isa{Lam\ x{\isaliteral{3A}{\isacharcolon}}{\isaliteral{3A}{\isacharcolon}}name\ t{\isaliteral{3A}{\isacharcolon}}{\isaliteral{3A}{\isacharcolon}}lam}
    \;\;\isacommand{binds}~\isa{x} \isacommand{in} \isa{t}\\
  \end{tabular}}
  \]\smallskip

  \noindent 
  The point of completion is that we can make definitions over the binding
  clauses and be sure to have captured all arguments of a term constructor.%
\end{isamarkuptext}%
\isamarkuptrue%
\isamarkupsection{Alpha-Equivalence and Free Atoms\label{sec:alpha}%
}
\isamarkuptrue%
\begin{isamarkuptext}%
Having dealt with all syntax matters, the problem now is how we can turn
  specifications into actual type definitions in Isabelle/HOL and then
  establish a reasoning infrastructure for them. As Pottier and Cheney pointed
  out \cite{Cheney05,Pottier06}, just re-arranging the arguments of
  term-constructors so that binders and their bodies are next to each other
  will result in inadequate representations in cases like \mbox{\isa{Let\ x\isaliteral{5C3C5E697375623E}{}\isactrlisub {\isadigit{1}}\ {\isaliteral{3D}{\isacharequal}}\ t\isaliteral{5C3C5E697375623E}{}\isactrlisub {\isadigit{1}}{\isaliteral{5C3C646F74733E}{\isasymdots}}x\isaliteral{5C3C5E697375623E}{}\isactrlisub n\ {\isaliteral{3D}{\isacharequal}}\ t\isaliteral{5C3C5E697375623E}{}\isactrlisub n\ in\ s}}. Therefore we will
  first extract `raw' datatype definitions from the specification and then
  define explicitly an alpha-equivalence relation over them. We subsequently
  construct the quotient of the datatypes according to our alpha-equivalence.

  The `raw' datatype definition can be obtained by stripping off the 
  binding clauses and the labels from the types given by the user. We also have to invent
  new names for the  types \isa{ty\isaliteral{5C3C5E7375703E}{}\isactrlsup {\isaliteral{5C3C616C7068613E}{\isasymalpha}}} and the term-constructors \isa{C\isaliteral{5C3C5E7375703E}{}\isactrlsup {\isaliteral{5C3C616C7068613E}{\isasymalpha}}}. 
  In our implementation we just use the affix ``\isa{{\isaliteral{5F}{\isacharunderscore}}raw}''.
  But for the purpose of this paper, we use the superscript \isa{{\isaliteral{5F}{\isacharunderscore}}\isaliteral{5C3C5E7375703E}{}\isactrlsup {\isaliteral{5C3C616C7068613E}{\isasymalpha}}} to indicate 
  that a notion is given for alpha-equivalence classes and leave it out 
  for the corresponding notion given on the raw level. So for example 
  we have \isa{ty\isaliteral{5C3C5E7375703E}{}\isactrlsup {\isaliteral{5C3C616C7068613E}{\isasymalpha}}\ {\isaliteral{2F}{\isacharslash}}\ ty} and \isa{C\isaliteral{5C3C5E7375703E}{}\isactrlsup {\isaliteral{5C3C616C7068613E}{\isasymalpha}}\ {\isaliteral{2F}{\isacharslash}}\ C}
  where \isa{ty} is the type used in the quotient construction for 
  \isa{ty\isaliteral{5C3C5E7375703E}{}\isactrlsup {\isaliteral{5C3C616C7068613E}{\isasymalpha}}} and \isa{C} is the term-constructor of the raw type \isa{ty},
  respectively \isa{C\isaliteral{5C3C5E7375703E}{}\isactrlsup {\isaliteral{5C3C616C7068613E}{\isasymalpha}}} is the corresponding term-constructor of \isa{ty\isaliteral{5C3C5E7375703E}{}\isactrlsup {\isaliteral{5C3C616C7068613E}{\isasymalpha}}}. 

  The resulting datatype definition is legal in Isabelle/HOL provided the datatypes are 
  non-empty and the types in the constructors only occur in positive 
  position (see \cite{Berghofer99} for an in-depth description of the datatype package
  in Isabelle/HOL). 
  We subsequently define each of the user-specified binding 
  functions \isa{bn}$_{1..m}$ by recursion over the corresponding 
  raw datatype. We also define permutation operations by 
  recursion so that for each term constructor \isa{C} we have that
  
  \begin{equation}\label{ceqvt}
  \isa{{\isaliteral{5C3C70693E}{\isasympi}}\ {\isaliteral{5C3C62756C6C65743E}{\isasymbullet}}\ {\isaliteral{28}{\isacharparenleft}}C\ z\isaliteral{5C3C5E697375623E}{}\isactrlisub {\isadigit{1}}\ {\isaliteral{5C3C646F74733E}{\isasymdots}}\ z\isaliteral{5C3C5E697375623E}{}\isactrlisub n{\isaliteral{29}{\isacharparenright}}\ {\isaliteral{3D}{\isacharequal}}\ C\ {\isaliteral{28}{\isacharparenleft}}{\isaliteral{5C3C70693E}{\isasympi}}\ {\isaliteral{5C3C62756C6C65743E}{\isasymbullet}}\ z\isaliteral{5C3C5E697375623E}{}\isactrlisub {\isadigit{1}}{\isaliteral{29}{\isacharparenright}}\ {\isaliteral{5C3C646F74733E}{\isasymdots}}\ {\isaliteral{28}{\isacharparenleft}}{\isaliteral{5C3C70693E}{\isasympi}}\ {\isaliteral{5C3C62756C6C65743E}{\isasymbullet}}\ z\isaliteral{5C3C5E697375623E}{}\isactrlisub n{\isaliteral{29}{\isacharparenright}}}
  \end{equation}\smallskip

  \noindent
  We will need this operation later when we define the notion of alpha-equivalence.

  The first non-trivial step we have to perform is the generation of
  \emph{free-atom functions} from the specifications.\footnote{Admittedly, the
  details of our definitions will be somewhat involved. However they are still
  conceptually simple in comparison with the `positional' approach taken in
  Ott \cite[Pages 88--95]{ott-jfp}, which uses the notions of \emph{occurrences} and
  \emph{partial equivalence relations} over sets of occurrences.} For the
  \emph{raw} types \isa{ty}$_{1..n}$ we define the free-atom functions

  \begin{equation}\label{fvars}
  \mbox{\isa{fa{\isaliteral{5F}{\isacharunderscore}}ty}$_{1..n}$}
  \end{equation}\smallskip
  
  \noindent
  by recursion.
  We define these functions together with auxiliary free-atom functions for
  the binding functions. Given raw binding functions \isa{bn}$_{1..m}$ 
  we define
  
  \[
  \isa{fa{\isaliteral{5F}{\isacharunderscore}}bn}\mbox{$_{1..m}$}.
  \]\smallskip
  
  \noindent
  The reason for this setup is that in a deep binder not all atoms have to be
  bound, as we saw in \eqref{letrecs} with the example of `plain' \isa{Let}s. We need
  therefore functions that calculate those free atoms in deep binders.

  While the idea behind these free-atom functions is simple (they just
  collect all atoms that are not bound), because of our rather complicated
  binding mechanisms their definitions are somewhat involved.  Given
  a raw term-constructor \isa{C} of type \isa{ty} and some associated
  binding clauses \isa{bc\isaliteral{5C3C5E697375623E}{}\isactrlisub {\isadigit{1}}{\isaliteral{5C3C646F74733E}{\isasymdots}}bc\isaliteral{5C3C5E697375623E}{}\isactrlisub k}, the result of \isa{fa{\isaliteral{5F}{\isacharunderscore}}ty\ {\isaliteral{28}{\isacharparenleft}}C\ z\isaliteral{5C3C5E697375623E}{}\isactrlisub {\isadigit{1}}\ {\isaliteral{5C3C646F74733E}{\isasymdots}}\ z\isaliteral{5C3C5E697375623E}{}\isactrlisub n{\isaliteral{29}{\isacharparenright}}} will be the union \isa{fa{\isaliteral{28}{\isacharparenleft}}bc\isaliteral{5C3C5E697375623E}{}\isactrlisub {\isadigit{1}}{\isaliteral{29}{\isacharparenright}}\ {\isaliteral{5C3C756E696F6E3E}{\isasymunion}}\ {\isaliteral{5C3C646F74733E}{\isasymdots}}\ {\isaliteral{5C3C756E696F6E3E}{\isasymunion}}\ fa{\isaliteral{28}{\isacharparenleft}}bc\isaliteral{5C3C5E697375623E}{}\isactrlisub k{\isaliteral{29}{\isacharparenright}}} where we will define below what \isa{fa} for a binding
  clause means. We only show the details for the mode \isacommand{binds (set)} (the other modes are similar). 
  Suppose a binding clause \isa{bc\isaliteral{5C3C5E697375623E}{}\isactrlisub i} is of the form 
  
  \[
  \mbox{\isacommand{binds (set)} \isa{b\isaliteral{5C3C5E697375623E}{}\isactrlisub {\isadigit{1}}{\isaliteral{5C3C646F74733E}{\isasymdots}}b\isaliteral{5C3C5E697375623E}{}\isactrlisub p} \isacommand{in} \isa{d\isaliteral{5C3C5E697375623E}{}\isactrlisub {\isadigit{1}}{\isaliteral{5C3C646F74733E}{\isasymdots}}d\isaliteral{5C3C5E697375623E}{}\isactrlisub q}}
  \]\smallskip
  
  \noindent
  in which the body-labels \isa{d}$_{1..q}$ refer to types \isa{ty}$_{1..q}$, and the binders \isa{b}$_{1..p}$ either refer to labels of
  atom types (in case of shallow binders) or to binding functions taking a
  single label as argument (in case of deep binders). Assuming \isa{D}
  stands for the set of free atoms of the bodies, \isa{B} for the set of
  binding atoms in the binders and \isa{B{\isaliteral{27}{\isacharprime}}} for the set of free atoms in
  non-recursive deep binders, then the free atoms of the binding clause \isa{bc\isaliteral{5C3C5E697375623E}{}\isactrlisub i} are

  \begin{equation}\label{fadef}
  \mbox{\isa{fa{\isaliteral{28}{\isacharparenleft}}bc\isaliteral{5C3C5E697375623E}{}\isactrlisub i{\isaliteral{29}{\isacharparenright}}\ {\isaliteral{5C3C65717569763E}{\isasymequiv}}\ {\isaliteral{28}{\isacharparenleft}}D\ {\isaliteral{2D}{\isacharminus}}\ B{\isaliteral{29}{\isacharparenright}}\ {\isaliteral{5C3C756E696F6E3E}{\isasymunion}}\ B{\isaliteral{27}{\isacharprime}}}}.
  \end{equation}\smallskip
  
  \noindent
  The set \isa{D} is formally defined as
  
  \[
  \isa{D\ {\isaliteral{5C3C65717569763E}{\isasymequiv}}\ fa{\isaliteral{5F}{\isacharunderscore}}ty\isaliteral{5C3C5E697375623E}{}\isactrlisub {\isadigit{1}}\ d\isaliteral{5C3C5E697375623E}{}\isactrlisub {\isadigit{1}}\ {\isaliteral{5C3C756E696F6E3E}{\isasymunion}}\ {\isaliteral{2E}{\isachardot}}{\isaliteral{2E}{\isachardot}}{\isaliteral{2E}{\isachardot}}\ {\isaliteral{5C3C756E696F6E3E}{\isasymunion}}\ fa{\isaliteral{5F}{\isacharunderscore}}ty\isaliteral{5C3C5E697375623E}{}\isactrlisub q\ d\isaliteral{5C3C5E697375623E}{}\isactrlisub q}
  \]\smallskip
  
  \noindent
  where in case \isa{d\isaliteral{5C3C5E697375623E}{}\isactrlisub i} refers to one of the raw types \isa{ty}$_{1..n}$ from the 
  specification, the function \isa{fa{\isaliteral{5F}{\isacharunderscore}}ty\isaliteral{5C3C5E697375623E}{}\isactrlisub i} is the corresponding free-atom function 
  we are defining by recursion; otherwise we set \mbox{\isa{fa{\isaliteral{5F}{\isacharunderscore}}ty\isaliteral{5C3C5E697375623E}{}\isactrlisub i\ {\isaliteral{5C3C65717569763E}{\isasymequiv}}\ supp}}. The reason
  for the latter is that \isa{ty}$_i$ is not a type that is part of the specification, and
  we assume \isa{supp} is the generic function that characterises the free variables of 
  a type (in fact in the next section we will show that the free-variable functions we
  define here, are equal to the support once lifted to alpha-equivalence classes).
  
  In order to formally define the set \isa{B} we use the following auxiliary \isa{bn}-functions
  for atom types to which shallow binders may refer\\[-4mm]
  
  \begin{equation}\label{bnaux}\mbox{
  \begin{tabular}{r@ {\hspace{2mm}}c@ {\hspace{2mm}}l}
  \isa{bn\isaliteral{5C3C5E627375623E}{}\isactrlbsub atom\isaliteral{5C3C5E657375703E}{}\isactrlesup \ a} & \isa{{\isaliteral{5C3C65717569763E}{\isasymequiv}}} & \isa{{\isaliteral{7B}{\isacharbraceleft}}atom\ a{\isaliteral{7D}{\isacharbraceright}}}\\
  \isa{bn\isaliteral{5C3C5E627375623E}{}\isactrlbsub atom{\isaliteral{5F}{\isacharunderscore}}set\isaliteral{5C3C5E657375703E}{}\isactrlesup \ as} & \isa{{\isaliteral{5C3C65717569763E}{\isasymequiv}}} & \isa{atoms\ as}\\
  \isa{bn\isaliteral{5C3C5E627375623E}{}\isactrlbsub atom{\isaliteral{5F}{\isacharunderscore}}list\isaliteral{5C3C5E657375623E}{}\isactrlesub \ as} & \isa{{\isaliteral{5C3C65717569763E}{\isasymequiv}}} & \isa{atoms\ {\isaliteral{28}{\isacharparenleft}}set\ as{\isaliteral{29}{\isacharparenright}}}
  \end{tabular}}
  \end{equation}\smallskip
  
  \noindent 
  Like the function \isa{atom}, the function \isa{atoms} coerces 
  a set of atoms to a set of the generic atom type. 
  It is defined as  \isa{atoms\ as\ {\isaliteral{5C3C65717569763E}{\isasymequiv}}\ {\isaliteral{7B}{\isacharbraceleft}}atom\ a\ {\isaliteral{7C}{\isacharbar}}\ a\ {\isaliteral{5C3C696E3E}{\isasymin}}\ as{\isaliteral{7D}{\isacharbraceright}}}. 
  The set \isa{B} in \eqref{fadef} is then formally defined as
  
  \begin{equation}\label{bdef}
  \isa{B\ {\isaliteral{5C3C65717569763E}{\isasymequiv}}\ bn{\isaliteral{5F}{\isacharunderscore}}ty\isaliteral{5C3C5E697375623E}{}\isactrlisub {\isadigit{1}}\ b\isaliteral{5C3C5E697375623E}{}\isactrlisub {\isadigit{1}}\ {\isaliteral{5C3C756E696F6E3E}{\isasymunion}}\ {\isaliteral{2E}{\isachardot}}{\isaliteral{2E}{\isachardot}}{\isaliteral{2E}{\isachardot}}\ {\isaliteral{5C3C756E696F6E3E}{\isasymunion}}\ bn{\isaliteral{5F}{\isacharunderscore}}ty\isaliteral{5C3C5E697375623E}{}\isactrlisub p\ b\isaliteral{5C3C5E697375623E}{}\isactrlisub p}
  \end{equation}\smallskip

  \noindent 
  where we use the auxiliary binding functions from \eqref{bnaux} for shallow 
  binders (that means when \isa{ty}$_i$ is of type \isa{atom}, \isa{atom\ set} or
  \isa{atom\ list}). 

  The set \isa{B{\isaliteral{27}{\isacharprime}}} in \eqref{fadef} collects all free atoms in
  non-recursive deep binders. Let us assume these binders in the binding 
  clause \isa{bc\isaliteral{5C3C5E697375623E}{}\isactrlisub i} are

  \[
  \mbox{\isa{bn\isaliteral{5C3C5E697375623E}{}\isactrlisub {\isadigit{1}}\ l\isaliteral{5C3C5E697375623E}{}\isactrlisub {\isadigit{1}}{\isaliteral{2C}{\isacharcomma}}\ {\isaliteral{5C3C646F74733E}{\isasymdots}}{\isaliteral{2C}{\isacharcomma}}\ bn\isaliteral{5C3C5E697375623E}{}\isactrlisub r\ l\isaliteral{5C3C5E697375623E}{}\isactrlisub r}}
  \]\smallskip
  
  \noindent
  with \isa{l}$_{1..r}$ $\subseteq$ \isa{b}$_{1..p}$ and 
  none of the \isa{l}$_{1..r}$ being among the bodies
  \isa{d}$_{1..q}$. The set \isa{B{\isaliteral{27}{\isacharprime}}} is defined as
  
  \begin{equation}\label{bprimedef}
  \isa{B{\isaliteral{27}{\isacharprime}}\ {\isaliteral{5C3C65717569763E}{\isasymequiv}}\ fa{\isaliteral{5F}{\isacharunderscore}}bn\isaliteral{5C3C5E697375623E}{}\isactrlisub {\isadigit{1}}\ l\isaliteral{5C3C5E697375623E}{}\isactrlisub {\isadigit{1}}\ {\isaliteral{5C3C756E696F6E3E}{\isasymunion}}\ {\isaliteral{2E}{\isachardot}}{\isaliteral{2E}{\isachardot}}{\isaliteral{2E}{\isachardot}}\ {\isaliteral{5C3C756E696F6E3E}{\isasymunion}}\ fa{\isaliteral{5F}{\isacharunderscore}}bn\isaliteral{5C3C5E697375623E}{}\isactrlisub r\ l\isaliteral{5C3C5E697375623E}{}\isactrlisub r}
  \end{equation}\smallskip
  
  \noindent
  This completes all clauses for the free-atom functions \isa{fa{\isaliteral{5F}{\isacharunderscore}}ty}$_{1..n}$.

  Note that for non-recursive deep binders, we have to add in \eqref{fadef}
  the set of atoms that are left unbound by the binding functions \isa{bn}$_{1..m}$. We used for
  the definition of this set the functions \isa{fa{\isaliteral{5F}{\isacharunderscore}}bn}$_{1..m}$. The
  definition for those functions needs to be extracted from the clauses the
  user provided for \isa{bn}$_{1..m}$ Assume the user specified a \isa{bn}-clause of the form
  
  \[
  \isa{bn\ {\isaliteral{28}{\isacharparenleft}}C\ z\isaliteral{5C3C5E697375623E}{}\isactrlisub {\isadigit{1}}\ {\isaliteral{5C3C646F74733E}{\isasymdots}}\ z\isaliteral{5C3C5E697375623E}{}\isactrlisub s{\isaliteral{29}{\isacharparenright}}\ {\isaliteral{3D}{\isacharequal}}\ rhs}
  \]\smallskip
  
  \noindent
  where the \isa{z}$_{1..s}$ are of types \isa{ty}$_{1..s}$. For 
  each of the arguments we calculate the free atoms as follows:
  
  \[\mbox{
  \begin{tabular}{c@ {\hspace{2mm}}p{0.9\textwidth}}
  $\bullet$ & \isa{fa{\isaliteral{5F}{\isacharunderscore}}ty\isaliteral{5C3C5E697375623E}{}\isactrlisub i\ z\isaliteral{5C3C5E697375623E}{}\isactrlisub i} provided \isa{z\isaliteral{5C3C5E697375623E}{}\isactrlisub i} does not occur in \isa{rhs}\\ 
  & (that means nothing is bound in \isa{z\isaliteral{5C3C5E697375623E}{}\isactrlisub i} by the binding function),\smallskip\\
  $\bullet$ & \isa{fa{\isaliteral{5F}{\isacharunderscore}}bn\isaliteral{5C3C5E697375623E}{}\isactrlisub i\ z\isaliteral{5C3C5E697375623E}{}\isactrlisub i} provided \isa{z\isaliteral{5C3C5E697375623E}{}\isactrlisub i} occurs in  \isa{rhs}
  with the recursive call \isa{bn\isaliteral{5C3C5E697375623E}{}\isactrlisub i\ z\isaliteral{5C3C5E697375623E}{}\isactrlisub i}\\
  & (that means whatever is `left over' from the \isa{bn}-function is free)\smallskip\\
  $\bullet$ & \isa{{\isaliteral{5C3C656D7074797365743E}{\isasymemptyset}}} provided \isa{z\isaliteral{5C3C5E697375623E}{}\isactrlisub i} occurs in  \isa{rhs},
  but without a recursive call\\
  & (that means \isa{z\isaliteral{5C3C5E697375623E}{}\isactrlisub i} is supposed to become bound by the binding function)\\
  \end{tabular}}
  \]\smallskip
  
  \noindent
  For defining \isa{fa{\isaliteral{5F}{\isacharunderscore}}bn\ {\isaliteral{28}{\isacharparenleft}}C\ z\isaliteral{5C3C5E697375623E}{}\isactrlisub {\isadigit{1}}\ {\isaliteral{5C3C646F74733E}{\isasymdots}}\ z\isaliteral{5C3C5E697375623E}{}\isactrlisub n{\isaliteral{29}{\isacharparenright}}} we just union up all these sets.
 
  To see how these definitions work in practice, let us reconsider the
  term-constructors \isa{Let} and \isa{Let{\isaliteral{5F}{\isacharunderscore}}rec} shown in
  \eqref{letrecs} together with the term-constructors for assignments \isa{ANil} and \isa{ACons}. Since there is a binding function defined for
  assignments, we have three free-atom functions, namely \isa{fa\isaliteral{5C3C5E627375623E}{}\isactrlbsub trm\isaliteral{5C3C5E657375623E}{}\isactrlesub }, \isa{fa\isaliteral{5C3C5E627375623E}{}\isactrlbsub assn\isaliteral{5C3C5E657375623E}{}\isactrlesub } and \isa{fa\isaliteral{5C3C5E627375623E}{}\isactrlbsub bn\isaliteral{5C3C5E657375623E}{}\isactrlesub } as follows:
  
  \[\mbox{
  \begin{tabular}{@ {}l@ {\hspace{1mm}}c@ {\hspace{1mm}}l@ {}}
  \isa{fa\isaliteral{5C3C5E627375623E}{}\isactrlbsub trm\isaliteral{5C3C5E657375623E}{}\isactrlesub \ {\isaliteral{28}{\isacharparenleft}}Let\ as\ t{\isaliteral{29}{\isacharparenright}}} & \isa{{\isaliteral{5C3C65717569763E}{\isasymequiv}}} & \isa{{\isaliteral{28}{\isacharparenleft}}fa\isaliteral{5C3C5E627375623E}{}\isactrlbsub trm\isaliteral{5C3C5E657375623E}{}\isactrlesub \ t\ {\isaliteral{2D}{\isacharminus}}\ set\ {\isaliteral{28}{\isacharparenleft}}bn\ as{\isaliteral{29}{\isacharparenright}}{\isaliteral{29}{\isacharparenright}}\ {\isaliteral{5C3C756E696F6E3E}{\isasymunion}}\ fa\isaliteral{5C3C5E627375623E}{}\isactrlbsub bn\isaliteral{5C3C5E657375623E}{}\isactrlesub \ as}\\
  \isa{fa\isaliteral{5C3C5E627375623E}{}\isactrlbsub trm\isaliteral{5C3C5E657375623E}{}\isactrlesub \ {\isaliteral{28}{\isacharparenleft}}Let{\isaliteral{5F}{\isacharunderscore}}rec\ as\ t{\isaliteral{29}{\isacharparenright}}} & \isa{{\isaliteral{5C3C65717569763E}{\isasymequiv}}} & \isa{{\isaliteral{28}{\isacharparenleft}}fa\isaliteral{5C3C5E627375623E}{}\isactrlbsub assn\isaliteral{5C3C5E657375623E}{}\isactrlesub \ as\ {\isaliteral{5C3C756E696F6E3E}{\isasymunion}}\ fa\isaliteral{5C3C5E627375623E}{}\isactrlbsub trm\isaliteral{5C3C5E657375623E}{}\isactrlesub \ t{\isaliteral{29}{\isacharparenright}}\ {\isaliteral{2D}{\isacharminus}}\ set\ {\isaliteral{28}{\isacharparenleft}}bn\ as{\isaliteral{29}{\isacharparenright}}}\smallskip\\

  \isa{fa\isaliteral{5C3C5E627375623E}{}\isactrlbsub assn\isaliteral{5C3C5E657375623E}{}\isactrlesub \ {\isaliteral{28}{\isacharparenleft}}ANil{\isaliteral{29}{\isacharparenright}}} & \isa{{\isaliteral{5C3C65717569763E}{\isasymequiv}}} & \isa{{\isaliteral{5C3C656D7074797365743E}{\isasymemptyset}}}\\
  \isa{fa\isaliteral{5C3C5E627375623E}{}\isactrlbsub assn\isaliteral{5C3C5E657375623E}{}\isactrlesub \ {\isaliteral{28}{\isacharparenleft}}ACons\ a\ t\ as{\isaliteral{29}{\isacharparenright}}} & \isa{{\isaliteral{5C3C65717569763E}{\isasymequiv}}} & \isa{{\isaliteral{28}{\isacharparenleft}}supp\ a{\isaliteral{29}{\isacharparenright}}\ {\isaliteral{5C3C756E696F6E3E}{\isasymunion}}\ {\isaliteral{28}{\isacharparenleft}}fa\isaliteral{5C3C5E627375623E}{}\isactrlbsub trm\isaliteral{5C3C5E657375623E}{}\isactrlesub \ t{\isaliteral{29}{\isacharparenright}}\ {\isaliteral{5C3C756E696F6E3E}{\isasymunion}}\ {\isaliteral{28}{\isacharparenleft}}fa\isaliteral{5C3C5E627375623E}{}\isactrlbsub assn\isaliteral{5C3C5E657375623E}{}\isactrlesub \ as{\isaliteral{29}{\isacharparenright}}}\smallskip\\

  \isa{fa\isaliteral{5C3C5E627375623E}{}\isactrlbsub bn\isaliteral{5C3C5E657375623E}{}\isactrlesub \ {\isaliteral{28}{\isacharparenleft}}ANil{\isaliteral{29}{\isacharparenright}}} & \isa{{\isaliteral{5C3C65717569763E}{\isasymequiv}}} & \isa{{\isaliteral{5C3C656D7074797365743E}{\isasymemptyset}}}\\
  \isa{fa\isaliteral{5C3C5E627375623E}{}\isactrlbsub bn\isaliteral{5C3C5E657375623E}{}\isactrlesub \ {\isaliteral{28}{\isacharparenleft}}ACons\ a\ t\ as{\isaliteral{29}{\isacharparenright}}} & \isa{{\isaliteral{5C3C65717569763E}{\isasymequiv}}} & \isa{{\isaliteral{28}{\isacharparenleft}}fa\isaliteral{5C3C5E627375623E}{}\isactrlbsub trm\isaliteral{5C3C5E657375623E}{}\isactrlesub \ t{\isaliteral{29}{\isacharparenright}}\ {\isaliteral{5C3C756E696F6E3E}{\isasymunion}}\ {\isaliteral{28}{\isacharparenleft}}fa\isaliteral{5C3C5E627375623E}{}\isactrlbsub bn\isaliteral{5C3C5E657375623E}{}\isactrlesub \ as{\isaliteral{29}{\isacharparenright}}}
  \end{tabular}}
  \]\smallskip

  \noindent
  Recall that \isa{ANil} and \isa{ACons} have no binding clause in the
  specification. The corresponding free-atom function \isa{fa\isaliteral{5C3C5E627375623E}{}\isactrlbsub assn\isaliteral{5C3C5E657375623E}{}\isactrlesub } therefore returns all free atoms of an assignment
  (in case of \isa{ACons}, they are given in terms of \isa{supp}, \isa{fa\isaliteral{5C3C5E627375623E}{}\isactrlbsub trm\isaliteral{5C3C5E657375623E}{}\isactrlesub } and \isa{fa\isaliteral{5C3C5E627375623E}{}\isactrlbsub assn\isaliteral{5C3C5E657375623E}{}\isactrlesub }). The binding
  only takes place in \isa{Let} and \isa{Let{\isaliteral{5F}{\isacharunderscore}}rec}. In case of \isa{Let}, the binding clause specifies that all atoms given by \isa{set\ {\isaliteral{28}{\isacharparenleft}}bn\ as{\isaliteral{29}{\isacharparenright}}} have to be bound in \isa{t}. Therefore we have to subtract \isa{set\ {\isaliteral{28}{\isacharparenleft}}bn\ as{\isaliteral{29}{\isacharparenright}}} from \isa{fa\isaliteral{5C3C5E627375623E}{}\isactrlbsub trm\isaliteral{5C3C5E657375623E}{}\isactrlesub \ t}. However, we also need
  to add all atoms that are free in \isa{as}. This is in contrast with
  \isa{Let{\isaliteral{5F}{\isacharunderscore}}rec} where we have a recursive binder to bind all occurrences
  of the atoms in \isa{set\ {\isaliteral{28}{\isacharparenleft}}bn\ as{\isaliteral{29}{\isacharparenright}}} also inside \isa{as}. Therefore we
  have to subtract \isa{set\ {\isaliteral{28}{\isacharparenleft}}bn\ as{\isaliteral{29}{\isacharparenright}}} from both \isa{fa\isaliteral{5C3C5E627375623E}{}\isactrlbsub trm\isaliteral{5C3C5E657375623E}{}\isactrlesub \ t} and \isa{fa\isaliteral{5C3C5E627375623E}{}\isactrlbsub assn\isaliteral{5C3C5E657375623E}{}\isactrlesub \ as}. Like the
  function \isa{bn}, the function \isa{fa\isaliteral{5C3C5E627375623E}{}\isactrlbsub bn\isaliteral{5C3C5E657375623E}{}\isactrlesub } traverses
  the list of assignments, but instead returns the free atoms, which means in
  this example the free atoms in the argument \isa{t}.

  An interesting point in this example is that a `naked' assignment (\isa{ANil} or \isa{ACons}) does not bind any atoms, even if the binding
  function is specified over assignments. Only in the context of a \isa{Let}
  or \isa{Let{\isaliteral{5F}{\isacharunderscore}}rec}, where the binding clauses are given, will some atoms
  actually become bound.  This is a phenomenon that has also been pointed out
  in \cite{ott-jfp}. For us this observation is crucial, because we would not
  be able to lift the \isa{bn}-functions to alpha-equated terms if they
  act on atoms that are bound. In that case, these functions would \emph{not}
  respect alpha-equivalence.

  Having the free-atom functions at our disposal, we can next define the 
  alpha-equivalence relations for the raw types \isa{ty}$_{1..n}$. We write them as
  
  \[
  \mbox{\isa{{\isaliteral{5C3C617070726F783E}{\isasymapprox}}ty}$_{1..n}$}.
  \]\smallskip
  
  \noindent
  Like with the free-atom functions, we also need to
  define auxiliary alpha-equivalence relations 
  
  \[
  \mbox{\isa{{\isaliteral{5C3C617070726F783E}{\isasymapprox}}bn\isaliteral{5C3C5E697375623E}{}\isactrlisub }$_{1..m}$}
  \]\smallskip
  
  \noindent
  for the binding functions \isa{bn}$_{1..m}$, 
  To simplify our definitions we will use the following abbreviations for
  \emph{compound equivalence relations} and \emph{compound free-atom functions} acting on tuples.
  
  \[\mbox{
  \begin{tabular}{r@ {\hspace{2mm}}c@ {\hspace{2mm}}l}
  \isa{{\isaliteral{28}{\isacharparenleft}}x\isaliteral{5C3C5E697375623E}{}\isactrlisub {\isadigit{1}}{\isaliteral{2C}{\isacharcomma}}{\isaliteral{5C3C646F74733E}{\isasymdots}}{\isaliteral{2C}{\isacharcomma}}\ x\isaliteral{5C3C5E697375623E}{}\isactrlisub n{\isaliteral{29}{\isacharparenright}}\ {\isaliteral{28}{\isacharparenleft}}R\isaliteral{5C3C5E697375623E}{}\isactrlisub {\isadigit{1}}{\isaliteral{2C}{\isacharcomma}}{\isaliteral{5C3C646F74733E}{\isasymdots}}{\isaliteral{2C}{\isacharcomma}}\ R\isaliteral{5C3C5E697375623E}{}\isactrlisub n{\isaliteral{29}{\isacharparenright}}\ {\isaliteral{28}{\isacharparenleft}}y\isaliteral{5C3C5E697375623E}{}\isactrlisub {\isadigit{1}}{\isaliteral{2C}{\isacharcomma}}{\isaliteral{5C3C646F74733E}{\isasymdots}}{\isaliteral{2C}{\isacharcomma}}\ y\isaliteral{5C3C5E697375623E}{}\isactrlisub n{\isaliteral{29}{\isacharparenright}}} & \isa{{\isaliteral{5C3C65717569763E}{\isasymequiv}}} &
  \isa{x\isaliteral{5C3C5E697375623E}{}\isactrlisub {\isadigit{1}}\ R\isaliteral{5C3C5E697375623E}{}\isactrlisub {\isadigit{1}}\ y\isaliteral{5C3C5E697375623E}{}\isactrlisub {\isadigit{1}}\ {\isaliteral{5C3C616E643E}{\isasymand}}\ {\isaliteral{5C3C646F74733E}{\isasymdots}}\ {\isaliteral{5C3C616E643E}{\isasymand}}\ x\isaliteral{5C3C5E697375623E}{}\isactrlisub n\ R\isaliteral{5C3C5E697375623E}{}\isactrlisub n\ y\isaliteral{5C3C5E697375623E}{}\isactrlisub n}\\
  \isa{{\isaliteral{28}{\isacharparenleft}}fa\isaliteral{5C3C5E697375623E}{}\isactrlisub {\isadigit{1}}{\isaliteral{2C}{\isacharcomma}}{\isaliteral{5C3C646F74733E}{\isasymdots}}{\isaliteral{2C}{\isacharcomma}}\ fa\isaliteral{5C3C5E697375623E}{}\isactrlisub n{\isaliteral{29}{\isacharparenright}}\ {\isaliteral{28}{\isacharparenleft}}x\isaliteral{5C3C5E697375623E}{}\isactrlisub {\isadigit{1}}{\isaliteral{2C}{\isacharcomma}}{\isaliteral{5C3C646F74733E}{\isasymdots}}{\isaliteral{2C}{\isacharcomma}}\ x\isaliteral{5C3C5E697375623E}{}\isactrlisub n{\isaliteral{29}{\isacharparenright}}} & \isa{{\isaliteral{5C3C65717569763E}{\isasymequiv}}} & \isa{fa\isaliteral{5C3C5E697375623E}{}\isactrlisub {\isadigit{1}}\ x\isaliteral{5C3C5E697375623E}{}\isactrlisub {\isadigit{1}}\ {\isaliteral{5C3C756E696F6E3E}{\isasymunion}}\ {\isaliteral{5C3C646F74733E}{\isasymdots}}\ {\isaliteral{5C3C756E696F6E3E}{\isasymunion}}\ fa\isaliteral{5C3C5E697375623E}{}\isactrlisub n\ x\isaliteral{5C3C5E697375623E}{}\isactrlisub n}\\
  \end{tabular}}
  \]\smallskip

  The alpha-equivalence relations are defined as inductive predicates
  having a single clause for each term-constructor. Assuming a
  term-constructor \isa{C} is of type \isa{ty} and has the binding clauses
  \isa{bc}$_{1..k}$, then the alpha-equivalence clause has the form
  
  \begin{equation}\label{gform}
  \mbox{\infer{\isa{C\ z\isaliteral{5C3C5E697375623E}{}\isactrlisub {\isadigit{1}}\ {\isaliteral{5C3C646F74733E}{\isasymdots}}\ z\isaliteral{5C3C5E697375623E}{}\isactrlisub n\ \ {\isaliteral{5C3C617070726F783E}{\isasymapprox}}ty\ \ C\ z{\isaliteral{5C3C5052494D453E}{\isasymPRIME}}\isaliteral{5C3C5E697375623E}{}\isactrlisub {\isadigit{1}}\ {\isaliteral{5C3C646F74733E}{\isasymdots}}\ z{\isaliteral{5C3C5052494D453E}{\isasymPRIME}}\isaliteral{5C3C5E697375623E}{}\isactrlisub n}}
  {\isa{prems{\isaliteral{28}{\isacharparenleft}}bc\isaliteral{5C3C5E697375623E}{}\isactrlisub {\isadigit{1}}{\isaliteral{29}{\isacharparenright}}\ {\isaliteral{5C3C646F74733E}{\isasymdots}}\ prems{\isaliteral{28}{\isacharparenleft}}bc\isaliteral{5C3C5E697375623E}{}\isactrlisub k{\isaliteral{29}{\isacharparenright}}}}} 
  \end{equation}\smallskip

  \noindent
  The task below is to specify what the premises corresponding to a binding
  clause are. To understand better what the general pattern is, let us first 
  treat the special instance where \isa{bc\isaliteral{5C3C5E697375623E}{}\isactrlisub i} is the empty binding clause 
  of the form

  \[
  \mbox{\isacommand{binds (set)} \isa{{\isaliteral{5C3C656D7074797365743E}{\isasymemptyset}}} \isacommand{in} \isa{d\isaliteral{5C3C5E697375623E}{}\isactrlisub {\isadigit{1}}{\isaliteral{5C3C646F74733E}{\isasymdots}}d\isaliteral{5C3C5E697375623E}{}\isactrlisub q}.}
  \]\smallskip

  \noindent
  In this binding clause no atom is bound and we only have to `alpha-relate'
  the bodies. For this we build first the tuples \isa{D\ {\isaliteral{5C3C65717569763E}{\isasymequiv}}\ {\isaliteral{28}{\isacharparenleft}}d\isaliteral{5C3C5E697375623E}{}\isactrlisub {\isadigit{1}}{\isaliteral{2C}{\isacharcomma}}{\isaliteral{5C3C646F74733E}{\isasymdots}}{\isaliteral{2C}{\isacharcomma}}\ d\isaliteral{5C3C5E697375623E}{}\isactrlisub q{\isaliteral{29}{\isacharparenright}}} and \isa{D{\isaliteral{27}{\isacharprime}}\ {\isaliteral{5C3C65717569763E}{\isasymequiv}}\ {\isaliteral{28}{\isacharparenleft}}d{\isaliteral{5C3C5052494D453E}{\isasymPRIME}}\isaliteral{5C3C5E697375623E}{}\isactrlisub {\isadigit{1}}{\isaliteral{2C}{\isacharcomma}}{\isaliteral{5C3C646F74733E}{\isasymdots}}{\isaliteral{2C}{\isacharcomma}}\ d{\isaliteral{5C3C5052494D453E}{\isasymPRIME}}\isaliteral{5C3C5E697375623E}{}\isactrlisub q{\isaliteral{29}{\isacharparenright}}}
  whereby the labels \isa{d}$_{1..q}$ refer to some of the arguments \isa{z}$_{1..n}$ and respectively \isa{d{\isaliteral{5C3C5052494D453E}{\isasymPRIME}}}$_{1..q}$ to some of the \isa{z{\isaliteral{5C3C5052494D453E}{\isasymPRIME}}}$_{1..n}$ in \eqref{gform}. In order to relate two such
  tuples we define the compound alpha-equivalence relation \isa{R} as
  follows

  \begin{equation}\label{rempty}
  \mbox{\isa{R\ {\isaliteral{5C3C65717569763E}{\isasymequiv}}\ {\isaliteral{28}{\isacharparenleft}}R\isaliteral{5C3C5E697375623E}{}\isactrlisub {\isadigit{1}}{\isaliteral{2C}{\isacharcomma}}{\isaliteral{5C3C646F74733E}{\isasymdots}}{\isaliteral{2C}{\isacharcomma}}\ R\isaliteral{5C3C5E697375623E}{}\isactrlisub q{\isaliteral{29}{\isacharparenright}}}}
  \end{equation}\smallskip

  \noindent
  with \isa{R\isaliteral{5C3C5E697375623E}{}\isactrlisub i} being \isa{{\isaliteral{5C3C617070726F783E}{\isasymapprox}}ty\isaliteral{5C3C5E697375623E}{}\isactrlisub i} if the corresponding
  labels \isa{d\isaliteral{5C3C5E697375623E}{}\isactrlisub i} and \isa{d{\isaliteral{5C3C5052494D453E}{\isasymPRIME}}\isaliteral{5C3C5E697375623E}{}\isactrlisub i} refer to a
  recursive argument of \isa{C} and have type \isa{ty\isaliteral{5C3C5E697375623E}{}\isactrlisub i}; otherwise
  we take \isa{R\isaliteral{5C3C5E697375623E}{}\isactrlisub i} to be the equality \isa{{\isaliteral{3D}{\isacharequal}}}. Again the
  latter is because \isa{ty\isaliteral{5C3C5E697375623E}{}\isactrlisub i} is then not part of the specified types
  and alpha-equivalence of any previously defined type is supposed to coincide
  with equality.  This lets us now define the premise for an empty binding
  clause succinctly as \isa{prems{\isaliteral{28}{\isacharparenleft}}bc\isaliteral{5C3C5E697375623E}{}\isactrlisub i{\isaliteral{29}{\isacharparenright}}\ {\isaliteral{5C3C65717569763E}{\isasymequiv}}\ D\ R\ D{\isaliteral{27}{\isacharprime}}}, which can be
  unfolded to the series of premises
  
  \[
  \isa{d\isaliteral{5C3C5E697375623E}{}\isactrlisub {\isadigit{1}}\ R\isaliteral{5C3C5E697375623E}{}\isactrlisub {\isadigit{1}}\ d{\isaliteral{5C3C5052494D453E}{\isasymPRIME}}\isaliteral{5C3C5E697375623E}{}\isactrlisub {\isadigit{1}}\ \ {\isaliteral{5C3C646F74733E}{\isasymdots}}\ d\isaliteral{5C3C5E697375623E}{}\isactrlisub q\ R\isaliteral{5C3C5E697375623E}{}\isactrlisub q\ d{\isaliteral{5C3C5052494D453E}{\isasymPRIME}}\isaliteral{5C3C5E697375623E}{}\isactrlisub q}.
  \]\smallskip
  
  \noindent
  We will use the unfolded version in the examples below.

  Now suppose the binding clause \isa{bc\isaliteral{5C3C5E697375623E}{}\isactrlisub i} is of the general form 
  
  \begin{equation}\label{nonempty}
  \mbox{\isacommand{binds (set)} \isa{b\isaliteral{5C3C5E697375623E}{}\isactrlisub {\isadigit{1}}{\isaliteral{5C3C646F74733E}{\isasymdots}}b\isaliteral{5C3C5E697375623E}{}\isactrlisub p} \isacommand{in} \isa{d\isaliteral{5C3C5E697375623E}{}\isactrlisub {\isadigit{1}}{\isaliteral{5C3C646F74733E}{\isasymdots}}d\isaliteral{5C3C5E697375623E}{}\isactrlisub q}.}
  \end{equation}\smallskip

  \noindent
  In this case we define a premise \isa{P} using the relation
  $\approx_{\,\textit{set}}^{\textit{R}, \textit{fa}}$ given in Section~\ref{sec:binders} (similarly
  $\approx_{\,\textit{set+}}^{\textit{R}, \textit{fa}}$ and 
  $\approx_{\,\textit{list}}^{\textit{R}, \textit{fa}}$ for the other
  binding modes). As above, we first build the tuples \isa{D} and
  \isa{D{\isaliteral{27}{\isacharprime}}} for the bodies \isa{d}$_{1..q}$, and the corresponding
  compound alpha-relation \isa{R} (shown in \eqref{rempty}). 
  For $\approx_{\,\textit{set}}^{\textit{R}, \textit{fa}}$  we also need
  a compound free-atom function for the bodies defined as
  
  \[
  \mbox{\isa{fa\ {\isaliteral{5C3C65717569763E}{\isasymequiv}}\ {\isaliteral{28}{\isacharparenleft}}fa{\isaliteral{5F}{\isacharunderscore}}ty\isaliteral{5C3C5E697375623E}{}\isactrlisub {\isadigit{1}}{\isaliteral{2C}{\isacharcomma}}{\isaliteral{5C3C646F74733E}{\isasymdots}}{\isaliteral{2C}{\isacharcomma}}\ fa{\isaliteral{5F}{\isacharunderscore}}ty\isaliteral{5C3C5E697375623E}{}\isactrlisub q{\isaliteral{29}{\isacharparenright}}}}
  \]\smallskip

  \noindent
  with the assumption that the \isa{d}$_{1..q}$ refer to arguments of types \isa{ty}$_{1..q}$.
  The last ingredient we need are the sets of atoms bound in the bodies.
  For this we take

  \[
  \isa{B\ {\isaliteral{5C3C65717569763E}{\isasymequiv}}\ bn{\isaliteral{5F}{\isacharunderscore}}ty\isaliteral{5C3C5E697375623E}{}\isactrlisub {\isadigit{1}}\ b\isaliteral{5C3C5E697375623E}{}\isactrlisub {\isadigit{1}}\ {\isaliteral{5C3C756E696F6E3E}{\isasymunion}}\ {\isaliteral{5C3C646F74733E}{\isasymdots}}\ {\isaliteral{5C3C756E696F6E3E}{\isasymunion}}\ bn{\isaliteral{5F}{\isacharunderscore}}ty\isaliteral{5C3C5E697375623E}{}\isactrlisub p\ b\isaliteral{5C3C5E697375623E}{}\isactrlisub p}\;.\\
  \]\smallskip

  \noindent
  Similarly for \isa{B{\isaliteral{27}{\isacharprime}}} using the labels \isa{b{\isaliteral{5C3C5052494D453E}{\isasymPRIME}}}$_{1..p}$. This 
  lets us formally define the premise \isa{P} for a non-empty binding clause as:
  
  \[
  \mbox{\isa{P\ {\isaliteral{5C3C65717569763E}{\isasymequiv}}\ {\isaliteral{28}{\isacharparenleft}}B{\isaliteral{2C}{\isacharcomma}}\ D{\isaliteral{29}{\isacharparenright}}\ {\isaliteral{5C3C617070726F783E}{\isasymapprox}}\,\raisebox{-1pt}{\makebox[0mm][l]{$_{\textit{set}}$}}\isaliteral{5C3C5E627375703E}{}\isactrlbsup R{\isaliteral{2C}{\isacharcomma}}\ fa\isaliteral{5C3C5E657375703E}{}\isactrlesup \ {\isaliteral{28}{\isacharparenleft}}B{\isaliteral{27}{\isacharprime}}{\isaliteral{2C}{\isacharcomma}}\ D{\isaliteral{27}{\isacharprime}}{\isaliteral{29}{\isacharparenright}}}}\;.
  \]\smallskip

  \noindent
  This premise accounts for alpha-equivalence of the bodies of the binding
  clause. However, in case the binders have non-recursive deep binders, this
  premise is not enough: we also have to `propagate' alpha-equivalence
  inside the structure of these binders. An example is \isa{Let} where we
  have to make sure the right-hand sides of assignments are
  alpha-equivalent. For this we use relations \isa{{\isaliteral{5C3C617070726F783E}{\isasymapprox}}bn}$_{1..m}$ (which we
  will define shortly).  Let us assume the non-recursive deep binders
  in \isa{bc\isaliteral{5C3C5E697375623E}{}\isactrlisub i} are
  
  \[
  \isa{bn\isaliteral{5C3C5E697375623E}{}\isactrlisub {\isadigit{1}}\ l\isaliteral{5C3C5E697375623E}{}\isactrlisub {\isadigit{1}}{\isaliteral{2C}{\isacharcomma}}\ {\isaliteral{5C3C646F74733E}{\isasymdots}}{\isaliteral{2C}{\isacharcomma}}\ bn\isaliteral{5C3C5E697375623E}{}\isactrlisub r\ l\isaliteral{5C3C5E697375623E}{}\isactrlisub r}.
  \]\smallskip
  
  \noindent
  The tuple \isa{L} consists then of all these binders \isa{{\isaliteral{28}{\isacharparenleft}}l\isaliteral{5C3C5E697375623E}{}\isactrlisub {\isadigit{1}}{\isaliteral{2C}{\isacharcomma}}{\isaliteral{5C3C646F74733E}{\isasymdots}}{\isaliteral{2C}{\isacharcomma}}l\isaliteral{5C3C5E697375623E}{}\isactrlisub r{\isaliteral{29}{\isacharparenright}}} 
  (similarly \isa{L{\isaliteral{27}{\isacharprime}}}) and the compound equivalence relation \isa{R{\isaliteral{27}{\isacharprime}}} 
  is \isa{{\isaliteral{28}{\isacharparenleft}}{\isaliteral{5C3C617070726F783E}{\isasymapprox}}bn\isaliteral{5C3C5E697375623E}{}\isactrlisub {\isadigit{1}}{\isaliteral{2C}{\isacharcomma}}{\isaliteral{5C3C646F74733E}{\isasymdots}}{\isaliteral{2C}{\isacharcomma}}{\isaliteral{5C3C617070726F783E}{\isasymapprox}}bn\isaliteral{5C3C5E697375623E}{}\isactrlisub r{\isaliteral{29}{\isacharparenright}}}.  All premises for \isa{bc\isaliteral{5C3C5E697375623E}{}\isactrlisub i} are then given by
  
  \[
  \isa{prems{\isaliteral{28}{\isacharparenleft}}bc\isaliteral{5C3C5E697375623E}{}\isactrlisub i{\isaliteral{29}{\isacharparenright}}\ {\isaliteral{5C3C65717569763E}{\isasymequiv}}\ P\ \ {\isaliteral{5C3C616E643E}{\isasymand}}\ \ \ L\ R{\isaliteral{27}{\isacharprime}}\ L{\isaliteral{27}{\isacharprime}}}
  \]\smallskip

  \noindent 
  The auxiliary alpha-equivalence relations \isa{{\isaliteral{5C3C617070726F783E}{\isasymapprox}}bn}$_{1..m}$ 
  in \isa{R{\isaliteral{27}{\isacharprime}}} are defined as follows: assuming a \isa{bn}-clause is of the form
  
  \[
  \isa{bn\ {\isaliteral{28}{\isacharparenleft}}C\ z\isaliteral{5C3C5E697375623E}{}\isactrlisub {\isadigit{1}}\ {\isaliteral{5C3C646F74733E}{\isasymdots}}\ z\isaliteral{5C3C5E697375623E}{}\isactrlisub s{\isaliteral{29}{\isacharparenright}}\ {\isaliteral{3D}{\isacharequal}}\ rhs}
  \]\smallskip
  
  \noindent
  where the \isa{z}$_{1..s}$ are of types \isa{ty}$_{1..s}$,
  then the corresponding alpha-equivalence clause for \isa{{\isaliteral{5C3C617070726F783E}{\isasymapprox}}bn} has the form
  
  \[
  \mbox{\infer{\isa{C\ z\isaliteral{5C3C5E697375623E}{}\isactrlisub {\isadigit{1}}\ {\isaliteral{5C3C646F74733E}{\isasymdots}}\ z\isaliteral{5C3C5E697375623E}{}\isactrlisub s\ {\isaliteral{5C3C617070726F783E}{\isasymapprox}}bn\ C\ z{\isaliteral{5C3C5052494D453E}{\isasymPRIME}}\isaliteral{5C3C5E697375623E}{}\isactrlisub {\isadigit{1}}\ {\isaliteral{5C3C646F74733E}{\isasymdots}}\ z{\isaliteral{5C3C5052494D453E}{\isasymPRIME}}\isaliteral{5C3C5E697375623E}{}\isactrlisub s}}
  {\isa{z\isaliteral{5C3C5E697375623E}{}\isactrlisub {\isadigit{1}}\ R\isaliteral{5C3C5E697375623E}{}\isactrlisub {\isadigit{1}}\ z{\isaliteral{5C3C5052494D453E}{\isasymPRIME}}\isaliteral{5C3C5E697375623E}{}\isactrlisub {\isadigit{1}}\ {\isaliteral{5C3C646F74733E}{\isasymdots}}\ z\isaliteral{5C3C5E697375623E}{}\isactrlisub s\ R\isaliteral{5C3C5E697375623E}{}\isactrlisub s\ z{\isaliteral{5C3C5052494D453E}{\isasymPRIME}}\isaliteral{5C3C5E697375623E}{}\isactrlisub s}}}
  \]\smallskip
  
  \noindent
  In this clause the relations \isa{R}$_{1..s}$ are given by 

  \[\mbox{
  \begin{tabular}{c@ {\hspace{2mm}}p{0.9\textwidth}}
  $\bullet$ & \isa{z\isaliteral{5C3C5E697375623E}{}\isactrlisub i\ {\isaliteral{5C3C617070726F783E}{\isasymapprox}}ty\ z{\isaliteral{5C3C5052494D453E}{\isasymPRIME}}\isaliteral{5C3C5E697375623E}{}\isactrlisub i} provided \isa{z\isaliteral{5C3C5E697375623E}{}\isactrlisub i} does not occur in \isa{rhs} and 
  is a recursive argument of \isa{C},\smallskip\\
  $\bullet$ & \isa{z\isaliteral{5C3C5E697375623E}{}\isactrlisub i\ {\isaliteral{3D}{\isacharequal}}\ z{\isaliteral{5C3C5052494D453E}{\isasymPRIME}}\isaliteral{5C3C5E697375623E}{}\isactrlisub i} provided \isa{z\isaliteral{5C3C5E697375623E}{}\isactrlisub i} does not occur in \isa{rhs}
  and is a non-recursive argument of \isa{C},\smallskip\\
  $\bullet$ & \isa{z\isaliteral{5C3C5E697375623E}{}\isactrlisub i\ {\isaliteral{5C3C617070726F783E}{\isasymapprox}}bn\isaliteral{5C3C5E697375623E}{}\isactrlisub i\ z{\isaliteral{5C3C5052494D453E}{\isasymPRIME}}\isaliteral{5C3C5E697375623E}{}\isactrlisub i} provided \isa{z\isaliteral{5C3C5E697375623E}{}\isactrlisub i} occurs in \isa{rhs}
  with the recursive call \isa{bn\isaliteral{5C3C5E697375623E}{}\isactrlisub i\ x\isaliteral{5C3C5E697375623E}{}\isactrlisub i} and\smallskip\\
  $\bullet$ & \isa{True} provided \isa{z\isaliteral{5C3C5E697375623E}{}\isactrlisub i} occurs in \isa{rhs} but without a
  recursive call.
  \end{tabular}}
  \]\smallskip

  \noindent
  This completes the definition of alpha-equivalence. As a sanity check, we can show
  that the premises of empty binding clauses are a special case of the clauses for 
  non-empty ones (we just have to unfold the definition of 
  $\approx_{\,\textit{set}}^{\textit{R}, \textit{fa}}$ and take \isa{{\isadigit{0}}}
  for the existentially quantified permutation).

  Again let us take a look at a concrete example for these definitions. For 
  the specification shown in \eqref{letrecs}
  we have three relations $\approx_{\textit{trm}}$, $\approx_{\textit{assn}}$ and
  $\approx_{\textit{bn}}$ with the following rules:

  \begin{equation}\label{rawalpha}\mbox{
  \begin{tabular}{@ {}c @ {}}
  \infer{\isa{Let\ as\ t\ {\isaliteral{5C3C617070726F783E}{\isasymapprox}}\isaliteral{5C3C5E627375623E}{}\isactrlbsub trm\isaliteral{5C3C5E657375623E}{}\isactrlesub \ Let\ as{\isaliteral{27}{\isacharprime}}\ t{\isaliteral{27}{\isacharprime}}}}
  {\isa{{\isaliteral{28}{\isacharparenleft}}bn\ as{\isaliteral{2C}{\isacharcomma}}\ t{\isaliteral{29}{\isacharparenright}}\ {\isaliteral{5C3C617070726F783E}{\isasymapprox}}\,\raisebox{-1pt}{\makebox[0mm][l]{$_{\textit{list}}$}}\isaliteral{5C3C5E627375703E}{}\isactrlbsup {\isaliteral{5C3C617070726F783E}{\isasymapprox}}\isaliteral{5C3C5E627375623E}{}\isactrlbsub trm\isaliteral{5C3C5E657375623E}{}\isactrlesub {\isaliteral{2C}{\isacharcomma}}\ fa\isaliteral{5C3C5E627375623E}{}\isactrlbsub trm\isaliteral{5C3C5E657375623E}{}\isactrlesub \isaliteral{5C3C5E657375703E}{}\isactrlesup \ {\isaliteral{28}{\isacharparenleft}}bn\ as{\isaliteral{27}{\isacharprime}}{\isaliteral{2C}{\isacharcomma}}\ t{\isaliteral{27}{\isacharprime}}{\isaliteral{29}{\isacharparenright}}} & 
  \hspace{5mm}\isa{as\ {\isaliteral{5C3C617070726F783E}{\isasymapprox}}\isaliteral{5C3C5E627375623E}{}\isactrlbsub bn\isaliteral{5C3C5E657375623E}{}\isactrlesub \ as{\isaliteral{27}{\isacharprime}}}}\\
  \\
  \makebox[0mm]{\infer{\isa{Let{\isaliteral{5F}{\isacharunderscore}}rec\ as\ t\ {\isaliteral{5C3C617070726F783E}{\isasymapprox}}\isaliteral{5C3C5E627375623E}{}\isactrlbsub trm\isaliteral{5C3C5E657375623E}{}\isactrlesub \ Let{\isaliteral{5F}{\isacharunderscore}}rec\ as{\isaliteral{27}{\isacharprime}}\ t{\isaliteral{27}{\isacharprime}}}}
  {\isa{{\isaliteral{28}{\isacharparenleft}}bn\ as{\isaliteral{2C}{\isacharcomma}}\ {\isaliteral{28}{\isacharparenleft}}as{\isaliteral{2C}{\isacharcomma}}\ t{\isaliteral{29}{\isacharparenright}}{\isaliteral{29}{\isacharparenright}}\ {\isaliteral{5C3C617070726F783E}{\isasymapprox}}\,\raisebox{-1pt}{\makebox[0mm][l]{$_{\textit{list}}$}}\isaliteral{5C3C5E627375703E}{}\isactrlbsup {\isaliteral{28}{\isacharparenleft}}{\isaliteral{5C3C617070726F783E}{\isasymapprox}}\isaliteral{5C3C5E627375623E}{}\isactrlbsub assn\isaliteral{5C3C5E657375623E}{}\isactrlesub {\isaliteral{2C}{\isacharcomma}}\ {\isaliteral{5C3C617070726F783E}{\isasymapprox}}\isaliteral{5C3C5E627375623E}{}\isactrlbsub trm\isaliteral{5C3C5E657375623E}{}\isactrlesub {\isaliteral{29}{\isacharparenright}}{\isaliteral{2C}{\isacharcomma}}\ {\isaliteral{28}{\isacharparenleft}}fa\isaliteral{5C3C5E627375623E}{}\isactrlbsub assn\isaliteral{5C3C5E657375623E}{}\isactrlesub {\isaliteral{2C}{\isacharcomma}}\ fa\isaliteral{5C3C5E627375623E}{}\isactrlbsub trm\isaliteral{5C3C5E657375623E}{}\isactrlesub {\isaliteral{29}{\isacharparenright}}\isaliteral{5C3C5E657375703E}{}\isactrlesup \ {\isaliteral{28}{\isacharparenleft}}bn\ as{\isaliteral{27}{\isacharprime}}{\isaliteral{2C}{\isacharcomma}}\ {\isaliteral{28}{\isacharparenleft}}as{\isaliteral{2C}{\isacharcomma}}\ t{\isaliteral{5C3C5052494D453E}{\isasymPRIME}}\ {\isaliteral{29}{\isacharparenright}}{\isaliteral{29}{\isacharparenright}}}}}\\
  \\

  \begin{tabular}{@ {}c @ {}}
  \infer{\isa{ANil\ {\isaliteral{5C3C617070726F783E}{\isasymapprox}}\isaliteral{5C3C5E627375623E}{}\isactrlbsub assn\isaliteral{5C3C5E657375623E}{}\isactrlesub \ ANil}}{}\hspace{9mm}
  \infer{\isa{ACons\ a\ t\ as\ {\isaliteral{5C3C617070726F783E}{\isasymapprox}}\isaliteral{5C3C5E627375623E}{}\isactrlbsub assn\isaliteral{5C3C5E657375623E}{}\isactrlesub \ ACons\ a{\isaliteral{27}{\isacharprime}}\ t{\isaliteral{27}{\isacharprime}}\ as}}
  {\isa{a\ {\isaliteral{3D}{\isacharequal}}\ a{\isaliteral{27}{\isacharprime}}} & \hspace{5mm}\isa{t\ {\isaliteral{5C3C617070726F783E}{\isasymapprox}}\isaliteral{5C3C5E627375623E}{}\isactrlbsub trm\isaliteral{5C3C5E657375623E}{}\isactrlesub \ t{\isaliteral{27}{\isacharprime}}} & \hspace{5mm}\isa{as\ {\isaliteral{5C3C617070726F783E}{\isasymapprox}}\isaliteral{5C3C5E627375623E}{}\isactrlbsub assn\isaliteral{5C3C5E657375623E}{}\isactrlesub \ as{\isaliteral{27}{\isacharprime}}}}
  \end{tabular}\\
  \\

  \begin{tabular}{@ {}c @ {}}
  \infer{\isa{ANil\ {\isaliteral{5C3C617070726F783E}{\isasymapprox}}\isaliteral{5C3C5E627375623E}{}\isactrlbsub bn\isaliteral{5C3C5E657375623E}{}\isactrlesub \ ANil}}{}\hspace{9mm}
  \infer{\isa{ACons\ a\ t\ as\ {\isaliteral{5C3C617070726F783E}{\isasymapprox}}\isaliteral{5C3C5E627375623E}{}\isactrlbsub bn\isaliteral{5C3C5E657375623E}{}\isactrlesub \ ACons\ a{\isaliteral{27}{\isacharprime}}\ t{\isaliteral{27}{\isacharprime}}\ as}}
  {\isa{t\ {\isaliteral{5C3C617070726F783E}{\isasymapprox}}\isaliteral{5C3C5E627375623E}{}\isactrlbsub trm\isaliteral{5C3C5E657375623E}{}\isactrlesub \ t{\isaliteral{27}{\isacharprime}}} & \hspace{5mm}\isa{as\ {\isaliteral{5C3C617070726F783E}{\isasymapprox}}\isaliteral{5C3C5E627375623E}{}\isactrlbsub bn\isaliteral{5C3C5E657375623E}{}\isactrlesub \ as{\isaliteral{27}{\isacharprime}}}}
  \end{tabular}
  \end{tabular}}
  \end{equation}\smallskip

  \noindent
  Notice the difference between  $\approx_{\textit{assn}}$ and
  $\approx_{\textit{bn}}$: the latter only `tracks' alpha-equivalence of 
  the components in an assignment that are \emph{not} bound. This is needed in the 
  clause for \isa{Let} (which has
  a non-recursive binder). 
  The underlying reason is that the terms inside an assignment are not meant 
  to be `under' the binder. Such a premise is \emph{not} needed in \isa{Let{\isaliteral{5F}{\isacharunderscore}}rec}, 
  because there all components of an assignment are `under' the binder. 
  Note also that in case of more than one body (that is in the \isa{Let{\isaliteral{5F}{\isacharunderscore}}rec}-case above)
  we need to parametrise the relation $\approx_{\textit{list}}$ with a compound
  equivalence relation and a compound free-atom function. This is because the
  corresponding binding clause specifies a binder with two bodies, namely
  \isa{as} and \isa{t}.%
\end{isamarkuptext}%
\isamarkuptrue%
\isamarkupsection{Establishing the Reasoning Infrastructure%
}
\isamarkuptrue%
\begin{isamarkuptext}%
Having made all necessary definitions for raw terms, we can start with
  establishing the reasoning infrastructure for the alpha-equated types \isa{ty{\isaliteral{5C3C414C3E}{\isasymAL}}}$_{1..n}$, that is the types the user originally specified. We
  give in this section and the next the proofs we need for establishing this
  infrastructure. One point of our work is that we have completely
  automated these proofs in Isabelle/HOL.

  First we establish that the free-variable functions, the binding functions and the
  alpha-equi\-va\-lences are equivariant.

  \begin{lem}\mbox{}\\
  \isa{{\isaliteral{28}{\isacharparenleft}}i{\isaliteral{29}{\isacharparenright}}} The functions \isa{fa{\isaliteral{5F}{\isacharunderscore}}ty}$_{1..n}$, \isa{fa{\isaliteral{5F}{\isacharunderscore}}bn}$_{1..m}$ and
  \isa{bn}$_{1..m}$ are equivariant.\\
  \isa{{\isaliteral{28}{\isacharparenleft}}ii{\isaliteral{29}{\isacharparenright}}} The relations \isa{{\isaliteral{5C3C617070726F783E}{\isasymapprox}}ty}$_{1..n}$ and
  \isa{{\isaliteral{5C3C617070726F783E}{\isasymapprox}}bn}$_{1..m}$ are equivariant.
  \end{lem}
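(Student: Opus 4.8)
The plan is to prove the two parts in the stated order, since the equivariance of the relations in~(ii) will rely on the equivariance of the free-atom functions established in~(i). Recall that equivariance of a function $f$ means $\pi\bullet f = f$ for all $\pi$, equivalently $\pi\bullet(f\,x)=f\,(\pi\bullet x)$ for functions of functional type (cf.~\eqref{equivariancedef} and~\eqref{equivariance}), while equivariance of a relation $R$ means $x\,R\,y$ implies $(\pi\bullet x)\,R\,(\pi\bullet y)$ for all $\pi$. For part~(i), the functions $\mathit{fa\_ty}_{1..n}$, $\mathit{fa\_bn}_{1..m}$ and $\mathit{bn}_{1..m}$ are all given by one simultaneous recursion over the raw datatypes $ty_{1..n}$, so I would use the mutual structural induction principle attached to these datatypes, carrying through the conjunction (over all relevant types) of $\pi\bullet(\mathit{fa\_ty}_i\,x)=\mathit{fa\_ty}_i\,(\pi\bullet x)$, $\pi\bullet(\mathit{fa\_bn}_j\,x)=\mathit{fa\_bn}_j\,(\pi\bullet x)$ and $\pi\bullet(\mathit{bn}_j\,x)=\mathit{bn}_j\,(\pi\bullet x)$, with $\pi$ universally quantified. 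In the step for a term-constructor $C\,z_1\ldots z_n$, equation~\eqref{ceqvt} moves $\pi$ onto the arguments, and it remains to unfold the defining equations of $\mathit{fa\_ty}_i$, $\mathit{fa\_bn}_j$ and $\mathit{bn}_j$ on $C\,(\pi\bullet z_1)\ldots(\pi\bullet z_n)$ and push $\pi$ inwards. By construction (see~\eqref{fadef}, \eqref{bdef}, \eqref{bnaux} and the clauses extracted for $\mathit{fa\_bn}$ and $\mathit{bn}$) these right-hand sides are built only from $\cup$, set difference, $\mathit{set}$, $\mathit{atom}$, $\mathit{atoms}$, the generic $\mathit{supp}$ for arguments of types outside the specification, the auxiliary functions of~\eqref{bnaux}, and recursive calls of the $\mathit{fa}$- and $\mathit{bn}$-functions; all of the former operators are equivariant~\cite{HuffmanUrban10} and the recursive calls are covered by the induction hypotheses, so $\pi$ goes through. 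This part is routine once the mutual induction is set up.

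For part~(ii), the relations $\approx ty_{1..n}$ and $\approx bn_{1..m}$ form a single mutually inductive definition — the $\approx ty$-clauses mention the $\approx bn$-relations through the compound relation $R'$ of~\eqref{gform} for non-recursive deep binders — so I would use the associated simultaneous rule induction. For a clause of the shape~\eqref{gform} one has to show that the permuted conclusion $C\,(\pi\bullet z_1)\ldots(\pi\bullet z_n)\ \approx ty\ C\,(\pi\bullet z'_1)\ldots(\pi\bullet z'_n)$ follows from the permuted premises, which are delivered by the induction hypotheses. The premises are of three kinds: equalities $d_i=d'_i$ and the trivial $\mathit{True}$, obviously stable under $\pi$; premises $d_i\,R_i\,d'_i$ with $R_i$ one of the $\approx ty$'s or $\approx bn$'s, which are exactly the induction hypotheses; and the main premise $P$ of the form $(B,D)\approx_{\textit{set}}^{R,\mathit{fa}}(B',D')$ (or its $\textit{set+}$/$\textit{list}$ variants). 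For $P$ I would argue as in the proof of Lemma~\ref{alphaeq}: if $\pi'$ witnesses $(B,D)\approx_{\textit{set}}^{R,\mathit{fa}}(B',D')$, then the conjugate $\pi\bullet\pi'$ witnesses $(\pi\bullet B,\pi\bullet D)\approx_{\textit{set}}^{R,\mathit{fa}}(\pi\bullet B',\pi\bullet D')$; checking the clauses of Definition~\ref{alphaset} under this substitution uses equivariance of the compound free-atom function $\mathit{fa}$ (a tuple of $\mathit{fa\_ty}_i$'s, hence equivariant by part~(i)), equivariance of the components of the compound relation $R$ (tuples of $\approx ty_i$'s, $\approx bn_i$'s and equalities, supplied by the induction hypotheses — available even though $R$ sits under $\approx_{\textit{set}}$, since $\approx_{\textit{set}}^{R,\mathit{fa}}$ is monotone in $R$), and equivariance of $\#^*$, $-$, $\mathit{set}$ together with the conjugation identity for $\bullet$ on permutations. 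Empty binding clauses are a special case obtained by taking $\pi'=0$, and applying the permutation operation on $\mathit{bool}$ finishes the step.

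I expect the real difficulty to be organisational rather than computational: arranging a single simultaneous induction covering \emph{all} of $\approx ty_{1..n}$ together with $\approx bn_{1..m}$ (forced by the mutual dependence via $R'$), and making sure part~(i), in particular the equivariance of the $\mathit{fa\_ty}_i$, is already in hand before the $\approx_{\textit{set}}^{R,\mathit{fa}}$-premises of part~(ii) are discharged. The only step with genuine content is the replacement of the existential witness $\pi'$ by its conjugate $\pi\bullet\pi'$, and even that is short — its whole force being that every operation occurring in Definitions~\ref{alphaset}, \ref{alphalist} and~\ref{alphares} is itself equivariant, which is already available from~\cite{HuffmanUrban10} and part~(i). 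In the formalisation all of these inductions are discharged automatically for each concrete specification.
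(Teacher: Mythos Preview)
Your proposal is correct and follows essentially the same route as the paper: part~(i) by mutual induction over the (function) definitions, and part~(ii) by rule induction over the clauses of $\approx ty_{1..n}$ and $\approx bn_{1..m}$, relying on part~(i) for the $\approx_{\textit{set}}$-style premises and taking the conjugate $\pi\bullet\pi'$ as the new witness. The paper's own proof is only a two-sentence sketch, so your expanded account of how the permutation is pushed through each clause (and why the witness replacement works, as in Lemma~\ref{alphaeq}) is more detailed but not a different argument.
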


  \begin{proof}
  The function package of Isabelle/HOL allows us to prove the first part by
  mutual induction over the definitions of the functions.\footnote{We have
  that the free-atom functions are terminating. From this the function
  package derives an induction principle~\cite{Krauss09}.} The second is by a
  straightforward induction over the rules of \isa{{\isaliteral{5C3C617070726F783E}{\isasymapprox}}ty}$_{1..n}$ and
  \isa{{\isaliteral{5C3C617070726F783E}{\isasymapprox}}bn}$_{1..m}$ using the first part.
  \end{proof}

  \noindent
  Next we establish that the alpha-equivalence relations defined in the
  previous section are indeed equivalence relations.

  \begin{lem}\label{equiv} 
  The relations \isa{{\isaliteral{5C3C617070726F783E}{\isasymapprox}}ty}$_{1..n}$ and \isa{{\isaliteral{5C3C617070726F783E}{\isasymapprox}}bn}$_{1..m}$ are
  equivalence relations.
  \end{lem}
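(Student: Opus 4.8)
The plan is to establish reflexivity, symmetry and transitivity separately, in each case reducing the work to the analogous property of the abstraction relations $\approx_{\,\textit{set}}^{\textit{R}, \textit{fa}}$, $\approx_{\,\textit{set+}}^{\textit{R}, \textit{fa}}$ and $\approx_{\,\textit{list}}^{\textit{R}, \textit{fa}}$ of Section~\ref{sec:binders}. The key observation is that the proof of Lemma~\ref{alphaeq} is in fact generic in $R$ and $\textit{fa}$: choosing the witnessing permutation to be $0$, $-\pi$, respectively $\pi_{1}+\pi_{2}$, shows that each of these three relations is reflexive whenever $R$ is, symmetric whenever $R$ is symmetric and equivariant, and transitive whenever $R$ is transitive and equivariant (the free-atom side-conditions being handled, as in that proof, via equivariance of $\textit{fa}$, $\#^{*}$, $-$, $\bullet$ and $\textit{set}$). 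I would apply these generalised statements with $R$ instantiated by the compound relations built from the $\approx_{\textit{ty}_{1..n}}$ and $\approx_{\textit{bn}_{1..m}}$, all of which are equivariant by the preceding lemma.

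For \emph{reflexivity} I would proceed by simultaneous structural induction over the raw datatypes $ty_{1..n}$, proving $t\approx_{\textit{ty}_{i}}t$ and $t\approx_{\textit{bn}_{j}}t$ at once. For a term $C\,z_{1}\ldots z_{n}$ and each of its binding clauses the relevant premise is available: for an empty binding clause the premises are of the form $z_{i}\approx_{\textit{ty}_{i}}z_{i}$ (the induction hypothesis) or $z_{i}=z_{i}$; for a non-empty binding clause the premise $(B,D)\approx_{\,\textit{set}}^{\textit{R}, \textit{fa}}(B,D)$ (together with the $\approx_{\,\textit{set+}}^{\textit{R}, \textit{fa}}$ and $\approx_{\,\textit{list}}^{\textit{R}, \textit{fa}}$ versions) holds by taking the permutation $0$, which needs $R$ reflexive on the bodies --- again the induction hypothesis; the extra $\approx_{\textit{bn}_{i}}$-premises contributed by non-recursive deep binders are discharged by the simultaneous induction hypothesis for the $\approx_{\textit{bn}}$-relations.

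For \emph{symmetry} I would use (simultaneous) rule induction on the derivations of $\approx_{\textit{ty}_{1..n}}$ and $\approx_{\textit{bn}_{1..m}}$. In the clause for a constructor $C$ the given derivation carries a permutation $\pi$ for each non-empty binding clause; to build the reversed derivation we use $-\pi$. Pushing $-\pi$ onto the sub-terms with equivariance of the $\approx_{\textit{ty}_{i}}$ and $\approx_{\textit{bn}_{i}}$, the rule-induction hypotheses --- which supply the \emph{reversed} component relations --- discharge the body-premises, and the side-conditions $\textit{fa}\,x-\textit{as}=\textit{fa}\,y-\textit{bs}$, $\textit{fa}\,x-\textit{as}\mathrel{\#^{*}}\pi$ and $\pi\bullet\textit{as}=\textit{bs}$ are invariant under $\pi\mapsto-\pi$ by equivariance of $\textit{fa}$ and $\textit{supp}$. \emph{Transitivity} is analogous: given derivations of $t_{1}\approx_{\textit{ty}_{i}}t_{2}$ and $t_{2}\approx_{\textit{ty}_{i}}t_{3}$ --- which necessarily use the same constructor and the same binding clauses --- with permutations $\pi_{1}$ and $\pi_{2}$, we obtain a derivation of $t_{1}\approx_{\textit{ty}_{i}}t_{3}$ with the permutation $\pi_{2}+\pi_{1}$; equivariance lets us pull $\pi_{2}$ through the first derivation so that the two body-premises compose to the required one via the induction hypotheses, and the free-atom side-conditions compose directly.

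The main obstacle is not any single step --- each, as in Lemma~\ref{alphaeq}, amounts to picking the right permutation and pushing it around --- but the bookkeeping needed to make this work uniformly for an \emph{arbitrary} specification: the induction has to be genuinely simultaneous over all $n+m$ relations, the compound relations $R$ (for the bodies) and $R'$ (for the non-recursive deep binders) must be shown reflexive, symmetric or transitive componentwise before the abstraction lemmas above apply, and care is required with the conventions for composing and inverting permutations ($+$, $-$, and which permutation acts first). On paper all of this is routine; the real effort, as elsewhere in the paper, lies in turning it into ML code that generates these proofs automatically from the user's specification.
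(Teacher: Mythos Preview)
Your proposal is correct and follows essentially the same approach as the paper: induction over the relations, with the non-trivial binding-clause premises handled by choosing the witnessing permutation to be $0$, $-\pi$, or the composition of $\pi_1$ and $\pi_2$, exactly as in Lemma~\ref{alphaeq}, together with equivariance from the preceding lemma. If anything you are more explicit than the paper, which singles out transitivity as the case that ``needs in addition'' equivariance; as your analysis shows, the symmetry case already requires it once $R$ is instantiated with the $\approx_{\textit{ty}_i}$ rather than with equality, since the induction hypothesis yields $d'_k \mathrel{R} (\pi\bullet d_k)$ and one must still pass to $(-\pi\bullet d'_k)\mathrel{R} d_k$.
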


  \begin{proof} 
  The proofs are by induction. The non-trivial
  cases involve premises built up by $\approx_{\textit{set}}$, 
  $\approx_{\textit{set+}}$ and $\approx_{\textit{list}}$. They 
  can be dealt with as in Lemma~\ref{alphaeq}. However, the transitivity
  case needs in addition the fact that the relations are equivariant. 
  \end{proof}

  \noindent 
  We can feed the last lemma into our quotient package and obtain new types
  \isa{ty}$^\alpha_{1..n}$ representing alpha-equated terms of types
  \isa{ty}$_{1..n}$. We also obtain definitions for the term-constructors
  \isa{C}$^\alpha_{1..k}$ from the raw term-constructors \isa{C}$_{1..k}$, and similar definitions for the free-atom functions \isa{fa{\isaliteral{5F}{\isacharunderscore}}ty}$^\alpha_{1..n}$ and \isa{fa{\isaliteral{5F}{\isacharunderscore}}bn}$^\alpha_{1..m}$ as well as the
  binding functions \isa{bn}$^\alpha_{1..m}$. However, these definitions
  are not really useful to the user, since they are given in terms of the
  isomorphisms we obtained by creating new types in Isabelle/HOL (recall the
  picture shown in the Introduction).

  The first useful property for the user is the fact that distinct 
  term-constructors are not equal, that is the property
  
  \begin{equation}\label{distinctalpha}
  \mbox{\isa{C}$^\alpha$~\isa{x\isaliteral{5C3C5E697375623E}{}\isactrlisub {\isadigit{1}}\ {\isaliteral{5C3C646F74733E}{\isasymdots}}\ x\isaliteral{5C3C5E697375623E}{}\isactrlisub r}~\isa{{\isaliteral{5C3C6E6F7465713E}{\isasymnoteq}}}~%
  \isa{D}$^\alpha$~\isa{y\isaliteral{5C3C5E697375623E}{}\isactrlisub {\isadigit{1}}\ {\isaliteral{5C3C646F74733E}{\isasymdots}}\ y\isaliteral{5C3C5E697375623E}{}\isactrlisub s}} 
  \end{equation}\smallskip
  
  \noindent
  whenever \isa{C}$^\alpha$~\isa{{\isaliteral{5C3C6E6F7465713E}{\isasymnoteq}}}~\isa{D}$^\alpha$.
  In order to derive this property, we use the definition of alpha-equivalence
  and establish that
  
  \begin{equation}\label{distinctraw}
  \mbox{\isa{C\ x\isaliteral{5C3C5E697375623E}{}\isactrlisub {\isadigit{1}}\ {\isaliteral{5C3C646F74733E}{\isasymdots}}\ x\isaliteral{5C3C5E697375623E}{}\isactrlisub r}\;$\not\approx$\isa{ty}\;\isa{D\ y\isaliteral{5C3C5E697375623E}{}\isactrlisub {\isadigit{1}}\ {\isaliteral{5C3C646F74733E}{\isasymdots}}\ y\isaliteral{5C3C5E697375623E}{}\isactrlisub s}}
  \end{equation}\smallskip

  \noindent
  holds for the corresponding raw term-constructors.
  In order to deduce \eqref{distinctalpha} from \eqref{distinctraw}, our quotient
  package needs to know that the raw term-constructors \isa{C} and \isa{D} 
  are \emph{respectful} w.r.t.~the alpha-equivalence relations (see \cite{Homeier05}).
  Given, for example, \isa{C} is of type \isa{ty} with argument types
  \isa{ty}$_{1..r}$, respectfulness amounts to showing that
  
  \[\mbox{
  \isa{C\ x\isaliteral{5C3C5E697375623E}{}\isactrlisub {\isadigit{1}}\ {\isaliteral{5C3C646F74733E}{\isasymdots}}\ x\isaliteral{5C3C5E697375623E}{}\isactrlisub r\ {\isaliteral{5C3C617070726F783E}{\isasymapprox}}ty\ C\ x{\isaliteral{5C3C5052494D453E}{\isasymPRIME}}\isaliteral{5C3C5E697375623E}{}\isactrlisub {\isadigit{1}}\ {\isaliteral{5C3C646F74733E}{\isasymdots}}\ x{\isaliteral{5C3C5052494D453E}{\isasymPRIME}}\isaliteral{5C3C5E697375623E}{}\isactrlisub r}
  }\]\smallskip

  \noindent
  holds under the assumptions \mbox{\isa{x\isaliteral{5C3C5E697375623E}{}\isactrlisub i\ {\isaliteral{5C3C617070726F783E}{\isasymapprox}}ty\isaliteral{5C3C5E697375623E}{}\isactrlisub i\ x{\isaliteral{5C3C5052494D453E}{\isasymPRIME}}\isaliteral{5C3C5E697375623E}{}\isactrlisub i}} whenever \isa{x\isaliteral{5C3C5E697375623E}{}\isactrlisub i}
  and \isa{x{\isaliteral{5C3C5052494D453E}{\isasymPRIME}}\isaliteral{5C3C5E697375623E}{}\isactrlisub i} are recursive arguments of \isa{C}, and
  \isa{x\isaliteral{5C3C5E697375623E}{}\isactrlisub i\ {\isaliteral{3D}{\isacharequal}}\ x{\isaliteral{5C3C5052494D453E}{\isasymPRIME}}\isaliteral{5C3C5E697375623E}{}\isactrlisub i} whenever they are non-recursive arguments 
  (similarly for \isa{D}). For this we have to show
  by induction over the definitions of alpha-equivalences the following 
  auxiliary implications

  \begin{equation}\label{fnresp}\mbox{
  \begin{tabular}{lll}
  \isa{x\ {\isaliteral{5C3C617070726F783E}{\isasymapprox}}ty\isaliteral{5C3C5E697375623E}{}\isactrlisub i\ x{\isaliteral{27}{\isacharprime}}} & implies & \isa{fa{\isaliteral{5F}{\isacharunderscore}}ty\isaliteral{5C3C5E697375623E}{}\isactrlisub i\ x\ {\isaliteral{3D}{\isacharequal}}\ fa{\isaliteral{5F}{\isacharunderscore}}ty\isaliteral{5C3C5E697375623E}{}\isactrlisub i\ x{\isaliteral{27}{\isacharprime}}}\\
  \isa{x\ {\isaliteral{5C3C617070726F783E}{\isasymapprox}}ty\isaliteral{5C3C5E697375623E}{}\isactrlisub l\ x{\isaliteral{27}{\isacharprime}}} & implies & \isa{fa{\isaliteral{5F}{\isacharunderscore}}bn\isaliteral{5C3C5E697375623E}{}\isactrlisub j\ x\ {\isaliteral{3D}{\isacharequal}}\ fa{\isaliteral{5F}{\isacharunderscore}}bn\isaliteral{5C3C5E697375623E}{}\isactrlisub j\ x{\isaliteral{27}{\isacharprime}}}\\
  \isa{x\ {\isaliteral{5C3C617070726F783E}{\isasymapprox}}ty\isaliteral{5C3C5E697375623E}{}\isactrlisub l\ x{\isaliteral{27}{\isacharprime}}} & implies & \isa{bn\isaliteral{5C3C5E697375623E}{}\isactrlisub j\ x\ {\isaliteral{3D}{\isacharequal}}\ bn\isaliteral{5C3C5E697375623E}{}\isactrlisub j\ x{\isaliteral{27}{\isacharprime}}}\\
  \isa{x\ {\isaliteral{5C3C617070726F783E}{\isasymapprox}}ty\isaliteral{5C3C5E697375623E}{}\isactrlisub l\ x{\isaliteral{27}{\isacharprime}}} & implies & \isa{x\ {\isaliteral{5C3C617070726F783E}{\isasymapprox}}bn\isaliteral{5C3C5E697375623E}{}\isactrlisub j\ x{\isaliteral{27}{\isacharprime}}}\\
  \end{tabular}
  }\end{equation}\smallskip
  
  \noindent
  whereby \isa{ty\isaliteral{5C3C5E697375623E}{}\isactrlisub l} is the type over which \isa{bn\isaliteral{5C3C5E697375623E}{}\isactrlisub j}
  is defined. Whereas the first, second and last implication are true by
  how we stated our definitions, the third \emph{only} holds because of our
  restriction imposed on the form of the binding functions---namely \emph{not}
  to return any bound atoms. In Ott, in contrast, the user may define \isa{bn}$_{1..m}$ so that they return bound atoms and in this case the third
  implication is \emph{not} true. A result is that in general the lifting of the
  corresponding binding functions in Ott to alpha-equated terms is impossible.
  Having established respectfulness for the raw term-constructors, the 
  quotient package is able to automatically deduce \eqref{distinctalpha} from 
  \eqref{distinctraw}.

  Next we can lift the permutation operations defined in \eqref{ceqvt}. In
  order to make this lifting to go through, we have to show that the
  permutation operations are respectful. This amounts to showing that the
  alpha-equivalence relations are equivariant, which
  we already established in Lemma~\ref{equiv}. As a result we can add the
  equations
  
  \begin{equation}\label{calphaeqvt}
  \isa{{\isaliteral{5C3C70693E}{\isasympi}}\ {\isaliteral{5C3C62756C6C65743E}{\isasymbullet}}\ {\isaliteral{28}{\isacharparenleft}}C\isaliteral{5C3C5E7375703E}{}\isactrlsup {\isaliteral{5C3C616C7068613E}{\isasymalpha}}\ x\isaliteral{5C3C5E697375623E}{}\isactrlisub {\isadigit{1}}\ {\isaliteral{5C3C646F74733E}{\isasymdots}}\ x\isaliteral{5C3C5E697375623E}{}\isactrlisub r{\isaliteral{29}{\isacharparenright}}\ {\isaliteral{3D}{\isacharequal}}\ C\isaliteral{5C3C5E7375703E}{}\isactrlsup {\isaliteral{5C3C616C7068613E}{\isasymalpha}}\ {\isaliteral{28}{\isacharparenleft}}{\isaliteral{5C3C70693E}{\isasympi}}\ {\isaliteral{5C3C62756C6C65743E}{\isasymbullet}}\ x\isaliteral{5C3C5E697375623E}{}\isactrlisub {\isadigit{1}}{\isaliteral{29}{\isacharparenright}}\ {\isaliteral{5C3C646F74733E}{\isasymdots}}\ {\isaliteral{28}{\isacharparenleft}}{\isaliteral{5C3C70693E}{\isasympi}}\ {\isaliteral{5C3C62756C6C65743E}{\isasymbullet}}\ x\isaliteral{5C3C5E697375623E}{}\isactrlisub r{\isaliteral{29}{\isacharparenright}}}
  \end{equation}\smallskip

  \noindent
  to our infrastructure. In a similar fashion we can lift the defining equations
  of the free-atom functions \isa{fa{\isaliteral{5F}{\isacharunderscore}}ty{\isaliteral{5C3C414C3E}{\isasymAL}}}$_{1..n}$ and
  \isa{fa{\isaliteral{5F}{\isacharunderscore}}bn{\isaliteral{5C3C414C3E}{\isasymAL}}}$_{1..m}$ as well as of the binding functions \isa{bn{\isaliteral{5C3C414C3E}{\isasymAL}}}$_{1..m}$ and size functions \isa{size{\isaliteral{5F}{\isacharunderscore}}ty{\isaliteral{5C3C414C3E}{\isasymAL}}}$_{1..n}$.
  The latter are defined automatically for the raw types \isa{ty}$_{1..n}$
  by the datatype package of Isabelle/HOL.

  We also need to lift the properties that characterise when two raw terms of the form
  
  \[
  \mbox{\isa{C\ x\isaliteral{5C3C5E697375623E}{}\isactrlisub {\isadigit{1}}\ {\isaliteral{5C3C646F74733E}{\isasymdots}}\ x\isaliteral{5C3C5E697375623E}{}\isactrlisub r\ {\isaliteral{5C3C617070726F783E}{\isasymapprox}}ty\ C\ x{\isaliteral{5C3C5052494D453E}{\isasymPRIME}}\isaliteral{5C3C5E697375623E}{}\isactrlisub {\isadigit{1}}\ {\isaliteral{5C3C646F74733E}{\isasymdots}}\ x{\isaliteral{5C3C5052494D453E}{\isasymPRIME}}\isaliteral{5C3C5E697375623E}{}\isactrlisub r}}
  \]\smallskip

  \noindent
  are alpha-equivalent. This gives us conditions when the corresponding
  alpha-equated terms are \emph{equal}, namely
  
  \[
  \isa{C\isaliteral{5C3C5E7375703E}{}\isactrlsup {\isaliteral{5C3C616C7068613E}{\isasymalpha}}\ x\isaliteral{5C3C5E697375623E}{}\isactrlisub {\isadigit{1}}\ {\isaliteral{5C3C646F74733E}{\isasymdots}}\ x\isaliteral{5C3C5E697375623E}{}\isactrlisub r\ {\isaliteral{3D}{\isacharequal}}\ C\isaliteral{5C3C5E7375703E}{}\isactrlsup {\isaliteral{5C3C616C7068613E}{\isasymalpha}}\ x{\isaliteral{5C3C5052494D453E}{\isasymPRIME}}\isaliteral{5C3C5E697375623E}{}\isactrlisub {\isadigit{1}}\ {\isaliteral{5C3C646F74733E}{\isasymdots}}\ x{\isaliteral{5C3C5052494D453E}{\isasymPRIME}}\isaliteral{5C3C5E697375623E}{}\isactrlisub r}.
  \]\smallskip
  
  \noindent
  We call these conditions \emph{quasi-injectivity}. They correspond to the
  premises in our alpha-equiva\-lence relations, except that the
  relations \isa{{\isaliteral{5C3C617070726F783E}{\isasymapprox}}ty}$_{1..n}$ are all replaced by equality (and similarly
  the free-atom and binding functions are replaced by their lifted
  counterparts). Recall the alpha-equivalence rules for \isa{Let} and
  \isa{Let{\isaliteral{5F}{\isacharunderscore}}rec} shown in \eqref{rawalpha}. For \isa{Let\isaliteral{5C3C5E7375703E}{}\isactrlsup {\isaliteral{5C3C616C7068613E}{\isasymalpha}}} and
  \isa{Let{\isaliteral{5F}{\isacharunderscore}}rec\isaliteral{5C3C5E7375703E}{}\isactrlsup {\isaliteral{5C3C616C7068613E}{\isasymalpha}}} we have

  \begin{equation}\label{alphalift}\mbox{
  \begin{tabular}{@ {}c @ {}}
  \infer{\isa{Let\isaliteral{5C3C5E7375703E}{}\isactrlsup {\isaliteral{5C3C616C7068613E}{\isasymalpha}}\ as\ t\ {\isaliteral{3D}{\isacharequal}}\ Let\isaliteral{5C3C5E7375703E}{}\isactrlsup {\isaliteral{5C3C616C7068613E}{\isasymalpha}}\ as{\isaliteral{27}{\isacharprime}}\ t{\isaliteral{27}{\isacharprime}}}}
  {\isa{{\isaliteral{28}{\isacharparenleft}}bn\isaliteral{5C3C5E7375703E}{}\isactrlsup {\isaliteral{5C3C616C7068613E}{\isasymalpha}}\ as{\isaliteral{2C}{\isacharcomma}}\ t{\isaliteral{29}{\isacharparenright}}\ {\isaliteral{5C3C617070726F783E}{\isasymapprox}}\,\raisebox{-1pt}{\makebox[0mm][l]{$_{\textit{list}}$}}\isaliteral{5C3C5E627375703E}{}\isactrlbsup {\isaliteral{3D}{\isacharequal}}{\isaliteral{2C}{\isacharcomma}}\ fa{\isaliteral{5C3C414C3E}{\isasymAL}}\isaliteral{5C3C5E627375623E}{}\isactrlbsub trm\isaliteral{5C3C5E657375623E}{}\isactrlesub \isaliteral{5C3C5E657375703E}{}\isactrlesup \ {\isaliteral{28}{\isacharparenleft}}bn\ as{\isaliteral{27}{\isacharprime}}{\isaliteral{2C}{\isacharcomma}}\ t{\isaliteral{27}{\isacharprime}}{\isaliteral{29}{\isacharparenright}}} & 
  \hspace{5mm}\isa{as\ {\isaliteral{5C3C617070726F783E}{\isasymapprox}}{\isaliteral{5C3C414C3E}{\isasymAL}}\isaliteral{5C3C5E627375623E}{}\isactrlbsub bn\isaliteral{5C3C5E657375623E}{}\isactrlesub \ as{\isaliteral{27}{\isacharprime}}}}\\
  \\
  \makebox[0mm]{\infer{\isa{Let{\isaliteral{5F}{\isacharunderscore}}rec\isaliteral{5C3C5E7375703E}{}\isactrlsup {\isaliteral{5C3C616C7068613E}{\isasymalpha}}\ as\ t\ {\isaliteral{3D}{\isacharequal}}\ Let{\isaliteral{5F}{\isacharunderscore}}rec\isaliteral{5C3C5E7375703E}{}\isactrlsup {\isaliteral{5C3C616C7068613E}{\isasymalpha}}\ as{\isaliteral{27}{\isacharprime}}\ t{\isaliteral{27}{\isacharprime}}}}
  {\isa{{\isaliteral{28}{\isacharparenleft}}bn\isaliteral{5C3C5E7375703E}{}\isactrlsup {\isaliteral{5C3C616C7068613E}{\isasymalpha}}\ as{\isaliteral{2C}{\isacharcomma}}\ {\isaliteral{28}{\isacharparenleft}}as{\isaliteral{2C}{\isacharcomma}}\ t{\isaliteral{29}{\isacharparenright}}{\isaliteral{29}{\isacharparenright}}\ {\isaliteral{5C3C617070726F783E}{\isasymapprox}}\,\raisebox{-1pt}{\makebox[0mm][l]{$_{\textit{list}}$}}\isaliteral{5C3C5E627375703E}{}\isactrlbsup {\isaliteral{28}{\isacharparenleft}}{\isaliteral{3D}{\isacharequal}}{\isaliteral{2C}{\isacharcomma}}\ {\isaliteral{3D}{\isacharequal}}{\isaliteral{29}{\isacharparenright}}{\isaliteral{2C}{\isacharcomma}}\ {\isaliteral{28}{\isacharparenleft}}fa{\isaliteral{5C3C414C3E}{\isasymAL}}\isaliteral{5C3C5E627375623E}{}\isactrlbsub assn\isaliteral{5C3C5E657375623E}{}\isactrlesub {\isaliteral{2C}{\isacharcomma}}\ fa{\isaliteral{5C3C414C3E}{\isasymAL}}\isaliteral{5C3C5E627375623E}{}\isactrlbsub trm\isaliteral{5C3C5E657375623E}{}\isactrlesub {\isaliteral{29}{\isacharparenright}}\isaliteral{5C3C5E657375703E}{}\isactrlesup \ {\isaliteral{28}{\isacharparenleft}}bn\isaliteral{5C3C5E7375703E}{}\isactrlsup {\isaliteral{5C3C616C7068613E}{\isasymalpha}}\ as{\isaliteral{27}{\isacharprime}}{\isaliteral{2C}{\isacharcomma}}\ {\isaliteral{28}{\isacharparenleft}}as{\isaliteral{2C}{\isacharcomma}}\ t{\isaliteral{5C3C5052494D453E}{\isasymPRIME}}\ {\isaliteral{29}{\isacharparenright}}{\isaliteral{29}{\isacharparenright}}}}}\\
  \end{tabular}}
  \end{equation}\smallskip

  We can also add to our infrastructure cases lemmas and a (mutual)
  induction principle for the types \isa{ty{\isaliteral{5C3C414C3E}{\isasymAL}}}$_{1..n}$. The cases
  lemmas allow the user to deduce a property \isa{P} by exhaustively
  analysing how an element of a type, say \isa{ty{\isaliteral{5C3C414C3E}{\isasymAL}}}$_i$, can be
  constructed (that means one case for each of the term-constructors in \isa{ty{\isaliteral{5C3C414C3E}{\isasymAL}}}$_i\,$). The lifted cases lemma for a type \isa{ty{\isaliteral{5C3C414C3E}{\isasymAL}}}$_i\,$ looks as follows

  \begin{equation}\label{cases}
  \infer{P}
  {\begin{array}{l}
  \isa{{\isaliteral{5C3C666F72616C6C3E}{\isasymforall}}x\isaliteral{5C3C5E697375623E}{}\isactrlisub {\isadigit{1}}{\isaliteral{5C3C646F74733E}{\isasymdots}}x\isaliteral{5C3C5E697375623E}{}\isactrlisub k{\isaliteral{2E}{\isachardot}}\ y\ {\isaliteral{3D}{\isacharequal}}\ C{\isaliteral{5C3C414C3E}{\isasymAL}}\isaliteral{5C3C5E697375623E}{}\isactrlisub {\isadigit{1}}\ x\isaliteral{5C3C5E697375623E}{}\isactrlisub {\isadigit{1}}\ {\isaliteral{5C3C646F74733E}{\isasymdots}}\ x\isaliteral{5C3C5E697375623E}{}\isactrlisub k\ {\isaliteral{5C3C52696768746172726F773E}{\isasymRightarrow}}\ P}\\
  \hspace{5mm}\vdots\\
  \isa{{\isaliteral{5C3C666F72616C6C3E}{\isasymforall}}x\isaliteral{5C3C5E697375623E}{}\isactrlisub {\isadigit{1}}{\isaliteral{5C3C646F74733E}{\isasymdots}}x\isaliteral{5C3C5E697375623E}{}\isactrlisub l{\isaliteral{2E}{\isachardot}}\ y\ {\isaliteral{3D}{\isacharequal}}\ C{\isaliteral{5C3C414C3E}{\isasymAL}}\isaliteral{5C3C5E697375623E}{}\isactrlisub m\ x\isaliteral{5C3C5E697375623E}{}\isactrlisub {\isadigit{1}}\ {\isaliteral{5C3C646F74733E}{\isasymdots}}\ x\isaliteral{5C3C5E697375623E}{}\isactrlisub l\ {\isaliteral{5C3C52696768746172726F773E}{\isasymRightarrow}}\ P}\\
  \end{array}}
  \end{equation}\smallskip

  \noindent
  where \isa{y} is a variable of type \isa{ty{\isaliteral{5C3C414C3E}{\isasymAL}}}$_i$ and \isa{P} is the 
  property that is established by the case analysis. Similarly, we have a (mutual) 
  induction principle for the types \isa{ty{\isaliteral{5C3C414C3E}{\isasymAL}}}$_{1..n}$, which is of the 
  form

   \begin{equation}\label{induct}
  \infer{\isa{P\isaliteral{5C3C5E697375623E}{}\isactrlisub {\isadigit{1}}\ y\isaliteral{5C3C5E697375623E}{}\isactrlisub {\isadigit{1}}\ {\isaliteral{5C3C616E643E}{\isasymand}}\ {\isaliteral{5C3C646F74733E}{\isasymdots}}\ {\isaliteral{5C3C616E643E}{\isasymand}}\ P\isaliteral{5C3C5E697375623E}{}\isactrlisub n\ y\isaliteral{5C3C5E697375623E}{}\isactrlisub n}}
  {\begin{array}{l}
  \isa{{\isaliteral{5C3C666F72616C6C3E}{\isasymforall}}x\isaliteral{5C3C5E697375623E}{}\isactrlisub {\isadigit{1}}{\isaliteral{5C3C646F74733E}{\isasymdots}}x\isaliteral{5C3C5E697375623E}{}\isactrlisub k{\isaliteral{2E}{\isachardot}}\ P\isaliteral{5C3C5E697375623E}{}\isactrlisub i\ x\isaliteral{5C3C5E697375623E}{}\isactrlisub i\ {\isaliteral{5C3C616E643E}{\isasymand}}\ {\isaliteral{5C3C646F74733E}{\isasymdots}}\ {\isaliteral{5C3C616E643E}{\isasymand}}\ P\isaliteral{5C3C5E697375623E}{}\isactrlisub j\ x\isaliteral{5C3C5E697375623E}{}\isactrlisub j\ {\isaliteral{5C3C52696768746172726F773E}{\isasymRightarrow}}\ P\ {\isaliteral{28}{\isacharparenleft}}C{\isaliteral{5C3C414C3E}{\isasymAL}}\isaliteral{5C3C5E697375623E}{}\isactrlisub {\isadigit{1}}\ x\isaliteral{5C3C5E697375623E}{}\isactrlisub {\isadigit{1}}\ {\isaliteral{5C3C646F74733E}{\isasymdots}}\ x\isaliteral{5C3C5E697375623E}{}\isactrlisub k{\isaliteral{29}{\isacharparenright}}}\\
  \hspace{5mm}\vdots\\
  \isa{{\isaliteral{5C3C666F72616C6C3E}{\isasymforall}}x\isaliteral{5C3C5E697375623E}{}\isactrlisub {\isadigit{1}}{\isaliteral{5C3C646F74733E}{\isasymdots}}x\isaliteral{5C3C5E697375623E}{}\isactrlisub l{\isaliteral{2E}{\isachardot}}\ P\isaliteral{5C3C5E697375623E}{}\isactrlisub r\ x\isaliteral{5C3C5E697375623E}{}\isactrlisub r\ {\isaliteral{5C3C616E643E}{\isasymand}}\ {\isaliteral{5C3C646F74733E}{\isasymdots}}\ {\isaliteral{5C3C616E643E}{\isasymand}}\ P\isaliteral{5C3C5E697375623E}{}\isactrlisub s\ x\isaliteral{5C3C5E697375623E}{}\isactrlisub s\ {\isaliteral{5C3C52696768746172726F773E}{\isasymRightarrow}}\ P\ {\isaliteral{28}{\isacharparenleft}}C{\isaliteral{5C3C414C3E}{\isasymAL}}\isaliteral{5C3C5E697375623E}{}\isactrlisub m\ x\isaliteral{5C3C5E697375623E}{}\isactrlisub {\isadigit{1}}\ {\isaliteral{5C3C646F74733E}{\isasymdots}}\ x\isaliteral{5C3C5E697375623E}{}\isactrlisub l{\isaliteral{29}{\isacharparenright}}}\\
  \end{array}}
  \end{equation}\smallskip

  \noindent
  whereby the \isa{P}$_{1..n}$ are the properties established by the
  induction, and the \isa{y}$_{1..n}$ are of type \isa{ty{\isaliteral{5C3C414C3E}{\isasymAL}}}$_{1..n}$. Note that for the term constructor \isa{C}$^\alpha_1$ the induction principle has a hypothesis of the form

  \[
  \mbox{\isa{{\isaliteral{5C3C666F72616C6C3E}{\isasymforall}}x\isaliteral{5C3C5E697375623E}{}\isactrlisub {\isadigit{1}}{\isaliteral{5C3C646F74733E}{\isasymdots}}x\isaliteral{5C3C5E697375623E}{}\isactrlisub k{\isaliteral{2E}{\isachardot}}\ P\isaliteral{5C3C5E697375623E}{}\isactrlisub i\ x\isaliteral{5C3C5E697375623E}{}\isactrlisub i\ {\isaliteral{5C3C616E643E}{\isasymand}}\ {\isaliteral{5C3C646F74733E}{\isasymdots}}\ {\isaliteral{5C3C616E643E}{\isasymand}}\ P\isaliteral{5C3C5E697375623E}{}\isactrlisub j\ x\isaliteral{5C3C5E697375623E}{}\isactrlisub j\ {\isaliteral{5C3C52696768746172726F773E}{\isasymRightarrow}}\ P\ {\isaliteral{28}{\isacharparenleft}}C{\isaliteral{5C3C414C3E}{\isasymAL}}\isaliteral{5C3C5E7375623E}{}\isactrlsub {\isadigit{1}}\ x\isaliteral{5C3C5E697375623E}{}\isactrlisub {\isadigit{1}}\ {\isaliteral{5C3C646F74733E}{\isasymdots}}\ x\isaliteral{5C3C5E697375623E}{}\isactrlisub k{\isaliteral{29}{\isacharparenright}}}} 
  \]\smallskip

  \noindent 
  in which the \isa{x}$_{i..j}$ \isa{{\isaliteral{5C3C73756273657465713E}{\isasymsubseteq}}} \isa{x}$_{1..k}$ are the
  recursive arguments of this term constructor (similarly for the other
  term-constructors). 

  Recall the lambda-calculus with \isa{Let}-patterns shown in
  \eqref{letpat}. The cases lemmas and the induction principle shown in
  \eqref{cases} and \eqref{induct} boil down in that example to the following three inference
  rules:

  \begin{equation}\label{inductex}\mbox{
  \begin{tabular}{c}
  \multicolumn{1}{@ {\hspace{-5mm}}l}{cases lemmas:}\smallskip\\
  \infer{\isa{P\isaliteral{5C3C5E627375623E}{}\isactrlbsub trm\isaliteral{5C3C5E657375623E}{}\isactrlesub }}
  {\begin{array}{@ {}l@ {}}
   \isa{{\isaliteral{5C3C666F72616C6C3E}{\isasymforall}}x{\isaliteral{2E}{\isachardot}}\ y\ {\isaliteral{3D}{\isacharequal}}\ Var\isaliteral{5C3C5E7375703E}{}\isactrlsup {\isaliteral{5C3C616C7068613E}{\isasymalpha}}\ x\ {\isaliteral{5C3C52696768746172726F773E}{\isasymRightarrow}}\ P\isaliteral{5C3C5E627375623E}{}\isactrlbsub trm\isaliteral{5C3C5E657375623E}{}\isactrlesub }\\
   \isa{{\isaliteral{5C3C666F72616C6C3E}{\isasymforall}}x\isaliteral{5C3C5E697375623E}{}\isactrlisub {\isadigit{1}}\ x\isaliteral{5C3C5E697375623E}{}\isactrlisub {\isadigit{2}}{\isaliteral{2E}{\isachardot}}\ y\ {\isaliteral{3D}{\isacharequal}}\ App\isaliteral{5C3C5E7375703E}{}\isactrlsup {\isaliteral{5C3C616C7068613E}{\isasymalpha}}\ x\isaliteral{5C3C5E697375623E}{}\isactrlisub {\isadigit{1}}\ x\isaliteral{5C3C5E697375623E}{}\isactrlisub {\isadigit{2}}\ {\isaliteral{5C3C52696768746172726F773E}{\isasymRightarrow}}\ P\isaliteral{5C3C5E627375623E}{}\isactrlbsub trm\isaliteral{5C3C5E657375623E}{}\isactrlesub }\\
   \isa{{\isaliteral{5C3C666F72616C6C3E}{\isasymforall}}x\isaliteral{5C3C5E697375623E}{}\isactrlisub {\isadigit{1}}\ x\isaliteral{5C3C5E697375623E}{}\isactrlisub {\isadigit{2}}{\isaliteral{2E}{\isachardot}}\ y\ {\isaliteral{3D}{\isacharequal}}\ Lam\isaliteral{5C3C5E7375703E}{}\isactrlsup {\isaliteral{5C3C616C7068613E}{\isasymalpha}}\ x\isaliteral{5C3C5E697375623E}{}\isactrlisub {\isadigit{1}}\ x\isaliteral{5C3C5E697375623E}{}\isactrlisub {\isadigit{2}}\ {\isaliteral{5C3C52696768746172726F773E}{\isasymRightarrow}}\ P\isaliteral{5C3C5E627375623E}{}\isactrlbsub trm\isaliteral{5C3C5E657375623E}{}\isactrlesub }\\
   \isa{{\isaliteral{5C3C666F72616C6C3E}{\isasymforall}}x\isaliteral{5C3C5E697375623E}{}\isactrlisub {\isadigit{1}}\ x\isaliteral{5C3C5E697375623E}{}\isactrlisub {\isadigit{2}}\ x\isaliteral{5C3C5E697375623E}{}\isactrlisub {\isadigit{3}}{\isaliteral{2E}{\isachardot}}\ y\ {\isaliteral{3D}{\isacharequal}}\ Let{\isaliteral{5F}{\isacharunderscore}}pat\isaliteral{5C3C5E7375703E}{}\isactrlsup {\isaliteral{5C3C616C7068613E}{\isasymalpha}}\ x\isaliteral{5C3C5E697375623E}{}\isactrlisub {\isadigit{1}}\ x\isaliteral{5C3C5E697375623E}{}\isactrlisub {\isadigit{2}}\ x\isaliteral{5C3C5E697375623E}{}\isactrlisub {\isadigit{3}}\ {\isaliteral{5C3C52696768746172726F773E}{\isasymRightarrow}}\ P\isaliteral{5C3C5E627375623E}{}\isactrlbsub trm\isaliteral{5C3C5E657375623E}{}\isactrlesub }
   \end{array}}\hspace{10mm}

  \infer{\isa{P\isaliteral{5C3C5E627375623E}{}\isactrlbsub pat\isaliteral{5C3C5E657375623E}{}\isactrlesub }}
  {\begin{array}{@ {}l@ {}}
   \isa{{\isaliteral{5C3C666F72616C6C3E}{\isasymforall}}x{\isaliteral{2E}{\isachardot}}\ y\ {\isaliteral{3D}{\isacharequal}}\ PVar\isaliteral{5C3C5E7375703E}{}\isactrlsup {\isaliteral{5C3C616C7068613E}{\isasymalpha}}\ x\ {\isaliteral{5C3C52696768746172726F773E}{\isasymRightarrow}}\ P\isaliteral{5C3C5E627375623E}{}\isactrlbsub pat\isaliteral{5C3C5E657375623E}{}\isactrlesub }\\
   \isa{{\isaliteral{5C3C666F72616C6C3E}{\isasymforall}}x\isaliteral{5C3C5E697375623E}{}\isactrlisub {\isadigit{1}}\ x\isaliteral{5C3C5E697375623E}{}\isactrlisub {\isadigit{2}}{\isaliteral{2E}{\isachardot}}\ y\ {\isaliteral{3D}{\isacharequal}}\ PTup\isaliteral{5C3C5E7375703E}{}\isactrlsup {\isaliteral{5C3C616C7068613E}{\isasymalpha}}\ x\isaliteral{5C3C5E697375623E}{}\isactrlisub {\isadigit{1}}\ x\isaliteral{5C3C5E697375623E}{}\isactrlisub {\isadigit{2}}\ {\isaliteral{5C3C52696768746172726F773E}{\isasymRightarrow}}\ P\isaliteral{5C3C5E627375623E}{}\isactrlbsub pat\isaliteral{5C3C5E657375623E}{}\isactrlesub }
  \end{array}}\medskip\\

  \multicolumn{1}{@ {\hspace{-5mm}}l}{induction principle:}\smallskip\\
  
  \infer{\isa{P\isaliteral{5C3C5E627375623E}{}\isactrlbsub trm\isaliteral{5C3C5E657375623E}{}\isactrlesub \ y\isaliteral{5C3C5E697375623E}{}\isactrlisub {\isadigit{1}}\ {\isaliteral{5C3C616E643E}{\isasymand}}\ P\isaliteral{5C3C5E627375623E}{}\isactrlbsub pat\isaliteral{5C3C5E657375623E}{}\isactrlesub \ y\isaliteral{5C3C5E697375623E}{}\isactrlisub {\isadigit{2}}}}
  {\begin{array}{@ {}l@ {}}
   \isa{{\isaliteral{5C3C666F72616C6C3E}{\isasymforall}}x{\isaliteral{2E}{\isachardot}}\ P\isaliteral{5C3C5E627375623E}{}\isactrlbsub trm\isaliteral{5C3C5E657375623E}{}\isactrlesub \ {\isaliteral{28}{\isacharparenleft}}Var\isaliteral{5C3C5E7375703E}{}\isactrlsup {\isaliteral{5C3C616C7068613E}{\isasymalpha}}\ x{\isaliteral{29}{\isacharparenright}}}\\
   \isa{{\isaliteral{5C3C666F72616C6C3E}{\isasymforall}}x\isaliteral{5C3C5E697375623E}{}\isactrlisub {\isadigit{1}}\ x\isaliteral{5C3C5E697375623E}{}\isactrlisub {\isadigit{2}}{\isaliteral{2E}{\isachardot}}\ P\isaliteral{5C3C5E627375623E}{}\isactrlbsub trm\isaliteral{5C3C5E657375623E}{}\isactrlesub \ x\isaliteral{5C3C5E697375623E}{}\isactrlisub {\isadigit{1}}\ {\isaliteral{5C3C616E643E}{\isasymand}}\ P\isaliteral{5C3C5E627375623E}{}\isactrlbsub trm\isaliteral{5C3C5E657375623E}{}\isactrlesub \ x\isaliteral{5C3C5E697375623E}{}\isactrlisub {\isadigit{2}}\ {\isaliteral{5C3C52696768746172726F773E}{\isasymRightarrow}}\ P\isaliteral{5C3C5E627375623E}{}\isactrlbsub trm\isaliteral{5C3C5E657375623E}{}\isactrlesub \ {\isaliteral{28}{\isacharparenleft}}App\isaliteral{5C3C5E7375703E}{}\isactrlsup {\isaliteral{5C3C616C7068613E}{\isasymalpha}}\ x\isaliteral{5C3C5E697375623E}{}\isactrlisub {\isadigit{1}}\ x\isaliteral{5C3C5E697375623E}{}\isactrlisub {\isadigit{2}}{\isaliteral{29}{\isacharparenright}}}\\
   \isa{{\isaliteral{5C3C666F72616C6C3E}{\isasymforall}}x\isaliteral{5C3C5E697375623E}{}\isactrlisub {\isadigit{1}}\ x\isaliteral{5C3C5E697375623E}{}\isactrlisub {\isadigit{2}}{\isaliteral{2E}{\isachardot}}\ P\isaliteral{5C3C5E627375623E}{}\isactrlbsub trm\isaliteral{5C3C5E657375623E}{}\isactrlesub \ x\isaliteral{5C3C5E697375623E}{}\isactrlisub {\isadigit{2}}\ {\isaliteral{5C3C52696768746172726F773E}{\isasymRightarrow}}\ P\isaliteral{5C3C5E627375623E}{}\isactrlbsub trm\isaliteral{5C3C5E657375623E}{}\isactrlesub \ {\isaliteral{28}{\isacharparenleft}}Lam\isaliteral{5C3C5E7375703E}{}\isactrlsup {\isaliteral{5C3C616C7068613E}{\isasymalpha}}\ x\isaliteral{5C3C5E697375623E}{}\isactrlisub {\isadigit{1}}\ x\isaliteral{5C3C5E697375623E}{}\isactrlisub {\isadigit{2}}{\isaliteral{29}{\isacharparenright}}}\\
   \isa{{\isaliteral{5C3C666F72616C6C3E}{\isasymforall}}x\isaliteral{5C3C5E697375623E}{}\isactrlisub {\isadigit{1}}\ x\isaliteral{5C3C5E697375623E}{}\isactrlisub {\isadigit{2}}\ x\isaliteral{5C3C5E697375623E}{}\isactrlisub {\isadigit{3}}{\isaliteral{2E}{\isachardot}}\ P\isaliteral{5C3C5E627375623E}{}\isactrlbsub pat\isaliteral{5C3C5E657375623E}{}\isactrlesub \ x\isaliteral{5C3C5E697375623E}{}\isactrlisub {\isadigit{1}}\ {\isaliteral{5C3C616E643E}{\isasymand}}\ P\isaliteral{5C3C5E627375623E}{}\isactrlbsub trm\isaliteral{5C3C5E657375623E}{}\isactrlesub \ x\isaliteral{5C3C5E697375623E}{}\isactrlisub {\isadigit{2}}\ {\isaliteral{5C3C616E643E}{\isasymand}}\ P\isaliteral{5C3C5E627375623E}{}\isactrlbsub trm\isaliteral{5C3C5E657375623E}{}\isactrlesub \ x\isaliteral{5C3C5E697375623E}{}\isactrlisub {\isadigit{3}}\ {\isaliteral{5C3C52696768746172726F773E}{\isasymRightarrow}}\ P\isaliteral{5C3C5E627375623E}{}\isactrlbsub trm\isaliteral{5C3C5E657375623E}{}\isactrlesub \ {\isaliteral{28}{\isacharparenleft}}Let{\isaliteral{5F}{\isacharunderscore}}pat\isaliteral{5C3C5E7375703E}{}\isactrlsup {\isaliteral{5C3C616C7068613E}{\isasymalpha}}\ x\isaliteral{5C3C5E697375623E}{}\isactrlisub {\isadigit{1}}\ x\isaliteral{5C3C5E697375623E}{}\isactrlisub {\isadigit{2}}\ x\isaliteral{5C3C5E697375623E}{}\isactrlisub {\isadigit{3}}{\isaliteral{29}{\isacharparenright}}}\\
   \isa{{\isaliteral{5C3C666F72616C6C3E}{\isasymforall}}x{\isaliteral{2E}{\isachardot}}\ P\isaliteral{5C3C5E627375623E}{}\isactrlbsub pat\isaliteral{5C3C5E657375623E}{}\isactrlesub \ {\isaliteral{28}{\isacharparenleft}}PVar\isaliteral{5C3C5E7375703E}{}\isactrlsup {\isaliteral{5C3C616C7068613E}{\isasymalpha}}\ x{\isaliteral{29}{\isacharparenright}}}\\
   \isa{{\isaliteral{5C3C666F72616C6C3E}{\isasymforall}}x\isaliteral{5C3C5E697375623E}{}\isactrlisub {\isadigit{1}}\ x\isaliteral{5C3C5E697375623E}{}\isactrlisub {\isadigit{2}}{\isaliteral{2E}{\isachardot}}\ P\isaliteral{5C3C5E627375623E}{}\isactrlbsub pat\isaliteral{5C3C5E657375623E}{}\isactrlesub \ x\isaliteral{5C3C5E697375623E}{}\isactrlisub {\isadigit{1}}\ {\isaliteral{5C3C616E643E}{\isasymand}}\ P\isaliteral{5C3C5E627375623E}{}\isactrlbsub pat\isaliteral{5C3C5E657375623E}{}\isactrlesub \ x\isaliteral{5C3C5E697375623E}{}\isactrlisub {\isadigit{2}}\ {\isaliteral{5C3C52696768746172726F773E}{\isasymRightarrow}}\ P\isaliteral{5C3C5E627375623E}{}\isactrlbsub pat\isaliteral{5C3C5E657375623E}{}\isactrlesub \ {\isaliteral{28}{\isacharparenleft}}PTup\isaliteral{5C3C5E7375703E}{}\isactrlsup {\isaliteral{5C3C616C7068613E}{\isasymalpha}}\ x\isaliteral{5C3C5E697375623E}{}\isactrlisub {\isadigit{1}}\ x\isaliteral{5C3C5E697375623E}{}\isactrlisub {\isadigit{2}}{\isaliteral{29}{\isacharparenright}}}
  \end{array}}
  \end{tabular}}
  \end{equation}\smallskip

  By working now completely on the alpha-equated level, we
  can first show using \eqref{calphaeqvt} and Property~\ref{swapfreshfresh} that the support of each term
  constructor is included in the support of its arguments, 
  namely

  \[
  \isa{{\isaliteral{28}{\isacharparenleft}}supp\ x\isaliteral{5C3C5E697375623E}{}\isactrlisub {\isadigit{1}}\ {\isaliteral{5C3C756E696F6E3E}{\isasymunion}}\ {\isaliteral{5C3C646F74733E}{\isasymdots}}\ {\isaliteral{5C3C756E696F6E3E}{\isasymunion}}\ supp\ x\isaliteral{5C3C5E697375623E}{}\isactrlisub r{\isaliteral{29}{\isacharparenright}}\ supports\ {\isaliteral{28}{\isacharparenleft}}C\isaliteral{5C3C5E7375703E}{}\isactrlsup {\isaliteral{5C3C616C7068613E}{\isasymalpha}}\ x\isaliteral{5C3C5E697375623E}{}\isactrlisub {\isadigit{1}}\ {\isaliteral{5C3C646F74733E}{\isasymdots}}\ x\isaliteral{5C3C5E697375623E}{}\isactrlisub r{\isaliteral{29}{\isacharparenright}}}
  \]\smallskip

  \noindent
  This allows us to prove using the induction principle for  \isa{ty{\isaliteral{5C3C414C3E}{\isasymAL}}}$_{1..n}$ 
  that every element of type \isa{ty{\isaliteral{5C3C414C3E}{\isasymAL}}}$_{1..n}$ is finitely supported 
  (using Proposition~\ref{supportsprop}{\it (i)}). 
  Similarly, we can establish by induction that the free-atom functions and binding 
  functions are equivariant, namely
  
  \[\mbox{
  \begin{tabular}{rcl}
  \isa{{\isaliteral{5C3C70693E}{\isasympi}}\ {\isaliteral{5C3C62756C6C65743E}{\isasymbullet}}\ {\isaliteral{28}{\isacharparenleft}}fa{\isaliteral{5F}{\isacharunderscore}}ty{\isaliteral{5C3C414C3E}{\isasymAL}}\isaliteral{5C3C5E697375623E}{}\isactrlisub i\ \ x{\isaliteral{29}{\isacharparenright}}} & $=$ & \isa{fa{\isaliteral{5F}{\isacharunderscore}}ty{\isaliteral{5C3C414C3E}{\isasymAL}}\isaliteral{5C3C5E697375623E}{}\isactrlisub i\ {\isaliteral{28}{\isacharparenleft}}{\isaliteral{5C3C70693E}{\isasympi}}\ {\isaliteral{5C3C62756C6C65743E}{\isasymbullet}}\ x{\isaliteral{29}{\isacharparenright}}}\\
  \isa{{\isaliteral{5C3C70693E}{\isasympi}}\ {\isaliteral{5C3C62756C6C65743E}{\isasymbullet}}\ {\isaliteral{28}{\isacharparenleft}}fa{\isaliteral{5F}{\isacharunderscore}}bn{\isaliteral{5C3C414C3E}{\isasymAL}}\isaliteral{5C3C5E697375623E}{}\isactrlisub j\ \ x{\isaliteral{29}{\isacharparenright}}} & $=$ & \isa{fa{\isaliteral{5F}{\isacharunderscore}}bn{\isaliteral{5C3C414C3E}{\isasymAL}}\isaliteral{5C3C5E697375623E}{}\isactrlisub j\ {\isaliteral{28}{\isacharparenleft}}{\isaliteral{5C3C70693E}{\isasympi}}\ {\isaliteral{5C3C62756C6C65743E}{\isasymbullet}}\ x{\isaliteral{29}{\isacharparenright}}}\\
  \isa{{\isaliteral{5C3C70693E}{\isasympi}}\ {\isaliteral{5C3C62756C6C65743E}{\isasymbullet}}\ {\isaliteral{28}{\isacharparenleft}}bn{\isaliteral{5C3C414C3E}{\isasymAL}}\isaliteral{5C3C5E697375623E}{}\isactrlisub j\ \ x{\isaliteral{29}{\isacharparenright}}}    & $=$ & \isa{bn{\isaliteral{5C3C414C3E}{\isasymAL}}\isaliteral{5C3C5E697375623E}{}\isactrlisub j\ {\isaliteral{28}{\isacharparenleft}}{\isaliteral{5C3C70693E}{\isasympi}}\ {\isaliteral{5C3C62756C6C65743E}{\isasymbullet}}\ x{\isaliteral{29}{\isacharparenright}}}\\
  \end{tabular}}
  \]\smallskip

  \noindent
  Lastly, we can show that the support of elements in \isa{ty{\isaliteral{5C3C414C3E}{\isasymAL}}}$_{1..n}$ is the same as the free-atom functions \isa{fa{\isaliteral{5F}{\isacharunderscore}}ty{\isaliteral{5C3C414C3E}{\isasymAL}}}$_{1..n}$.  This fact is important in the nominal setting where
  the general theory is formulated in terms of support and freshness, but also
  provides evidence that our notions of free-atoms and alpha-equivalence
  `match up' correctly.

  \begin{thm}\label{suppfa} 
  For \isa{x}$_{1..n}$ with type \isa{ty{\isaliteral{5C3C414C3E}{\isasymAL}}}$_{1..n}$, we have
  \isa{supp\ x\isaliteral{5C3C5E697375623E}{}\isactrlisub i\ {\isaliteral{3D}{\isacharequal}}\ fa{\isaliteral{5F}{\isacharunderscore}}ty{\isaliteral{5C3C414C3E}{\isasymAL}}\isaliteral{5C3C5E697375623E}{}\isactrlisub i\ x\isaliteral{5C3C5E697375623E}{}\isactrlisub i}.
  \end{thm}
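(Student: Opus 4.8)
The plan is to establish the two inclusions $supp\,x_i \subseteq fa\_ty^\alpha_i\,x_i$ and $fa\_ty^\alpha_i\,x_i \subseteq supp\,x_i$ separately. Both rely on the fact that each set $fa\_ty^\alpha_i\,x_i$ is finite, which I would record first by an easy mutual induction over $ty^\alpha_{1..n}$: the defining clauses of $fa\_ty^\alpha_i$ build these sets from singletons, from the finite sets $supp\,z$ of arguments whose types are not part of the specification (such types being finitely supported), and from finite unions and set-differences. Granting this, the inclusion $fa\_ty^\alpha_i\,x_i \subseteq supp\,x_i$ follows by the same reasoning as in the second half of the proof of Theorem~\ref{suppabs}: since $fa\_ty^\alpha_i$ is equivariant it has empty support, so the bound on the support of function applications in \eqref{supps} yields $supp\,(fa\_ty^\alpha_i\,x_i) \subseteq supp\,(fa\_ty^\alpha_i) \cup supp\,x_i = supp\,x_i$; and since $fa\_ty^\alpha_i\,x_i$ is a finite set of atoms, $supp\,(fa\_ty^\alpha_i\,x_i) = fa\_ty^\alpha_i\,x_i$, so combining the two gives the claim.

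For the reverse inclusion the plan is to show that $fa\_ty^\alpha_i\,x_i$ \emph{supports} $x_i$ and then to apply Proposition~\ref{supportsprop}(i) together with the finiteness just established. Unfolding \emph{supports}, the goal is: whenever $a,b \notin fa\_ty^\alpha_i\,x_i$ then $(a\ b)\bullet x_i = x_i$. I would prove this by the mutual induction principle \eqref{induct}, strengthened so that the hypothesis also records, for each binding function, that $a,b \notin fa\_bn^\alpha_j\,x$ implies $(a\ b)\bullet x$ and $x$ are related by $\approx bn^\alpha_j$; this companion statement is what discharges the premises that non-recursive deep binders contribute, as in the $Let^\alpha$-clause of \eqref{alphalift}. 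In the step for a constructor $x_i = C^\alpha\,z_1\dots z_n$ with $a,b \notin fa\_ty^\alpha_i\,(C^\alpha\,z_1\dots z_n)$, I would move the swapping inside using \eqref{calphaeqvt} and then apply quasi-injectivity for $C^\alpha$, reducing the goal to one premise per binding clause. The premises coming from arguments that sit in an empty binding clause follow from the induction hypothesis (for recursive arguments) or from Proposition~\ref{swapfreshfresh} (for non-recursive arguments of previously-defined types), in both cases because the relevant free atoms are contained in $fa\_ty^\alpha_i\,(C^\alpha\,z_1\dots z_n)$ and so avoid $a$ and $b$. For a premise coming from a non-empty binding clause, with set $B$ of bound atoms and bodies of compound free-atom set $D$, the witnessing permutation in the corresponding instance of $\approx_{set}^{=,fa}$ (or $\approx_{set+}^{=,fa}$, or $\approx_{list}^{=,fa}$) can be taken to be $(a\ b)$ itself: the two conditions that equate the bodies and the binders hold by reflexivity, since applying $(a\ b)$ once more undoes the swapping; and the two conditions involving $D-B$ hold because $a,b \notin D-B \subseteq fa\_ty^\alpha_i\,(C^\alpha\,z_1\dots z_n)$, so $(a\ b)$ fixes $D-B$ pointwise and every atom of $D-B$ is fresh for $(a\ b)$. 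The companion $\approx bn^\alpha_j$-premises are then discharged by the strengthened hypothesis.

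I expect the main obstacle to be the bookkeeping inside this last induction: for every term-constructor and every one of its completed binding clauses one has to keep track of which arguments contribute to the body free atoms $D$, which to the bound atoms $B$, and which to the free atoms of non-recursive deep binders, and then verify uniformly --- over the three binding modes and over recursive versus non-recursive deep binders --- that the choice $\pi' = (a\ b)$ meets all the conditions of Definitions~\ref{alphaset},~\ref{alphalist} and~\ref{alphares}. This is exactly the kind of schematic case analysis we have automated in Isabelle/HOL. By contrast the first inclusion is essentially a corollary of the equivariance results of this section, and since the whole argument runs through the mutual induction principle the equation is obtained for all the types $ty^\alpha_{1..n}$ at once.
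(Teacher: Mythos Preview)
Your argument is correct, but it takes a different route from the paper's. The paper proves the equality in one induction: in each case it unfolds \isa{supp} directly, pushes the swapping through the term-constructor via \eqref{calphaeqvt}, and then uses the quasi-injectivity lemmas in their abstraction form so that the binding cases reduce to computing the support of an abstraction; Theorem~\ref{suppabs} then gives this support as \isa{supp\ body\ {-}\ binders}, and the induction hypothesis identifies \isa{supp\ body} with the corresponding \isa{fa}-value, matching the defining clause of \isa{fa\_ty}$^{\alpha}$. Your proof instead splits into two inclusions: one obtained cheaply from equivariance of \isa{fa\_ty}$^{\alpha}$ (the same trick as the second half of Theorem~\ref{suppabs}), and one obtained by showing that \isa{fa\_ty}$^{\alpha}_i\,x_i$ \emph{supports} $x_i$, choosing the witness permutation $(a\ b)$ in each $\approx$-premise. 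In effect you inline the two halves of the proof of Theorem~\ref{suppabs} into the structural induction rather than invoking that theorem as a black box. The paper's route is shorter and reuses prior work; yours is more self-contained and makes the role of the witnessing permutation explicit, at the cost of the bookkeeping you flag (including the strengthened companion hypothesis for \isa{{\isasymapprox}bn}$^{\alpha}$, which is indeed needed to discharge the non-recursive deep-binder premises).
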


  \begin{proof}
  The proof is by induction on \isa{x}$_{1..n}$. In each case
  we unfold the definition of \isa{supp}, move the swapping inside the 
  term-constructors and then use the quasi-injectivity lemmas in order to complete the
  proof. For the abstraction cases we use then the facts derived in Theorem~\ref{suppabs},
  for which we have to know that every body of an abstraction is finitely supported.
  This, we have proved earlier.
  \end{proof}

  \noindent
  Consequently, we can replace the free-atom functions by \isa{supp} in  
  our quasi-injection lemmas. In the examples shown in \eqref{alphalift}, for instance,
  we obtain for \isa{Let\isaliteral{5C3C5E7375703E}{}\isactrlsup {\isaliteral{5C3C616C7068613E}{\isasymalpha}}} and \isa{Let{\isaliteral{5F}{\isacharunderscore}}rec\isaliteral{5C3C5E7375703E}{}\isactrlsup {\isaliteral{5C3C616C7068613E}{\isasymalpha}}} 

  \[\mbox{
  \begin{tabular}{@ {}c @ {}}
  \infer{\isa{Let\isaliteral{5C3C5E7375703E}{}\isactrlsup {\isaliteral{5C3C616C7068613E}{\isasymalpha}}\ as\ t\ {\isaliteral{3D}{\isacharequal}}\ Let\isaliteral{5C3C5E7375703E}{}\isactrlsup {\isaliteral{5C3C616C7068613E}{\isasymalpha}}\ as{\isaliteral{27}{\isacharprime}}\ t{\isaliteral{27}{\isacharprime}}}}
  {\isa{{\isaliteral{28}{\isacharparenleft}}bn\isaliteral{5C3C5E7375703E}{}\isactrlsup {\isaliteral{5C3C616C7068613E}{\isasymalpha}}\ as{\isaliteral{2C}{\isacharcomma}}\ t{\isaliteral{29}{\isacharparenright}}\ {\isaliteral{5C3C617070726F783E}{\isasymapprox}}\,\raisebox{-1pt}{\makebox[0mm][l]{$_{\textit{list}}$}}\isaliteral{5C3C5E627375703E}{}\isactrlbsup {\isaliteral{3D}{\isacharequal}}{\isaliteral{2C}{\isacharcomma}}\ supp\isaliteral{5C3C5E657375703E}{}\isactrlesup \ {\isaliteral{28}{\isacharparenleft}}bn\isaliteral{5C3C5E7375703E}{}\isactrlsup {\isaliteral{5C3C616C7068613E}{\isasymalpha}}\ as{\isaliteral{27}{\isacharprime}}{\isaliteral{2C}{\isacharcomma}}\ t{\isaliteral{27}{\isacharprime}}{\isaliteral{29}{\isacharparenright}}} & 
  \hspace{5mm}\isa{as\ {\isaliteral{5C3C617070726F783E}{\isasymapprox}}{\isaliteral{5C3C414C3E}{\isasymAL}}\isaliteral{5C3C5E627375623E}{}\isactrlbsub bn\isaliteral{5C3C5E657375623E}{}\isactrlesub \ as{\isaliteral{27}{\isacharprime}}}}\\
  \\
  \makebox[0mm]{\infer{\isa{Let{\isaliteral{5F}{\isacharunderscore}}rec\isaliteral{5C3C5E7375703E}{}\isactrlsup {\isaliteral{5C3C616C7068613E}{\isasymalpha}}\ as\ t\ {\isaliteral{3D}{\isacharequal}}\ Let{\isaliteral{5F}{\isacharunderscore}}rec\isaliteral{5C3C5E7375703E}{}\isactrlsup {\isaliteral{5C3C616C7068613E}{\isasymalpha}}\ as{\isaliteral{27}{\isacharprime}}\ t{\isaliteral{27}{\isacharprime}}}}
  {\isa{{\isaliteral{28}{\isacharparenleft}}bn\isaliteral{5C3C5E7375703E}{}\isactrlsup {\isaliteral{5C3C616C7068613E}{\isasymalpha}}\ as{\isaliteral{2C}{\isacharcomma}}\ {\isaliteral{28}{\isacharparenleft}}as{\isaliteral{2C}{\isacharcomma}}\ t{\isaliteral{29}{\isacharparenright}}{\isaliteral{29}{\isacharparenright}}\ {\isaliteral{5C3C617070726F783E}{\isasymapprox}}\,\raisebox{-1pt}{\makebox[0mm][l]{$_{\textit{list}}$}}\isaliteral{5C3C5E627375703E}{}\isactrlbsup {\isaliteral{28}{\isacharparenleft}}{\isaliteral{3D}{\isacharequal}}{\isaliteral{2C}{\isacharcomma}}\ {\isaliteral{3D}{\isacharequal}}{\isaliteral{29}{\isacharparenright}}{\isaliteral{2C}{\isacharcomma}}\ {\isaliteral{28}{\isacharparenleft}}supp{\isaliteral{2C}{\isacharcomma}}\ supp{\isaliteral{29}{\isacharparenright}}\isaliteral{5C3C5E657375703E}{}\isactrlesup \ {\isaliteral{28}{\isacharparenleft}}bn\isaliteral{5C3C5E7375703E}{}\isactrlsup {\isaliteral{5C3C616C7068613E}{\isasymalpha}}\ as{\isaliteral{27}{\isacharprime}}{\isaliteral{2C}{\isacharcomma}}\ {\isaliteral{28}{\isacharparenleft}}as{\isaliteral{2C}{\isacharcomma}}\ t{\isaliteral{5C3C5052494D453E}{\isasymPRIME}}\ {\isaliteral{29}{\isacharparenright}}{\isaliteral{29}{\isacharparenright}}}}}\\
  \end{tabular}}
  \]\smallskip

  \noindent
  Taking into account that the compound equivalence relation \isa{{\isaliteral{28}{\isacharparenleft}}{\isaliteral{3D}{\isacharequal}}{\isaliteral{2C}{\isacharcomma}}\ {\isaliteral{3D}{\isacharequal}}{\isaliteral{29}{\isacharparenright}}} and the compound free-atom function \isa{{\isaliteral{28}{\isacharparenleft}}supp{\isaliteral{2C}{\isacharcomma}}\ supp{\isaliteral{29}{\isacharparenright}}} are by
  definition equal to \isa{{\isaliteral{3D}{\isacharequal}}} and \isa{supp}, respectively, the
  above rules simplify further to

  \[\mbox{
  \begin{tabular}{@ {}c @ {}}
  \infer{\isa{Let\isaliteral{5C3C5E7375703E}{}\isactrlsup {\isaliteral{5C3C616C7068613E}{\isasymalpha}}\ as\ t\ {\isaliteral{3D}{\isacharequal}}\ Let\isaliteral{5C3C5E7375703E}{}\isactrlsup {\isaliteral{5C3C616C7068613E}{\isasymalpha}}\ as{\isaliteral{27}{\isacharprime}}\ t{\isaliteral{27}{\isacharprime}}}}
  {\isa{{\isaliteral{5B}{\isacharbrackleft}}bn\isaliteral{5C3C5E7375703E}{}\isactrlsup {\isaliteral{5C3C616C7068613E}{\isasymalpha}}\ as{\isaliteral{5D}{\isacharbrackright}}\isaliteral{5C3C5E627375623E}{}\isactrlbsub list\isaliteral{5C3C5E657375623E}{}\isactrlesub {\isaliteral{2E}{\isachardot}}t\ {\isaliteral{3D}{\isacharequal}}\ {\isaliteral{5B}{\isacharbrackleft}}bn\isaliteral{5C3C5E7375703E}{}\isactrlsup {\isaliteral{5C3C616C7068613E}{\isasymalpha}}\ as{\isaliteral{27}{\isacharprime}}{\isaliteral{5D}{\isacharbrackright}}\isaliteral{5C3C5E627375623E}{}\isactrlbsub list\isaliteral{5C3C5E657375623E}{}\isactrlesub {\isaliteral{2E}{\isachardot}}t{\isaliteral{27}{\isacharprime}}} & 
  \hspace{5mm}\isa{as\ {\isaliteral{5C3C617070726F783E}{\isasymapprox}}{\isaliteral{5C3C414C3E}{\isasymAL}}\isaliteral{5C3C5E627375623E}{}\isactrlbsub bn\isaliteral{5C3C5E657375623E}{}\isactrlesub \ as{\isaliteral{27}{\isacharprime}}}}\\
  \\
  \makebox[0mm]{\infer{\isa{Let{\isaliteral{5F}{\isacharunderscore}}rec\isaliteral{5C3C5E7375703E}{}\isactrlsup {\isaliteral{5C3C616C7068613E}{\isasymalpha}}\ as\ t\ {\isaliteral{3D}{\isacharequal}}\ Let{\isaliteral{5F}{\isacharunderscore}}rec\isaliteral{5C3C5E7375703E}{}\isactrlsup {\isaliteral{5C3C616C7068613E}{\isasymalpha}}\ as{\isaliteral{27}{\isacharprime}}\ t{\isaliteral{27}{\isacharprime}}}}
  {\isa{{\isaliteral{5B}{\isacharbrackleft}}bn\isaliteral{5C3C5E7375703E}{}\isactrlsup {\isaliteral{5C3C616C7068613E}{\isasymalpha}}\ as{\isaliteral{5D}{\isacharbrackright}}\isaliteral{5C3C5E627375623E}{}\isactrlbsub list\isaliteral{5C3C5E657375623E}{}\isactrlesub {\isaliteral{2E}{\isachardot}}{\isaliteral{28}{\isacharparenleft}}as{\isaliteral{2C}{\isacharcomma}}\ t{\isaliteral{29}{\isacharparenright}}\ {\isaliteral{3D}{\isacharequal}}\ {\isaliteral{5B}{\isacharbrackleft}}bn\isaliteral{5C3C5E7375703E}{}\isactrlsup {\isaliteral{5C3C616C7068613E}{\isasymalpha}}\ as{\isaliteral{27}{\isacharprime}}{\isaliteral{5D}{\isacharbrackright}}\isaliteral{5C3C5E627375623E}{}\isactrlbsub list\isaliteral{5C3C5E657375623E}{}\isactrlesub {\isaliteral{2E}{\isachardot}}{\isaliteral{28}{\isacharparenleft}}as{\isaliteral{2C}{\isacharcomma}}\ t{\isaliteral{5C3C5052494D453E}{\isasymPRIME}}\ {\isaliteral{29}{\isacharparenright}}}}}\\
  \end{tabular}}
  \]\smallskip

  \noindent
  which means we can characterise equality between term-constructors (on the
  alpha-equated level) in terms of equality between the abstractions defined
  in Section~\ref{sec:binders}. From this we can deduce the support for \isa{Let\isaliteral{5C3C5E7375703E}{}\isactrlsup {\isaliteral{5C3C616C7068613E}{\isasymalpha}}} and \isa{Let{\isaliteral{5F}{\isacharunderscore}}rec\isaliteral{5C3C5E7375703E}{}\isactrlsup {\isaliteral{5C3C616C7068613E}{\isasymalpha}}}, namely

  \[\mbox{
  \begin{tabular}{l@ {\hspace{2mm}}l@ {\hspace{2mm}}l}
  \isa{supp\ {\isaliteral{28}{\isacharparenleft}}Let\isaliteral{5C3C5E7375703E}{}\isactrlsup {\isaliteral{5C3C616C7068613E}{\isasymalpha}}\ as\ t{\isaliteral{29}{\isacharparenright}}} & \isa{{\isaliteral{3D}{\isacharequal}}} & \isa{{\isaliteral{28}{\isacharparenleft}}supp\ t\ {\isaliteral{2D}{\isacharminus}}\ set\ {\isaliteral{28}{\isacharparenleft}}bn\isaliteral{5C3C5E7375703E}{}\isactrlsup {\isaliteral{5C3C616C7068613E}{\isasymalpha}}\ as{\isaliteral{29}{\isacharparenright}}{\isaliteral{29}{\isacharparenright}}\ {\isaliteral{5C3C756E696F6E3E}{\isasymunion}}\ fa{\isaliteral{5C3C414C3E}{\isasymAL}}\isaliteral{5C3C5E627375623E}{}\isactrlbsub bn\isaliteral{5C3C5E657375623E}{}\isactrlesub \ as}\\
  \isa{supp\ {\isaliteral{28}{\isacharparenleft}}Let{\isaliteral{5F}{\isacharunderscore}}rec\isaliteral{5C3C5E7375703E}{}\isactrlsup {\isaliteral{5C3C616C7068613E}{\isasymalpha}}\ as\ t{\isaliteral{29}{\isacharparenright}}} & \isa{{\isaliteral{3D}{\isacharequal}}} & \isa{{\isaliteral{28}{\isacharparenleft}}supp\ t\ {\isaliteral{5C3C756E696F6E3E}{\isasymunion}}\ supp\ as{\isaliteral{29}{\isacharparenright}}\ {\isaliteral{2D}{\isacharminus}}\ set\ {\isaliteral{28}{\isacharparenleft}}bn\isaliteral{5C3C5E7375703E}{}\isactrlsup {\isaliteral{5C3C616C7068613E}{\isasymalpha}}\ as{\isaliteral{29}{\isacharparenright}}}\\
  \end{tabular}}
  \]\smallskip

  \noindent
  using the support of abstractions derived in Theorem~\ref{suppabs}.

  To sum up this section, we have established a reasoning infrastructure for the
  types \isa{ty{\isaliteral{5C3C414C3E}{\isasymAL}}}$_{1..n}$ by first lifting definitions from the
  `raw' level to the quotient level and then by proving facts about
  these lifted definitions. All necessary proofs are generated automatically
  by custom ML-code.%
\end{isamarkuptext}%
\isamarkuptrue%
\isamarkupsection{Strong Induction Principles%
}
\isamarkuptrue%
\begin{isamarkuptext}%
In the previous section we derived induction principles for alpha-equated
  terms (see \eqref{induct} for the general form and \eqref{inductex} for an
  example). This was done by lifting the corresponding inductions principles
  for `raw' terms.  We already employed these induction principles for
  deriving several facts about alpha-equated terms, including the property that
  the free-atom functions and the notion of support coincide. Still, we
  call these induction principles \emph{weak}, because for a term-constructor,
  say \mbox{\isa{C\isaliteral{5C3C5E7375703E}{}\isactrlsup {\isaliteral{5C3C616C7068613E}{\isasymalpha}}\ x\isaliteral{5C3C5E697375623E}{}\isactrlisub {\isadigit{1}}{\isaliteral{5C3C646F74733E}{\isasymdots}}x\isaliteral{5C3C5E697375623E}{}\isactrlisub r}}, the induction
  hypothesis requires us to establish (under some assumptions) a property
  \isa{P\ {\isaliteral{28}{\isacharparenleft}}C\isaliteral{5C3C5E7375703E}{}\isactrlsup {\isaliteral{5C3C616C7068613E}{\isasymalpha}}\ x\isaliteral{5C3C5E697375623E}{}\isactrlisub {\isadigit{1}}{\isaliteral{5C3C646F74733E}{\isasymdots}}x\isaliteral{5C3C5E697375623E}{}\isactrlisub r{\isaliteral{29}{\isacharparenright}}} for \emph{all} \isa{x}$_{1..r}$. The problem with this is that in the presence of binders we cannot make
  any assumptions about the atoms that are bound---for example assuming the variable convention. 
  One obvious way around this
  problem is to rename bound atoms. Unfortunately, this leads to very clunky proofs
  and makes formalisations grievous experiences (especially in the context of 
  multiple bound atoms).

  For the older versions of Nominal Isabelle we described in \cite{Urban08} a
  method for automatically strengthening weak induction principles. These
  stronger induction principles allow the user to make additional assumptions
  about bound atoms. The advantage of these assumptions is that they make in
  most cases any renaming of bound atoms unnecessary.  To explain how the
  strengthening works, we use as running example the lambda-calculus with
  \isa{Let}-patterns shown in \eqref{letpat}. Its weak induction principle
  is given in \eqref{inductex}.  The stronger induction principle is as
  follows:

  \begin{equation}\label{stronginduct}
  \mbox{
  \begin{tabular}{@ {}c@ {}}
  \infer{\isa{P\isaliteral{5C3C5E627375623E}{}\isactrlbsub trm\isaliteral{5C3C5E657375623E}{}\isactrlesub \ c\ y\isaliteral{5C3C5E697375623E}{}\isactrlisub {\isadigit{1}}\ {\isaliteral{5C3C616E643E}{\isasymand}}\ P\isaliteral{5C3C5E627375623E}{}\isactrlbsub pat\isaliteral{5C3C5E657375623E}{}\isactrlesub \ c\ y\isaliteral{5C3C5E697375623E}{}\isactrlisub {\isadigit{2}}}}
  {\begin{array}{l}
   \isa{{\isaliteral{5C3C666F72616C6C3E}{\isasymforall}}x\ c{\isaliteral{2E}{\isachardot}}\ P\isaliteral{5C3C5E627375623E}{}\isactrlbsub trm\isaliteral{5C3C5E657375623E}{}\isactrlesub \ c\ {\isaliteral{28}{\isacharparenleft}}Var\isaliteral{5C3C5E7375703E}{}\isactrlsup {\isaliteral{5C3C616C7068613E}{\isasymalpha}}\ x{\isaliteral{29}{\isacharparenright}}}\\
   \isa{{\isaliteral{5C3C666F72616C6C3E}{\isasymforall}}x\isaliteral{5C3C5E697375623E}{}\isactrlisub {\isadigit{1}}\ x\isaliteral{5C3C5E697375623E}{}\isactrlisub {\isadigit{2}}\ c{\isaliteral{2E}{\isachardot}}\ {\isaliteral{28}{\isacharparenleft}}{\isaliteral{5C3C666F72616C6C3E}{\isasymforall}}d{\isaliteral{2E}{\isachardot}}\ P\isaliteral{5C3C5E627375623E}{}\isactrlbsub trm\isaliteral{5C3C5E657375623E}{}\isactrlesub \ d\ x\isaliteral{5C3C5E697375623E}{}\isactrlisub {\isadigit{1}}{\isaliteral{29}{\isacharparenright}}\ {\isaliteral{5C3C616E643E}{\isasymand}}\ {\isaliteral{28}{\isacharparenleft}}{\isaliteral{5C3C666F72616C6C3E}{\isasymforall}}d{\isaliteral{2E}{\isachardot}}\ P\isaliteral{5C3C5E627375623E}{}\isactrlbsub trm\isaliteral{5C3C5E657375623E}{}\isactrlesub \ d\ x\isaliteral{5C3C5E697375623E}{}\isactrlisub {\isadigit{2}}{\isaliteral{29}{\isacharparenright}}\ {\isaliteral{5C3C52696768746172726F773E}{\isasymRightarrow}}\ P\isaliteral{5C3C5E627375623E}{}\isactrlbsub trm\isaliteral{5C3C5E657375623E}{}\isactrlesub \ c\ {\isaliteral{28}{\isacharparenleft}}App\isaliteral{5C3C5E7375703E}{}\isactrlsup {\isaliteral{5C3C616C7068613E}{\isasymalpha}}\ x\isaliteral{5C3C5E697375623E}{}\isactrlisub {\isadigit{1}}\ x\isaliteral{5C3C5E697375623E}{}\isactrlisub {\isadigit{2}}{\isaliteral{29}{\isacharparenright}}}\\
   \isa{{\isaliteral{5C3C666F72616C6C3E}{\isasymforall}}x\isaliteral{5C3C5E697375623E}{}\isactrlisub {\isadigit{1}}\ x\isaliteral{5C3C5E697375623E}{}\isactrlisub {\isadigit{2}}\ c{\isaliteral{2E}{\isachardot}}\ atom\ x\isaliteral{5C3C5E697375623E}{}\isactrlisub {\isadigit{1}}\ {\isaliteral{23}{\isacharhash}}\ c\ {\isaliteral{5C3C616E643E}{\isasymand}}\ {\isaliteral{28}{\isacharparenleft}}{\isaliteral{5C3C666F72616C6C3E}{\isasymforall}}d{\isaliteral{2E}{\isachardot}}\ P\isaliteral{5C3C5E627375623E}{}\isactrlbsub trm\isaliteral{5C3C5E657375623E}{}\isactrlesub \ d\ x\isaliteral{5C3C5E697375623E}{}\isactrlisub {\isadigit{2}}{\isaliteral{29}{\isacharparenright}}\ {\isaliteral{5C3C52696768746172726F773E}{\isasymRightarrow}}\ P\isaliteral{5C3C5E627375623E}{}\isactrlbsub trm\isaliteral{5C3C5E657375623E}{}\isactrlesub \ c\ {\isaliteral{28}{\isacharparenleft}}Lam\isaliteral{5C3C5E7375703E}{}\isactrlsup {\isaliteral{5C3C616C7068613E}{\isasymalpha}}\ x\isaliteral{5C3C5E697375623E}{}\isactrlisub {\isadigit{1}}\ x\isaliteral{5C3C5E697375623E}{}\isactrlisub {\isadigit{2}}{\isaliteral{29}{\isacharparenright}}}\\
   \isa{{\isaliteral{5C3C666F72616C6C3E}{\isasymforall}}x\isaliteral{5C3C5E697375623E}{}\isactrlisub {\isadigit{1}}\ x\isaliteral{5C3C5E697375623E}{}\isactrlisub {\isadigit{2}}\ x\isaliteral{5C3C5E697375623E}{}\isactrlisub {\isadigit{3}}\ c{\isaliteral{2E}{\isachardot}}\ {\isaliteral{28}{\isacharparenleft}}set\ {\isaliteral{28}{\isacharparenleft}}bn\isaliteral{5C3C5E7375703E}{}\isactrlsup {\isaliteral{5C3C616C7068613E}{\isasymalpha}}\ x\isaliteral{5C3C5E697375623E}{}\isactrlisub {\isadigit{1}}{\isaliteral{29}{\isacharparenright}}{\isaliteral{29}{\isacharparenright}}\ {\isaliteral{23}{\isacharhash}}\isaliteral{5C3C5E7375703E}{}\isactrlsup {\isaliteral{2A}{\isacharasterisk}}\ c\ {\isaliteral{5C3C616E643E}{\isasymand}}}\\ 
   \hspace{10mm}\isa{{\isaliteral{28}{\isacharparenleft}}{\isaliteral{5C3C666F72616C6C3E}{\isasymforall}}d{\isaliteral{2E}{\isachardot}}\ P\isaliteral{5C3C5E627375623E}{}\isactrlbsub pat\isaliteral{5C3C5E657375623E}{}\isactrlesub \ d\ x\isaliteral{5C3C5E697375623E}{}\isactrlisub {\isadigit{1}}{\isaliteral{29}{\isacharparenright}}\ {\isaliteral{5C3C616E643E}{\isasymand}}\ {\isaliteral{28}{\isacharparenleft}}{\isaliteral{5C3C666F72616C6C3E}{\isasymforall}}d{\isaliteral{2E}{\isachardot}}\ P\isaliteral{5C3C5E627375623E}{}\isactrlbsub trm\isaliteral{5C3C5E657375623E}{}\isactrlesub \ d\ x\isaliteral{5C3C5E697375623E}{}\isactrlisub {\isadigit{2}}{\isaliteral{29}{\isacharparenright}}\ {\isaliteral{5C3C616E643E}{\isasymand}}\ {\isaliteral{28}{\isacharparenleft}}{\isaliteral{5C3C666F72616C6C3E}{\isasymforall}}d{\isaliteral{2E}{\isachardot}}\ P\isaliteral{5C3C5E627375623E}{}\isactrlbsub trm\isaliteral{5C3C5E657375623E}{}\isactrlesub \ d\ x\isaliteral{5C3C5E697375623E}{}\isactrlisub {\isadigit{3}}{\isaliteral{29}{\isacharparenright}}\ {\isaliteral{5C3C52696768746172726F773E}{\isasymRightarrow}}\ P\isaliteral{5C3C5E627375623E}{}\isactrlbsub trm\isaliteral{5C3C5E657375623E}{}\isactrlesub \ c\ {\isaliteral{28}{\isacharparenleft}}Let{\isaliteral{5F}{\isacharunderscore}}pat\isaliteral{5C3C5E7375703E}{}\isactrlsup {\isaliteral{5C3C616C7068613E}{\isasymalpha}}\ x\isaliteral{5C3C5E697375623E}{}\isactrlisub {\isadigit{1}}\ x\isaliteral{5C3C5E697375623E}{}\isactrlisub {\isadigit{2}}\ x\isaliteral{5C3C5E697375623E}{}\isactrlisub {\isadigit{3}}{\isaliteral{29}{\isacharparenright}}}\\
   \isa{{\isaliteral{5C3C666F72616C6C3E}{\isasymforall}}x\ c{\isaliteral{2E}{\isachardot}}\ P\isaliteral{5C3C5E627375623E}{}\isactrlbsub pat\isaliteral{5C3C5E657375623E}{}\isactrlesub \ c\ {\isaliteral{28}{\isacharparenleft}}PVar\isaliteral{5C3C5E7375703E}{}\isactrlsup {\isaliteral{5C3C616C7068613E}{\isasymalpha}}\ x{\isaliteral{29}{\isacharparenright}}}\\
   \isa{{\isaliteral{5C3C666F72616C6C3E}{\isasymforall}}x\isaliteral{5C3C5E697375623E}{}\isactrlisub {\isadigit{1}}\ x\isaliteral{5C3C5E697375623E}{}\isactrlisub {\isadigit{2}}\ c{\isaliteral{2E}{\isachardot}}\ {\isaliteral{28}{\isacharparenleft}}{\isaliteral{5C3C666F72616C6C3E}{\isasymforall}}d{\isaliteral{2E}{\isachardot}}\ P\isaliteral{5C3C5E627375623E}{}\isactrlbsub pat\isaliteral{5C3C5E657375623E}{}\isactrlesub \ d\ x\isaliteral{5C3C5E697375623E}{}\isactrlisub {\isadigit{1}}{\isaliteral{29}{\isacharparenright}}\ {\isaliteral{5C3C616E643E}{\isasymand}}\ {\isaliteral{28}{\isacharparenleft}}{\isaliteral{5C3C666F72616C6C3E}{\isasymforall}}d{\isaliteral{2E}{\isachardot}}\ P\isaliteral{5C3C5E627375623E}{}\isactrlbsub pat\isaliteral{5C3C5E657375623E}{}\isactrlesub \ d\ x\isaliteral{5C3C5E697375623E}{}\isactrlisub {\isadigit{2}}{\isaliteral{29}{\isacharparenright}}\ {\isaliteral{5C3C52696768746172726F773E}{\isasymRightarrow}}\ P\isaliteral{5C3C5E627375623E}{}\isactrlbsub pat\isaliteral{5C3C5E657375623E}{}\isactrlesub \ c\ {\isaliteral{28}{\isacharparenleft}}PTup\isaliteral{5C3C5E7375703E}{}\isactrlsup {\isaliteral{5C3C616C7068613E}{\isasymalpha}}\ x\isaliteral{5C3C5E697375623E}{}\isactrlisub {\isadigit{1}}\ x\isaliteral{5C3C5E697375623E}{}\isactrlisub {\isadigit{2}}{\isaliteral{29}{\isacharparenright}}}
  \end{array}}
  \end{tabular}}
  \end{equation}\smallskip

  \noindent
  Notice that instead of establishing two properties of the form \isa{\ P\isaliteral{5C3C5E627375623E}{}\isactrlbsub trm\isaliteral{5C3C5E657375623E}{}\isactrlesub \ y\isaliteral{5C3C5E697375623E}{}\isactrlisub {\isadigit{1}}\ {\isaliteral{5C3C616E643E}{\isasymand}}\ P\isaliteral{5C3C5E627375623E}{}\isactrlbsub pat\isaliteral{5C3C5E657375623E}{}\isactrlesub \ y\isaliteral{5C3C5E697375623E}{}\isactrlisub {\isadigit{2}}}, as the
  weak one does, the stronger induction principle establishes the properties
  of the form \isa{P\isaliteral{5C3C5E627375623E}{}\isactrlbsub trm\isaliteral{5C3C5E657375623E}{}\isactrlesub \ c\ y\isaliteral{5C3C5E697375623E}{}\isactrlisub {\isadigit{1}}\ {\isaliteral{5C3C616E643E}{\isasymand}}\ P\isaliteral{5C3C5E627375623E}{}\isactrlbsub pat\isaliteral{5C3C5E657375623E}{}\isactrlesub \ c\ y\isaliteral{5C3C5E697375623E}{}\isactrlisub {\isadigit{2}}} in which the additional parameter \isa{c} is assumed to be of finite support. The purpose of \isa{c} is to
  `control' which freshness assumptions the binders should satisfy in the
  \isa{Lam\isaliteral{5C3C5E7375703E}{}\isactrlsup {\isaliteral{5C3C616C7068613E}{\isasymalpha}}} and \isa{Let{\isaliteral{5F}{\isacharunderscore}}pat\isaliteral{5C3C5E7375703E}{}\isactrlsup {\isaliteral{5C3C616C7068613E}{\isasymalpha}}} cases: for \isa{Lam\isaliteral{5C3C5E7375703E}{}\isactrlsup {\isaliteral{5C3C616C7068613E}{\isasymalpha}}} we can assume the bound atom \isa{x\isaliteral{5C3C5E697375623E}{}\isactrlisub {\isadigit{1}}} is fresh
  for \isa{c} (third line); for \isa{Let{\isaliteral{5F}{\isacharunderscore}}pat\isaliteral{5C3C5E7375703E}{}\isactrlsup {\isaliteral{5C3C616C7068613E}{\isasymalpha}}} we can assume
  all bound atoms from an assignment are fresh for \isa{c} (fourth
  line). In order to see how an instantiation for \isa{c} in the
  conclusion `controls' the premises, one has to take into account that
  Isabelle/HOL is a typed logic. That means if \isa{c} is instantiated
  with, for example, a pair, then this type-constraint will be propagated to
  the premises. The main point is that if \isa{c} is instantiated
  appropriately, then the user can mimic the usual convenient `pencil-and-paper'
  reasoning employing the variable convention about bound and free variables
  being distinct \cite{Urban08}.

  In what follows we will show that the weak induction principle in
  \eqref{inductex} implies the strong one \eqref{stronginduct}. This fact was established for
  single binders in \cite{Urban08} by some quite involved, nevertheless
  automated, induction proof. In this paper we simplify the proof by
  leveraging the automated proving tools from the function package of
  Isabelle/HOL \cite{Krauss09}. The reasoning principle behind these tools
  is well-founded induction. To use them in our setting, we have to discharge
  two proof obligations: one is that we have well-founded measures (one for
  each type \isa{ty}$^\alpha_{1..n}$) that decrease in every induction
  step and the other is that we have covered all cases in the induction
  principle. Once these two proof obligations are discharged, the reasoning
  infrastructure of the function package will automatically derive the
  stronger induction principle. This way of establishing the stronger induction
  principle is considerably simpler than the earlier work presented in \cite{Urban08}.

  As measures we can use the size functions \isa{size{\isaliteral{5F}{\isacharunderscore}}ty}$^\alpha_{1..n}$,
  which we lifted in the previous section and which are all well-founded. It
  is straightforward to establish that the sizes decrease in every
  induction step. What is left to show is that we covered all cases. 
  To do so, we have to derive stronger cases lemmas, which look in our
  running example as follows:

  \[\mbox{
  \begin{tabular}{@ {}c@ {\hspace{4mm}}c@ {}}
  \infer{\isa{P\isaliteral{5C3C5E627375623E}{}\isactrlbsub trm\isaliteral{5C3C5E657375623E}{}\isactrlesub }}
  {\begin{array}{@ {}l@ {}}
   \isa{{\isaliteral{5C3C666F72616C6C3E}{\isasymforall}}x{\isaliteral{2E}{\isachardot}}\ y\ {\isaliteral{3D}{\isacharequal}}\ Var\isaliteral{5C3C5E7375703E}{}\isactrlsup {\isaliteral{5C3C616C7068613E}{\isasymalpha}}\ x\ {\isaliteral{5C3C52696768746172726F773E}{\isasymRightarrow}}\ P\isaliteral{5C3C5E627375623E}{}\isactrlbsub trm\isaliteral{5C3C5E657375623E}{}\isactrlesub }\\
   \isa{{\isaliteral{5C3C666F72616C6C3E}{\isasymforall}}x\isaliteral{5C3C5E697375623E}{}\isactrlisub {\isadigit{1}}\ x\isaliteral{5C3C5E697375623E}{}\isactrlisub {\isadigit{2}}{\isaliteral{2E}{\isachardot}}\ y\ {\isaliteral{3D}{\isacharequal}}\ App\isaliteral{5C3C5E7375703E}{}\isactrlsup {\isaliteral{5C3C616C7068613E}{\isasymalpha}}\ x\isaliteral{5C3C5E697375623E}{}\isactrlisub {\isadigit{1}}\ x\isaliteral{5C3C5E697375623E}{}\isactrlisub {\isadigit{2}}\ {\isaliteral{5C3C52696768746172726F773E}{\isasymRightarrow}}\ P\isaliteral{5C3C5E627375623E}{}\isactrlbsub trm\isaliteral{5C3C5E657375623E}{}\isactrlesub }\\
   \isa{{\isaliteral{5C3C666F72616C6C3E}{\isasymforall}}x\isaliteral{5C3C5E697375623E}{}\isactrlisub {\isadigit{1}}\ x\isaliteral{5C3C5E697375623E}{}\isactrlisub {\isadigit{2}}{\isaliteral{2E}{\isachardot}}\ atom\ x\isaliteral{5C3C5E697375623E}{}\isactrlisub {\isadigit{1}}\ {\isaliteral{23}{\isacharhash}}\ c\ {\isaliteral{5C3C616E643E}{\isasymand}}\ y\ {\isaliteral{3D}{\isacharequal}}\ Lam\isaliteral{5C3C5E7375703E}{}\isactrlsup {\isaliteral{5C3C616C7068613E}{\isasymalpha}}\ x\isaliteral{5C3C5E697375623E}{}\isactrlisub {\isadigit{1}}\ x\isaliteral{5C3C5E697375623E}{}\isactrlisub {\isadigit{2}}\ {\isaliteral{5C3C52696768746172726F773E}{\isasymRightarrow}}\ P\isaliteral{5C3C5E627375623E}{}\isactrlbsub trm\isaliteral{5C3C5E657375623E}{}\isactrlesub }\\
   \isa{{\isaliteral{5C3C666F72616C6C3E}{\isasymforall}}x\isaliteral{5C3C5E697375623E}{}\isactrlisub {\isadigit{1}}\ x\isaliteral{5C3C5E697375623E}{}\isactrlisub {\isadigit{2}}\ x\isaliteral{5C3C5E697375623E}{}\isactrlisub {\isadigit{3}}{\isaliteral{2E}{\isachardot}}\ set\ {\isaliteral{28}{\isacharparenleft}}bn\isaliteral{5C3C5E7375703E}{}\isactrlsup {\isaliteral{5C3C616C7068613E}{\isasymalpha}}\ x\isaliteral{5C3C5E697375623E}{}\isactrlisub {\isadigit{1}}{\isaliteral{29}{\isacharparenright}}\ {\isaliteral{23}{\isacharhash}}\isaliteral{5C3C5E7375703E}{}\isactrlsup {\isaliteral{2A}{\isacharasterisk}}\ c\ {\isaliteral{5C3C616E643E}{\isasymand}}\ y\ {\isaliteral{3D}{\isacharequal}}\ Let{\isaliteral{5F}{\isacharunderscore}}pat\isaliteral{5C3C5E7375703E}{}\isactrlsup {\isaliteral{5C3C616C7068613E}{\isasymalpha}}\ x\isaliteral{5C3C5E697375623E}{}\isactrlisub {\isadigit{1}}\ x\isaliteral{5C3C5E697375623E}{}\isactrlisub {\isadigit{2}}\ x\isaliteral{5C3C5E697375623E}{}\isactrlisub {\isadigit{3}}\ {\isaliteral{5C3C52696768746172726F773E}{\isasymRightarrow}}\ P\isaliteral{5C3C5E627375623E}{}\isactrlbsub trm\isaliteral{5C3C5E657375623E}{}\isactrlesub }
   \end{array}} &

  \infer{\isa{P\isaliteral{5C3C5E627375623E}{}\isactrlbsub pat\isaliteral{5C3C5E657375623E}{}\isactrlesub }}
  {\begin{array}{@ {}l@ {}}
   \isa{{\isaliteral{5C3C666F72616C6C3E}{\isasymforall}}x{\isaliteral{2E}{\isachardot}}\ y\ {\isaliteral{3D}{\isacharequal}}\ PVar\isaliteral{5C3C5E7375703E}{}\isactrlsup {\isaliteral{5C3C616C7068613E}{\isasymalpha}}\ x\ {\isaliteral{5C3C52696768746172726F773E}{\isasymRightarrow}}\ P\isaliteral{5C3C5E627375623E}{}\isactrlbsub pat\isaliteral{5C3C5E657375623E}{}\isactrlesub }\\
   \isa{{\isaliteral{5C3C666F72616C6C3E}{\isasymforall}}x\isaliteral{5C3C5E697375623E}{}\isactrlisub {\isadigit{1}}\ x\isaliteral{5C3C5E697375623E}{}\isactrlisub {\isadigit{2}}{\isaliteral{2E}{\isachardot}}\ y\ {\isaliteral{3D}{\isacharequal}}\ PTup\isaliteral{5C3C5E7375703E}{}\isactrlsup {\isaliteral{5C3C616C7068613E}{\isasymalpha}}\ x\isaliteral{5C3C5E697375623E}{}\isactrlisub {\isadigit{1}}\ x\isaliteral{5C3C5E697375623E}{}\isactrlisub {\isadigit{2}}\ {\isaliteral{5C3C52696768746172726F773E}{\isasymRightarrow}}\ P\isaliteral{5C3C5E627375623E}{}\isactrlbsub pat\isaliteral{5C3C5E657375623E}{}\isactrlesub }
  \end{array}}
  \end{tabular}}
  \]\smallskip 

  \noindent
  They are stronger in the sense that they allow us to assume in the \isa{Lam\isaliteral{5C3C5E7375703E}{}\isactrlsup {\isaliteral{5C3C616C7068613E}{\isasymalpha}}} and \isa{Let{\isaliteral{5F}{\isacharunderscore}}pat\isaliteral{5C3C5E7375703E}{}\isactrlsup {\isaliteral{5C3C616C7068613E}{\isasymalpha}}} cases that the bound atoms
  avoid, or are fresh for, a context \isa{c} (which is assumed to be finitely supported).
  
  These stronger cases lemmas can be derived from the `weak' cases lemmas
  given in \eqref{inductex}. This is trivial in case of patterns (the one on
  the right-hand side) since the weak and strong cases lemma coincide (there
  is no binding in patterns).  Interesting are only the cases for \isa{Lam\isaliteral{5C3C5E7375703E}{}\isactrlsup {\isaliteral{5C3C616C7068613E}{\isasymalpha}}} and \isa{Let{\isaliteral{5F}{\isacharunderscore}}pat\isaliteral{5C3C5E7375703E}{}\isactrlsup {\isaliteral{5C3C616C7068613E}{\isasymalpha}}}, where we have some binders and
  therefore have an additional assumption about avoiding \isa{c}.  Let us
  first establish the case for \isa{Lam\isaliteral{5C3C5E7375703E}{}\isactrlsup {\isaliteral{5C3C616C7068613E}{\isasymalpha}}}. By the weak cases lemma
  \eqref{inductex} we can assume that

  \begin{equation}\label{assm}
  \isa{y\ {\isaliteral{3D}{\isacharequal}}\ Lam\isaliteral{5C3C5E7375703E}{}\isactrlsup {\isaliteral{5C3C616C7068613E}{\isasymalpha}}\ x\isaliteral{5C3C5E697375623E}{}\isactrlisub {\isadigit{1}}\ x\isaliteral{5C3C5E697375623E}{}\isactrlisub {\isadigit{2}}}
  \end{equation}\smallskip

  \noindent
  holds, and need to establish \isa{P\isaliteral{5C3C5E627375623E}{}\isactrlbsub trm\isaliteral{5C3C5E657375623E}{}\isactrlesub }. The stronger cases lemma has the 
  corresponding implication 

  \begin{equation}\label{imp}
  \isa{{\isaliteral{5C3C666F72616C6C3E}{\isasymforall}}x\isaliteral{5C3C5E697375623E}{}\isactrlisub {\isadigit{1}}\ x\isaliteral{5C3C5E697375623E}{}\isactrlisub {\isadigit{2}}{\isaliteral{2E}{\isachardot}}\ atom\ x\isaliteral{5C3C5E697375623E}{}\isactrlisub {\isadigit{1}}\ {\isaliteral{23}{\isacharhash}}\ c\ {\isaliteral{5C3C616E643E}{\isasymand}}\ y\ {\isaliteral{3D}{\isacharequal}}\ Lam\isaliteral{5C3C5E7375703E}{}\isactrlsup {\isaliteral{5C3C616C7068613E}{\isasymalpha}}\ x\isaliteral{5C3C5E697375623E}{}\isactrlisub {\isadigit{1}}\ x\isaliteral{5C3C5E697375623E}{}\isactrlisub {\isadigit{2}}\ {\isaliteral{5C3C52696768746172726F773E}{\isasymRightarrow}}\ P\isaliteral{5C3C5E627375623E}{}\isactrlbsub trm\isaliteral{5C3C5E657375623E}{}\isactrlesub }
  \end{equation}\smallskip

  \noindent
  which we must use in order to infer \isa{P\isaliteral{5C3C5E627375623E}{}\isactrlbsub trm\isaliteral{5C3C5E657375623E}{}\isactrlesub }. Clearly, we cannot
  use this implication directly, because we have no information whether or not \isa{x\isaliteral{5C3C5E697375623E}{}\isactrlisub {\isadigit{1}}} is fresh for \isa{c}.  However, we can use Properties
  \ref{supppermeq} and \ref{avoiding} to rename \isa{x\isaliteral{5C3C5E697375623E}{}\isactrlisub {\isadigit{1}}}. We know
  by Theorem~\ref{suppfa} that \isa{{\isaliteral{7B}{\isacharbraceleft}}atom\ x\isaliteral{5C3C5E697375623E}{}\isactrlisub {\isadigit{1}}{\isaliteral{7D}{\isacharbraceright}}\ {\isaliteral{23}{\isacharhash}}\isaliteral{5C3C5E7375703E}{}\isactrlsup {\isaliteral{2A}{\isacharasterisk}}\ Lam\isaliteral{5C3C5E7375703E}{}\isactrlsup {\isaliteral{5C3C616C7068613E}{\isasymalpha}}\ x\isaliteral{5C3C5E697375623E}{}\isactrlisub {\isadigit{1}}\ x\isaliteral{5C3C5E697375623E}{}\isactrlisub {\isadigit{2}}} (since its support is \isa{supp\ x\isaliteral{5C3C5E697375623E}{}\isactrlisub {\isadigit{2}}\ {\isaliteral{2D}{\isacharminus}}\ {\isaliteral{7B}{\isacharbraceleft}}atom\ x\isaliteral{5C3C5E697375623E}{}\isactrlisub {\isadigit{1}}{\isaliteral{7D}{\isacharbraceright}}}). Property \ref{avoiding} provides us then with a
  permutation \isa{{\isaliteral{5C3C70693E}{\isasympi}}}, such that \isa{{\isaliteral{7B}{\isacharbraceleft}}atom\ {\isaliteral{28}{\isacharparenleft}}{\isaliteral{5C3C70693E}{\isasympi}}\ {\isaliteral{5C3C62756C6C65743E}{\isasymbullet}}\ x\isaliteral{5C3C5E697375623E}{}\isactrlisub {\isadigit{1}}{\isaliteral{29}{\isacharparenright}}{\isaliteral{7D}{\isacharbraceright}}\ {\isaliteral{23}{\isacharhash}}\isaliteral{5C3C5E7375703E}{}\isactrlsup {\isaliteral{2A}{\isacharasterisk}}\ c} and \mbox{\isa{supp\ {\isaliteral{28}{\isacharparenleft}}Lam\isaliteral{5C3C5E7375703E}{}\isactrlsup {\isaliteral{5C3C616C7068613E}{\isasymalpha}}\ x\isaliteral{5C3C5E697375623E}{}\isactrlisub {\isadigit{1}}\ x\isaliteral{5C3C5E697375623E}{}\isactrlisub {\isadigit{2}}{\isaliteral{29}{\isacharparenright}}\ {\isaliteral{23}{\isacharhash}}\isaliteral{5C3C5E7375703E}{}\isactrlsup {\isaliteral{2A}{\isacharasterisk}}\ {\isaliteral{5C3C70693E}{\isasympi}}}} hold.
  By using Property \ref{supppermeq}, we can infer from the latter that 

  \[
  \isa{Lam\isaliteral{5C3C5E7375703E}{}\isactrlsup {\isaliteral{5C3C616C7068613E}{\isasymalpha}}\ {\isaliteral{28}{\isacharparenleft}}{\isaliteral{5C3C70693E}{\isasympi}}\ {\isaliteral{5C3C62756C6C65743E}{\isasymbullet}}\ x\isaliteral{5C3C5E697375623E}{}\isactrlisub {\isadigit{1}}{\isaliteral{29}{\isacharparenright}}\ {\isaliteral{28}{\isacharparenleft}}{\isaliteral{5C3C70693E}{\isasympi}}\ {\isaliteral{5C3C62756C6C65743E}{\isasymbullet}}\ x\isaliteral{5C3C5E697375623E}{}\isactrlisub {\isadigit{2}}{\isaliteral{29}{\isacharparenright}}\ {\isaliteral{3D}{\isacharequal}}\ Lam\isaliteral{5C3C5E7375703E}{}\isactrlsup {\isaliteral{5C3C616C7068613E}{\isasymalpha}}\ x\isaliteral{5C3C5E697375623E}{}\isactrlisub {\isadigit{1}}\ x\isaliteral{5C3C5E697375623E}{}\isactrlisub {\isadigit{2}}} 
  \]\smallskip

  \noindent
  holds. We can use this equation in the assumption \eqref{assm}, and hence
  use the implication \eqref{imp} with the renamed \isa{{\isaliteral{5C3C70693E}{\isasympi}}\ {\isaliteral{5C3C62756C6C65743E}{\isasymbullet}}\ x\isaliteral{5C3C5E697375623E}{}\isactrlisub {\isadigit{1}}}
  and \isa{{\isaliteral{5C3C70693E}{\isasympi}}\ {\isaliteral{5C3C62756C6C65743E}{\isasymbullet}}\ x\isaliteral{5C3C5E697375623E}{}\isactrlisub {\isadigit{2}}} for concluding this case.

  The \isa{Let{\isaliteral{5F}{\isacharunderscore}}pat\isaliteral{5C3C5E7375703E}{}\isactrlsup {\isaliteral{5C3C616C7068613E}{\isasymalpha}}}-case involving a deep binder is slightly more complicated.
  We have the assumption

  \begin{equation}\label{assmtwo}
  \isa{y\ {\isaliteral{3D}{\isacharequal}}\ Let{\isaliteral{5F}{\isacharunderscore}}pat\isaliteral{5C3C5E7375703E}{}\isactrlsup {\isaliteral{5C3C616C7068613E}{\isasymalpha}}\ x\isaliteral{5C3C5E697375623E}{}\isactrlisub {\isadigit{1}}\ x\isaliteral{5C3C5E697375623E}{}\isactrlisub {\isadigit{2}}\ x\isaliteral{5C3C5E697375623E}{}\isactrlisub {\isadigit{3}}}
  \end{equation}\smallskip

  \noindent
  and the implication from the stronger cases lemma

  \begin{equation}\label{impletpat}
  \isa{{\isaliteral{5C3C666F72616C6C3E}{\isasymforall}}x\isaliteral{5C3C5E697375623E}{}\isactrlisub {\isadigit{1}}\ x\isaliteral{5C3C5E697375623E}{}\isactrlisub {\isadigit{2}}\ x\isaliteral{5C3C5E697375623E}{}\isactrlisub {\isadigit{3}}{\isaliteral{2E}{\isachardot}}\ set\ {\isaliteral{28}{\isacharparenleft}}bn\isaliteral{5C3C5E7375703E}{}\isactrlsup {\isaliteral{5C3C616C7068613E}{\isasymalpha}}\ x\isaliteral{5C3C5E697375623E}{}\isactrlisub {\isadigit{1}}{\isaliteral{29}{\isacharparenright}}\ {\isaliteral{23}{\isacharhash}}\isaliteral{5C3C5E7375703E}{}\isactrlsup {\isaliteral{2A}{\isacharasterisk}}\ c\ {\isaliteral{5C3C616E643E}{\isasymand}}\ y\ {\isaliteral{3D}{\isacharequal}}\ Let{\isaliteral{5F}{\isacharunderscore}}pat\isaliteral{5C3C5E7375703E}{}\isactrlsup {\isaliteral{5C3C616C7068613E}{\isasymalpha}}\ x\isaliteral{5C3C5E697375623E}{}\isactrlisub {\isadigit{1}}\ x\isaliteral{5C3C5E697375623E}{}\isactrlisub {\isadigit{2}}\ x\isaliteral{5C3C5E697375623E}{}\isactrlisub {\isadigit{3}}\ {\isaliteral{5C3C52696768746172726F773E}{\isasymRightarrow}}\ P\isaliteral{5C3C5E627375623E}{}\isactrlbsub trm\isaliteral{5C3C5E657375623E}{}\isactrlesub }
  \end{equation}\smallskip

  \noindent
  The reason that this case is more complicated is that we cannot directly apply Property 
  \ref{avoiding} for obtaining a renaming permutation. Property \ref{avoiding} requires
  that the binders are fresh for the term in which we want to perform the renaming. But
  this is not true in terms such as (using an informal notation)

  \[
  \isa{Let\ {\isaliteral{28}{\isacharparenleft}}x{\isaliteral{2C}{\isacharcomma}}\ y{\isaliteral{29}{\isacharparenright}}\ {\isaliteral{3A}{\isacharcolon}}{\isaliteral{3D}{\isacharequal}}\ {\isaliteral{28}{\isacharparenleft}}x{\isaliteral{2C}{\isacharcomma}}\ y{\isaliteral{29}{\isacharparenright}}\ in\ {\isaliteral{28}{\isacharparenleft}}x{\isaliteral{2C}{\isacharcomma}}\ y{\isaliteral{29}{\isacharparenright}}}
  \]\smallskip

  \noindent
  where \isa{x} and \isa{y} are bound in the term, but are also free
  in the right-hand side of the assignment. We can, however, obtain such a renaming permutation, say
  \isa{{\isaliteral{5C3C70693E}{\isasympi}}}, for the abstraction \isa{{\isaliteral{5B}{\isacharbrackleft}}bn\isaliteral{5C3C5E7375703E}{}\isactrlsup {\isaliteral{5C3C616C7068613E}{\isasymalpha}}\ x\isaliteral{5C3C5E697375623E}{}\isactrlisub {\isadigit{1}}{\isaliteral{5D}{\isacharbrackright}}\isaliteral{5C3C5E627375623E}{}\isactrlbsub list\isaliteral{5C3C5E657375623E}{}\isactrlesub {\isaliteral{2E}{\isachardot}}x\isaliteral{5C3C5E697375623E}{}\isactrlisub {\isadigit{3}}}. As a result we have \mbox{\isa{set\ {\isaliteral{28}{\isacharparenleft}}bn\isaliteral{5C3C5E7375703E}{}\isactrlsup {\isaliteral{5C3C616C7068613E}{\isasymalpha}}\ {\isaliteral{28}{\isacharparenleft}}{\isaliteral{5C3C70693E}{\isasympi}}\ {\isaliteral{5C3C62756C6C65743E}{\isasymbullet}}\ x\isaliteral{5C3C5E697375623E}{}\isactrlisub {\isadigit{1}}{\isaliteral{29}{\isacharparenright}}{\isaliteral{29}{\isacharparenright}}\ {\isaliteral{23}{\isacharhash}}\isaliteral{5C3C5E7375703E}{}\isactrlsup {\isaliteral{2A}{\isacharasterisk}}\ c}} and \isa{{\isaliteral{5B}{\isacharbrackleft}}bn\isaliteral{5C3C5E7375703E}{}\isactrlsup {\isaliteral{5C3C616C7068613E}{\isasymalpha}}\ {\isaliteral{28}{\isacharparenleft}}{\isaliteral{5C3C70693E}{\isasympi}}\ {\isaliteral{5C3C62756C6C65743E}{\isasymbullet}}\ x\isaliteral{5C3C5E697375623E}{}\isactrlisub {\isadigit{1}}{\isaliteral{29}{\isacharparenright}}{\isaliteral{5D}{\isacharbrackright}}\isaliteral{5C3C5E627375623E}{}\isactrlbsub list\isaliteral{5C3C5E657375623E}{}\isactrlesub {\isaliteral{2E}{\isachardot}}{\isaliteral{28}{\isacharparenleft}}{\isaliteral{5C3C70693E}{\isasympi}}\ {\isaliteral{5C3C62756C6C65743E}{\isasymbullet}}\ x\isaliteral{5C3C5E697375623E}{}\isactrlisub {\isadigit{3}}{\isaliteral{29}{\isacharparenright}}\ {\isaliteral{3D}{\isacharequal}}\ {\isaliteral{5B}{\isacharbrackleft}}bn\isaliteral{5C3C5E7375703E}{}\isactrlsup {\isaliteral{5C3C616C7068613E}{\isasymalpha}}\ x\isaliteral{5C3C5E697375623E}{}\isactrlisub {\isadigit{1}}{\isaliteral{5D}{\isacharbrackright}}\isaliteral{5C3C5E627375623E}{}\isactrlbsub list\isaliteral{5C3C5E657375623E}{}\isactrlesub {\isaliteral{2E}{\isachardot}}x\isaliteral{5C3C5E697375623E}{}\isactrlisub {\isadigit{3}}} (remember \isa{set} and \isa{bn\isaliteral{5C3C5E7375703E}{}\isactrlsup {\isaliteral{5C3C616C7068613E}{\isasymalpha}}} are equivariant).  Now the quasi-injective property for \isa{Let{\isaliteral{5F}{\isacharunderscore}}pat\isaliteral{5C3C5E7375703E}{}\isactrlsup {\isaliteral{5C3C616C7068613E}{\isasymalpha}}} states that

  \[
  \infer{\isa{Let{\isaliteral{5F}{\isacharunderscore}}pat\isaliteral{5C3C5E7375703E}{}\isactrlsup {\isaliteral{5C3C616C7068613E}{\isasymalpha}}\ p\ t\isaliteral{5C3C5E697375623E}{}\isactrlisub {\isadigit{1}}\ t\isaliteral{5C3C5E697375623E}{}\isactrlisub {\isadigit{2}}\ {\isaliteral{3D}{\isacharequal}}\ Let{\isaliteral{5F}{\isacharunderscore}}pat\isaliteral{5C3C5E7375703E}{}\isactrlsup {\isaliteral{5C3C616C7068613E}{\isasymalpha}}\ p{\isaliteral{5C3C5052494D453E}{\isasymPRIME}}\ t{\isaliteral{5C3C5052494D453E}{\isasymPRIME}}\isaliteral{5C3C5E697375623E}{}\isactrlisub {\isadigit{1}}\ t{\isaliteral{5C3C5052494D453E}{\isasymPRIME}}\isaliteral{5C3C5E697375623E}{}\isactrlisub {\isadigit{2}}}}
  {\isa{{\isaliteral{5B}{\isacharbrackleft}}bn\isaliteral{5C3C5E7375703E}{}\isactrlsup {\isaliteral{5C3C616C7068613E}{\isasymalpha}}\ p{\isaliteral{5D}{\isacharbrackright}}\isaliteral{5C3C5E627375623E}{}\isactrlbsub list\isaliteral{5C3C5E657375623E}{}\isactrlesub {\isaliteral{2E}{\isachardot}}\ t\isaliteral{5C3C5E697375623E}{}\isactrlisub {\isadigit{2}}\ {\isaliteral{3D}{\isacharequal}}\ {\isaliteral{5B}{\isacharbrackleft}}bn\isaliteral{5C3C5E7375703E}{}\isactrlsup {\isaliteral{5C3C616C7068613E}{\isasymalpha}}\ p{\isaliteral{27}{\isacharprime}}{\isaliteral{5D}{\isacharbrackright}}\isaliteral{5C3C5E627375623E}{}\isactrlbsub list\isaliteral{5C3C5E657375623E}{}\isactrlesub {\isaliteral{2E}{\isachardot}}\ t{\isaliteral{5C3C5052494D453E}{\isasymPRIME}}\isaliteral{5C3C5E697375623E}{}\isactrlisub {\isadigit{2}}}\;\; & 
  \isa{p\ {\isaliteral{5C3C617070726F783E}{\isasymapprox}}{\isaliteral{5C3C414C3E}{\isasymAL}}\isaliteral{5C3C5E627375623E}{}\isactrlbsub bn\isaliteral{5C3C5E657375623E}{}\isactrlesub \ p{\isaliteral{5C3C5052494D453E}{\isasymPRIME}}}\;\; & \isa{t\isaliteral{5C3C5E697375623E}{}\isactrlisub {\isadigit{1}}\ {\isaliteral{3D}{\isacharequal}}\ t{\isaliteral{5C3C5052494D453E}{\isasymPRIME}}\isaliteral{5C3C5E697375623E}{}\isactrlisub {\isadigit{1}}}}
  \]\smallskip

  \noindent
  Since all atoms in a pattern are bound by \isa{Let{\isaliteral{5F}{\isacharunderscore}}pat\isaliteral{5C3C5E7375703E}{}\isactrlsup {\isaliteral{5C3C616C7068613E}{\isasymalpha}}}, we can infer
  that \isa{{\isaliteral{28}{\isacharparenleft}}{\isaliteral{5C3C70693E}{\isasympi}}\ {\isaliteral{5C3C62756C6C65743E}{\isasymbullet}}\ x\isaliteral{5C3C5E697375623E}{}\isactrlisub {\isadigit{1}}{\isaliteral{29}{\isacharparenright}}\ {\isaliteral{5C3C617070726F783E}{\isasymapprox}}{\isaliteral{5C3C414C3E}{\isasymAL}}\isaliteral{5C3C5E627375623E}{}\isactrlbsub bn\isaliteral{5C3C5E657375623E}{}\isactrlesub \ x\isaliteral{5C3C5E697375623E}{}\isactrlisub {\isadigit{1}}} holds for every \isa{{\isaliteral{5C3C70693E}{\isasympi}}}. Therefore we have that

  \[
  \isa{Let{\isaliteral{5F}{\isacharunderscore}}pat\isaliteral{5C3C5E7375703E}{}\isactrlsup {\isaliteral{5C3C616C7068613E}{\isasymalpha}}\ {\isaliteral{28}{\isacharparenleft}}{\isaliteral{5C3C70693E}{\isasympi}}\ {\isaliteral{5C3C62756C6C65743E}{\isasymbullet}}\ x\isaliteral{5C3C5E697375623E}{}\isactrlisub {\isadigit{1}}{\isaliteral{29}{\isacharparenright}}\ x\isaliteral{5C3C5E697375623E}{}\isactrlisub {\isadigit{2}}\ {\isaliteral{28}{\isacharparenleft}}{\isaliteral{5C3C70693E}{\isasympi}}\ {\isaliteral{5C3C62756C6C65743E}{\isasymbullet}}\ x\isaliteral{5C3C5E697375623E}{}\isactrlisub {\isadigit{3}}{\isaliteral{29}{\isacharparenright}}\ {\isaliteral{3D}{\isacharequal}}\ Let{\isaliteral{5F}{\isacharunderscore}}pat\isaliteral{5C3C5E7375703E}{}\isactrlsup {\isaliteral{5C3C616C7068613E}{\isasymalpha}}\ x\isaliteral{5C3C5E697375623E}{}\isactrlisub {\isadigit{1}}\ x\isaliteral{5C3C5E697375623E}{}\isactrlisub {\isadigit{2}}\ x\isaliteral{5C3C5E697375623E}{}\isactrlisub {\isadigit{3}}}  
  \]\smallskip
  
  \noindent
  Taking the left-hand side in the assumption shown in \eqref{assmtwo}, we can use
  the implication \eqref{impletpat} from the stronger cases lemma to infer \isa{P\isaliteral{5C3C5E627375623E}{}\isactrlbsub trm\isaliteral{5C3C5E657375623E}{}\isactrlesub }, as needed.

  The remaining difficulty is when a deep binder contains some atoms that are
  bound and some that are free. An example is \isa{Let\isaliteral{5C3C5E7375703E}{}\isactrlsup {\isaliteral{5C3C616C7068613E}{\isasymalpha}}} in
  \eqref{letrecs}.  In such cases \isa{{\isaliteral{28}{\isacharparenleft}}{\isaliteral{5C3C70693E}{\isasympi}}\ {\isaliteral{5C3C62756C6C65743E}{\isasymbullet}}\ x\isaliteral{5C3C5E697375623E}{}\isactrlisub {\isadigit{1}}{\isaliteral{29}{\isacharparenright}}\ {\isaliteral{5C3C617070726F783E}{\isasymapprox}}{\isaliteral{5C3C414C3E}{\isasymAL}}\isaliteral{5C3C5E627375623E}{}\isactrlbsub bn\isaliteral{5C3C5E657375623E}{}\isactrlesub \ x\isaliteral{5C3C5E697375623E}{}\isactrlisub {\isadigit{1}}} does not hold in general. The idea however is
  that \isa{{\isaliteral{5C3C70693E}{\isasympi}}} only renames atoms that become bound. In this way \isa{{\isaliteral{5C3C70693E}{\isasympi}}}
  does not affect \isa{{\isaliteral{5C3C617070726F783E}{\isasymapprox}}{\isaliteral{5C3C414C3E}{\isasymAL}}\isaliteral{5C3C5E627375623E}{}\isactrlbsub bn\isaliteral{5C3C5E657375623E}{}\isactrlesub } (which only tracks alpha-equivalence of terms that are not
  under the binder). However, the problem is that the
  permutation operation \isa{{\isaliteral{5C3C70693E}{\isasympi}}\ {\isaliteral{5C3C62756C6C65743E}{\isasymbullet}}\ x\isaliteral{5C3C5E697375623E}{}\isactrlisub {\isadigit{1}}} applies to all atoms in \isa{x\isaliteral{5C3C5E697375623E}{}\isactrlisub {\isadigit{1}}}. To avoid this
  we introduce an auxiliary permutation operations, written \isa{{\isaliteral{5F}{\isacharunderscore}}\ {\isaliteral{5C3C62756C6C65743E}{\isasymbullet}}\isaliteral{5C3C5E627375623E}{}\isactrlbsub bn\isaliteral{5C3C5E657375623E}{}\isactrlesub \ {\isaliteral{5F}{\isacharunderscore}}}, for deep binders that only permutes bound atoms (or
  more precisely the atoms specified by the \isa{bn}-functions) and leaves
  the other atoms unchanged. Like the functions \isa{fa{\isaliteral{5F}{\isacharunderscore}}bn}$_{1..m}$, we
  can define these permutation operations over raw terms analysing how the functions \isa{bn}$_{1..m}$ are defined. Assuming the user specified a clause

  \[  
  \isa{bn\ {\isaliteral{28}{\isacharparenleft}}C\ x\isaliteral{5C3C5E697375623E}{}\isactrlisub {\isadigit{1}}\ {\isaliteral{5C3C646F74733E}{\isasymdots}}\ x\isaliteral{5C3C5E697375623E}{}\isactrlisub r{\isaliteral{29}{\isacharparenright}}\ {\isaliteral{3D}{\isacharequal}}\ rhs}
  \]\smallskip

  \noindent
  we define \isa{{\isaliteral{5C3C70693E}{\isasympi}}\ {\isaliteral{5C3C62756C6C65743E}{\isasymbullet}}\isaliteral{5C3C5E627375623E}{}\isactrlbsub bn\isaliteral{5C3C5E657375623E}{}\isactrlesub \ {\isaliteral{28}{\isacharparenleft}}C\ x\isaliteral{5C3C5E697375623E}{}\isactrlisub {\isadigit{1}}\ {\isaliteral{5C3C646F74733E}{\isasymdots}}\ x\isaliteral{5C3C5E697375623E}{}\isactrlisub r{\isaliteral{29}{\isacharparenright}}\ {\isaliteral{5C3C65717569763E}{\isasymequiv}}\ C\ y\isaliteral{5C3C5E697375623E}{}\isactrlisub {\isadigit{1}}\ {\isaliteral{5C3C646F74733E}{\isasymdots}}\ y\isaliteral{5C3C5E697375623E}{}\isactrlisub r} with \isa{y\isaliteral{5C3C5E697375623E}{}\isactrlisub i} determined as follows:

  \[\mbox{
  \begin{tabular}{c@ {\hspace{2mm}}p{0.9\textwidth}}
  $\bullet$ & \isa{y\isaliteral{5C3C5E697375623E}{}\isactrlisub i\ {\isaliteral{5C3C65717569763E}{\isasymequiv}}\ x\isaliteral{5C3C5E697375623E}{}\isactrlisub i} provided \isa{x\isaliteral{5C3C5E697375623E}{}\isactrlisub i} does not occur in \isa{rhs}\\
  $\bullet$ & \isa{y\isaliteral{5C3C5E697375623E}{}\isactrlisub i\ {\isaliteral{5C3C65717569763E}{\isasymequiv}}\ {\isaliteral{5C3C70693E}{\isasympi}}\ {\isaliteral{5C3C62756C6C65743E}{\isasymbullet}}\isaliteral{5C3C5E627375623E}{}\isactrlbsub bn\isaliteral{5C3C5E657375623E}{}\isactrlesub \ x\isaliteral{5C3C5E697375623E}{}\isactrlisub i} provided \isa{bn\ x\isaliteral{5C3C5E697375623E}{}\isactrlisub i} is in \isa{rhs}\\
  $\bullet$ & \isa{y\isaliteral{5C3C5E697375623E}{}\isactrlisub i\ {\isaliteral{5C3C65717569763E}{\isasymequiv}}\ {\isaliteral{5C3C70693E}{\isasympi}}\ {\isaliteral{5C3C62756C6C65743E}{\isasymbullet}}\ x\isaliteral{5C3C5E697375623E}{}\isactrlisub i} otherwise
  \end{tabular}}
  \]\smallskip

  \noindent
  Using again the quotient package  we can lift the auxiliary permutation operations
  \isa{{\isaliteral{5F}{\isacharunderscore}}\ {\isaliteral{5C3C62756C6C65743E}{\isasymbullet}}\isaliteral{5C3C5E627375623E}{}\isactrlbsub bn\isaliteral{5C3C5E657375623E}{}\isactrlesub \ {\isaliteral{5F}{\isacharunderscore}}}
  to alpha-equated terms. Moreover we can prove the following two properties:

  \begin{lem}\label{permutebn} 
  Given a binding function \isa{bn\isaliteral{5C3C5E7375703E}{}\isactrlsup {\isaliteral{5C3C616C7068613E}{\isasymalpha}}} and auxiliary equivalence \isa{{\isaliteral{5C3C617070726F783E}{\isasymapprox}}{\isaliteral{5C3C414C3E}{\isasymAL}}\isaliteral{5C3C5E627375623E}{}\isactrlbsub bn\isaliteral{5C3C5E657375623E}{}\isactrlesub } 
  then for all \isa{{\isaliteral{5C3C70693E}{\isasympi}}}\smallskip\\
  {\it (i)} \isa{{\isaliteral{5C3C70693E}{\isasympi}}\ {\isaliteral{5C3C62756C6C65743E}{\isasymbullet}}\ {\isaliteral{28}{\isacharparenleft}}bn\isaliteral{5C3C5E7375703E}{}\isactrlsup {\isaliteral{5C3C616C7068613E}{\isasymalpha}}\ x{\isaliteral{29}{\isacharparenright}}\ {\isaliteral{3D}{\isacharequal}}\ bn\isaliteral{5C3C5E7375703E}{}\isactrlsup {\isaliteral{5C3C616C7068613E}{\isasymalpha}}\ {\isaliteral{28}{\isacharparenleft}}{\isaliteral{5C3C70693E}{\isasympi}}\ {\isaliteral{5C3C62756C6C65743E}{\isasymbullet}}{\isaliteral{5C3C414C3E}{\isasymAL}}\isaliteral{5C3C5E627375623E}{}\isactrlbsub bn\isaliteral{5C3C5E657375623E}{}\isactrlesub \ x{\isaliteral{29}{\isacharparenright}}} and\\ 
  {\it (ii)} \isa{{\isaliteral{28}{\isacharparenleft}}{\isaliteral{5C3C70693E}{\isasympi}}\ \ {\isaliteral{5C3C62756C6C65743E}{\isasymbullet}}{\isaliteral{5C3C414C3E}{\isasymAL}}\isaliteral{5C3C5E627375623E}{}\isactrlbsub bn\isaliteral{5C3C5E657375623E}{}\isactrlesub \ x{\isaliteral{29}{\isacharparenright}}\ {\isaliteral{5C3C617070726F783E}{\isasymapprox}}{\isaliteral{5C3C414C3E}{\isasymAL}}\isaliteral{5C3C5E627375623E}{}\isactrlbsub bn\isaliteral{5C3C5E657375623E}{}\isactrlesub \ x}.
  \end{lem}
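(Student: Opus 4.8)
The plan is to prove both parts first for \emph{raw} terms and then transfer them to alpha-equated terms with the quotient package, just as was done for the other lifted operations in Section~\ref{sec:alpha}. On raw terms the two statements are proved simultaneously by the mutual induction principle that the function package derives for the (terminating) $bn$-functions --- the same induction already used to show equivariance of $bn_{1..m}$. The point that makes the induction go through smoothly is that the auxiliary permutation operations $\bullet_{bn}$ and the auxiliary relations $\approx_{bn}$ were \emph{defined} clause-by-clause in lockstep with the $bn$-functions, so each inductive case reduces to inspecting a single defining clause $bn\,(C\,x_1 \ldots x_r) = \mathit{rhs}$.

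For part~(i), fix such a clause. By definition $\pi \bullet_{bn}(C\,x_1 \ldots x_r) = C\,y_1 \ldots y_r$ with the $y_i$ chosen by the three-way case split recalled above, and both $bn\,(C\,x_1 \ldots x_r)$ and $bn\,(C\,y_1 \ldots y_r)$ are computed by evaluating $\mathit{rhs}$, which by the syntactic restriction on binding functions is built only from the empty set or list, singletons $\{atom\,x_i\}$ or $[atom\,x_i]$, recursive calls $bn_j\,x_i$, and unions or appends. I would then run a secondary structural analysis of $\mathit{rhs}$: for the empty case both sides are empty; for a singleton $\{atom\,x_i\}$ the argument $x_i$ occurs in $\mathit{rhs}$ without a recursive call, so $y_i = \pi \bullet x_i$ and $\pi \bullet \{atom\,x_i\} = \{atom\,(\pi \bullet x_i)\} = \{atom\,y_i\}$ by equivariance of $atom$ and $atoms$ together with the permutation action on sets and lists; for a recursive call $bn_j\,x_i$ we have $y_i = \pi \bullet_{bn} x_i$, so the goal $\pi \bullet (bn_j\,x_i) = bn_j\,(\pi \bullet_{bn} x_i)$ is exactly the induction hypothesis for $x_i$; and the union/append cases follow by pushing $\pi \bullet$ through the operator and combining the sub-results.

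For part~(ii), I would apply the introduction rule of $\approx_{bn}$ for the clause of $C$; it then remains to discharge, for each argument, the premise $y_i\,R_i\,x_i$, where $R_i$ is fixed by the same case split that defines $y_i$. If $x_i$ does not occur in $\mathit{rhs}$ then $y_i = x_i$ and $R_i$ is either $\approx_{ty}$ (for a recursive argument) or equality (for a non-recursive one); in the former case we invoke reflexivity of $\approx_{ty}$ from Lemma~\ref{equiv}, in the latter the goal is $x_i = x_i$. If $x_i$ occurs in $\mathit{rhs}$ through a recursive call $bn_j\,x_i$ then $y_i = \pi \bullet_{bn} x_i$ and $R_i$ is $\approx_{bn_j}$, so the goal $(\pi \bullet_{bn} x_i) \approx_{bn_j} x_i$ is the induction hypothesis. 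If $x_i$ occurs in $\mathit{rhs}$ only as a shallow binder, $R_i$ is $\mathit{True}$ and nothing needs to be shown. With both facts established on raw terms, lifting them requires only the respectfulness of $\bullet_{bn}$ and $bn_{1..m}$ with respect to the alpha-equivalences (the latter already available), after which $bn^\alpha$, the lifted $\bullet_{bn}$ and $\approx^{\alpha}_{bn}$ satisfy the stated equations.

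Overall this is mostly bookkeeping. The one place that needs care is checking that the three parallel definitions --- of $bn_j$, of $\bullet_{bn}$, and of $\approx_{bn}$ --- split into cases in genuinely the same way; in particular the case ``$x_i$ occurs in $\mathit{rhs}$ without a recursive call'' must always be a shallow singleton, which is exactly what the restriction on binding functions guarantees. I do not expect any real obstacle beyond this alignment and the routine equivariance facts it relies on.
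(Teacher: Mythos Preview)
Your proposal is correct and, at its core, is exactly the paper's argument: the paper's entire proof is the single sentence ``By induction on \isa{x}. The properties follow by unfolding of the definitions,'' and your case analysis on the structure of \textit{rhs} (empty, singleton, recursive call, union/append) together with the matching case split in the definitions of $\bullet_{bn}$ and $\approx_{bn}$ is precisely what that unfolding amounts to.

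The only organisational difference is that you explicitly work on raw terms and then lift, whereas the paper, having already lifted $\bullet_{bn}$ and the defining equations of $bn^\alpha$ before stating the lemma, appears to perform the induction directly on the alpha-equated $x$ using the lifted induction principle from \eqref{induct}. Both routes are sound; yours makes the respectfulness obligation for $\bullet_{bn}$ explicit (which the paper silently discharged when it said ``Using again the quotient package we can lift the auxiliary permutation operations''), while the paper's route avoids re-proving anything at the raw level. Either way the content of the inductive step is identical to what you wrote.
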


  \begin{proof} 
  By induction on \isa{x}. The properties follow by unfolding of the
  definitions.
  \end{proof}

  \noindent
  The first property states that a permutation applied to a binding function
  is equivalent to first permuting the binders and then calculating the bound
  atoms. The second states that \isa{{\isaliteral{5F}{\isacharunderscore}}\ {\isaliteral{5C3C62756C6C65743E}{\isasymbullet}}{\isaliteral{5C3C414C3E}{\isasymAL}}\isaliteral{5C3C5E627375623E}{}\isactrlbsub bn\isaliteral{5C3C5E657375623E}{}\isactrlesub \ {\isaliteral{5F}{\isacharunderscore}}} preserves
  \isa{{\isaliteral{5C3C617070726F783E}{\isasymapprox}}{\isaliteral{5C3C414C3E}{\isasymAL}}\isaliteral{5C3C5E627375623E}{}\isactrlbsub bn\isaliteral{5C3C5E657375623E}{}\isactrlesub }.  The main point of the auxiliary
  permutation functions is that they allow us to rename just the bound atoms in a
  term, without changing anything else.
  
  Having the auxiliary permutation function in place, we can now solve all remaining cases. 
  For the \isa{Let\isaliteral{5C3C5E7375703E}{}\isactrlsup {\isaliteral{5C3C616C7068613E}{\isasymalpha}}} term-constructor, for example, we can by Property \ref{avoiding} 
  obtain a \isa{{\isaliteral{5C3C70693E}{\isasympi}}} such that 

  \[
  \isa{{\isaliteral{28}{\isacharparenleft}}{\isaliteral{5C3C70693E}{\isasympi}}\ {\isaliteral{5C3C62756C6C65743E}{\isasymbullet}}\ {\isaliteral{28}{\isacharparenleft}}set\ {\isaliteral{28}{\isacharparenleft}}bn\isaliteral{5C3C5E7375703E}{}\isactrlsup {\isaliteral{5C3C616C7068613E}{\isasymalpha}}\ x\isaliteral{5C3C5E697375623E}{}\isactrlisub {\isadigit{1}}{\isaliteral{29}{\isacharparenright}}{\isaliteral{29}{\isacharparenright}}\ {\isaliteral{23}{\isacharhash}}\isaliteral{5C3C5E7375703E}{}\isactrlsup {\isaliteral{2A}{\isacharasterisk}}\ c} \hspace{10mm}
  \isa{{\isaliteral{5C3C70693E}{\isasympi}}\ {\isaliteral{5C3C62756C6C65743E}{\isasymbullet}}\ {\isaliteral{5B}{\isacharbrackleft}}bn\isaliteral{5C3C5E7375703E}{}\isactrlsup {\isaliteral{5C3C616C7068613E}{\isasymalpha}}\ x\isaliteral{5C3C5E697375623E}{}\isactrlisub {\isadigit{1}}{\isaliteral{5D}{\isacharbrackright}}\isaliteral{5C3C5E627375623E}{}\isactrlbsub list\isaliteral{5C3C5E657375623E}{}\isactrlesub {\isaliteral{2E}{\isachardot}}\ x\isaliteral{5C3C5E697375623E}{}\isactrlisub {\isadigit{2}}\ {\isaliteral{3D}{\isacharequal}}\ {\isaliteral{5B}{\isacharbrackleft}}bn\isaliteral{5C3C5E7375703E}{}\isactrlsup {\isaliteral{5C3C616C7068613E}{\isasymalpha}}\ x\isaliteral{5C3C5E697375623E}{}\isactrlisub {\isadigit{1}}{\isaliteral{5D}{\isacharbrackright}}\isaliteral{5C3C5E627375623E}{}\isactrlbsub list\isaliteral{5C3C5E657375623E}{}\isactrlesub {\isaliteral{2E}{\isachardot}}\ x\isaliteral{5C3C5E697375623E}{}\isactrlisub {\isadigit{2}}} 
  \]\smallskip

  \noindent
  hold. Using the first part of Lemma \ref{permutebn}, we can simplify this
  to \isa{set\ {\isaliteral{28}{\isacharparenleft}}bn\isaliteral{5C3C5E7375703E}{}\isactrlsup {\isaliteral{5C3C616C7068613E}{\isasymalpha}}\ {\isaliteral{28}{\isacharparenleft}}{\isaliteral{5C3C70693E}{\isasympi}}\ {\isaliteral{5C3C62756C6C65743E}{\isasymbullet}}{\isaliteral{5C3C414C3E}{\isasymAL}}\isaliteral{5C3C5E627375623E}{}\isactrlbsub bn\isaliteral{5C3C5E657375623E}{}\isactrlesub \ x\isaliteral{5C3C5E697375623E}{}\isactrlisub {\isadigit{1}}{\isaliteral{29}{\isacharparenright}}{\isaliteral{29}{\isacharparenright}}\ {\isaliteral{23}{\isacharhash}}\isaliteral{5C3C5E7375703E}{}\isactrlsup {\isaliteral{2A}{\isacharasterisk}}\ c} and 
  \mbox{\isa{{\isaliteral{5B}{\isacharbrackleft}}bn\isaliteral{5C3C5E7375703E}{}\isactrlsup {\isaliteral{5C3C616C7068613E}{\isasymalpha}}\ {\isaliteral{28}{\isacharparenleft}}{\isaliteral{5C3C70693E}{\isasympi}}\ {\isaliteral{5C3C62756C6C65743E}{\isasymbullet}}{\isaliteral{5C3C414C3E}{\isasymAL}}\isaliteral{5C3C5E627375623E}{}\isactrlbsub bn\isaliteral{5C3C5E657375623E}{}\isactrlesub \ x\isaliteral{5C3C5E697375623E}{}\isactrlisub {\isadigit{1}}{\isaliteral{29}{\isacharparenright}}{\isaliteral{5D}{\isacharbrackright}}\isaliteral{5C3C5E627375623E}{}\isactrlbsub list\isaliteral{5C3C5E657375623E}{}\isactrlesub {\isaliteral{2E}{\isachardot}}\ {\isaliteral{28}{\isacharparenleft}}{\isaliteral{5C3C70693E}{\isasympi}}\ {\isaliteral{5C3C62756C6C65743E}{\isasymbullet}}\ x\isaliteral{5C3C5E697375623E}{}\isactrlisub {\isadigit{2}}{\isaliteral{29}{\isacharparenright}}\ {\isaliteral{3D}{\isacharequal}}\ {\isaliteral{5B}{\isacharbrackleft}}bn\isaliteral{5C3C5E7375703E}{}\isactrlsup {\isaliteral{5C3C616C7068613E}{\isasymalpha}}\ x\isaliteral{5C3C5E697375623E}{}\isactrlisub {\isadigit{1}}{\isaliteral{5D}{\isacharbrackright}}\isaliteral{5C3C5E627375623E}{}\isactrlbsub list\isaliteral{5C3C5E657375623E}{}\isactrlesub {\isaliteral{2E}{\isachardot}}\ x\isaliteral{5C3C5E697375623E}{}\isactrlisub {\isadigit{2}}}}. Since
  \isa{{\isaliteral{28}{\isacharparenleft}}{\isaliteral{5C3C70693E}{\isasympi}}\ \ {\isaliteral{5C3C62756C6C65743E}{\isasymbullet}}{\isaliteral{5C3C414C3E}{\isasymAL}}\isaliteral{5C3C5E627375623E}{}\isactrlbsub bn\isaliteral{5C3C5E657375623E}{}\isactrlesub \ x\isaliteral{5C3C5E697375623E}{}\isactrlisub {\isadigit{1}}{\isaliteral{29}{\isacharparenright}}\ {\isaliteral{5C3C617070726F783E}{\isasymapprox}}{\isaliteral{5C3C414C3E}{\isasymAL}}\isaliteral{5C3C5E627375623E}{}\isactrlbsub bn\isaliteral{5C3C5E657375623E}{}\isactrlesub \ x\isaliteral{5C3C5E697375623E}{}\isactrlisub {\isadigit{1}}} holds by the second part,
  we can infer that

  \[
  \isa{Let\isaliteral{5C3C5E7375703E}{}\isactrlsup {\isaliteral{5C3C616C7068613E}{\isasymalpha}}\ {\isaliteral{28}{\isacharparenleft}}{\isaliteral{5C3C70693E}{\isasympi}}\ {\isaliteral{5C3C62756C6C65743E}{\isasymbullet}}{\isaliteral{5C3C414C3E}{\isasymAL}}\isaliteral{5C3C5E627375623E}{}\isactrlbsub bn\isaliteral{5C3C5E657375623E}{}\isactrlesub \ x\isaliteral{5C3C5E697375623E}{}\isactrlisub {\isadigit{1}}{\isaliteral{29}{\isacharparenright}}\ {\isaliteral{28}{\isacharparenleft}}{\isaliteral{5C3C70693E}{\isasympi}}\ {\isaliteral{5C3C62756C6C65743E}{\isasymbullet}}\ x\isaliteral{5C3C5E697375623E}{}\isactrlisub {\isadigit{2}}{\isaliteral{29}{\isacharparenright}}\ {\isaliteral{3D}{\isacharequal}}\ Let\isaliteral{5C3C5E7375703E}{}\isactrlsup {\isaliteral{5C3C616C7068613E}{\isasymalpha}}\ x\isaliteral{5C3C5E697375623E}{}\isactrlisub {\isadigit{1}}\ x\isaliteral{5C3C5E697375623E}{}\isactrlisub {\isadigit{2}}}  
  \]\smallskip

  \noindent
  holds. This allows us to use the implication from the strong cases
  lemma, and we are done.

  Consequently,  we can discharge all proof-obligations about having `covered all
  cases'. This completes the proof establishing that the weak induction principles imply 
  the strong induction principles. These strong induction principles have already proved 
  being very useful in practice, particularly for proving properties about 
  capture-avoiding substitution \cite{Urban08}.%
\end{isamarkuptext}%
\isamarkuptrue%
\isamarkupsection{Related Work\label{related}%
}
\isamarkuptrue%
\begin{isamarkuptext}%
To our knowledge the earliest usage of general binders in a theorem prover
  is described by Nara\-schew\-ski and Nipkow \cite{NaraschewskiNipkow99} with a
  formalisation of the algorithm W. This formalisation implements binding in
  type-schemes using a de-Bruijn indices representation. Since type-schemes in
  W contain only a single place where variables are bound, different indices
  do not refer to different binders (as in the usual de-Bruijn
  representation), but to different bound variables. A similar idea has been
  recently explored for general binders by Chargu\'eraud \cite{chargueraud09}
  in the locally nameless approach to
  binding.  There, de-Bruijn indices consist of two
  numbers, one referring to the place where a variable is bound, and the other
  to which variable is bound. The reasoning infrastructure for both
  representations of bindings comes for free in theorem provers like
  Isabelle/HOL and Coq, since the corresponding term-calculi can be implemented
  as `normal' datatypes.  However, in both approaches it seems difficult to
  achieve our fine-grained control over the `semantics' of bindings
  (i.e.~whether the order of binders should matter, or vacuous binders should
  be taken into account). To do so, one would require additional predicates
  that filter out unwanted terms. Our guess is that such predicates result in
  rather intricate formal reasoning. We are not aware of any formalisation of 
  a non-trivial language that uses Chargu\'eraud's idea.

  Another technique for representing binding is higher-order abstract syntax
  (HOAS), which for example is implemented in the Twelf system \cite{pfenningsystem}. 
  This representation technique supports very elegantly many aspects of
  \emph{single} binding, and impressive work by Lee et al~\cite{LeeCraryHarper07} 
  has been done that uses HOAS for mechanising the metatheory of SML. We
  are, however, not aware how multiple binders of SML are represented in this
  work. Judging from the submitted Twelf-solution for the POPLmark challenge,
  HOAS cannot easily deal with binding constructs where the number of bound
  variables is not fixed. For example, in the second part of this challenge,
  \isa{Let}s involve patterns that bind multiple variables at once. In
  such situations, HOAS seems to have to resort to the
  iterated-single-binders-approach with all the unwanted consequences when
  reasoning about the resulting terms.

  Two formalisations involving general binders have been 
  performed in older
  versions of Nominal Isabelle (one about Psi-calculi and one about algorithm W 
  \cite{BengtsonParow09,UrbanNipkow09}).  Both
  use the approach based on iterated single binders. Our experience with
  the latter formalisation has been disappointing. The major pain arose from
  the need to `unbind' bound variables and the resulting formal reasoning turned out to
  be rather unpleasant. In contrast, the unbinding can be 
  done in one step with our
  general binders described in this paper.

  The most closely related work to the one presented here is the Ott-tool by
  Sewell et al \cite{ott-jfp} and the C$\alpha$ml language by Pottier
  \cite{Pottier06}. Ott is a nifty front-end for creating \LaTeX{} documents
  from specifications of term-calculi involving general binders. For a subset
  of the specifications Ott can also generate theorem prover code using a `raw'
  representation of terms, and in Coq also a locally nameless
  representation. The developers of this tool have also put forward (on paper)
  a definition for alpha-equivalence and free variables for terms that can be
  specified in Ott.  This definition is rather different from ours, not using
  any nominal techniques.  To our knowledge there is no concrete mathematical
  result concerning this notion of alpha-equivalence and free variables. We
  have proved that our definitions lead to alpha-equated terms, whose support
  is as expected (that means bound atoms are removed from the support). We
  also showed that our specifications lift from `raw' terms to 
  alpha-equivalence classes. For this we have established (automatically) that every
  term-constructor and function defined for `raw' terms 
  is respectful w.r.t.~alpha-equivalence.

  Although we were heavily inspired by the syntax of Ott, its definition of
  alpha-equi\-valence is unsuitable for our extension of Nominal
  Isabelle. First, it is far too complicated to be a basis for automated
  proofs implemented on the ML-level of Isabelle/HOL. Second, it covers cases
  of binders depending on other binders, which just do not make sense for our
  alpha-equated terms (the corresponding \isa{fa}-functions would not lift). 
  Third, it allows empty types that have no meaning in a
  HOL-based theorem prover. We also had to generalise slightly Ott's binding
  clauses. In Ott one specifies binding clauses with a single body; we allow
  more than one. We have to do this, because this makes a difference for our
  notion of alpha-equivalence in case of \isacommand{binds (set)} and
  \isacommand{binds (set+)}. Consider the examples
  
  \[\mbox{
  \begin{tabular}{@ {}l@ {\hspace{2mm}}l@ {}}
  \isa{Foo\isaliteral{5C3C5E697375623E}{}\isactrlisub {\isadigit{1}}\ xs{\isaliteral{3A}{\isacharcolon}}{\isaliteral{3A}{\isacharcolon}}name\ fset\ t{\isaliteral{3A}{\isacharcolon}}{\isaliteral{3A}{\isacharcolon}}trm\ s{\isaliteral{3A}{\isacharcolon}}{\isaliteral{3A}{\isacharcolon}}trm} &  
      \isacommand{binds (set)} \isa{xs} \isacommand{in} \isa{t\ s}\\
  \isa{Foo\isaliteral{5C3C5E697375623E}{}\isactrlisub {\isadigit{2}}\ xs{\isaliteral{3A}{\isacharcolon}}{\isaliteral{3A}{\isacharcolon}}name\ fset\ t{\isaliteral{3A}{\isacharcolon}}{\isaliteral{3A}{\isacharcolon}}trm\ s{\isaliteral{3A}{\isacharcolon}}{\isaliteral{3A}{\isacharcolon}}trm} &  
      \isacommand{binds (set)} \isa{xs} \isacommand{in} \isa{t}, 
      \isacommand{binds (set)} \isa{xs} \isacommand{in} \isa{s}\\
  \end{tabular}}
  \]\smallskip
  
  \noindent
  In the first term-constructor we have a single body that happens to be
  `spread' over two arguments; in the second term-constructor we have two
  independent bodies in which the same variables are bound. As a result we
  have\footnote{Assuming \isa{a\ {\isaliteral{5C3C6E6F7465713E}{\isasymnoteq}}\ b}, there is no permutation that can
  make \isa{{\isaliteral{28}{\isacharparenleft}}a{\isaliteral{2C}{\isacharcomma}}\ b{\isaliteral{29}{\isacharparenright}}} equal with both \isa{{\isaliteral{28}{\isacharparenleft}}a{\isaliteral{2C}{\isacharcomma}}\ b{\isaliteral{29}{\isacharparenright}}} and \isa{{\isaliteral{28}{\isacharparenleft}}b{\isaliteral{2C}{\isacharcomma}}\ a{\isaliteral{29}{\isacharparenright}}}, but
  there are two permutations so that we can make \isa{{\isaliteral{28}{\isacharparenleft}}a{\isaliteral{2C}{\isacharcomma}}\ b{\isaliteral{29}{\isacharparenright}}} and \isa{{\isaliteral{28}{\isacharparenleft}}a{\isaliteral{2C}{\isacharcomma}}\ b{\isaliteral{29}{\isacharparenright}}} equal with one permutation, and \isa{{\isaliteral{28}{\isacharparenleft}}a{\isaliteral{2C}{\isacharcomma}}\ b{\isaliteral{29}{\isacharparenright}}} and \isa{{\isaliteral{28}{\isacharparenleft}}b{\isaliteral{2C}{\isacharcomma}}\ a{\isaliteral{29}{\isacharparenright}}} with the other.}

  \[\mbox{
  \begin{tabular}{r@ {\hspace{1.5mm}}c@ {\hspace{1.5mm}}l}
  \isa{Foo\isaliteral{5C3C5E697375623E}{}\isactrlisub {\isadigit{1}}\ {\isaliteral{7B}{\isacharbraceleft}}a{\isaliteral{2C}{\isacharcomma}}\ b{\isaliteral{7D}{\isacharbraceright}}\ {\isaliteral{28}{\isacharparenleft}}a{\isaliteral{2C}{\isacharcomma}}\ b{\isaliteral{29}{\isacharparenright}}\ {\isaliteral{28}{\isacharparenleft}}a{\isaliteral{2C}{\isacharcomma}}\ b{\isaliteral{29}{\isacharparenright}}} & $\not=$ & 
  \isa{Foo\isaliteral{5C3C5E697375623E}{}\isactrlisub {\isadigit{1}}\ {\isaliteral{7B}{\isacharbraceleft}}a{\isaliteral{2C}{\isacharcomma}}\ b{\isaliteral{7D}{\isacharbraceright}}\ {\isaliteral{28}{\isacharparenleft}}a{\isaliteral{2C}{\isacharcomma}}\ b{\isaliteral{29}{\isacharparenright}}\ {\isaliteral{28}{\isacharparenleft}}b{\isaliteral{2C}{\isacharcomma}}\ a{\isaliteral{29}{\isacharparenright}}}
  \end{tabular}}
  \]\smallskip
 
  \noindent
  but 

  \[\mbox{
  \begin{tabular}{r@ {\hspace{1.5mm}}c@ {\hspace{1.5mm}}l}
  \isa{Foo\isaliteral{5C3C5E697375623E}{}\isactrlisub {\isadigit{2}}\ {\isaliteral{7B}{\isacharbraceleft}}a{\isaliteral{2C}{\isacharcomma}}\ b{\isaliteral{7D}{\isacharbraceright}}\ {\isaliteral{28}{\isacharparenleft}}a{\isaliteral{2C}{\isacharcomma}}\ b{\isaliteral{29}{\isacharparenright}}\ {\isaliteral{28}{\isacharparenleft}}a{\isaliteral{2C}{\isacharcomma}}\ b{\isaliteral{29}{\isacharparenright}}} & $=$ & 
  \isa{Foo\isaliteral{5C3C5E697375623E}{}\isactrlisub {\isadigit{2}}\ {\isaliteral{7B}{\isacharbraceleft}}a{\isaliteral{2C}{\isacharcomma}}\ b{\isaliteral{7D}{\isacharbraceright}}\ {\isaliteral{28}{\isacharparenleft}}a{\isaliteral{2C}{\isacharcomma}}\ b{\isaliteral{29}{\isacharparenright}}\ {\isaliteral{28}{\isacharparenleft}}b{\isaliteral{2C}{\isacharcomma}}\ a{\isaliteral{29}{\isacharparenright}}}\\
  \end{tabular}}
  \]\smallskip
  
  \noindent
  and therefore need the extra generality to be able to distinguish
  between both specifications.  Because of how we set up our
  definitions, we also had to impose some restrictions (like a single
  binding function for a deep binder) that are not present in Ott. Our
  expectation is that we can still cover many interesting term-calculi
  from programming language research, for example the Core-Haskell
  language from the Introduction. With the work presented in this
  paper we can define it formally as shown in
  Figure~\ref{nominalcorehas} and then Nominal Isabelle derives
  automatically a corresponding reasoning infrastructure. However we
  have found out that telescopes seem to not easily be representable
  in our framework.  The reason is that we need to be able to lift our
  \isa{bn}-functions to alpha-equated lambda-terms and therefore
  need to restrict what these \isa{bn}-functions can return.
  Telescopes can be represented in the framework described in
  \cite{WeirichYorgeySheard11} using an extension of the usual
  locally-nameless representation. 

  \begin{figure}[p]
  \begin{boxedminipage}{\linewidth}
  \small
  \begin{tabular}{l}
  \isacommand{atom\_decl}~\isa{var\ cvar\ tvar}\\[1mm]
  \isacommand{nominal\_datatype}~\isa{tkind\ {\isaliteral{3D}{\isacharequal}}}~\isa{KStar}~$|$~\isa{KFun\ tkind\ tkind}\\ 
  \isacommand{and}~\isa{ckind\ {\isaliteral{3D}{\isacharequal}}}~\isa{CKSim\ ty\ ty}\\
  \isacommand{and}~\isa{ty\ {\isaliteral{3D}{\isacharequal}}}~\isa{TVar\ tvar}~$|$~\isa{T\ string}~$|$~\isa{TApp\ ty\ ty}\\
  $|$~\isa{TFun\ string\ ty{\isaliteral{5F}{\isacharunderscore}}list}~%
  $|$~\isa{TAll\ tv{\isaliteral{3A}{\isacharcolon}}{\isaliteral{3A}{\isacharcolon}}tvar\ tkind\ ty{\isaliteral{3A}{\isacharcolon}}{\isaliteral{3A}{\isacharcolon}}ty}\hspace{3mm}\isacommand{binds}~\isa{tv}~\isacommand{in}~\isa{ty}\\
  $|$~\isa{TArr\ ckind\ ty}\\
  \isacommand{and}~\isa{ty{\isaliteral{5F}{\isacharunderscore}}lst\ {\isaliteral{3D}{\isacharequal}}}~\isa{TNil}~$|$~\isa{TCons\ ty\ ty{\isaliteral{5F}{\isacharunderscore}}lst}\\
  \isacommand{and}~\isa{cty\ {\isaliteral{3D}{\isacharequal}}}~\isa{CVar\ cvar}~%
  $|$~\isa{C\ string}~$|$~\isa{CApp\ cty\ cty}~$|$~\isa{CFun\ string\ co{\isaliteral{5F}{\isacharunderscore}}lst}\\
  $|$~\isa{CAll\ cv{\isaliteral{3A}{\isacharcolon}}{\isaliteral{3A}{\isacharcolon}}cvar\ ckind\ cty{\isaliteral{3A}{\isacharcolon}}{\isaliteral{3A}{\isacharcolon}}cty}\hspace{3mm}\isacommand{binds}~\isa{cv}~\isacommand{in}~\isa{cty}\\
  $|$~\isa{CArr\ ckind\ cty}~$|$~\isa{CRefl\ ty}~$|$~\isa{CSym\ cty}~$|$~\isa{CCirc\ cty\ cty}\\
  $|$~\isa{CAt\ cty\ ty}~$|$~\isa{CLeft\ cty}~$|$~\isa{CRight\ cty}~$|$~\isa{CSim\ cty\ cty}\\
  $|$~\isa{CRightc\ cty}~$|$~\isa{CLeftc\ cty}~$|$~\isa{Coerce\ cty\ cty}\\
  \isacommand{and}~\isa{co{\isaliteral{5F}{\isacharunderscore}}lst\ {\isaliteral{3D}{\isacharequal}}}~\isa{CNil}~$|$~\isa{CCons\ cty\ co{\isaliteral{5F}{\isacharunderscore}}lst}\\
  \isacommand{and}~\isa{trm\ {\isaliteral{3D}{\isacharequal}}}~\isa{Var\ var}~$|$~\isa{K\ string}\\
  $|$~\isa{LAM{\isaliteral{5F}{\isacharunderscore}}ty\ tv{\isaliteral{3A}{\isacharcolon}}{\isaliteral{3A}{\isacharcolon}}tvar\ tkind\ t{\isaliteral{3A}{\isacharcolon}}{\isaliteral{3A}{\isacharcolon}}trm}\hspace{3mm}\isacommand{binds}~\isa{tv}~\isacommand{in}~\isa{t}\\
  $|$~\isa{LAM{\isaliteral{5F}{\isacharunderscore}}cty\ cv{\isaliteral{3A}{\isacharcolon}}{\isaliteral{3A}{\isacharcolon}}cvar\ ckind\ t{\isaliteral{3A}{\isacharcolon}}{\isaliteral{3A}{\isacharcolon}}trm}\hspace{3mm}\isacommand{binds}~\isa{cv}~\isacommand{in}~\isa{t}\\
  $|$~\isa{App{\isaliteral{5F}{\isacharunderscore}}ty\ trm\ ty}~$|$~\isa{App{\isaliteral{5F}{\isacharunderscore}}cty\ trm\ cty}~$|$~\isa{App\ trm\ trm}\\
  $|$~\isa{Lam\ v{\isaliteral{3A}{\isacharcolon}}{\isaliteral{3A}{\isacharcolon}}var\ ty\ t{\isaliteral{3A}{\isacharcolon}}{\isaliteral{3A}{\isacharcolon}}trm}\hspace{3mm}\isacommand{binds}~\isa{v}~\isacommand{in}~\isa{t}\\
  $|$~\isa{Let\ x{\isaliteral{3A}{\isacharcolon}}{\isaliteral{3A}{\isacharcolon}}var\ ty\ trm\ t{\isaliteral{3A}{\isacharcolon}}{\isaliteral{3A}{\isacharcolon}}trm}\hspace{3mm}\isacommand{binds}~\isa{x}~\isacommand{in}~\isa{t}\\
  $|$~\isa{Case\ trm\ assoc{\isaliteral{5F}{\isacharunderscore}}lst}~$|$~\isa{Cast\ trm\ co}\\
  \isacommand{and}~\isa{assoc{\isaliteral{5F}{\isacharunderscore}}lst\ {\isaliteral{3D}{\isacharequal}}}~\isa{ANil}~%
  $|$~\isa{ACons\ p{\isaliteral{3A}{\isacharcolon}}{\isaliteral{3A}{\isacharcolon}}pat\ t{\isaliteral{3A}{\isacharcolon}}{\isaliteral{3A}{\isacharcolon}}trm\ assoc{\isaliteral{5F}{\isacharunderscore}}lst}\hspace{3mm}\isacommand{binds}~\isa{bv\ p}~\isacommand{in}~\isa{t}\\
  \isacommand{and}~\isa{pat\ {\isaliteral{3D}{\isacharequal}}}~\isa{Kpat\ string\ tvtk{\isaliteral{5F}{\isacharunderscore}}lst\ tvck{\isaliteral{5F}{\isacharunderscore}}lst\ vt{\isaliteral{5F}{\isacharunderscore}}lst}\\
  \isacommand{and}~\isa{vt{\isaliteral{5F}{\isacharunderscore}}lst\ {\isaliteral{3D}{\isacharequal}}}~\isa{VTNil}~$|$~\isa{VTCons\ var\ ty\ vt{\isaliteral{5F}{\isacharunderscore}}lst}\\
  \isacommand{and}~\isa{tvtk{\isaliteral{5F}{\isacharunderscore}}lst\ {\isaliteral{3D}{\isacharequal}}}~\isa{TVTKNil}~$|$~\isa{TVTKCons\ tvar\ tkind\ tvtk{\isaliteral{5F}{\isacharunderscore}}lst}\\
  \isacommand{and}~\isa{tvck{\isaliteral{5F}{\isacharunderscore}}lst\ {\isaliteral{3D}{\isacharequal}}}~\isa{TVCKNil}~$|$ \isa{TVCKCons\ cvar\ ckind\ tvck{\isaliteral{5F}{\isacharunderscore}}lst}\\
  \isacommand{binder}\\
  \;\isa{bv\ {\isaliteral{3A}{\isacharcolon}}{\isaliteral{3A}{\isacharcolon}}\ pat\ {\isaliteral{5C3C52696768746172726F773E}{\isasymRightarrow}}\ atom\ list}~\isacommand{and}\\
  \;\isa{bv\isaliteral{5C3C5E697375623E}{}\isactrlisub {\isadigit{1}}\ {\isaliteral{3A}{\isacharcolon}}{\isaliteral{3A}{\isacharcolon}}\ vt{\isaliteral{5F}{\isacharunderscore}}lst\ {\isaliteral{5C3C52696768746172726F773E}{\isasymRightarrow}}\ atom\ list}~\isacommand{and}\\
  \;\isa{bv\isaliteral{5C3C5E697375623E}{}\isactrlisub {\isadigit{2}}\ {\isaliteral{3A}{\isacharcolon}}{\isaliteral{3A}{\isacharcolon}}\ tvtk{\isaliteral{5F}{\isacharunderscore}}lst\ {\isaliteral{5C3C52696768746172726F773E}{\isasymRightarrow}}\ atom\ list}~\isacommand{and}\\
  \;\isa{bv\isaliteral{5C3C5E697375623E}{}\isactrlisub {\isadigit{3}}\ {\isaliteral{3A}{\isacharcolon}}{\isaliteral{3A}{\isacharcolon}}\ tvck{\isaliteral{5F}{\isacharunderscore}}lst\ {\isaliteral{5C3C52696768746172726F773E}{\isasymRightarrow}}\ atom\ list}\\
  \isacommand{where}\\
  \phantom{$|$}~\isa{bv\ {\isaliteral{28}{\isacharparenleft}}K\ s\ tvts\ tvcs\ vs{\isaliteral{29}{\isacharparenright}}\ {\isaliteral{3D}{\isacharequal}}\ {\isaliteral{28}{\isacharparenleft}}bv\isaliteral{5C3C5E697375623E}{}\isactrlisub {\isadigit{3}}\ tvts{\isaliteral{29}{\isacharparenright}}\ {\isaliteral{40}{\isacharat}}\ {\isaliteral{28}{\isacharparenleft}}bv\isaliteral{5C3C5E697375623E}{}\isactrlisub {\isadigit{2}}\ tvcs{\isaliteral{29}{\isacharparenright}}\ {\isaliteral{40}{\isacharat}}\ {\isaliteral{28}{\isacharparenleft}}bv\isaliteral{5C3C5E697375623E}{}\isactrlisub {\isadigit{1}}\ vs{\isaliteral{29}{\isacharparenright}}}\\
  $|$~\isa{bv\isaliteral{5C3C5E697375623E}{}\isactrlisub {\isadigit{1}}\ VTNil\ {\isaliteral{3D}{\isacharequal}}\ {\isaliteral{5B}{\isacharbrackleft}}{\isaliteral{5D}{\isacharbrackright}}}\\
  $|$~\isa{bv\isaliteral{5C3C5E697375623E}{}\isactrlisub {\isadigit{1}}\ {\isaliteral{28}{\isacharparenleft}}VTCons\ x\ ty\ tl{\isaliteral{29}{\isacharparenright}}\ {\isaliteral{3D}{\isacharequal}}\ {\isaliteral{28}{\isacharparenleft}}atom\ x{\isaliteral{29}{\isacharparenright}}{\isaliteral{3A}{\isacharcolon}}{\isaliteral{3A}{\isacharcolon}}{\isaliteral{28}{\isacharparenleft}}bv\isaliteral{5C3C5E697375623E}{}\isactrlisub {\isadigit{1}}\ tl{\isaliteral{29}{\isacharparenright}}}\\
  $|$~\isa{bv\isaliteral{5C3C5E697375623E}{}\isactrlisub {\isadigit{2}}\ TVTKNil\ {\isaliteral{3D}{\isacharequal}}\ {\isaliteral{5B}{\isacharbrackleft}}{\isaliteral{5D}{\isacharbrackright}}}\\
  $|$~\isa{bv\isaliteral{5C3C5E697375623E}{}\isactrlisub {\isadigit{2}}\ {\isaliteral{28}{\isacharparenleft}}TVTKCons\ a\ ty\ tl{\isaliteral{29}{\isacharparenright}}\ {\isaliteral{3D}{\isacharequal}}\ {\isaliteral{28}{\isacharparenleft}}atom\ a{\isaliteral{29}{\isacharparenright}}{\isaliteral{3A}{\isacharcolon}}{\isaliteral{3A}{\isacharcolon}}{\isaliteral{28}{\isacharparenleft}}bv\isaliteral{5C3C5E697375623E}{}\isactrlisub {\isadigit{2}}\ tl{\isaliteral{29}{\isacharparenright}}}\\
  $|$~\isa{bv\isaliteral{5C3C5E697375623E}{}\isactrlisub {\isadigit{3}}\ TVCKNil\ {\isaliteral{3D}{\isacharequal}}\ {\isaliteral{5B}{\isacharbrackleft}}{\isaliteral{5D}{\isacharbrackright}}}\\
  $|$~\isa{bv\isaliteral{5C3C5E697375623E}{}\isactrlisub {\isadigit{3}}\ {\isaliteral{28}{\isacharparenleft}}TVCKCons\ c\ cty\ tl{\isaliteral{29}{\isacharparenright}}\ {\isaliteral{3D}{\isacharequal}}\ {\isaliteral{28}{\isacharparenleft}}atom\ c{\isaliteral{29}{\isacharparenright}}{\isaliteral{3A}{\isacharcolon}}{\isaliteral{3A}{\isacharcolon}}{\isaliteral{28}{\isacharparenleft}}bv\isaliteral{5C3C5E697375623E}{}\isactrlisub {\isadigit{3}}\ tl{\isaliteral{29}{\isacharparenright}}}\\
  \end{tabular}
  \end{boxedminipage}
  \caption{A definition for Core-Haskell in Nominal Isabelle. For the moment we
  do not support nested types; therefore we explicitly have to unfold the 
  lists \isa{co{\isaliteral{5F}{\isacharunderscore}}lst}, \isa{assoc{\isaliteral{5F}{\isacharunderscore}}lst} and so on. Apart from that limitation, the 
  definition follows closely the original shown in Figure~\ref{corehas}. The
  point of our work is that having made such a definition in Nominal Isabelle,
  one obtains automatically a reasoning infrastructure for Core-Haskell.
  \label{nominalcorehas}}
  \end{figure}
  \afterpage{\clearpage}

  Pottier presents a programming language, called C$\alpha$ml, for
  representing terms with general binders inside OCaml \cite{Pottier06}. This
  language is implemented as a front-end that can be translated to OCaml with
  the help of a library. He presents a type-system in which the scope of
  general binders can be specified using special markers, written \isa{inner} and \isa{outer}. It seems our and his specifications can be
  inter-translated as long as ours use the binding mode \isacommand{binds}
  only.  However, we have not proved this. Pottier gives a definition for
  alpha-equivalence, which also uses a permutation operation (like ours).
  Still, this definition is rather different from ours and he only proves that
  it defines an equivalence relation. A complete reasoning infrastructure is
  well beyond the purposes of his language. Similar work for Haskell with
  similar results was reported by Cheney \cite{Cheney05a} and more recently 
  by Weirich et al \cite{WeirichYorgeySheard11}.

  In a slightly different domain (programming with dependent types),
  Altenkirch et al \cite{Altenkirch10} present a calculus with a notion of
  alpha-equivalence related to our binding mode \isacommand{binds (set+)}.
  Their definition is similar to the one by Pottier, except that it has a more
  operational flavour and calculates a partial (renaming) map. In this way,
  the definition can deal with vacuous binders. However, to our best
  knowledge, no concrete mathematical result concerning this definition of
  alpha-equivalence has been proved.%
\end{isamarkuptext}%
\isamarkuptrue%
\isamarkupsection{Conclusion%
}
\isamarkuptrue%
\begin{isamarkuptext}%
We have presented an extension of Nominal Isabelle for dealing with general
  binders, that is where term-constructors have multiple bound atoms. For this
  extension we introduced new definitions of alpha-equivalence and automated
  all necessary proofs in Isabelle/HOL.  To specify general binders we used
  the syntax from Ott, but extended it in some places and restricted
  it in others so that the definitions make sense in the context of alpha-equated
  terms. We also introduced two binding modes (set and set+) that do not exist
  in Ott. We have tried out the extension with calculi such as Core-Haskell,
  type-schemes and approximately a dozen of other typical examples from
  programming language research~\cite{SewellBestiary}. The code will
  eventually become part of the Isabelle distribution.\footnote{It 
  can be downloaded already from \href{http://isabelle.in.tum.de/nominal/download}
  {http://isabelle.in.tum.de/nominal/download}.}

  We have left out a discussion about how functions can be defined over
  alpha-equated terms involving general binders. In earlier versions of
  Nominal Isabelle this turned out to be a thorny issue.  We hope to do better
  this time by using the function package \cite{Krauss09} that has recently
  been implemented in Isabelle/HOL and also by restricting function
  definitions to equivariant functions (for them we can provide more
  automation).

  There are some restrictions we had
  to impose in this paper that can be lifted using 
  a recent reimplementation \cite{Traytel12} of the datatype package for Isabelle/HOL, which
  however is not yet part of the stable distribution.
  This reimplementation allows nested
  datatype definitions and would allow one to specify, for instance, the function kinds
  in Core-Haskell as \isa{TFun\ string\ {\isaliteral{28}{\isacharparenleft}}ty\ list{\isaliteral{29}{\isacharparenright}}} instead of the unfolded
  version \isa{TFun\ string\ ty{\isaliteral{5F}{\isacharunderscore}}list} (see Figure~\ref{nominalcorehas}). We can 
  also use it to represent the \isa{Let}-terms from the Introduction where
  the order of \isa{let}-assignments does not matter. This means we can represent \isa{Let}s
  such that the following two terms are equal

  \[
  \isa{Let\ x\isaliteral{5C3C5E697375623E}{}\isactrlisub {\isadigit{1}}\ {\isaliteral{3D}{\isacharequal}}\ t\isaliteral{5C3C5E697375623E}{}\isactrlisub {\isadigit{1}}\ and\ x\isaliteral{5C3C5E697375623E}{}\isactrlisub {\isadigit{2}}\ {\isaliteral{3D}{\isacharequal}}\ t\isaliteral{5C3C5E697375623E}{}\isactrlisub {\isadigit{2}}\ in\ s} \;\;=\;\;
  \isa{Let\ x\isaliteral{5C3C5E697375623E}{}\isactrlisub {\isadigit{2}}\ {\isaliteral{3D}{\isacharequal}}\ t\isaliteral{5C3C5E697375623E}{}\isactrlisub {\isadigit{2}}\ and\ x\isaliteral{5C3C5E697375623E}{}\isactrlisub {\isadigit{1}}\ {\isaliteral{3D}{\isacharequal}}\ t\isaliteral{5C3C5E697375623E}{}\isactrlisub {\isadigit{1}}\ in\ s} 
  \]\smallskip

  \noindent
  For this we have to represent the \isa{Let}-assignments as finite sets
  of pair and a binding function that picks out the left components to be bound in \isa{s}.

  One line of future investigation is whether we can go beyond the 
  simple-minded form of binding functions that we adopted from Ott. At the moment, binding
  functions can only return the empty set, a singleton atom set or unions
  of atom sets (similarly for lists). It remains to be seen whether 
  properties like
  
  \[
  \mbox{\isa{fa{\isaliteral{5F}{\isacharunderscore}}ty\ x\ \ {\isaliteral{3D}{\isacharequal}}\ \ bn\ x\ {\isaliteral{5C3C756E696F6E3E}{\isasymunion}}\ fa{\isaliteral{5F}{\isacharunderscore}}bn\ x}}
  \]\smallskip
  
  \noindent
  allow us to support more interesting binding functions. 
  
  We have also not yet played with other binding modes. For example we can
  imagine that there is need for a binding mode where instead of usual lists,
  we abstract lists of distinct elements (the corresponding type \isa{dlist} already exists in the library of Isabelle/HOL). We expect the
  presented work can be extended to accommodate such binding modes.\medskip
  
  \noindent
  {\bf Acknowledgements:} We are very grateful to Andrew Pitts for many
  discussions about Nominal Isabelle. We thank Peter Sewell for making the
  informal notes \cite{SewellBestiary} available to us and also for patiently
  explaining some of the finer points of the Ott-tool.  Stephanie Weirich
  suggested to separate the subgrammars of kinds and types in our Core-Haskell
  example. Ramana Kumar and Andrei Popescu helped us with comments for
  an earlier version of this paper.%
\end{isamarkuptext}%
\isamarkuptrue%
\isadelimtheory
\endisadelimtheory
\isatagtheory
\endisatagtheory
{\isafoldtheory}%
\isadelimtheory
\endisadelimtheory
\end{isabellebody}%


\bibliographystyle{plain}
\bibliography{root}


\end{document}